\definecolor{darkpastelgreen}{rgb}{0.01, 0.75, 0.24}
\definecolor{springgreen}{rgb}{0.75, 1.0, 0.0}
\definecolor{mYellow}{RGB}{255,237,0}
\tikzstyle{vertex}=[circle,fill=white,draw=black,minimum size=7pt,inner sep=0pt]
\tikzstyle{smallvertex}=[circle,fill=white,draw=black,minimum size=5pt,inner sep=0pt]
\newcommand{\figscalesmall}{0.8}
\newcommand{\figscalelarge}{0.8}
\newcommand{\NN}{\mathbb N}
\newcommand{\ZZ}{\mathbb Z}
\newcommand{\RR}{\mathbb R}
\newcommand{\CM}{\mathcal M}
\newcommand{\notleftright}{\mathrel{\ooalign{$\Leftrightarrow$\cr\hidewidth$/$\hidewidth}}}
\newcommand{\WL}[2]{\chi_{\sf WL}^{#1}\![#2]}
\newcommand{\WLit}[3]{\chi_{#2}^{#1}\![#3]}
\newcommand{\WLab}[1]{\chi_{\sf WL}^{#1}}
\newcommand{\undir}[1]{\textsf{undir}(#1)}
\DeclareMathOperator{\im}{im}
\DeclareMathOperator{\dist}{dist}
\DeclareMathOperator{\BP}{BP}
\DeclareMathOperator{\Fix}{Singles}
\DeclareMathOperator{\Disc}{Singles}
\DeclareMathOperator{\inte}{Int}
\DeclareMathOperator{\exte}{Ext}
\DeclareMathOperator{\Aut}{Aut}
\title{A Study of Weisfeiler--Leman Colorings on Planar Graphs}
\titlerunning{WL Colors on Planar Graphs}
\author{Sandra Kiefer}{Max Planck Institute for Software Systems, Saarland Informatics Campus, Saarbrücken, Germany}{sankie@mpi-sws.org}{https://orcid.org/0000-0003-4614-9444}{}
\author{Daniel Neuen}{School of Computing Science, Simon Fraser University, Burnaby, Canada}{dneuen@sfu.ca}{https://orcid.org/0000-0002-4940-0318}{}
\authorrunning{S.\ Kiefer and D.\ Neuen} 
\keywords{
Weisfeiler-Leman algorithm, 
planar graphs,
edge-transitive graphs,
fixing number}
\begin{document}

\maketitle

\begin{abstract}
 The Weisfeiler--Leman (WL) algorithm is a combinatorial procedure that computes colorings on graphs, which can often be used to detect their (non-)isomorphism. Particularly the $1$- and $2$-dimensional versions $1$-WL and $2$-WL have received much attention, due to their numerous links to other areas of computer science. 

 Knowing the expressive power of a certain dimension of the algorithm usually amounts to understanding the computed colorings.
 An increase in the dimension leads to finer computed colorings and, thus, more graphs can be distinguished.
 For example, on the class of planar graphs, $3$-WL solves the isomorphism problem.
 However, the expressive power of $2$-WL on the class is poorly understood (and, in particular, it may even well be that it decides isomorphism). 

 In this paper, we investigate the colorings computed by $2$-WL on planar graphs.
 Towards this end, we analyze the graphs induced by edge color classes in the graph.
 Based on the obtained classification, we show that for every $3$-connected planar graph, it holds that: a) after coloring all pairs with their $2$-WL color, the graph has fixing number $1$ with respect to $1$-WL, or b) there is a $2$-WL-definable matching that can be used to transform the graph into a smaller one, or c) $2$-WL detects a connected subgraph that is essentially the graph of a Platonic or Archimedean solid, a prism, a cycle, or a bipartite graph $K_{2,\ell}$.
 In particular, the graphs from case (a) are identified by $2$-WL.
\end{abstract}

\section{Introduction}

The Weisfeiler--Leman (WL) algorithm \cite{WeisfeilerL68} is a combinatorial procedure that, given a graph~$G$, computes a coloring on $G$ which respects (and sometimes also detects) the symmetries in the graph.
Its most prominent application is in theoretical \cite{Babai16,CaiFI92,Kiefer20} and practical approaches \cite{AndersS21,DargaLSM04,JunttilaK07,McKay81,McKayP14} to the graph isomorphism problem.
The original algorithm by Weisfeiler and Leman is the $2$-dimensional version and it colors pairs of vertices.
Its generalization yields for every natural number $k$ the $k$-dimensional WL algorithm $k$-WL, which iteratively refines a coloring of vertex $k$-tuples by aggregating local structural information encoded in the colors.
Its final output is a coloring that is \emph{stable} with respect to the criterion for partitioning the color classes, and graphs with different final colorings are never isomorphic.

Over the decades, fascination for the algorithm has persisted.
This is to a large extent due to the discovery of numerous connections to other areas in computer science that are still being explored.
For example, the algorithm has close links to linear and semidefinite programming~\cite{AtseriasM13,AtseriasO18,GroheO15}, homomorphism counting~\cite{DellGR18,Dvorak10}, and machine learning \cite{AhmadiKMN13,Grohe21,MorrisRFHLRG19,ShervashidzeSLMB11,XuHLJ19}.
Its expressive power can be characterized via winning strategies for the players in a particular type of Ehrenfeucht-Fraïssé game \cite{CaiFI92,Hella96}.
Moreover, it is known that two graphs receive different final colorings with respect to $k$-WL if and only if the graphs can be distinguished via a formula in the counting-logic fragment $C^{k+1}$ \cite{CaiFI92,ImmermanL90}.

In this work, we focus on the original version $2$-WL, as introduced by Weisfeiler and Leman~\cite{WeisfeilerL68}.
Besides the connections outlined for $k$-WL above, $2$-WL has a precise correspondence to coherent configurations (see, e.g., \cite{ChenP19}).
Despite the simple and very natural concept behind the algorithm, its behavior is not well-understood and there is an extensive line of study to capture its expressive power.
For example, one branch of research aims at understanding which graph properties can be detected by $2$-WL.
In this direction, Fürer \cite{Furer17} as well as Arvind et al.\ and Fuhlbrück et al.\ \cite{ArvindFKV20,FuhlbruckKV21a} obtained insights concerning the ability of $2$-WL to detect and count small subgraphs.
Furthermore, the algorithm is able to detect $2$-separators in graphs and implicitly computes the decomposition of a graph into its $3$-connected components \cite{KieferN22}.

A related line of research analyzes which graphs are \emph{identified} by $2$-WL, i.e., on which graphs $2$-WL serves as a complete isomorphism test.
Positive examples include interval graphs \cite{EvdokimovPT00} and distance-hereditary graphs \cite{GavrilyukNP20} as well as almost all regular graphs \cite{Bollobas82}. In the light of the  upper bound of $3$ on the dimension of the algorithm needed to identify all planar graphs \cite{KieferPS19}, there is hope that the class of planar graphs can eventually be added to the list.
Towards a complete characterization of the expressive power of $2$-WL, Fuhlbrück, Köbler, and Verbitsky \cite{FuhlbruckKV21} developed an algorithmic characterization of the graphs of color class size at most $4$ that are identified by $2$-WL.

\subparagraph{Our Contribution}

In this work, we investigate $2$-WL on planar graphs. We are interested in analyzing the stable output coloring computed by $2$-WL and deducing symmetries and other properties of the input graph from properties of the coloring. 

As a starting point, we precisely characterize the planar graphs in which all edges receive the same color with respect to $2$-WL.
Since the coloring that $2$-WL computes is preserved by automorphisms, edge-transitive planar graphs clearly fall into this category.
As our first main result, we prove the converse of this statement: every planar graph in which all edges receive the same color with respect to $2$-WL is edge-transitive.
To show the implication, we reprove the classification of edge-transitive planar graphs (see, e.g., \cite{GrunbaumS87}) building solely on the $2$-WL coloring.

Using the classification, we continue to analyze the WL coloring on general planar graphs.
Since, by \cite{KieferN22}, the algorithm $2$-WL implicitly computes the graph decomposition into $3$-connected components, understanding $2$-WL on planar graphs essentially amounts to a study of $3$-connected planar graphs.
Here, we can exploit a theorem due to Whitney \cite{Whitney32}, which says that all embeddings of a $3$-connected planar graph are combinatorially equivalent. 

Our focus lies on the following three tasks: (i) classify the subgraphs induced by edges of the same $2$-WL color that can occur, (ii) analyze how these subgraphs interleave, and (iii) establish connections to properties of the entire graph $G$.

Let $G$ be a $3$-connected planar graph and let $C_E(G)$ denote the set of $2$-WL colors that correspond to edges of $G$.
For every $c \in C_E(G)$, denote by $G[c]$ the subgraph induced by all edges of $2$-WL color $c$.
To describe our results, it turns out to be useful to partition the edge colors into three types depending on the number of faces per connected component of $G[c]$.
We say that $c$ has \emph{Type I} if every connected component of $G[c]$ has one face, \emph{Type II} if every connected component of $G[c]$ has two faces, and \emph{Type III} if every connected component of $G[c]$ has at least three faces. (By the properties of $2$-WL, these types indeed cover all cases that can occur.)

First, we analyze the graphs induced by edge colors $c$ of Type III.
It is not hard to see that every edge in such a graph $G[c]$ receives the same $2$-WL color (when applying $2$-WL to $G[c]$), and thus, by our classification, $G[c]$ is edge-transitive.
However, it turns out that much stronger statements are possible, since, in the end, many edge-transitive planar graphs cannot appear as a graph $G[c]$.
For example, we show that $G[c]$ is always connected.
As our central result for colors of Type III, we obtain a precise classification of the possible graphs $G[c]$.
An interesting consequence of this classification is that the automorphism group $\Aut(G)$ of $G$ is always isomorphic to a subgroup of $\Aut(G[c])$.
More precisely, we show that fixing the images of all vertices of $G[c]$ uniquely determines the image of every vertex of $G$ under any automorphism of $G$.
Hence, by only looking at the subgraph induced by a single edge color of Type III, we obtain strong insights about the symmetries of the entire graph.

On the other side of the spectrum, we prove that if all edge colors are of Type I, then $G$ has \emph{fixing number} at most $1$, where the fixing number is the minimum number of vertices that need to be fixed pointwise so that the identity mapping is the only automorphism of $G$.
It is known that $3$-connected planar graphs have fixing number at most $3$, and there is a complete characterization of those graphs of fixing number exactly $3$ \cite{KieferPS19}. In our analysis of $3$-connected planar graphs $G$ in which all edge colors are of Type I, we only use $1$-WL to prove that $G$ has no non-trivial automorphisms after fixing a certain single vertex. This implies that $2$-WL identifies all such graphs.

If neither of the above cases applies, then there is an edge color of Type II.
Let us first remark that the graphs of many Archimedean solids fall into this category (whereas the edge colors in the graphs of all Platonic solids are of Type III). 
In such a situation, the graph of the Archimedean solid is defined by edge colors $c,d$ where one of the two colors has Type II.
With this in mind, towards solving task (ii), we analyze how edge colors of Type II interleave with other edge colors. 
More precisely, similarly as for Type III, we aim at identifying a connected subgraph defined by two colors $c,d \in C_E(G)$, where $c$ has Type II, that corresponds to one of the Archimedean solids, or stems from a small number of infinite graph families. 
We remark that, similar to the case of edge colors of Type III, if we have such a subgraph $G[c,d]$, then $\Aut(G)$ is isomorphic to a subgroup of $\Aut(G[c,d])$.

We show that either this goal can be achieved, or $G$ has fixing number $1$ or there is a \emph{WL-definable matching}. 
Such a matching is given by an edge color $c$ such that $G[c]$ is a matching graph (i.e., every vertex has degree $1$) and the endpoints of every edge receive different colors.
Such matchings also play a crucial role in the analysis of $2$-WL on graphs of color class size $4$ \cite{FuhlbruckKV21}, and contracting all matching edges preserves many crucial properties related to WL  such as the stable coloring, identifiability by WL, as well as the automorphism group of $G$.
As a result, finding a WL-definable matching is beneficial since we can proceed to a smaller graph without affecting the problem at hand.

\subparagraph{Towards the WL Dimension of Planar Graphs}
\label{par:wl-dim}

The WL dimension of a graph class $\mathcal{C}$ is the minimal $k$ such that $k$-WL identifies every graph from $\mathcal{C}$, i.e., $k$-WL serves as a complete isomorphism test for the class $\mathcal{C}$.
Many classes of graphs are known to have a finite WL dimension, for example, interval graphs \cite{EvdokimovPT00}, graphs of bounded rank-width \cite{GroheN19} as well as graphs of bounded genus \cite{GroheK19} and, more generally, all graph classes that exclude a fixed graph as a minor \cite{Grohe12,Grohe17}.

For planar graphs, the quest for bounds on their WL dimension was initiated by Immerman already over three decades ago \cite{Immerman87}.
In a first step, Grohe \cite{Grohe98} proved that the dimension is finite.
Analyzing Grohe's proof in detail, Redies \cite{Redies14} showed an upper bound of $14$ on the WL dimension of planar graphs.
This was further improved in \cite{KieferPS19}, where it is shown that already $3$-WL identifies all planar graphs, thus narrowing down the WL dimension of planar graphs to $2$ or $3$.
Moreover, it was recently shown that a constant dimension of the WL algorithm suffices to identify all planar graphs in a logarithmic number of refinement rounds \cite{GroheK21}, extending previous results for $3$-connected planar graphs \cite{Verbitsky07}. 
Still, the task to determine the precise WL-dimension of the class of planar graphs remains open.
A central motivation for our work is to determine whether $2$-WL identifies every planar graph.

\begin{figure}
 \centering
 \begin{tikzpicture}[scale=0.8]
  \draw[line width=1.6pt, mYellow, fill = mYellow!40] (0,0) circle (3.6cm);
  \draw[line width=1.6pt, mYellow, fill = white] (0,0) circle (2.7cm);
  \node at (30:3.15) {$C_4$};
  
  \node[vertex,red!80] (1) at (0,0) {};
  \node[vertex,red!80] (2) at (90:2.4) {};
  \node[vertex,red!80] (3) at (210:2.4) {};
  \node[vertex,red!80] (4) at (330:2.4) {};
  
  \draw[line width=1.6pt] (1) edge (2);
  \draw[line width=1.6pt] (1) edge (3);
  \draw[line width=1.6pt] (1) edge (4);
  
  \draw[line width=1.6pt,bend right=56] (2) edge (3);
  \draw[line width=1.6pt,bend right=56] (3) edge (4);
  \draw[line width=1.6pt,bend right=56] (4) edge (2);
  
  \draw[line width=1.6pt, mYellow, fill = mYellow!40] (150:0.3) to ($(150:0.3) + (90:1.9)$) to[bend right=45] ($(150:0.3) + (210:1.9)$) to (150:0.3);
  \draw[line width=1.6pt, mYellow, fill = mYellow!40] (270:0.3) to ($(270:0.3) + (210:1.9)$) to[bend right=45] ($(270:0.3) + (330:1.9)$) to (270:0.3);
  \draw[line width=1.6pt, mYellow, fill = mYellow!40] (30:0.3) to ($(30:0.3) + (330:1.9)$) to[bend right=45] ($(30:0.3) + (90:1.9)$) to (30:0.3);
  
  \node at (30:1.4) {$C_1$};
  \node at (150:1.4) {$C_2$};
  \node at (270:1.4) {$C_3$};
  
  \node[vertex,blue!80] (c1) at (30:0.6) {};
  \node[vertex,blue!80] (c2) at (150:0.6) {};
  \node[vertex,blue!80] (c3) at (270:0.6) {};
  \node[vertex,blue!80] (c4) at (90:3.0) {};
  \node[vertex,blue!80] (c5) at (210:3.0) {};
  \node[vertex,blue!80] (c6) at (330:3.0) {};
  \node[vertex,blue!80] (c7) at ($(90:2.4) + (235:0.8)$) {};
  \node[vertex,blue!80] (c8) at ($(90:2.4) + (305:0.8)$) {};
  \node[vertex,blue!80] (c9) at ($(210:2.4) + (65:0.8)$) {};
  \node[vertex,blue!80] (c10) at ($(210:2.4) + (355:0.8)$) {};
  \node[vertex,blue!80] (c11) at ($(330:2.4) + (115:0.8)$) {};
  \node[vertex,blue!80] (c12) at ($(330:2.4) + (185:0.8)$) {};
  
  \draw[line width=1.6pt, darkpastelgreen] (1) edge (c1);
  \draw[line width=1.6pt, darkpastelgreen] (1) edge (c2);
  \draw[line width=1.6pt, darkpastelgreen] (1) edge (c3);
  
  \draw[line width=1.6pt, darkpastelgreen] (2) edge (c4);
  \draw[line width=1.6pt, darkpastelgreen] (3) edge (c5);
  \draw[line width=1.6pt, darkpastelgreen] (4) edge (c6);
  
  \draw[line width=1.6pt, darkpastelgreen] (2) edge (c7);
  \draw[line width=1.6pt, darkpastelgreen] (2) edge (c8);
  \draw[line width=1.6pt, darkpastelgreen] (3) edge (c9);
  \draw[line width=1.6pt, darkpastelgreen] (3) edge (c10);
  \draw[line width=1.6pt, darkpastelgreen] (4) edge (c11);
  \draw[line width=1.6pt, darkpastelgreen] (4) edge (c12);
 \end{tikzpicture}
 \caption{The visualization shows a $3$-connected planar graph $G$ and an edge color $c \in C_E(G)$ (shown in black) such that $G[c]$ is isomorphic to $K_4$. Also, $C_1,C_2,C_3,C_4$ are the vertex sets of the connected components of $G - V(G[c])$.
 Using regularity constraints specific to the case $G[c] \cong K_4$, one can show that the graphs induced by the $C_i$ are all indistinguishable by $2$-WL.
 More strongly, it actually holds for all $i,j \in [4]$, all $v \in N(C_i)$ and all $w \in N(C_j)$ that $\{\!\{\chi(v,v') \mid v' \in N(v) \cap C_i\}\!\} = \{\!\{\chi{G}(w,w') \mid w' \in N(w) \cap C_j\}\!\}$, where $\chi$ denotes the $2$-WL coloring. (In the picture, this multiset contains one green edge, i.e., every $v \in V(G[c])$ has exactly one neighbor via a green edge in each adjacent $C_i$.)}
 \label{fig:decomposition}
\end{figure}
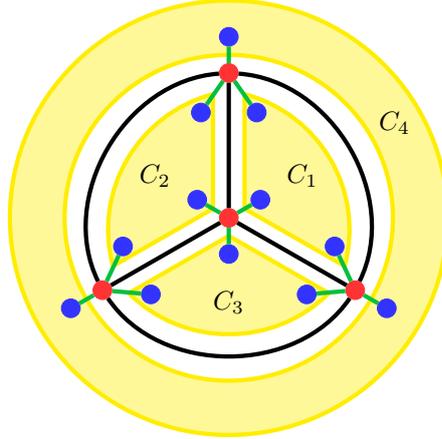

Our results suggest an inductive approach to this question.
Indeed, building on the fact that $2$-WL is able to detect the decomposition into $3$-connected components \cite{KieferN22}, we can restrict our attention to $3$-connected graphs.
Given a $3$-connected planar graph $G$, by combining the results described above, we always obtain that $G$ has one of the following:
\begin{enumerate}[label=(\Alph*)]
 \item\label{item:wl-dim-1} fixing number $1$ under $1$-WL, i.e., individualizing a single vertex and performing $1$-WL (after coloring all pairs by their $2$-WL color) 
 results in a discrete coloring,
 \item\label{item:wl-dim-2} a WL-definable matching, or
 \item\label{item:wl-dim-3} a connected subgraph induced by at most two edge colors that corresponds to a Platonic or Archimedean solid or stems from a small number of infinite graph families.
\end{enumerate}
In Case \ref{item:wl-dim-1}, the graph $G$ is identified by $2$-WL.
In Case \ref{item:wl-dim-2}, we can follow the strategy outlined in \cite{FuhlbruckKV21} and move to a smaller graph by contracting the definable matching.
Therefore, determining the WL dimension of the class of planar graphs boils down to defeating Case \ref{item:wl-dim-3}.
In this case, we obtain a connected subgraph $H$ that is defined by at most two edge colors $c$ and $d$ and which we can classify precisely.
Let $C_1,\dots,C_s$ denote the vertex sets of the connected components of $G - V(H)$, the graph $G$ with the vertices in $H$ removed (see also Figure \ref{fig:decomposition}).
Also, let $G'$ be a second graph that cannot be distinguished from $G$ by $2$-WL.
Let $H'$ denote the subgraph of $G'$ induced by $c$ and $d$ and let $C_1',\dots,C_{s}'$ denote the vertex sets of the connected components of $G' - V(H')$. Presupposing by induction that the statement holds for smaller graphs, we may assume that $2$-WL identifies the subgraphs induced by $C_1,\dots,C_s$.
This implies that $G[C_i]$ is isomorphic to $G'[C_i']$ for all $i \in [s]$ (possibly after reordering the sets $C_1',\dots,C_{s}'$).
It is not hard to see that $2$-WL identifies $H$ and, thus, $H$ is isomorphic to $H'$.
Now, ideally, we want to glue all these partial isomorphisms together to obtain a global isomorphism from $G$ to $G'$.
Towards this end, it is our intuition that the options for the interplay between $H$ and the sets $C_i$ are extremely limited due to $G$ being planar and $H$ being defined by few edge colors, which enforces strong regularity conditions on the interaction between $H$ and the surrounding graph. A formalization of such a strategy for all subcases that can appear in Case \ref{item:wl-dim-3} should yield that $2$-WL identifies every planar graph.

\section{Preliminaries}
\label{sec:prelim}

\subsection{Graphs}

\subparagraph{Basics}
An (undirected) \emph{graph} is a pair $G = (V(G),E(G))$ of a finite \emph{vertex set} $V(G)$ and an \emph{edge set} $E(G) \subseteq \big\{\{u,v\} \, \big\vert \, u \neq v \in V(G)\big\}$. 
All graphs considered in this paper are finite and simple (i.e., they contain no loops or multiple edges).
Unless stated explicitly otherwise, graphs are undirected.
For a directed graph $G'$, we write $\undir{G'}$ to denote its undirected version.
For $v,w \in V(G)$, we also write $vw$ as a shorthand for $\{v,w\}$.
The \emph{neighborhood} of~$v$ in $G$ is denoted by~$N_G(v)$ and the \emph{degree} of $v$ in $G$ is $\deg_G(v) \coloneqq |N_G(v)|$.
The \emph{closed neighborhood} of $v$ in $G$ is $N_G[v] \coloneqq N(v) \cup \{v\}$.
For $W \subseteq V(G)$, we also define $N(W) \coloneqq \left(\bigcup_{v \in W}N(v)\right) \setminus W$.
For a directed graph $G'$, we define $N_{G'}^+(v) \coloneqq \{w \in V(G') \mid (v,w) \in E(G')\}$ and $N_{G'}^-(v) \coloneqq \{w \in V(G') \mid (w,v) \in E(G')\}$.
Also, $N_{G'}(v) \coloneqq N_{G'}^+(v) \cup N_{G'}^-(v)$.
If the graph is clear from the context, we usually omit the index.

A \emph{walk of length $k$} from $v$ to $w$ is a sequence of vertices $v = u_0,u_1,\dots,u_k = w$ such that $u_{i-1}u_i \in E(G)$ for all $i \in [k] \coloneqq \{1,\dots,k\}$. 
A \emph{path of length $k$} from $v$ to $w$ is a walk of length $k$ from $v$ to $w$ for which all occurring vertices are pairwise distinct.
The \emph{distance} $\dist_G(v,w)$ between two vertices $v,w \in V(G)$ is the length of a shortest path between them.
For an undirected graph $G$ and $U,W \subseteq V(G)$, we define $E_G(U,W) \coloneqq \{uw \in E(G) \mid u \in U, w \in W\}$.
We denote by $G[W]$ the \emph{induced subgraph} of $G$ on the vertex set $W$, and define $G - W \coloneqq G[V(G) \setminus W]$.
A \emph{minor} of $G$ is a graph obtained by deleting vertices and edges of $G$ as well as contracting edges of $G$, i.e., replacing an edge $vw$ with a fresh vertex $u$ with $N(u) \coloneqq (N(v) \setminus \{w\}) \cup (N(w) \setminus \{v\})$.
A set $S \subseteq V(G)$ is a \emph{separator} of $G$ if $G - S$ has more connected components than $G$. 
A \emph{$k$-separator} of $G$ is a separator of $G$ of size $k$.
A vertex $v \in V(G)$ is a \emph{cut vertex} if $\{v\}$ is a separator of $G$.
The graph $G$ is \emph{$k$-connected} if it is connected and has no separator of size at most $k-1$.

In our definitions of vertex sets of graphs, we use the notation $\uplus$ to denote a formal disjoint union.
More precisely, for sets $U$ and $W$, the set $U \uplus W$ contains $|U| + |W|$ vertices, one distinct copy of each vertex in $U$ and one distinct copy of each vertex in $W$.
(For ease of notation, we refer to the vertices by their original names in $U$ and $W$ instead of renaming them first.)

\subparagraph{Colorings}
A \emph{vertex-colored graph} is a tuple $(G,\lambda)$ where $G$ is a graph and $\lambda\colon V(G) \rightarrow C$ is a \emph{vertex coloring}, a mapping from $V(G)$ into some set $C$ of colors.
We define the set of \emph{arcs} of a graph $G$ as $A(G) \coloneqq \{(v,v) \mid v \in V(G)\} \cup \{(v,w) \mid \{v,w\} \in E(G)\}$.
Observe that, for each $vw \in E(G)$, there are the two arcs $(v,w)$, $(w,v)$.
An \emph{arc-colored graph} is a tuple $(G,\lambda)$ where $G$ is a graph and $\lambda\colon A(G) \rightarrow C$ is a mapping from $A(G)$ into some set $C$ of colors. 
Similarly, a \emph{pair-colored graph} is a tuple $(G,\lambda)$ where $G$ is a graph and $\lambda\colon (V(G))^2 \rightarrow C$ is a mapping into some set $C$ of colors.

Typically, the set $C$ is chosen to be an initial segment $[n]$ of the natural numbers.
We say a coloring $\lambda$ is \emph{discrete} if it is injective, i.e., all color classes have size $1$. 
Finally, for a coloring $\lambda$ and distinct vertices $v_1,\dots,v_\ell$, we denote by $(G,\lambda,v_1,\dots,v_\ell)$ the colored graph where each $v_i$ for $i \in [\ell]$ is individualized.
To be more precise, if $\lambda$ is a vertex coloring, then $(G,\lambda,v_1,\dots,v_\ell) \coloneqq (G,\widetilde{\lambda})$ where $\widetilde{\lambda}(v_i) = (1,i)$ for all $i \in [\ell]$, and $\widetilde{\lambda}(v) = (0,\lambda(v))$ for all $v \in V(G) \setminus\{v_1,\dots,v_\ell\}$.
The definitions for arc and pair colorings are analogous.
We generally assume that all graphs are arc-colored even if not explicitly stated.
Every (uncolored) graph can be interpreted as an arc-colored graph by assigning to every diagonal arc $(v,v)$ the color $1$ and assigning to every non-diagonal arc the color $2$.

An \emph{isomorphism} from $G$ to another graph $H$ is a bijection $\varphi\colon V(G) \rightarrow V(H)$ that respects the edge relation, that is, for all~$v,w \in V(G)$, it holds that~$\{v,w\} \in E(G)$ if and only if $\{\varphi(v),\varphi(w)\} \in E(H)$.
The graphs $G$ and $H$ are \emph{isomorphic} ($G \cong H$) if there is an isomorphism from~$G$ to~$H$.
We write $\varphi\colon G\cong H$ to denote that $\varphi$ is an isomorphism from $G$ to $H$.
An \emph{automorphism} of $G$ is an isomorphism from $G$ to itself and we write $\Aut(G)$ to denote the group of all automorphism of $G$.
Isomorphisms between vertex-, arc-, and pair-colored graphs have to respect the colors of the vertices, arcs, and pairs.

\subparagraph{Planar Graphs}
A graph is called \emph{planar} if it can be embedded into the plane $\mathbb{R}^2$. It is easy to see that the class of all planar graphs is minor-closed.

The following well-known characterization of planar graphs in terms of forbidden minors is often useful to argue about structural properties of planar graphs.
We define $K_n$ to be the complete graph on $n$ vertices, and $K_{m,n}$ to be the complete bipartite graph with $m$ vertices on the left side and $n$ vertices on the right side.

\begin{theorem}[Wagner's theorem \cite{Wagner37}]
 \label{thm:wagner}
 A graph $G$ is planar if and only if neither $K_5$ nor $K_{3,3}$ is a minor of $G$.
\end{theorem}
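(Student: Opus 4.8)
The plan is to prove both directions, with the forward direction being routine and the converse being the substantial part. For the forward direction, recall that the excerpt already observes that the class of planar graphs is minor-closed; hence it suffices to show that $K_5$ and $K_{3,3}$ are themselves non-planar, since then no planar graph can contain either as a minor. To see this, I would invoke Euler's formula $n - m + f = 2$ for a connected planar graph with $n$ vertices, $m$ edges, and $f$ faces. Double-counting incidences between edges and faces of a simple planar graph gives $2m \geq 3f$, hence $m \leq 3n - 6$; for $K_5$ this reads $10 \leq 9$, a contradiction. For the bipartite graph $K_{3,3}$, every face is bounded by at least four edges, so $2m \geq 4f$ and $m \leq 2n - 4$; here this reads $9 \leq 8$, again a contradiction. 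Thus neither forbidden graph is planar.

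For the converse, assume $G$ contains no $K_5$ and no $K_{3,3}$ minor; I want to produce a planar embedding. The first move is to pass from minors to topological minors (subdivisions): a short lemma shows that $G$ has a $K_5$ or $K_{3,3}$ minor if and only if $G$ contains a subdivision of $K_5$ or of $K_{3,3}$. The key ingredient is that $K_{3,3}$ has maximum degree $3$, so any $K_{3,3}$ minor already yields a $K_{3,3}$ subdivision, while a $K_5$ minor (maximum degree $4$) yields \emph{either} a $K_5$ subdivision \emph{or} a $K_{3,3}$ subdivision, by a case analysis on how the branch sets attach to one another. Hence it is equivalent to assume that $G$ contains no subdivision of $K_5$ or $K_{3,3}$, which is the Kuratowski form of the hypothesis.

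The second move is to reduce to the $3$-connected case, mirroring the decomposition into $3$-connected components used throughout this paper. If $G$ has a separator of size at most $2$, I split $G$ along it into smaller pieces (adding a virtual edge across a $2$-separator), observe that each piece again excludes the forbidden subdivisions, embed the pieces inductively, and glue the embeddings along the shared separator by placing one piece inside a face incident to the separating pair. The base of this reduction is the claim that every $3$-connected graph with no $K_5$ or $K_{3,3}$ subdivision is planar. I would prove this by induction on the number of vertices, using the structural fact that every $3$-connected graph on at least five vertices has an edge $e = uv$ whose contraction $G/e$ is again $3$-connected. By induction $G/e$ has a convex planar embedding; restoring the contracted vertex leaves a face whose boundary contains all neighbors of $u$ and of $v$, and the absence of the two forbidden subdivisions forces these neighbors to be arranged around that face so that the edge $e$ can be reinserted without crossings.

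I expect the main obstacle to be exactly this last inductive step in the $3$-connected base case: controlling the cyclic arrangement, around the face vacated by the contracted vertex, of the neighbors of $u$ and of $v$, and showing that whenever they interleave badly one can read off a $K_5$ or $K_{3,3}$ subdivision. Carrying a convex (straight-line) embedding as the inductive invariant, rather than merely a topological one, is what makes the reinsertion argument go through cleanly. Verifying that a contractible edge preserving $3$-connectivity always exists (Tutte's lemma) is the other technical point that must be handled with care.
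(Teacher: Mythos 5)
The paper does not prove this statement at all: Wagner's theorem is imported as a black-box citation to \cite{Wagner37} and used only through its corollaries (the edge bounds $m \leq 3n-6$ and $m \leq 2n-4$, and the forbidden-minor arguments in Lemmas \ref{la:3-faces-connected} and \ref{la:type-ii-connected-subgraphs-classification}), so there is no in-paper argument to compare yours against. Your outline is the standard textbook proof and is sound at the level of detail given: the forward direction via minor-closedness plus the Euler-formula edge counts (which the paper itself records as Corollary \ref{cor:planar-degree-bound}'s precursor) is correct, as is the equivalence of the minor and topological-minor (Kuratowski) forms, the reduction to the $3$-connected case by splitting along small separators with virtual edges, and the base case via a contractible edge and a convex embedding. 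Two small caveats: the contractible-edge lemma for $3$-connected graphs on at least five vertices is due to Thomassen rather than Tutte (Tutte's wheel theorem is the related but different statement), and the gluing step across a $2$-separator needs the observation that the virtual edge in each piece can be simulated by a path through the other piece, so that a forbidden subdivision in a piece-with-virtual-edge would lift to one in $G$; you gesture at both points but they are where the remaining work lies. Neither is a gap in the sense of a wrong idea --- the skeleton you describe is exactly the argument one would write out in full.
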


A \emph{plane graph} is a graph embedded into the plane. As the following statement shows, all plane realizations of a planar graph have the same number of faces, i.e., regions bounded by edges.

\begin{theorem}[Euler's formula]
 Let $G$ be a connected plane graph with $n$ vertices, $m$ edges, and $f$ faces. Then $n-m+f = 2$.
\end{theorem}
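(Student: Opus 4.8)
The plan is to prove the identity by induction on the number $m$ of edges, with a spanning tree as the backbone of the argument. For the base case I would take $G$ to be a tree, i.e., a connected graph with $m = n-1$ edges and no cycles (recall that a connected graph on $n$ vertices always has at least $n-1$ edges, with equality exactly for trees). A plane tree bounds no bounded region, so it has a single face, namely the unbounded one, giving $f = 1$ and hence $n - m + f = n - (n-1) + 1 = 2$, as required.

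For the inductive step, suppose $G$ is not a tree, so that $m \geq n$ and $G$ contains a cycle. I would pick any edge $e = uv$ lying on a cycle of $G$. Since $e$ is a cycle edge, deleting it keeps the graph connected, so $G - e$ is again a connected plane graph, now with $n$ vertices and $m-1$ edges. The crucial observation is that $e$ borders exactly two distinct faces of $G$, and removing $e$ merges these two faces into a single face of $G - e$; thus $G - e$ has exactly $f - 1$ faces. Applying the induction hypothesis to $G - e$ yields $n - (m-1) + (f-1) = 2$, which simplifies to $n - m + f = 2$, completing the induction. Equivalently, one could run the argument in the forward direction, starting from a spanning tree $T$ of $G$ (for which the formula holds with one face) and inserting the remaining $m - (n-1)$ non-tree edges one at a time according to the embedding of $G$, observing that each inserted edge splits one face into two and hence increases the face count by exactly one; the final count $f = 1 + (m - n + 1)$ again gives $n - m + f = 2$.

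The real substance of the proof lies entirely in the two topological claims I have used, namely that an edge on a cycle separates two distinct faces (so that its deletion merges them) and, dually, that inserting an edge between two vertices on the boundary of a common face splits that face in two. Both are instances of the Jordan curve theorem: a simple closed curve in $\RR^2$ partitions the plane into exactly one bounded and one unbounded region. I expect this to be the main obstacle in a fully rigorous treatment, since it is the only genuinely non-combinatorial ingredient; the counting itself is routine once these facts are granted. For the purposes of this paper I would regard the Jordan curve theorem as known and invoke it directly.
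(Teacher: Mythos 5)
Your argument is the standard and correct proof of Euler's formula: induction on the number of edges, deleting a cycle edge (which merges two faces) or, equivalently, building up from a spanning tree, with the Jordan curve theorem supplying the two topological facts you isolate. The paper itself states Euler's formula as classical background and gives no proof, so there is nothing in the paper to compare against; your treatment, including the explicit acknowledgment that the face-merging and face-splitting claims are the only non-combinatorial content, is exactly what a self-contained proof would look like and contains no gaps.
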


It follows quite easily from Euler's formula that for every planar graph $G$, it holds that $m \leq 3n - 6$.
Also, if $n > 3$ and $G$ does not contain a triangle, then $m \leq 2n - 4$.
This implies the existence of vertices of small degree

\begin{corollary}
 \label{cor:planar-degree-bound}
 Let $G$ be a planar graph.
 Then there is some $v \in V(G)$ such that $\deg(v) \leq 5$.
 Also, if $G$ does not contain a triangle, then there is some $v \in V(G)$ such that $\deg(v) \leq 3$.
\end{corollary}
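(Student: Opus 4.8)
The plan is to derive both degree bounds directly from Euler's formula via the two edge inequalities stated in the excerpt, namely $m \le 3n-6$ in general and $m \le 2n-4$ in the triangle-free case (for $n > 3$). The key observation is that a degree bound follows from an averaging argument once we control the total number of edges.

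First I would reduce to the connected case, since the degree of a vertex only depends on its own connected component, and a planar graph remains planar on each component; it thus suffices to prove the bound for a connected planar graph. I would also dispose of the small cases $n \le 1$ separately (a single vertex has degree $0 \le 5$, and the empty graph is vacuous), so that we may assume $n \ge 2$ and hence $m \le 3n - 6$ applies. Then I would argue by contradiction: suppose every vertex satisfies $\deg(v) \ge 6$. Summing over all vertices and using the handshake identity $\sum_{v \in V(G)} \deg(v) = 2m$, we get $2m = \sum_{v} \deg(v) \ge 6n$, so $m \ge 3n$. This contradicts $m \le 3n - 6$, which completes the first part. Equivalently, and perhaps cleaner to write, the average degree is $2m/n \le (6n-12)/n < 6$, so some vertex has degree strictly less than $6$, i.e.\ at most $5$.

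For the triangle-free case I would run the analogous argument with the sharper bound $m \le 2n-4$. Handling the hypothesis $n > 3$ of that inequality requires first checking the remaining small cases $n \le 3$ by hand: a triangle-free graph on at most three vertices has at most two edges (any third edge would create a triangle or a multi-edge, which is forbidden), so it contains a vertex of degree at most $1 \le 3$. For $n > 3$, assuming toward a contradiction that every vertex has degree at least $4$ gives $2m = \sum_v \deg(v) \ge 4n$, so $m \ge 2n$, contradicting $m \le 2n - 4$. Hence some vertex has degree at most $3$.

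Since this is a corollary whose mathematical content is entirely contained in the two edge inequalities already asserted in the excerpt, I do not expect any genuine obstacle; the only points demanding slight care are the boundary cases where the relevant inequality does not literally apply (the connectivity reduction and the $n > 3$ restriction in the triangle-free bound). The main \emph{presentational} choice is whether to phrase the argument as an averaging statement about the mean degree or as a proof by contradiction via the handshake sum; both are routine, and I would favor the averaging phrasing for brevity while noting the small-case checks explicitly so that no hypothesis of Euler's formula or of the derived edge bounds is silently violated.
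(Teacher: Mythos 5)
Your proposal is correct and follows exactly the argument the paper intends: the corollary is stated as an immediate consequence of the edge bounds $m \leq 3n-6$ and $m \leq 2n-4$ derived from Euler's formula, combined with the handshake identity, and the paper gives no further proof. Your explicit handling of the small cases and the disconnected case is careful and fills in the routine details the paper leaves implicit.
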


We will also fall back on the following famous theorem due to Whitney.

\begin{theorem}[Whitney's theorem \cite{Whitney32}]
Up to homeomorphism, a $3$-connected planar graph has a unique embedding into the plane.
\end{theorem}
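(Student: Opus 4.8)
The plan is to reduce this geometric uniqueness statement to a purely combinatorial one, by characterizing the face boundaries of \emph{any} plane embedding intrinsically, without reference to the embedding. Concretely, I would call a cycle $C$ in $G$ \emph{peripheral} if it is induced (has no chords) and $G - V(C)$ is connected, and I would prove that the face boundaries of every plane embedding of a $3$-connected planar graph $G$ are exactly its peripheral cycles. Since being peripheral is a property of the abstract graph alone, this immediately gives that all embeddings share the same collection of facial cycles. Recalling the standard fact that, for a $2$-connected plane graph, the embedding on the sphere is determined up to homeomorphism by its set of facial cycles (and that $3$-connectivity implies $2$-connectivity, so every facial walk is indeed a cycle), the uniqueness follows; passing back from the sphere to the plane only introduces a choice of outer face, which changes the embedding by a homeomorphism.

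The core of the argument is thus the two inclusions between facial and peripheral cycles. The easier direction is that every peripheral cycle $C$ is facial. Drawn on the sphere, $C$ is a Jordan curve bounding two open disks $D_1$ and $D_2$. If each of $D_1,D_2$ contained a vertex of $G - V(C)$, then any path in $G$ between two such vertices would have to meet the curve $C$, and since edges do not cross in a plane embedding, it would meet $C$ in a vertex of $V(C)$; hence $V(C)$ would separate these vertices and $G-V(C)$ would be disconnected, contradicting peripherality. So all of $G-V(C)$ lies in one disk, say $D_1$. As $C$ is induced, the only edges among $V(C)$ are those of $C$ itself, so the interior of $D_2$ contains neither a vertex nor an edge and is therefore a single face with boundary $C$.

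The harder direction, that every facial cycle is peripheral, is where I expect the main obstacle to lie. Fix an embedding and a face $F$ bounded by a cycle $C$. A chord $uv$ of $C$ would have to be drawn in the closed region $R$ opposite to $F$ and would split $C$ into two arcs, each containing an internal vertex (a chord forces $|C| \geq 4$); then $\{u,v\}$ separates the interiors of these two arcs in $G$, contradicting $3$-connectivity, so $C$ is induced. For connectedness of $G - V(C)$ I would argue through the bridges of $C$. Since $F$ is an empty face, every bridge of $C$ lies inside the single region $R$. If $C$ had two distinct bridges, planarity would force them to be non-interleaving, meaning that the attachment vertices of one bridge all lie on a single arc of $C$ delimited by two consecutive attachment vertices $u,v$ of the other bridge. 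But then $\{u,v\}$ is a $2$-separator of $G$, again contradicting $3$-connectivity. Hence $C$ has a unique bridge, $G - V(C)$ is connected, and $C$ is peripheral.

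The delicate point, and the step I expect to require the most care, is precisely the bridge-theoretic claim used above: that in a fixed plane embedding two bridges lying in the same region of $C$ must be non-interleaving, and that a non-interleaving pair yields a $2$-separator formed by two consecutive attachment vertices. Making the planarity obstruction fully rigorous (via the Jordan curve theorem applied to the region $R$ and a careful treatment of bridge attachments) is the heart of the proof; everything else is combinatorial bookkeeping. Once the identification of facial cycles with peripheral cycles is established, the embedding invariance and hence Whitney's uniqueness statement follow as described in the first paragraph.
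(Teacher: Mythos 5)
The paper does not prove this statement at all: Whitney's theorem is imported as a black box from \cite{Whitney32} and used only to justify speaking of the faces of a $3$-connected planar graph as combinatorial objects. So there is no in-paper argument to compare yours against, and I can only assess the proposal on its own terms. What you describe is the classical proof (due in essence to Whitney, in the form popularized by Tutte and by Diestel's textbook): characterize the facial cycles of any embedding intrinsically as the \emph{peripheral} cycles, i.e.\ the induced cycles $C$ with $G-V(C)$ connected, and then invoke the fact that the collection of facial cycles determines the spherical embedding up to homeomorphism. The outline is sound, and you correctly identify the two delicate points. For the bridge step, note that after you have excluded chords, every bridge of a facial cycle $C$ has a nonempty nucleus, so two distinct bridges correspond to two components of $G-V(C)$; the avoidance property then confines all attachments of one bridge to a single segment $[u,v]$ of the other, and you must also check that no \emph{third} bridge can attach both strictly inside $(u,v)$ and outside $[u,v]$ (it cannot, since it too must avoid the second bridge), which is what makes $\{u,v\}$ a genuine $2$-separator. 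The other point worth flagging is that your closing appeal to ``the standard fact that a $2$-connected plane graph is determined up to homeomorphism of the sphere by its facial cycles'' is where the topological content of the phrase ``up to homeomorphism'' actually lives; citing it is legitimate, but a fully self-contained proof would have to construct the homeomorphism face by face. Neither issue is a gap in the sense of a wrong step --- the argument is correct as a proof sketch of the cited classical result.
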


The theorem allows us to speak about faces of $3$-connected planar graphs as abstract objects, since it implies that in a $3$-connected planar graphs, the set of faces does not depend on a specific embedding and thus, the faces can be viewed as combinatorial objects associated with $G$ and are uniquely defined by their sets of vertices $V(F)$ and the edges $E(F)$ bounding~$F$. We will therefore not draw a clear distinction between this combinatorial view and the topological view of $F$ as a region and just use whichever is most suitable for our purpose.
We use $|F| \coloneqq |V(F)| = |E(F)|$ to denote the \emph{size} of the face $F$.

\subsection{The Weisfeiler--Leman Algorithm}

Let~$\chi_1,\chi_2\colon (V(G))^k \rightarrow C$ be colorings of the~$k$-tuples of vertices of a graph~$G$. 
We say $\chi_1$ \emph{refines} $\chi_2$, denoted $\chi_1 \preceq \chi_2$, if $\chi_1(\bar v) = \chi_1(\bar w)$ implies $\chi_2(\bar v) = \chi_2(\bar w)$ for all $\bar v,\bar w \in (V(G))^{k}$.
The colorings $\chi_1$ and $\chi_2$ are \emph{equivalent}, denoted $\chi_1 \equiv \chi_2$,  if $\chi_1 \preceq \chi_2$ and $\chi_2 \preceq \chi_1$.

\subparagraph{The Algorithm}
Given a graph $G$, the algorithm $1$-WL iteratively computes an isomorphism-invariant coloring of the vertices of $G$.
In this work, we actually require an extension of $1$-WL, which also takes arc colors into account.
For an arc-colored graph $(G,\lambda)$, we define the initial coloring computed by the algorithm via $\WLit{1}{0}{G}(v) \coloneqq \lambda(v,v)$ for all $v \in V(G)$. 
This coloring is refined via $\WLit{1}{i+1}{G}(v) \coloneqq (\WLit{1}{i}{G}(v), \CM_i(v))$ where $\CM_i(v)$ is a multiset defined as
\[\CM_i(v) \coloneqq\Big\{\!\!\Big\{\big(\WLit{1}{i}{G}(w),\lambda(v,w),\lambda(w,v)\big) \;\Big\vert\; w \in N_G(v) \Big\}\!\!\Big\}.\]
By definition, $\WLit{1}{i+1}{G} \preceq \WLit{1}{i}{G}$ holds for all $i \geq 0$.
Hence, there is a minimal value $i_\infty$ such that $\WLit{1}{i_\infty}{G} \equiv \WLit{1}{i_\infty+1}{G}$.
We call the coloring $\WLit{1}{i_\infty}{G}$ the \emph{stable coloring} of $G$ and denote it by $\WL{1}{G}$.
The algorithm $1$-WL takes an arc-colored graph $(G,\lambda)$ as input and returns $\WL{1}{G}$.

We can also apply $1$-WL to a pair-colored graph $(G,\lambda)$. This can be done by defining $\widetilde{\lambda}(v_1,v_2) \coloneqq (1,\lambda(v_1,v_2))$ for all $v_1,v_2 \in V(G)$ with $v_1v_2 \in E(G)$, and $\widetilde{\lambda}(v_1,v_2) \coloneqq (0,\lambda(v_1,v_2))$ for all $v_1,v_2 \in V(G)$ with $v_1v_2 \notin E(G)$. 
Then we define $\WL{1}{G,\lambda} \coloneqq \WL{1}{H,\widetilde{\lambda}}$ where $H$ is a complete graph on vertex set $V(G)$.

Next, we describe the \emph{$k$-dimensional Weisfeiler--Leman algorithm} ($k$-WL) for $k \geq 2$.
For an input graph $G$, let $\WLit{k}{0}{G}\colon (V(G))^{k} \rightarrow C$ be the coloring where each tuple is colored with the isomorphism type of its underlying ordered subgraph.
More precisely, $\WLit{k}{0}{G}(v_1,\dots,v_k) = \WLit{k}{0}{G}(v_1',\dots,v_k')$ if and only if, for all $i,j \in [k]$, it holds that
$v_i = v_j \Leftrightarrow v_i'= v_j'$ and $v_iv_j \in E(G) \Leftrightarrow v_i'v_j' \in E(G)$.
If the graph comes equipped with a coloring, the initial coloring $\WLit{k}{0}{G}$ also takes the input coloring into account. 
More formally, for an arc coloring~$\lambda$, for $\WLit{k}{0}{G}(v_1,\dots,v_k) = \WLit{k}{0}{G}(v_1',\dots,v_k')$ to hold, we have the additional conditions $\lambda(v_i,v_j) = \lambda(v_i',v_j')$ for all $i,j \in [k]$ with $(v_i,v_j) \in A(G)$. 
For a pair coloring $\lambda$, we have the additional conditions $\lambda(v_i,v_j) = \lambda(v_i',v_j')$ for all $i,j \in [k]$.

We then recursively define the coloring $\WLit{k}{i}{G}$ obtained after $i$ rounds of the algorithm.
For $\bar v = (v_1,\dots,v_k) \in (V(G))^k$, set
$\WLit{k}{i+1}{G}(\bar v) \coloneqq \big(\WLit{k}{i}{G}(\bar v), \CM_i(\bar v)\big)$,
where
\[\CM_i(\bar v) \coloneqq \Big\{\!\!\Big\{\big(\WLit{k}{i}{G}(\bar v[w/1]),\dots,\WLit{k}{i}{G}(\bar v[w/k])\big) \;\Big\vert\; w \in V(G) \Big\}\!\!\Big\}\]
and $\bar v[w/i] \coloneqq (v_1,\dots,v_{i-1},w,v_{i+1},\dots,v_k)$ is the tuple obtained from substituting the $i$-th entry of $\bar v$ with $w$.
Again, there is a minimal~$i_\infty$ such that $\WLit{k}{i_{\infty}}{G} \equiv \WLit{k}{i_{\infty}+1}{G}$, and we set $\WL{k}{G} \coloneqq \WLit{k}{i_\infty}{G}$.
More generally, we say a coloring of $k$-tuples $\chi\colon (V(G))^{k} \rightarrow C$ is \emph{$k$-stable} (with respect to $G$) if $\chi \preceq \WL{k}{G}$ and $\chi$ is not strictly refined by applying one round of $k$-WL.

The algorithm $k$-WL takes a (pair- or arc-)colored graph $G$ as input and returns $\WL{k}{G}$.
Given graphs $G$ and $H$, the algorithm \emph{distinguishes} $G$ and $H$ if $\{\!\!\{\WL{k}{G}(\bar v) \mid \bar v \in (V(G))^k\}\!\!\} \neq \{\!\!\{\WL{k}{H}(\bar w) \mid \bar w \in (V(H))^k\}\!\!\}$.
We write $G \simeq_k H$ if $k$-WL does not distinguish $G$ and $H$.
Also, $k$-WL \emph{identifies} $G$ if it distinguishes $G$ from every other non-isomorphic graph.

\begin{definition}\label{def:determines-orbits}
 Let $G$ be a graph and let $k \geq 2$.
 Then \emph{$k$-WL determines arc orbits on $G$} if for every $(v_1,v_2) \in A(G)$, every graph $H$, and every $(w_1,w_2) \in A(H)$ such that $\WL{k}{G}(v_1,v_2,\dots,v_2) = \WL{k}{H}(w_1,w_2,\dots,w_2)$,
 there is an isomorphism $\varphi\colon G \cong H$ such that $\varphi(v_i) = w_i$ holds for both $i \in \{1,2\}$.
 
 Moreover, \emph{$k$-WL determines pair orbits of $G$} if for all $v_1,v_2 \in V(G)$, every graph $H$, and all $w_1,w_2 \in V(H)$ such that $\WL{k}{G}(v_1,v_2,\dots,v_2) = \WL{k}{H}(w_1,w_2,\dots,w_2)$,
 there is an isomorphism $\varphi\colon G \cong H$ such that $\varphi(v_i) = w_i$ holds for both $i \in \{1,2\}$.
\end{definition}

Observe that if $k$-WL determines arc or pair orbits of $G$, then it identifies $G$.
Indeed, if for a second graph $H$, there is no isomorphism from $G$ to $H$, the multisets of $\WLab{k}$-colors in the two graphs must be disjoint by Definition \ref{def:determines-orbits}.

\subparagraph{The Bijective Pebble Game}

To analyze the Weisfeiler--Leman algorithm, it is sometimes more convenient to use the following characterization via an Ehrenfeucht-Fraïssé game, which is known to capture the same information.
Let $k \in \mathbb{N}$.
For graphs $G$ and $H$ on the same number of vertices and with arc colorings $\lambda_G$ and $\lambda_H$, respectively,
we define the \emph{bijective $k$-pebble game} $\BP^{k}(G,H)$ as follows:
\begin{itemize}
 \item The game has two players called \emph{Spoiler} and \emph{Duplicator}.
 \item The game proceeds in rounds, each of which is associated with a pair of positions
 $(\bar v,\bar w)$ with~$\bar v \in \big(V(G)\big)^\ell$ and~$\bar w \in \big(V(H)\big)^\ell$, where $0 \leq \ell \leq k$.
 \item The initial position of the game is a pair of vertex tuples of equal length $\ell$ with $0 \leq \ell \leq k$. If not specified otherwise, the initial position is the pair $\big((),()\big)$ of empty tuples.
 \item Each round consists of the following steps. Suppose the current position of the game is $(\bar v,\bar w) = ((v_1,\ldots,v_\ell),(w_1,\ldots,w_\ell))$.
  First, Spoiler chooses whether to remove a pair of pebbles or to play a new pair of pebbles.
  The first option is only possible if $\ell > 0$, and the second option is only possible if $\ell < k$.
  
  If Spoiler wishes to remove a pair of pebbles, he picks some $i \in [\ell]$ and the game moves to position
  $(\bar v\setminus i,\bar w\setminus i)$ where $\bar v \setminus i \coloneqq (v_1,\dots,v_{i-1},v_{i+1},\dots,v_\ell)$, and the tuple ($\bar w \setminus i)$ is defined in the analogous way.
  Otherwise, the following steps are performed.
  \begin{itemize}
   \item[(D)] Duplicator picks a bijection $f\colon V(G) \rightarrow V(H)$.
   \item[(S)] Spoiler chooses $v \in V(G)$ and sets $w \coloneqq f(v)$.
    Then the game moves to position $\big((v_1,\dots,v_\ell,v),(w_1,\dots,w_\ell,w)\big)$.
  \end{itemize}

  If for the current position~$\big((v_1,\dots,v_\ell),(w_1,\dots,w_\ell)\big)$, the induced ordered subgraphs of $G$ and $H$ are not isomorphic, Spoiler wins the play.
  More precisely, Spoiler wins if there there are $i,j\in [\ell]$ such that $v_i = v_j\notleftright w_i =w_j$, or $v_iv_j \in E(G)\notleftright w_iw_j \in E(H)$, or $\lambda(v_i,v_j) \neq \lambda(w_i,w_j)$ assuming $(v_i,v_j) \in A(G)$.
  If there is no position of the play such that Spoiler wins, then Duplicator wins.
\end{itemize}

Observe that the game naturally extends to pair-colored graphs $G$ and $H$ by adapting the winning condition for Spoiler.

We say that Spoiler (and Duplicator, respectively) \emph{wins the bijective $k$-pebble game $\BP^k(G,H)$} if Spoiler (and Duplicator, respectively) has a winning strategy for the game.

The following theorem describes the correspondence between the Weisfeiler--Leman algorithm and the introduced pebble games.\footnote{The pebble games in \cite{CaiFI92} are defined slightly differently. Still, a player has a winning strategy in the game described there if and only if they have one in our game and thus, Theorem \ref{thm:eq-wl-pebble-tuples} holds for both versions of the game.}

\begin{theorem}[\cite{CaiFI92,Hella96}]
 \label{thm:eq-wl-pebble-tuples}
 Let $G,H$ be two graphs and let $\bar v \in \big(V(G)\big)^{k}$ and $\bar w \in \big(V(H)\big)^{k}$.
 Then $\WL{k}{G}(\bar v) = \WL{k}{H}(\bar w)$ if and only if Duplicator wins the game $\BP^{k+1}(G,H)$ from the initial position $(\bar v,\bar w)$.
\end{theorem}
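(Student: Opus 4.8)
The plan is a round-by-round induction relating the refinement stages of $k$-WL to bounded versions of the game, followed by a passage to the limit. For $i \ge 0$ let $\BP^{k+1}_i(G,H)$ be the game of the statement played for at most $i$ refinement rounds, where Duplicator wins if Spoiler has not won within those rounds; here a single refinement round consists of Spoiler placing a fresh (the $(k+1)$st) pebble pair under a Duplicator bijection and then removing one of the $k$ pebbles of the current tuple. The core claim, proved by induction on $i$, is: for all $\bar v \in (V(G))^k$ and $\bar w \in (V(H))^k$, we have $\WLit{k}{i}{G}(\bar v) = \WLit{k}{i}{H}(\bar w)$ if and only if Duplicator wins $\BP^{k+1}_i(G,H)$ from $(\bar v, \bar w)$. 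The base case $i = 0$ is just the unwinding of definitions: $\WLit{k}{0}{G}(\bar v) = \WLit{k}{0}{H}(\bar w)$ says exactly that $v_j \mapsto w_j$ is a color-preserving isomorphism of the ordered induced subgraphs, which is precisely the condition that the starting position is not already winning for Spoiler.

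For the inductive step I would use that $\WLit{k}{i+1}{G}(\bar v) = \WLit{k}{i+1}{H}(\bar w)$ holds iff $\WLit{k}{i}{G}(\bar v) = \WLit{k}{i}{H}(\bar w)$ and $\CM_i(\bar v) = \CM_i(\bar w)$, and then translate the latter, via the standard equivalence between equality of multisets and existence of a value-preserving bijection, into a bijection $f\colon V(G) \to V(H)$ with $\WLit{k}{i}{G}(\bar v[w/j]) = \WLit{k}{i}{H}(\bar w[f(w)/j])$ for all $w \in V(G)$ and all $j \in [k]$. In the forward direction, $f$ is the bijection Duplicator announces in the first round; after Spoiler places the fresh pebble on some $w$ (so that its image is $f(w)$) and removes the pebble in coordinate $j$, the resulting position carries equal round-$i$ colors, and Duplicator wins the remaining $i$-round game by the induction hypothesis. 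Conversely, if the round-$(i+1)$ colors differ, then either the round-$i$ colors already differ, and Spoiler wins by induction, or $\CM_i(\bar v) \ne \CM_i(\bar w)$; in the latter case no Duplicator bijection can match all substitution tuples, so Spoiler finds a witnessing $w$ and coordinate $j$ reaching a position of unequal round-$i$ color, where he wins by induction.

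To conclude, I would pass from the bounded claim to the stable coloring and the unbounded game. Since $\WLit{k}{i+1}{G} \preceq \WLit{k}{i}{G}$ and $V(G)$ is finite, the colorings stabilize at a minimal $i_\infty$ with $\WL{k}{G} = \WLit{k}{i_\infty}{G}$; on the game side, the unbounded game is a safety game on a finite space of positions, so Duplicator's winning region equals the limit of the decreasing winning regions of the $i$-round games and is reached at a finite stage. Combining both stabilizations with the induction yields $\WL{k}{G}(\bar v) = \WL{k}{H}(\bar w)$ if and only if Duplicator wins $\BP^{k+1}(G,H)$ from $(\bar v, \bar w)$, as claimed.

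The delicate point, and the place where the constant $+1$ is genuinely used, is the bookkeeping in the inductive step. A single WL refinement substitutes one fresh vertex $w$ into every coordinate of the $k$-tuple, so the game must hold $w$ pebbled together with all $k$ pebbles of the current tuple simultaneously before it may lift any of them; this is exactly why $k+1$ pebbles are required. Moreover, lifting the original pebble in coordinate $j$ leaves the fresh pebble in the ``extra'' slot rather than in slot $j$, so the position it yields is a coordinate-permutation of $\bar v[w/j]$; to make the color comparison go through, one needs the equivariance of the WL coloring under simultaneous coordinate permutations, namely that $\WLit{k}{i}{G}(v_{\sigma(1)},\dots,v_{\sigma(k)})$ is determined by $\WLit{k}{i}{G}(v_1,\dots,v_k)$ and $\sigma$ through a relabeling of colors independent of $G$. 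Proving this equivariance lemma, reconciling it with the precise (labeled versus unlabeled) formulation of the game's add/remove moves, and carrying out the routine multiset-bijection and stabilization steps, is where I expect the real work to lie.
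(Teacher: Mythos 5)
This theorem is not proved in the paper at all: it is imported from the literature with the citation to Cai--F\"urer--Immerman and Hella, together with a footnote noting that the game there is formulated slightly differently. So there is no ``paper proof'' to compare against; what you have written is (a sketch of) the standard argument for this classical equivalence, and as a sketch it is sound. The round-indexed induction, the translation of multiset equality $\CM_i(\bar v) = \CM_i(\bar w)$ into a Duplicator bijection matching all substitution tuples simultaneously, and the observation that the $+1$ pebble is needed precisely to hold the fresh vertex alongside all $k$ current pebbles, are exactly the standard ingredients. You also correctly identify the permutation-equivariance of the iterated colorings as the lemma needed to reconcile ``pebble left in the extra slot'' with ``substitution into coordinate $j$.''

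One step of your write-up is looser than you acknowledge: in the passage to the limit you identify Duplicator's winning region in the unbounded game with the limit of the winning regions of your \emph{restricted} $i$-round games, in which every round is forced to be ``place a fresh pebble, then remove one of the original $k$.'' The full game also lets Spoiler remove pebbles without placing, remove the freshly placed pebble, or linger at positions with fewer than $k$ pebbles, and these positions are not covered by your inductive claim, which only speaks about length-$k$ tuples. To close this, Duplicator's strategy in the full game must maintain an invariant on positions of every length $\ell \leq k$ (e.g., equality of stable colors of the tuples padded to length $k$), which in turn requires the projection property that the stable color of a $k$-tuple determines the stable colors of all tuples obtained by forgetting and repeating coordinates. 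This is routine and in the same spirit as the equivariance lemma you already flag, but it is a genuinely separate piece of bookkeeping rather than an instance of it, so it deserves to be named in the list of ``real work'' at the end.
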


\begin{corollary}
 \label{cor:eq-wl-pebble}
 Let $G$ and $H$ be two graphs.
 Then $G \simeq_{k} H$ if and only if Duplicator wins the game $\BP^{k+1}(G,H)$.
\end{corollary}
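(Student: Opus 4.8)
The plan is to derive the statement from Theorem~\ref{thm:eq-wl-pebble-tuples}, which already settles the correspondence for positions consisting of two length-$k$ tuples; the only gap to bridge is that $G \simeq_k H$ is a statement about the \emph{multisets} of $\WL{k}{G}$-colors, whereas the game in the corollary starts from the empty position $((),())$. First I would dispose of the degenerate case $|V(G)| \neq |V(H)|$: then the two color multisets have different cardinalities, so $G \not\simeq_k H$, while the bijective game is defined only on boards of equal size; hence we may assume $|V(G)| = |V(H)| =: n$ and prove the two implications separately.

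For the direction ``Duplicator wins $\Rightarrow G \simeq_k H$'' I would extract a color-preserving bijection on $k$-tuples from Duplicator's winning strategy. Fix such a strategy and consider only plays in which Spoiler repeatedly adds pebbles (never removing) for $k$ rounds. In round $i$, with history $v_1,\dots,v_{i-1}$ already fixed, the strategy prescribes a bijection $f_i \colon V(G) \to V(H)$ (depending on $v_1,\dots,v_{i-1}$); Spoiler is free to choose $v_i$, and the position moves to $((v_1,\dots,v_i),(f_1(v_1),\dots,f_i(v_i)))$, still winning for Duplicator. Writing $w_i := f_i(v_i)$, this defines a map $\Phi\colon (v_1,\dots,v_k)\mapsto(w_1,\dots,w_k)$. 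Because each $f_i$ is a bijection whose argument is the $i$-th coordinate while its parameters are the earlier coordinates, $\Phi$ is a ``triangular'' bijection: injectivity and surjectivity follow coordinate by coordinate (reconstruct $v_1$ from $w_1$ via $f_1^{-1}$, then $f_2$ is determined, recover $v_2$, and so on). After $k$ rounds we sit at a pair of length-$k$ tuples from which Duplicator still wins, so Theorem~\ref{thm:eq-wl-pebble-tuples} yields $\WL{k}{G}(v_1,\dots,v_k)=\WL{k}{H}(\Phi(v_1,\dots,v_k))$ for every $\bar v$. Since $\Phi$ is a bijection, the two color multisets coincide, i.e.\ $G \simeq_k H$.

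The converse direction is the harder one and is where essentially all the work lies. I would prove a ``length-$\ell$'' generalization of Theorem~\ref{thm:eq-wl-pebble-tuples} by a back-and-forth argument, using the theorem itself as the base case $\ell = k$. Concretely, I would attach to every position $(\bar v,\bar w)$ of length $\ell \le k$ an invariant expressed through the stable coloring — the multiset of $\WL{k}{G}$-colors of all length-$k$ completions of $\bar v$ (respectively $\bar w$), recorded so as to retain the grouping by the next coordinate (the nested refinement multiset) — and show that this invariant is preserved under all of Spoiler's moves, so that it describes a set of positions from which Duplicator wins. At $\ell = 0$ the invariant specializes, via the stability of $\WL{k}{G}$, to equality of the flat color multisets, i.e.\ to $G \simeq_k H$, which would finish the proof. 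The two moves to check are pebble removal, handled by the coordinate-deletion property of a stable coloring (deleting a pebble keeps the invariant because $\WL{k}{G}$ is closed under the corresponding refinement), and pebble addition, where Duplicator must exhibit a single bijection $V(G)\to V(H)$ that refines the level-$\ell$ invariant into the level-$(\ell+1)$ invariant for \emph{every} chosen vertex.

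I expect this last step — constructing Duplicator's bijection under pebble addition — to be the main obstacle. The difficulty is that agreement of the aggregate completion-multisets at level $\ell$ does not by itself force the per-next-vertex completion-multisets to match up; what rescues the argument is the \emph{stability} of $\WL{k}{G}$, which guarantees (this is exactly the counting content behind Theorem~\ref{thm:eq-wl-pebble-tuples}) that the numbers of vertices $v\in V(G)$ and $w\in V(H)$ inducing each refined profile are equal. A Hall-type matching within each profile class then assembles the required bijection. The same stability property is what makes the level-$0$ invariant equivalent to the flat statement $G\simeq_k H$, so I would isolate this counting lemma once and invoke it in both places.
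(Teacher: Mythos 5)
Your argument is correct. Note that the paper offers no proof of this corollary at all --- it is presented as an immediate consequence of Theorem~\ref{thm:eq-wl-pebble-tuples}, which is itself imported from \cite{CaiFI92,Hella96} --- so the comparison is against the standard folklore derivation, and your reconstruction matches it. The forward direction via the ``triangular'' bijection $\Phi$ assembled from Duplicator's round-by-round bijections is exactly right, and the converse via a back-and-forth on completion-multisets is the classical route. Two remarks on streamlining the converse, which is where you anticipate the most work. First, for positions of length $1 \leq \ell \leq k$ you can replace the nested completion-multiset by the single color $\WL{k}{G}(v_1,\dots,v_\ell,v_\ell,\dots,v_\ell)$ of the padded tuple (the padding convention the paper already uses in Definition~\ref{def:determines-orbits}); the stable coloring's closure under coordinate substitution and permutation shows this padded color determines your nested invariant, and then both the removal step and the addition step become direct consequences of the counting property that Theorem~\ref{thm:eq-wl-pebble-tuples} already encapsulates, so no separate Hall-type matching is needed beyond grouping vertices by padded color and matching equal-sized classes. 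Second, at level $0$ the passage from the flat multiset hypothesis $G \simeq_k H$ to Duplicator's first bijection is cleanest via diagonal tuples: the stable color of a $k$-tuple records its equality pattern, so the flat multisets determine the sub-multisets $\{\!\{\WL{k}{G}(u,\dots,u) \mid u \in V(G)\}\!\}$ and $\{\!\{\WL{k}{H}(u',\dots,u') \mid u' \in V(H)\}\!\}$, which are indexed by vertices and hence yield the required bijection $V(G) \to V(H)$. With these substitutions your outline closes completely; as written it is sound but carries more bookkeeping than necessary.
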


\section{Edge-Transitive Planar Graphs}
\label{sec:edge-transitive}

In this section, we classify planar graphs where all edges receive the same color with respect to $2$-WL.
We call an undirected graph $G$ \emph{edge-transitive} if for all $uv, u'v' \in E(G)$, there is an automorphism $\varphi \colon V(G) \rightarrow V(G)$ with $\varphi(u) = u'$ and $\varphi(v) = v'$.
It is well-known that there are only nine edge-transitive connected planar graphs of minimum degree $3$ \cite{GrunbaumS87}.
Based on this result, one can easily classify all edge-transitive planar graphs.
Clearly, all of these graphs have the property that all edges receive the same color with respect to $2$-WL.
In this section, we show the converse of this statement, i.e., every planar graph in which all edges receive the same color with respect to $2$-WL is edge-transitive.
Towards this goal, we reprove the classification from \cite{GrunbaumS87} relying only on $2$-WL colors.
More precisely, the main result in this section is the following theorem (see also Figure \ref{fig:transitive}).
Since $2$-WL colors directed pairs and it may happen that a pair $(u,v)$ receives a different color than $(v,u)$, it is more convenient to consider directed graphs and demand that all directed edges receive the same color (rather than saying the pair of colors for both orientations is the same for all undirected edges).

\begin{figure}
 \centering
 \begin{subfigure}[b]{.2\linewidth}
  \scalebox{\figscalesmall}{
  \begin{tikzpicture}
   \node[vertex,red!80] (1) at (0,0) {};
   \node[vertex,red!80] (2) at ($(0,0)+(90:1.6)$) {};
   \node[vertex,red!80] (3) at ($(0,0)+(210:1.6)$) {};
   \node[vertex,red!80] (4) at ($(0,0)+(330:1.6)$) {};
   
   \foreach \i in {1,2,3}{
    \foreach \j in {\i,...,4}{
     \draw[line width=1.6pt] (\i) edge (\j);
    }
   }
  \end{tikzpicture}
  }
  \caption{$K_4$}
  \label{fig:graph-k4}
 \end{subfigure}
 \hfill
 \begin{subfigure}[b]{.2\linewidth}
  \scalebox{\figscalesmall}{
  \begin{tikzpicture}[scale = 0.8]
   \node[vertex,red!80] (1) at ($(0,0)+(45:0.8)$) {};
   \node[vertex,red!80] (2) at ($(0,0)+(135:0.8)$) {};
   \node[vertex,red!80] (3) at ($(0,0)+(225:0.8)$) {};
   \node[vertex,red!80] (4) at ($(0,0)+(315:0.8)$) {};
   
   \node[vertex,red!80] (5) at ($(0,0)+(45:2.4)$) {};
   \node[vertex,red!80] (6) at ($(0,0)+(135:2.4)$) {};
   \node[vertex,red!80] (7) at ($(0,0)+(225:2.4)$) {};
   \node[vertex,red!80] (8) at ($(0,0)+(315:2.4)$) {};
   
   \draw[line width=1.6pt] (1) edge (2);
   \draw[line width=1.6pt] (1) edge (4);
   \draw[line width=1.6pt] (2) edge (3);
   \draw[line width=1.6pt] (3) edge (4);
   
   \draw[line width=1.6pt] (5) edge (6);
   \draw[line width=1.6pt] (5) edge (8);
   \draw[line width=1.6pt] (6) edge (7);
   \draw[line width=1.6pt] (7) edge (8);
   
   \draw[line width=1.6pt] (1) edge (5);
   \draw[line width=1.6pt] (2) edge (6);
   \draw[line width=1.6pt] (3) edge (7);
   \draw[line width=1.6pt] (4) edge (8);
   
  \end{tikzpicture}
  }
  \caption{Cube}
  \label{fig:graph-cube}
 \end{subfigure}
 \hfill
 \begin{subfigure}[b]{.2\linewidth}
  \scalebox{\figscalesmall}{
  \begin{tikzpicture}
   \node[vertex,red!80] (1) at ($(0,0)+(30:0.4)$) {};
   \node[vertex,red!80] (2) at ($(0,0)+(150:0.4)$) {};
   \node[vertex,red!80] (3) at ($(0,0)+(270:0.4)$) {};
   
   \node[vertex,red!80] (4) at ($(0,0)+(90:1.6)$) {};
   \node[vertex,red!80] (5) at ($(0,0)+(210:1.6)$) {};
   \node[vertex,red!80] (6) at ($(0,0)+(330:1.6)$) {};
   
   \draw[line width=1.6pt] (1) edge (2);
   \draw[line width=1.6pt] (1) edge (3);
   \draw[line width=1.6pt] (2) edge (3);
   
   \draw[line width=1.6pt] (4) edge (5);
   \draw[line width=1.6pt] (4) edge (6);
   \draw[line width=1.6pt] (5) edge (6);
   
   \draw[line width=1.6pt] (1) edge (4);
   \draw[line width=1.6pt] (1) edge (6);
   \draw[line width=1.6pt] (2) edge (4);
   \draw[line width=1.6pt] (2) edge (5);
   \draw[line width=1.6pt] (3) edge (5);
   \draw[line width=1.6pt] (3) edge (6);
  \end{tikzpicture}
  }
  \caption{Octahedron}
  \label{fig:graph-octahedron}
 \end{subfigure}
 \hfill
 \begin{subfigure}[b]{.3\linewidth}
  \scalebox{\figscalesmall}{
  \begin{tikzpicture}
   \node[vertex,red!80] (1) at ($(0,0)+(54:0.6)$) {};
   \node[vertex,red!80] (2) at ($(0,0)+(126:0.6)$) {};
   \node[vertex,red!80] (3) at ($(0,0)+(198:0.6)$) {};
   \node[vertex,red!80] (4) at ($(0,0)+(270:0.6)$) {};
   \node[vertex,red!80] (5) at ($(0,0)+(342:0.6)$) {};
   
   \node[vertex,red!80] (6) at ($(0,0)+(54:1.2)$) {};
   \node[vertex,red!80] (7) at ($(0,0)+(126:1.2)$) {};
   \node[vertex,red!80] (8) at ($(0,0)+(198:1.2)$) {};
   \node[vertex,red!80] (9) at ($(0,0)+(270:1.2)$) {};
   \node[vertex,red!80] (10) at ($(0,0)+(342:1.2)$) {};
   
   \node[vertex,red!80] (11) at ($(0,0)+(18:1.7)$) {};
   \node[vertex,red!80] (12) at ($(0,0)+(90:1.7)$) {};
   \node[vertex,red!80] (13) at ($(0,0)+(162:1.7)$) {};
   \node[vertex,red!80] (14) at ($(0,0)+(234:1.7)$) {};
   \node[vertex,red!80] (15) at ($(0,0)+(306:1.7)$) {};
   
   \node[vertex,red!80] (16) at ($(0,0)+(18:2.4)$) {};
   \node[vertex,red!80] (17) at ($(0,0)+(90:2.4)$) {};
   \node[vertex,red!80] (18) at ($(0,0)+(162:2.4)$) {};
   \node[vertex,red!80] (19) at ($(0,0)+(234:2.4)$) {};
   \node[vertex,red!80] (20) at ($(0,0)+(306:2.4)$) {};
   
   \foreach \i/\j in {1/2,1/5,2/3,3/4,4/5,1/6,2/7,3/8,4/9,5/10,6/11,6/12,7/12,7/13,8/13,8/14,9/14,9/15,10/11,10/15,11/16,12/17,13/18,14/19,15/20,16/17,16/20,17/18,18/19,19/20}{
    \draw[line width=1.6pt] (\i) edge (\j);
   }
  \end{tikzpicture}
  }
  \caption{Dodecahedron}
  \label{fig:graph-dodecahedron}
 \end{subfigure}
 
 \vspace{0.7cm}
 \begin{subfigure}[b]{.3\linewidth}
  \scalebox{\figscalesmall}{
  \begin{tikzpicture}
   \node[vertex,red!80] (1) at ($(0,0)+(30:0.4)$) {};
   \node[vertex,red!80] (2) at ($(0,0)+(150:0.4)$) {};
   \node[vertex,red!80] (3) at ($(0,0)+(270:0.4)$) {};
   
   \node[vertex,red!80] (4) at ($(0,0)+(90:0.9)$) {};
   \node[vertex,red!80] (5) at ($(0,0)+(210:0.9)$) {};
   \node[vertex,red!80] (6) at ($(0,0)+(330:0.9)$) {};
   
   \node[vertex,red!80] (7) at ($(0,0)+(30:1.1)$) {};
   \node[vertex,red!80] (8) at ($(0,0)+(150:1.1)$) {};
   \node[vertex,red!80] (9) at ($(0,0)+(270:1.1)$) {};
   
   \node[vertex,red!80] (10) at ($(0,0)+(90:2.8)$) {};
   \node[vertex,red!80] (11) at ($(0,0)+(210:2.8)$) {};
   \node[vertex,red!80] (12) at ($(0,0)+(330:2.8)$) {};
   
   \foreach \i/\j in {1/2,1/3,2/3,1/4,1/6,2/4,2/5,3/5,3/6,1/7,4/7,6/7,2/8,4/8,5/8,3/9,5/9,6/9,4/10,7/10,8/10,5/11,8/11,9/11,6/12,7/12,9/12,10/11,10/12,11/12}{
    \draw[line width=1.6pt] (\i) edge (\j);
   }
  \end{tikzpicture}
  }
  \caption{Icosahedron}
  \label{fig:graph-icosahedron}
 \end{subfigure}
 \hfill
 \begin{subfigure}[b]{.23\linewidth}
  \scalebox{\figscalesmall}{
  \begin{tikzpicture}
   \node[vertex,red!80] (1) at ($(0,0)+(45:0.9)$) {};
   \node[vertex,red!80] (2) at ($(0,0)+(135:0.9)$) {};
   \node[vertex,red!80] (3) at ($(0,0)+(225:0.9)$) {};
   \node[vertex,red!80] (4) at ($(0,0)+(315:0.9)$) {};
   
   \node[vertex,red!80] (5) at ($(0,0)+(0:1.2)$) {};
   \node[vertex,red!80] (6) at ($(0,0)+(90:1.2)$) {};
   \node[vertex,red!80] (7) at ($(0,0)+(180:1.2)$) {};
   \node[vertex,red!80] (8) at ($(0,0)+(270:1.2)$) {};
   
   \node[vertex,red!80] (9) at ($(0,0)+(45:2.4)$) {};
   \node[vertex,red!80] (10) at ($(0,0)+(135:2.4)$) {};
   \node[vertex,red!80] (11) at ($(0,0)+(225:2.4)$) {};
   \node[vertex,red!80] (12) at ($(0,0)+(315:2.4)$) {};
   
   \foreach \i/\j in {1/2,1/4,2/3,3/4,1/5,1/6,2/6,2/7,3/7,3/8,4/5,4/8,5/9,5/12,6/9,6/10,7/10,7/11,8/11,8/12,9/10,9/12,10/11,11/12}{
    \draw[line width=1.6pt] (\i) edge (\j);
   }
   
  \end{tikzpicture}
  }
  \caption{Cuboctahedron}
  \label{fig:graph-cuboctahedron}
 \end{subfigure}
 \hfill
 \begin{subfigure}[b]{.37\linewidth}
  \scalebox{\figscalesmall}{
  \begin{tikzpicture}
   \node[vertex,red!80] (1) at ($(0,0)+(54:0.6)$) {};
   \node[vertex,red!80] (2) at ($(0,0)+(126:0.6)$) {};
   \node[vertex,red!80] (3) at ($(0,0)+(198:0.6)$) {};
   \node[vertex,red!80] (4) at ($(0,0)+(270:0.6)$) {};
   \node[vertex,red!80] (5) at ($(0,0)+(342:0.6)$) {};
   
   \node[vertex,red!80] (6) at ($(0,0)+(90:1.0)$) {};
   \node[vertex,red!80] (7) at ($(0,0)+(162:1.0)$) {};
   \node[vertex,red!80] (8) at ($(0,0)+(234:1.0)$) {};
   \node[vertex,red!80] (9) at ($(0,0)+(306:1.0)$) {};
   \node[vertex,red!80] (10) at ($(0,0)+(18:1.0)$) {};
   
   \node[vertex,red!80] (11) at ($(0,0)+(36:1.6)$) {};
   \node[vertex,red!80] (12) at ($(0,0)+(72:1.6)$) {};
   \node[vertex,red!80] (13) at ($(0,0)+(108:1.6)$) {};
   \node[vertex,red!80] (14) at ($(0,0)+(144:1.6)$) {};
   \node[vertex,red!80] (15) at ($(0,0)+(180:1.6)$) {};
   \node[vertex,red!80] (16) at ($(0,0)+(216:1.6)$) {};
   \node[vertex,red!80] (17) at ($(0,0)+(252:1.6)$) {};
   \node[vertex,red!80] (18) at ($(0,0)+(288:1.6)$) {};
   \node[vertex,red!80] (19) at ($(0,0)+(324:1.6)$) {};
   \node[vertex,red!80] (20) at ($(0,0)+(0:1.6)$) {};
   
   \node[vertex,red!80] (21) at ($(0,0)+(54:2.0)$) {};
   \node[vertex,red!80] (22) at ($(0,0)+(126:2.0)$) {};
   \node[vertex,red!80] (23) at ($(0,0)+(198:2.0)$) {};
   \node[vertex,red!80] (24) at ($(0,0)+(270:2.0)$) {};
   \node[vertex,red!80] (25) at ($(0,0)+(342:2.0)$) {};
   
   \node[vertex,red!80] (26) at ($(0,0)+(90:3.2)$) {};
   \node[vertex,red!80] (27) at ($(0,0)+(162:3.2)$) {};
   \node[vertex,red!80] (28) at ($(0,0)+(234:3.2)$) {};
   \node[vertex,red!80] (29) at ($(0,0)+(306:3.2)$) {};
   \node[vertex,red!80] (30) at ($(0,0)+(18:3.2)$) {};
   
   \foreach \i/\j in {1/2,1/5,2/3,3/4,4/5,1/6,1/10,2/6,2/7,3/7,3/8,4/8,4/9,5/9,5/10,6/12,6/13,7/14,7/15,8/16,8/17,9/18,9/19,10/11,10/20,
                      11/12,11/20,12/13,13/14,14/15,15/16,16/17,17/18,18/19,19/20,11/21,12/21,13/22,14/22,15/23,16/23,17/24,18/24,19/25,20/25,
                      21/26,21/30,22/26,22/27,23/27,23/28,24/28,24/29,25/29,25/30,26/27,26/30,27/28,28/29,29/30}{
    \draw[line width=1.6pt] (\i) edge (\j);
   }
  \end{tikzpicture}
  }
  \caption{Icosidodecahedron}
  \label{fig:graph-icosidodecahedron}
 \end{subfigure}
 
 \vspace{0.7cm}
 \begin{subfigure}[b]{.2\linewidth}
  \scalebox{\figscalelarge}{
  \begin{tikzpicture}[scale = 0.8]
   \node[vertex,red!80] (1) at ($(0,0)+(45:0.8)$) {};
   \node[vertex,blue!80] (2) at ($(0,0)+(135:0.8)$) {};
   \node[vertex,red!80] (3) at ($(0,0)+(225:0.8)$) {};
   \node[vertex,blue!80] (4) at ($(0,0)+(315:0.8)$) {};
   
   \node[vertex,blue!80] (5) at ($(0,0)+(45:2.4)$) {};
   \node[vertex,red!80] (6) at ($(0,0)+(135:2.4)$) {};
   \node[vertex,blue!80] (7) at ($(0,0)+(225:2.4)$) {};
   \node[vertex,red!80] (8) at ($(0,0)+(315:2.4)$) {};
   
   \draw[line width=1.6pt] (1) edge (2);
   \draw[line width=1.6pt] (1) edge (4);
   \draw[line width=1.6pt] (2) edge (3);
   \draw[line width=1.6pt] (3) edge (4);
   
   \draw[line width=1.6pt] (5) edge (6);
   \draw[line width=1.6pt] (5) edge (8);
   \draw[line width=1.6pt] (6) edge (7);
   \draw[line width=1.6pt] (7) edge (8);
   
   \draw[line width=1.6pt] (1) edge (5);
   \draw[line width=1.6pt] (2) edge (6);
   \draw[line width=1.6pt] (3) edge (7);
   \draw[line width=1.6pt] (4) edge (8);
   
  \end{tikzpicture}
  }
  \caption{Bicolored cube}
  \label{fig:graph-bicol-cube}
 \end{subfigure}
 \hfill
 \begin{subfigure}[b]{0.3\linewidth}
  \scalebox{\figscalesmall}{
  \begin{tikzpicture}
   \node[vertex,red!80] (1) at (0,0) {};
   
   \node[vertex,blue!80] (2) at ($(0,0)+(30:0.6)$) {};
   \node[vertex,blue!80] (3) at ($(0,0)+(150:0.6)$) {};
   \node[vertex,blue!80] (4) at ($(0,0)+(270:0.6)$) {};
   
   \node[vertex,red!80] (5) at ($(0,0)+(90:0.9)$) {};
   \node[vertex,red!80] (6) at ($(0,0)+(210:0.9)$) {};
   \node[vertex,red!80] (7) at ($(0,0)+(330:0.9)$) {};
   
   \node[vertex,red!80] (8) at ($(0,0)+(30:1.2)$) {};
   \node[vertex,red!80] (9) at ($(0,0)+(150:1.2)$) {};
   \node[vertex,red!80] (10) at ($(0,0)+(270:1.2)$) {};
   
   \node[vertex,blue!80] (11) at ($(0,0)+(90:2.4)$) {};
   \node[vertex,blue!80] (12) at ($(0,0)+(210:2.4)$) {};
   \node[vertex,blue!80] (13) at ($(0,0)+(330:2.4)$) {};
   
   \node[vertex,red!80] (14) at (0,3.2) {};
   
   \foreach \i/\j in {1/2,1/3,1/4,2/5,2/7,3/5,3/6,4/6,4/7,2/8,3/9,4/10,5/11,6/12,7/13,8/11,8/13,9/11,9/12,10/12,10/13,11/14}{
    \draw[line width=1.6pt] (\i) edge (\j);
   }
   \draw[line width=1.6pt,bend left] (12) edge (14);
   \draw[line width=1.6pt,bend right] (13) edge (14);
  \end{tikzpicture}
  }
  \caption{Rhombic dodecahedron}
  \label{fig:graph-rhombic-dodecahedron}
 \end{subfigure}
 \hfill
 \begin{subfigure}[b]{.4\linewidth}
  \scalebox{\figscalesmall}{
  \begin{tikzpicture}
   
   \node[vertex,red!80] (13) at (0,0) {};
   
   \node[vertex,blue!80] (1) at ($(0,0)+(30:0.6)$) {};
   \node[vertex,blue!80] (2) at ($(0,0)+(150:0.6)$) {};
   \node[vertex,blue!80] (3) at ($(0,0)+(270:0.6)$) {};
   
   \node[vertex,red!80] (14) at ($(0,0)+(90:0.6)$) {};
   \node[vertex,red!80] (15) at ($(0,0)+(210:0.6)$) {};
   \node[vertex,red!80] (16) at ($(0,0)+(330:0.6)$) {};
   
   \node[vertex,blue!80] (4) at ($(0,0)+(90:1.2)$) {};
   \node[vertex,blue!80] (5) at ($(0,0)+(210:1.2)$) {};
   \node[vertex,blue!80] (6) at ($(0,0)+(330:1.2)$) {};
   
   \node[vertex,red!80] (17) at ($(0,0)+(0:1.0)$) {};
   \node[vertex,red!80] (18) at ($(0,0)+(60:1.0)$) {};
   \node[vertex,red!80] (19) at ($(0,0)+(120:1.0)$) {};
   \node[vertex,red!80] (20) at ($(0,0)+(180:1.0)$) {};
   \node[vertex,red!80] (21) at ($(0,0)+(240:1.0)$) {};
   \node[vertex,red!80] (22) at ($(0,0)+(300:1.0)$) {};
   
   \node[vertex,blue!80] (7) at ($(0,0)+(30:1.5)$) {};
   \node[vertex,blue!80] (8) at ($(0,0)+(150:1.5)$) {};
   \node[vertex,blue!80] (9) at ($(0,0)+(270:1.5)$) {};
   
   \node[vertex,red!80] (23) at ($(0,0)+(0:1.8)$) {};
   \node[vertex,red!80] (24) at ($(0,0)+(60:1.8)$) {};
   \node[vertex,red!80] (25) at ($(0,0)+(120:1.8)$) {};
   \node[vertex,red!80] (26) at ($(0,0)+(180:1.8)$) {};
   \node[vertex,red!80] (27) at ($(0,0)+(240:1.8)$) {};
   \node[vertex,red!80] (28) at ($(0,0)+(300:1.8)$) {};
   
   \node[vertex,blue!80] (10) at ($(0,0)+(90:3.6)$) {};
   \node[vertex,blue!80] (11) at ($(0,0)+(210:3.6)$) {};
   \node[vertex,blue!80] (12) at ($(0,0)+(330:3.6)$) {};
   
   \node[vertex,red!80] (29) at ($(0,0)+(30:2.4)$) {};
   \node[vertex,red!80] (30) at ($(0,0)+(150:2.4)$) {};
   \node[vertex,red!80] (31) at ($(0,0)+(270:2.4)$) {};
   
   \node[vertex,red!80] (32) at (0,4.4) {};
   
   \foreach \i/\j in {1/13,2/13,3/13,1/14,2/14,2/15,3/15,1/16,3/16,4/14,5/15,6/16,1/17,1/18,2/19,2/20,3/21,3/22,4/18,4/19,5/20,5/21,6/17,6/22,7/17,7/18,8/19,8/20,9/21,9/22,
                      4/24,4/25,5/26,5/27,6/28,6/23,7/23,7/24,8/25,8/26,9/27,9/28,10/24,10/25,11/26,11/27,12/28,12/23,7/29,8/30,9/31,10/29,10/30,11/30,11/31,12/29,12/31,10/32}{
    \draw[line width=1.6pt] (\i) edge (\j);
   }
   \draw[line width=1.6pt,bend left] (11) edge (32);
   \draw[line width=1.6pt,bend right] (12) edge (32);
  \end{tikzpicture}
  }
  \caption{Rhombic triacontahedron}
  \label{fig:graph-rhombic-triacontahedron}
 \end{subfigure}
 \caption{All edge-transitive connected planar graphs of minimum degree $3$.}
 \label{fig:transitive}
\end{figure}
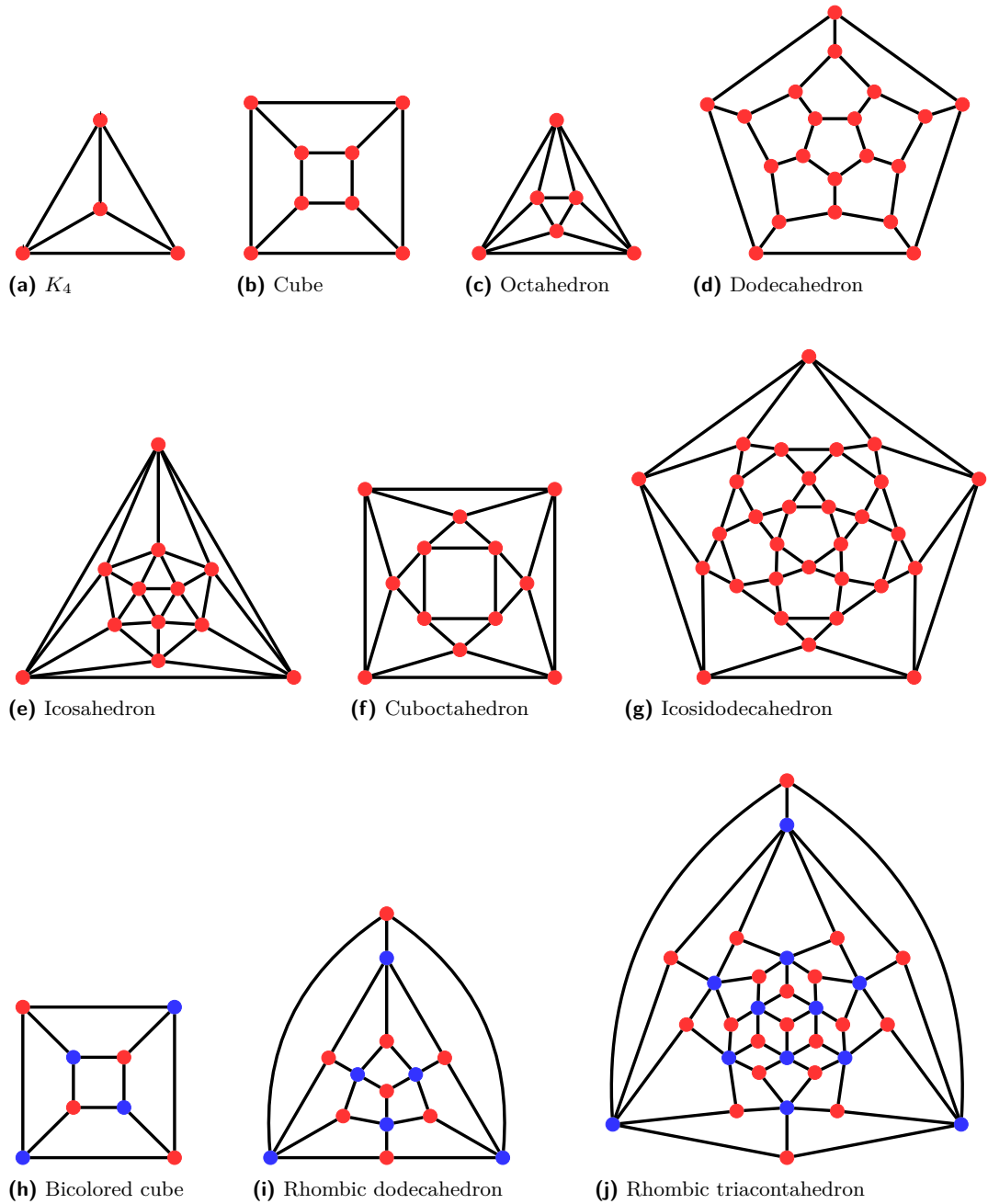

\begin{theorem}
 \label{thm:classification-edge-transitive}
 Let $G$ be a connected planar (directed or undirected) graph of minimum degree at least $3$ such that $\WL{2}{G}(v_1,w_1) = \WL{2}{G}(v_2,w_2)$ for all $(v_1,w_1),(v_2,w_2) \in E(G)$.
 Then one of the following holds:
 \begin{enumerate}[label=(\Alph*)]
  \item\label{item:classification-edge-transitive-1} $G$ is isomorphic to a tetrahedron (Figure \ref{fig:graph-k4}), a cube (Figure \ref{fig:graph-cube}), a dodecahedron (Figure \ref{fig:graph-dodecahedron}), or an icosahedron (Figure \ref{fig:graph-icosahedron}), or
  \item\label{item:classification-edge-transitive-2} the undirected version $\undir{G}$ is isomorphic to an octahedron (Figure \ref{fig:graph-octahedron}), a cuboctahedron (Figure \ref{fig:graph-cuboctahedron}), or an icosidodecahedron (Figure \ref{fig:graph-icosidodecahedron}), or
  \item\label{item:classification-edge-transitive-3} the undirected version $\undir{G}$ is isomorphic to a cube (Figure \ref{fig:graph-bicol-cube}), a rhombic dodecahedron (Figure \ref{fig:graph-rhombic-dodecahedron}), or a rhombic triacontahedron (Figure \ref{fig:graph-rhombic-triacontahedron}).
 \end{enumerate}
\end{theorem}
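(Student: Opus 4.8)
The plan is to distill from the single-colour hypothesis enough local regularity to invoke Euler's formula, and then to promote this regularity to a full combinatorial classification by exploiting the canonical planar embedding guaranteed by Whitney's theorem.

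\emph{Regularity and at most two vertex classes.} First I would use the standard fact that, in the stable $2$-WL colouring, the colour $\WL{2}{G}(v,w)$ of an arc determines the diagonal colours $\WL{2}{G}(v,v)$ and $\WL{2}{G}(w,w)$. Since all arcs share one colour, all tails share one vertex colour and all heads share one vertex colour, so $G$ has at most two vertex-colour classes. If there is a single class, $G$ is $d$-regular; if there are two classes $A \uplus B$, every edge joins $A$ to $B$, so $\undir{G}$ is bipartite and biregular with side-degrees $d_A,d_B$. Uniform arc colour also forces further invariants to be constant, notably the number of common neighbours of the two endpoints of an edge. Feeding the degree bound of \Cref{cor:planar-degree-bound} together with Euler's formula (so $m \le 3n-6$ in general and $m \le 2n-4$ when $G$ is triangle-free) into these identities bounds the degrees: the regular case leaves $d \in \{3,4,5\}$, and the bipartite case leaves a short list of admissible pairs $(d_A,d_B)$.

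\emph{Passing to the polyhedral setting and reading off the vertex configuration.} Next I would show that $G$ is $3$-connected -- for instance, a small separator would be visible to $2$-WL (cf.\ the detection of $2$-separators) and is incompatible with a single arc colour -- so that, by Whitney's theorem, the embedding together with its faces and rotation system is a canonical combinatorial object. The core step is to prove that $2$-WL already controls this structure: by individualising a single arc and tracing faces in the canonical embedding, uniform arc colour transfers the trace between any two arcs, forcing the cyclic sequence of face sizes around each vertex (its \emph{vertex configuration}) to be the same everywhere, and the unordered pair of face sizes bordering an edge to be the same for every edge. Edge-homogeneity combined with degree parity then rules out mixed configurations at odd degree: around a $3$- or $5$-regular vertex two distinct face sizes cannot alternate, forcing all incident faces to have equal size, whereas a $4$-regular vertex admits either equal faces or the alternating pattern $(p,q,p,q)$; in the triangle-free (bipartite) case the same analysis forces all faces to be quadrilaterals.

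\emph{Finitely many cases and sorting.} For each admissible vertex configuration, Euler's formula fixes $n$, $m$, and the number of faces, and $3$-connectivity with Whitney's uniqueness pins the graph down to a single isomorphism type: the regular case yields the tetrahedron, cube, and dodecahedron ($3$-regular), the icosahedron ($5$-regular), and the octahedron, cuboctahedron, and icosidodecahedron ($4$-regular), while the bipartite case yields the rhombic dodecahedron and the rhombic triacontahedron. It remains to sort these by their colour data. A vertex-transitive graph whose arcs all share a colour can be a proper directed graph only if it admits an arc-transitive orientation, which requires equal in- and out-degree at every vertex and hence \emph{even} degree; this forces the odd-degree solids (tetrahedron, cube, dodecahedron, icosahedron) to be undirected, giving Case~\ref{item:classification-edge-transitive-1}, whereas the $4$-regular solids only have their undirected version determined, giving Case~\ref{item:classification-edge-transitive-2}. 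The two-vertex-colour branch supplies the two rhombic (Catalan) solids together with the bipartitely oriented cube, giving Case~\ref{item:classification-edge-transitive-3}.

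I expect the middle step to be the real obstacle. Planarity and regularity alone do not bound face sizes -- there are $3$-regular planar graphs with arbitrarily large faces -- so one cannot avoid showing that the \emph{global} $2$-WL colour rigidly determines the faces of the embedding. Making precise how the algebraic colour (through the triangle counts and the finer local data accessible via the pebble-game characterisation of \Cref{thm:eq-wl-pebble-tuples} and the canonical embedding) pins down the vertex configuration, rather than the ensuing finite case check, is where the difficulty concentrates.
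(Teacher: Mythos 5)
Your skeleton matches the paper's: at most two vertex colors (since the arc color determines both diagonal colors), hence regular or biregular bipartite; degree bounds from Corollary~\ref{cor:planar-degree-bound}; $3$-connectivity so that Whitney's theorem applies; then a finite case analysis via Euler's formula, with the parity argument sorting directed from undirected versions. The gap is exactly where you suspect it is, and it is not a technicality you can defer: your proposed mechanism for the ``core step'' --- individualising a single arc and \emph{tracing faces} in the canonical embedding to transfer the vertex configuration from one arc to another --- is not something $2$-WL can do. Tracing a face means walking around a cycle of a priori unbounded length while remembering the starting arc, which exceeds the three pebbles available in $\BP^3$; more to the point, the faces of a $3$-connected planar graph are not directly definable from the stable coloring. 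The paper states this obstruction explicitly: the classical classification of edge-transitive planar solids exploits that the multiset of face sizes at each edge is an isomorphism invariant, and that argument is precisely what is \emph{not} available here. So as written, the step ``uniform arc colour forces the cyclic sequence of face sizes around each vertex to be the same everywhere'' is an assertion of the theorem's hardest content, not a proof of it.

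The paper's workaround is worth contrasting with your plan. It never establishes the full vertex configuration. Instead it uses only two color-theoretic facts: that $2$-WL detects the presence of cycles of length $3$, $4$, $5$ through an edge (Lemma~\ref{la:wl-detects-cycles}), and that equal arc colors force equal common-neighborhood sizes (Observation~\ref{obs:neighborhood-degree}). Everything about \emph{faces} is then recovered by hand: in the $3$-regular case, $G[N(v)]$ being regular plus $3$-connectedness forces all faces incident to $v$ to be triangles, quadrilaterals, or pentagons respectively; in the $4$-regular case, separate claims show that every $3$-, $4$-, and $5$-cycle must be facial (using planarity and $3$-connectedness, not WL), and an Euler-counting argument shows a cycle of length $\le 5$ must exist. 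The bipartite case is also handled differently from your ``all faces are quadrilaterals'' route: one side has degree exactly $3$, an auxiliary graph is built on the other side by joining vertices sharing a degree-$3$ neighbor, and the unicolored classification is applied to that auxiliary graph before reconstructing $G$. If you want to complete your proof, you should abandon the face-tracing step and replace it with arguments of this restricted kind; otherwise the middle of your argument has no support.
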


Note that the classification includes the graphs of all Platonic solids. To prove the theorem, we distinguish two cases.
Let $\chi \coloneqq \WL{2}{G}$ and let $C_V(G,\chi) \coloneqq \{\chi(v,v) \mid v \in V(G)\}$ denote the set of \emph{vertex colors}.
Since $\chi(u,u) = \chi(u',u')$ and $\chi(v,v) = \chi(v',v')$ whenever $\chi(u,v) = \chi(u',v')$, we conclude that $1 \leq |C_V(G,\chi)| \leq 2$.
We first cover the case where $|C_V(G,\chi)| = 1$.
Then $\undir{G}$ is $d$-regular for some $d \geq 3$.
Since $G$ is planar, by Corollary \ref{cor:planar-degree-bound}, we have $d \leq 5$ and thus, $d \in \{3,4,5\}$.
In the following, we analyze these possible cases separately to obtain the graphs listed in Parts \ref{item:classification-edge-transitive-1} and \ref{item:classification-edge-transitive-2}.
Let us remark at this point that obtaining such a classification is more challenging than for edge-transitive graphs.
Indeed, the proofs for edge-transitive graphs highly exploit that the multiset of sizes of faces incident to an edge (and a vertex, respectively) is always the same.
However, we cannot immediately deduce information about the size of faces from considering WL-colors and hence, we cannot directly rely on this type of argument.
Instead, our arguments exploit the fact that $2$-WL can detect $2$-separators \cite{KieferN22} as well as the existence of certain short cycles \cite{FuhlbruckKV21a}.

\begin{theorem}[{\cite[Theorem 3.15]{KieferN22}}]
 \label{thm:3-connected-wl}
 Let $G$ be a connected graph of minimum degree $3$ such that $\WL{2}{G}(v_1,v_1) = \WL{2}{G}(v_2,v_2)$ for all $v_1,v_2 \in V(G)$.
 Then $G$ is $3$-connected.
\end{theorem}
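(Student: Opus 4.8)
The plan is to establish the two missing connectivity properties in turn: first that $G$ has no cut vertex (so $G$ is $2$-connected), and then that $G$ has no $2$-separator. Throughout I work with the stable coloring $\chi \coloneqq \WL{2}{G}$ and exploit two facts. First, by Theorem~\ref{thm:eq-wl-pebble-tuples} (for $k=2$), $\chi(v,v) = \chi(v',v')$ holds if and only if Duplicator wins $\BP^{3}(G,G)$ from the initial position $\big((v),(v')\big)$; since pebbling a vertex twice carries no extra information, this is in turn equivalent to $(G,v) \simeq_{2} (G,v')$ for the individualized graphs. Hence the hypothesis asserts that $(G,v)\simeq_{2}(G,v')$ for \emph{all} $v,v' \in V(G)$. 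Second, distances are visible to $2$-WL: a routine induction on $i$ shows that $\{(x,y) \mid \dist_G(x,y) \le i\}$ is a union of color classes of $\chi$ (the base case being the edge/non-edge distinction recorded by $\WLit{2}{0}{G}$, and the inductive step being that $(x,y)$ is at distance $\le i+1$ exactly when the defining multiset $\CM_i(x,y)$ contains a summand witnessing a neighbor $w$ of $x$ with $\dist_G(w,y)\le i$). Consequently $\dist_G(x,y)$, including the value $\infty$, is determined by $\chi(x,y)$.

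The same induction, carried out in the graph $(G,v)$ with $v$ individualized, shows more: since the unique color of $v$ lets $2$-WL recognize and avoid $v$ as an interior vertex of a path, the coloring of $(G,v)$ determines $\dist_{G-v}(x,y)$ for all $x,y \neq v$, again including $\infty$. I record this as the key sublemma: \emph{the number of connected components of $G-v$ is determined by $\chi(v,v)$.} Indeed, the finite-distance relation of $G-v$ is a union of color classes of $(G,v)$, so its number of equivalence classes is a $2$-WL-invariant of $(G,v)$, and hence---using $(G,v)\simeq_{2}(G,v')$ whenever $\chi(v,v)=\chi(v',v')$---a function of $\chi(v,v)$.

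For $2$-connectivity, suppose $G$ has a cut vertex $c$. Then the block--cut tree of $G$ has at least two leaves, and since $G$ has minimum degree at least $3$, no leaf block is a single edge; thus every leaf block is $2$-connected with at least three vertices. Picking a leaf block $B$ and a vertex $w\in V(B)$ other than the cut vertex of $B$, the graph $G-w$ remains connected, so $w$ is not a cut vertex. Now $G-c$ has at least two components, whereas $G-w$ has exactly one, so by the sublemma $\chi(c,c)\neq\chi(w,w)$, contradicting the hypothesis. Hence $G$ is $2$-connected.

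It remains to rule out a $2$-separator, which I expect to be the main obstacle. Assume $G$ is $2$-connected but has a $2$-separator $\{c_1,c_2\}$. Because $G$ has minimum degree at least $3$, both sides of the separation contain interior vertices, and one checks that $c_2$ is then a cut vertex of $G-c_1$. The natural attempt is to replay the previous argument one level deeper---individualize $c_1$ and detect the cut vertex $c_2$ inside $(G,c_1)$ via the sublemma applied to $G-c_1$---but the difficulty is that constancy of colors is a property of $G$, not of the individualized graphs $(G,c_1)$, so this alone contradicts nothing. What is genuinely needed is the stronger statement that the relation ``$\{v,w\}$ is a $2$-separator of $G$'' is a union of color classes of $\chi$, i.e.\ that $2$-WL \emph{detects $2$-separators} in the sense of \cite{KieferN22}, and more: that the whole decomposition of $G$ into its triconnected components is $2$-WL-definable as a colored tree structure. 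Granting this, the decomposition tree must be nontrivial whenever $G$ is not $3$-connected, and any vertex occurring only in a leaf part---attached to the rest of $G$ through a single separating pair---plays a structurally distinct role from a vertex in a central part, forcing the two to receive different $\chi$-colors and contradicting the hypothesis. The technical heart, and the step I expect to be hardest, is precisely making this detection rigorous with only three pebbles, where Spoiler must simultaneously pin a separating pair and exploit the connectivity query it induces on the two sides of the separation.
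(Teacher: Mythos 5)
This statement is not proved in the paper at all: it is imported verbatim from \cite{KieferN22} (Theorem~3.15 there), so the only meaningful comparison is against that external proof, whose technical core is precisely the fact that $2$-WL detects $2$-separators and the decomposition into triconnected components.

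Your first half is correct and self-contained. The sublemma that the number of connected components of $G-v$ is an invariant of $\chi(v,v)$ is sound: the relation $\dist_{G-v}(x,y)\le i$ is a union of color classes of $\WL{2}{G,v}$ by the induction you describe (the individualization of $v$ makes all pairs involving $v$ recognizable, so $v$ can be excluded as an interior vertex), the class size $|[x]_R|$ of the resulting equivalence relation is determined by the color of $x$, and hence the number of classes is determined by the vertex-color multiset of $(G,v)$. Combined with the observation that minimum degree $3$ forces every leaf block to be $2$-connected on at least three vertices (so a non-cut vertex exists whenever a cut vertex does), this cleanly yields $2$-connectivity.

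The second half, however, is a genuine and essentially total gap, and you acknowledge it yourself. Everything rests on the unproven claim that ``$\{v,w\}$ is a $2$-separator of $G$'' is a union of color classes of $\chi$, together with $2$-WL-definability of the Tutte decomposition; but that claim \emph{is} the main theorem of \cite{KieferN22} from which the present statement is derived, so assuming it is circular for the purpose of reproving the result. The obstruction you name is real: with only three pebbles, Spoiler cannot simply individualize $c_1$ and rerun the cut-vertex argument inside $(G,c_1)$, because the hypothesis of color-homogeneity lives in $G$, not in the individualized graph. Moreover, even granting detection of the decomposition, your concluding step is too loose: the dichotomy ``vertex in a leaf part versus vertex in a central part'' is vacuous when the decomposition tree has exactly two nodes (both leaves), which is a perfectly possible configuration for a $2$-connected, non-$3$-connected graph of minimum degree $3$; in that case one must instead compare vertices inside the separating pair with vertices outside it and argue that homogeneity would force every vertex into a separating pair. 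As it stands, the proposal proves $2$-connectivity but not the theorem.
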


\begin{lemma}[{see, e.g., \cite[Example 2.1 \& Lemma 3.3]{FuhlbruckKV21a}}]
 \label{la:wl-detects-cycles}
 Let $G$ be a graph and let $\ell \in \{3,4,5\}$.
 Also let $v_1v_2,w_1w_2 \in E(G)$ such that $\WL{2}{G}(v_1,v_2) \in \{\WL{2}{G}(w_1,w_2), \WL{2}{G}(w_2,w_1)\}$.
 Finally, suppose there are vertices $v_3,\dots,v_\ell \in V(G)$ such that $v_1,\dots,v_\ell$ are pairwise distinct and $v_2v_3,\dots,v_{\ell-1}v_\ell, v_1v_\ell \in E(G)$ (i.e., $v_1,\dots,v_\ell$ forms an $\ell$-cycle).
 Then there are vertices $w_3,\dots,w_\ell \in V(G)$ such that $w_1,\dots,w_\ell$ are pairwise distinct and $w_2w_3,\dots,w_{\ell-1}v_\ell, w_1w_\ell \in E(G)$ (i.e., $w_1,\dots,w_\ell$ forms an $\ell$-cycle).
\end{lemma}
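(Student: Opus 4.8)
The plan is to prove the stronger, quantitative statement that, for $\ell \in \{3,4,5\}$, the number of $\ell$-cycles through a directed edge $(v_1,v_2)$ is completely determined by its color $\WL{2}{G}(v_1,v_2)$. The lemma follows at once: equal colors force equal counts, so a positive count on the $v$-side yields the asserted cycle $w_1,\dots,w_\ell$. To accommodate both alternatives in the hypothesis, note that reversing the traversal of a cycle $v_1,v_2,v_3,\dots,v_\ell$ gives the cycle $v_2,v_1,v_\ell,\dots,v_3$; this is a bijection between $\ell$-cycles through $(v_1,v_2)$ and through $(v_2,v_1)$, so the count is the same for both orientations and either of $\WL{2}{G}(v_1,v_2)=\WL{2}{G}(w_1,w_2)$ or $\WL{2}{G}(v_1,v_2)=\WL{2}{G}(w_2,w_1)$ suffices.

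The engine is that $\chi \coloneqq \WL{2}{G}$ is $2$-stable, so $\chi(v_1,v_2)$ determines the multiset $\{\!\{(\chi(v_1,z),\chi(z,v_2)) \mid z \in V(G)\}\!\}$. In particular, for any two pair-colors $a,b$ the quantity $p_{a,b}(v_1,v_2) \coloneqq |\{z \mid \chi(v_1,z)=a,\ \chi(z,v_2)=b\}|$ depends only on $\chi(v_1,v_2)$. By induction on $m$ I would then show that, for every length $m$ and every sequence of edge-colors, the number of walks of that colored type from $v_1$ to $v_2$ is determined by $\chi(v_1,v_2)$: split a walk at its first inner vertex $z_1$; the number of colored $(m-1)$-walks from $z_1$ to $v_2$ depends only on $\chi(z_1,v_2)$ by the induction hypothesis, and grouping the admissible $z_1$ by the value $\chi(z_1,v_2)=e$ contributes the factor $p_{a_1,e}(v_1,v_2)$, which is determined. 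Summing over all edge-color sequences, the number $W_m(v_1,v_2)$ of length-$m$ walks from $v_1$ to $v_2$ in $G$ is determined by $\chi(v_1,v_2)$.

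It remains to convert walk counts into genuine cycle counts. An $\ell$-cycle through $(v_1,v_2)$ is exactly a length-$(\ell-1)$ walk $v_2,v_3,\dots,v_\ell,v_1$ whose inner vertices $v_3,\dots,v_\ell$ are pairwise distinct and distinct from $v_1,v_2$. I would recover their number from $W_{\ell-1}(v_2,v_1)$ by inclusion--exclusion over all coincidence patterns among the positions $\{3,\dots,\ell\}$ and the two endpoints. For $\ell \le 5$ there are at most three inner vertices, so every pattern contracts the walk to a configuration anchored only at $v_1$ and $v_2$; each resulting term is a product of short walk counts between $v_1$, $v_2$, and their neighborhoods, all of which are determined by $\chi(v_1,v_2)$ (recall that $\chi(v_1,v_2)$ determines $\chi(v_1,v_1)$ and $\chi(v_2,v_2)$, hence the relevant degrees, exactly as noted after Theorem~\ref{thm:classification-edge-transitive}). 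Hence the number of $\ell$-cycles through $(v_1,v_2)$ is determined by $\chi(v_1,v_2)$.

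The step I expect to be the real obstacle is precisely this last passage, the handling of the degenerate, non-injective walks, and it is here that the restriction $\ell \le 5$ is essential: it guarantees that every contracted pattern stays \emph{two-terminal}, reducing to short walk counts between $v_1$ and $v_2$ that $2$-stability pins down, whereas for larger $\ell$ a contraction can produce a configuration whose count $2$-WL no longer determines (consistent with the known failure of $2$-WL to count long cycles). The pebble-game view of Theorem~\ref{thm:eq-wl-pebble-tuples} gives the right intuition --- Spoiler tries to walk three pebbles around the cycle $v_1,\dots,v_\ell$ --- but with only three pebbles one cannot keep all $\ell$ cycle vertices pebbled at once, so Duplicator can answer with a vertex-collapsed walk; excluding exactly these collapses is what the inclusion--exclusion above accomplishes.
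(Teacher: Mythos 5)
The paper does not prove this lemma; it is imported from the literature (Fuhlbr\"uck--K\"obler--Verbitsky), so there is no in-paper proof to compare against. Your argument is correct and is essentially the standard one underlying the cited result: since $\WL{2}{G}$ is $2$-stable, the intersection numbers $p_{a,b}$ of the associated coherent configuration are well defined, hence colored walk counts between a pair depend only on the pair's color, and for $\ell\le 5$ the at most three inner vertices admit an inclusion--exclusion over coincidence patterns in which every degenerate term reduces to quantities ($[v_1v_2\in E]$, degrees, $|N(v_1)\cap N(v_2)|$, and sums of the form $\sum_{a,b}p_{a,b}\cdot\deg$) that the color of $(v_1,v_2)$ determines. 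Your reversal bijection correctly handles the two alternatives $\WL{2}{G}(v_1,v_2)\in\{\WL{2}{G}(w_1,w_2),\WL{2}{G}(w_2,w_1)\}$, and your closing remark about why the method breaks beyond this range is accurate. No gaps.
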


Also, the following simple observation turns out to be useful in several cases.

\begin{observation}
 \label{obs:neighborhood-degree}
 Let $G$ be a (directed or undirected) graph.
 Also let $v \in V(G)$ and $w_1,w_2 \in N_G(v)$ such that $\WL{2}{G}(v,w_1) \in \{\WL{2}{G}(v,w_2),\WL{2}{G}(w_2,v)\}$.
 Then $|N(v) \cap N(w_1)| = |N(v) \cap N(w_2)|$.
\end{observation}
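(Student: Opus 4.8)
The plan is to show that the number of common neighbors $|N(v) \cap N(w)|$ of an ordered pair $(v,w)$ is a function of the single $2$-WL color $\WL{2}{G}(v,w)$, and then to read off the claim by distinguishing the two orientations allowed by the hypothesis. Writing $\chi \coloneqq \WL{2}{G}$, the whole argument rests on the fact that $\chi(v,w)$ determines both the adjacency information needed to recognize common neighbors and the multiset of colors seen when the middle vertex is varied.

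First I would record that, for the fixed graph $G$, adjacency can be read off from pair colors. The initial coloring $\WLit{2}{0}{G}(v,u)$ encodes the isomorphism type of the ordered pair $(v,u)$ together with the arc colors, so it already separates pairs $(v,u)$ with $u \in N_G(v)$ from those with $u \notin N_G(v)$; since $2$-WL only refines colors, the stable color $\chi(v,u)$ does so as well. Hence there is a set $C_E$ of \emph{edge colors} with the property that, for all $x,y \in V(G)$, we have $y \in N_G(x) \iff \chi(x,y) \in C_E$. In the directed setting, where $N_G = N_G^+ \cup N_G^-$, this still holds because $\WLit{2}{0}{G}(x,y)$ records both arc colors $\lambda(x,y)$ and $\lambda(y,x)$ and thus the presence of an arc between $x$ and $y$ in either direction.

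Next I would invoke the refinement rule together with stability. Since $\chi$ is $2$-stable, $\chi(v,w)$ determines the multiset
\[
\CM(v,w) = \Big\{\!\!\Big\{\big(\chi(u,w),\chi(v,u)\big) \;\Big\vert\; u \in V(G)\Big\}\!\!\Big\},
\]
because otherwise one more round of $2$-WL would strictly refine $\chi$. Now a vertex $u$ is a common neighbor of $v$ and $w$ precisely when $u \in N_G(v)$ and $u \in N_G(w)$, i.e.\ when $\chi(v,u) \in C_E$ and $\chi(u,w) \in C_E$ (using $\chi(u,w) \in C_E \iff uw \in E(G) \iff u \in N_G(w)$). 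Therefore $|N_G(v) \cap N_G(w)|$ equals the number of tuples in $\CM(v,w)$ both of whose components lie in $C_E$, so it is a function of $\CM(v,w)$ and hence of $\chi(v,w)$.

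Finally I would split according to the hypothesis $\chi(v,w_1) \in \{\chi(v,w_2),\chi(w_2,v)\}$. If $\chi(v,w_1) = \chi(v,w_2)$, then applying the previous paragraph to the pairs $(v,w_1)$ and $(v,w_2)$ gives $|N(v)\cap N(w_1)| = |N(v)\cap N(w_2)|$ at once. If instead $\chi(v,w_1) = \chi(w_2,v)$, the same reasoning applied to $(v,w_1)$ and $(w_2,v)$ yields $|N(v)\cap N(w_1)| = |N(w_2)\cap N(v)|$, and since the intersection of neighborhoods is symmetric in its arguments, $|N(w_2)\cap N(v)| = |N(v)\cap N(w_2)|$; the claim follows in both cases. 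There is no genuine obstacle here; the only points demanding care are that the adjacency characterization via $C_E$ must be checked to survive in the directed case (where $N = N^+ \cup N^-$), and that the hypothesis permits the reversed orientation $\chi(w_2,v)$, which is harmless precisely because common-neighbor counts are orientation-independent.
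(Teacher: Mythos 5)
Your proof is correct and is exactly the standard argument the paper leaves implicit by stating this as an unproved observation: the stable color $\chi(v,w)$ determines the multiset $\{\!\{(\chi(u,w),\chi(v,u)) \mid u \in V(G)\}\!\}$, adjacency is already fixed by the initial coloring, and hence the common-neighbor count is a function of $\chi(v,w)$, with the reversed-orientation case handled by the symmetry $|N(w_2)\cap N(v)| = |N(v)\cap N(w_2)|$. No gaps; the directed case is treated correctly since the initial coloring records both arc colors.
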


The next three lemmas cover the three possible values for $d$.
We start with the case $d = 5$.

\begin{lemma}
 \label{la:edge-transitive-degree-5}
 Let $G$ be a connected, directed, planar graph such that
 \begin{enumerate}
  \item\label{item:edge-transitive-degree-5-1} $\deg(v) = 5$ for all $v \in V(G)$,
  \item\label{item:edge-transitive-degree-5-2} $\WL{2}{G}(v_1,v_1) = \WL{2}{G}(v_2,v_2)$ for all $v_1,v_2 \in V(G)$, and
  \item\label{item:edge-transitive-degree-5-3} $\WL{2}{G}(v_1,w_1) = \WL{2}{G}(v_2,w_2)$ for all $(v_1,w_1),(v_2,w_2) \in E(G)$.
 \end{enumerate}
 Then $G$ is isomorphic to an icosahedron (Figure \ref{fig:graph-icosahedron}).
\end{lemma}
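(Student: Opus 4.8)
The plan is to show that $G$ is a $5$-regular, $3$-connected planar graph whose $2$-WL coloring forces it to be combinatorially identical to the icosahedron. First I would invoke Theorem~\ref{thm:3-connected-wl} to conclude that $G$ is $3$-connected, which by Whitney's theorem means $G$ has a unique planar embedding; this lets me treat faces as well-defined combinatorial objects. Since $G$ is $5$-regular and planar, Euler's formula gives $n - m + f = 2$ with $m = \frac{5n}{2}$, so I would like to pin down the face sizes. The natural target is to prove that every face is a triangle: if every face has size $3$, then $3f = 2m = 5n$, which combined with Euler yields $n = 12$, $m = 30$, $f = 20$ — exactly the parameters of the icosahedron.

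\textbf{Forcing triangular faces.}

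The key step is therefore to rule out faces of size $\geq 4$. Here I would use Lemma~\ref{la:wl-detects-cycles} together with Observation~\ref{obs:neighborhood-degree}. By hypothesis~\ref{item:edge-transitive-degree-5-3}, all arcs have the same $2$-WL color, so Observation~\ref{obs:neighborhood-degree} tells me that the number of common neighbors $|N(v) \cap N(w)|$ is the same constant $t$ for every edge $vw$; equivalently, every edge lies in exactly $t$ triangles. If I can show $t \geq 1$, then every edge lies on a triangle, and combined with Lemma~\ref{la:wl-detects-cycles} (which propagates the existence of a $3$-cycle through an edge to all edges of the same color) I can argue that the short-cycle structure is uniform. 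The plan is to show $t = 2$: in a $5$-regular planar triangulation each edge borders two triangular faces, giving $t = 2$. I would establish that faces cannot be larger by a counting/discharging argument — if some face had size $\geq 4$, the uniformity of face structure forced by the single edge color (every edge sees the same multiset of incident face sizes, detectable via the common-neighbor counts and Lemma~\ref{la:wl-detects-cycles} applied for $\ell \in \{3,4,5\}$) would be incompatible with $5$-regularity and Euler's bound $m \leq 3n-6$, which is tight exactly for triangulations.

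\textbf{From a triangulation to the icosahedron.}

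Once I know $G$ is a $5$-regular planar triangulation on $12$ vertices, the last step is to show such a triangulation is unique up to isomorphism. I would argue this combinatorially: fix a vertex $v$ with its five neighbors arranged cyclically around it (using the unique embedding), forming a pentagonal ``wheel''; $5$-regularity then forces the next layer, and the antipodal vertex is determined, leaving no freedom. This reconstructs the icosahedron uniquely. Alternatively, one can cite that the icosahedron is the only $5$-regular planar triangulation. \textbf{The main obstacle} I anticipate is the middle step: cleanly deducing that \emph{all} faces are triangles purely from the $2$-WL color uniformity, since, as the authors note, WL-colors do not directly encode face sizes. The delicate point is converting the common-neighbor count $t$ and the cycle-detection Lemma~\ref{la:wl-detects-cycles} into control over the actual embedded faces, rather than merely over abstract cycles — bridging the gap between ``every edge is in $t$ triangles'' and ``every face is a triangle'' is where the real work lies, and I would expect to need the $3$-connectivity and the uniqueness of embedding to close that gap.
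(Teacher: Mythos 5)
Your overall strategy (reduce to showing that $G$ is a $5$-regular planar triangulation, then count via Euler's formula and identify the icosahedron) matches the paper's, and you correctly identify Observation~\ref{obs:neighborhood-degree} as the tool that makes the common-neighbor count $t = |N(v)\cap N(w)|$ uniform over all edges. However, the step you yourself flag as the main obstacle --- actually proving $t=2$, equivalently that every face is a triangle --- is not carried out, and the route you sketch for it would fail. You argue that a face of size at least $4$ ``would be incompatible with $5$-regularity and Euler's bound $m \le 3n-6$, which is tight exactly for triangulations''; but $5$-regular planar graphs need not be triangulations (the snub cube is $5$-regular, $3$-connected and planar, with six square faces), so no counting or discharging argument based on regularity and Euler's formula alone can close this case. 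What rules out such graphs here is precisely the WL hypothesis, and the paper extracts it with a one-line parity argument that your proposal misses: for a fixed vertex $v$, the graph $G[N(v)]$ is $t$-regular on the $5$ vertices of $N(v)$, so by the handshake lemma $5t$ is even and hence $t \in \{0,2,4\}$. Then $t=4$ is impossible since $G[N[v]]$ would be a $K_5$ (Theorem~\ref{thm:wagner}), and $t=0$ is impossible since a $5$-regular planar graph must contain a triangle (a triangle-free planar graph has $m \le 2n-4 < 5n/2$) and the color uniformity propagates that triangle to every vertex's neighborhood. So $t=2$, each $G[N(v)]$ is a $5$-cycle, and $3$-connectedness (Theorem~\ref{thm:3-connected-wl}) turns the resulting wheels $G[N[v]]$ into five triangular faces around each vertex; Euler's formula then gives $n=12$ and the icosahedron. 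In its current form your proof has a genuine gap at exactly this point.

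A smaller omission: the lemma is stated for directed graphs, and the paper's proof begins by noting that odd degree together with Condition~\ref{item:edge-transitive-degree-5-2} forces $G$ to be undirected; your proposal does not address this, although it is needed before one can speak of $G[N(v)]$ and common neighborhoods in the undirected sense.
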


\begin{proof}
 First observe that $G$ is undirected (i.e., for all $v,w \in V(G)$, it holds that $(v,w) \in E(G)$ if and only if $(w,v) \in E(G)$), since $\deg(v)$ is odd for all $v \in V(G)$ and Condition~\ref{item:edge-transitive-degree-5-2} holds. Let $v \in V(G)$.
 Then $G[N(v)]$ is regular by Condition \ref{item:edge-transitive-degree-5-3} and Observation~\ref{obs:neighborhood-degree}.
 Let $d \coloneqq |N(v) \cap N(w)|$ for some (and thus all) $w \in N(v)$.
 Then $d$ is even because $\deg(v)$ is odd.
 So $d \in \{0,2,4\}$.
 Moreover, $d \neq 4$, since $G$ is planar (otherwise, $G[N(v)]$ forms a clique of size $5$, which contradicts Theorem \ref{thm:wagner}).
 Also, $d \neq 0$, since $G[N(v)]$ contains at least one edge because, by Euler's formula, $G$ contains a triangle, and Condition \ref{item:edge-transitive-degree-5-2} holds.
 
 So overall, $G[N(v)]$ is a cycle of length $5$ for all $v \in V(G)$.
 This implies that every face of $G$ is a triangle since $G$ is $3$-connected by Theorem \ref{thm:3-connected-wl}.
 By Euler's formula, $|V(G)| = 12$ and $|E(G)| = 20$.
 Consequently, $G$ is isomorphic to an icosahedron.
\end{proof}

Next, we consider the case $d = 3$.

\begin{lemma}
 \label{la:edge-transitive-degree-3}
 Let $G$ be a connected, directed, planar graph such that
 \begin{enumerate}
  \item\label{item:edge-transitive-degree-3-1} $\deg(v) = 3$ for all $v \in V(G)$,
  \item\label{item:edge-transitive-degree-3-2} $\WL{2}{G}(v_1,v_1) = \WL{2}{G}(v_2,v_2)$ for all $v_1,v_2 \in V(G)$, and
  \item\label{item:edge-transitive-degree-3-3} $\WL{2}{G}(v_1,w_1) = \WL{2}{G}(v_2,w_2)$ for all $(v_1,w_1),(v_2,w_2) \in E(G)$.
 \end{enumerate}
 Then $G$ is isomorphic to a tetrahedron (Figure \ref{fig:graph-k4}), a cube (Figure \ref{fig:graph-cube}), or a dodecahedron (Figure \ref{fig:graph-dodecahedron}).
\end{lemma}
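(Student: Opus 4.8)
The plan is to mirror the proof of Lemma~\ref{la:edge-transitive-degree-5}. Exactly as there, $G$ is undirected, since $\deg(v)=3$ is odd and Condition~\ref{item:edge-transitive-degree-3-2} holds (a genuinely oriented graph in which reverse arcs received a color different from the common edge color would have equally many in- and out-neighbors in disjoint sets at every vertex, forcing even degree). Next, Observation~\ref{obs:neighborhood-degree} together with Condition~\ref{item:edge-transitive-degree-3-3} shows that $G[N(v)]$ is $d$-regular for a value $d=|N(v)\cap N(w)|$ that is the same for every edge $vw$. As $G[N(v)]$ has exactly three vertices, it is either edgeless or a triangle (a $1$-regular graph on three vertices cannot exist), so $d\in\{0,2\}$.

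If $d=2$, then $G[N(v)]$ is a triangle, so $\{v\}\cup N(v)$ induces a $K_4$; since every vertex already has degree $3$ inside this $K_4$ and $G$ is connected, $G\cong K_4$ is a tetrahedron (Figure~\ref{fig:graph-k4}). It remains to treat $d=0$, i.e.\ the case where $G$ is triangle-free. Here $G$ is $3$-connected by Theorem~\ref{thm:3-connected-wl}, cubic, and planar, so I can speak of its faces. A standard Euler count (using $|E(G)|=\tfrac32|V(G)|$ and $\sum_F|F|=2|E(G)|$) shows that $G$ has a face of size at most $5$, while triangle-freeness gives girth at least $4$; hence the girth $g$ satisfies $g\in\{4,5\}$ and $G$ has a face of size $g$.

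The heart of the argument is to show that \emph{all} faces have size exactly $g$, after which Euler's formula pins down $(|V(G)|,|E(G)|,f)=(8,12,6)$ for $g=4$ (the cube, Figure~\ref{fig:graph-cube}) and $(20,30,12)$ for $g=5$ (the dodecahedron, Figure~\ref{fig:graph-dodecahedron}). The key bridge between WL-colors and faces is the sub-claim that, in a cubic $3$-connected planar graph of girth $g\le 5$, every $g$-cycle bounds a face: such a cycle $C$ is induced (a chord would create a cycle shorter than the girth), and if $C$ did not bound a face then, since each of its $g$ vertices has exactly one edge leaving $C$, $3$-connectivity would force at least three such edges to each side of $C$, giving $g\ge 6$. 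With this in hand, Lemma~\ref{la:wl-detects-cycles} applied to the common edge color shows that every edge lies on a $g$-cycle, hence on a $g$-face. To rule out a larger face $F$, I would analyze the local picture around the boundary of $F$: every edge of $F$ has a $g$-face on its other side, and following these $g$-faces around $F$ forces each edge leaving $F$ to lie on two $g$-faces, whereas each edge of $F$ lies on only one. Since every $g$-cycle is a face, these two edge types then carry different numbers of $g$-cycles, contradicting that all edges receive the same $2$-WL color.

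I expect the uniformity of face sizes to be the main obstacle, and within it the delicate point is that $2$-WL does not see faces directly: the reduction only works because short cycles coincide with faces here. Moreover, the mere \emph{existence} statement of Lemma~\ref{la:wl-detects-cycles} is not by itself enough to separate the two edge types above---for instance, every edge of a bipartite prism lies on a $4$-cycle---so the final step needs a \emph{counting} refinement (the number of $g$-cycles, equivalently $g$-faces, through an edge is an invariant of its $2$-WL color), or alternatively an exclusion of the resulting prism-like configuration via the detection of $2$-separators from \cite{KieferN22}. Carefully setting up this counting or exclusion, together with the slightly more intricate wrap-around combinatorics in the $g=5$ case, is where the real work lies.
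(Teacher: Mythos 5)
Your reduction to the undirected case, the dichotomy $d\in\{0,2\}$ for the regularity of $G[N(v)]$, and the $K_4$ case are exactly as in the paper. Your sub-claim that in a cubic $3$-connected planar graph of girth $g\le 5$ every $g$-cycle is facial is also correct (each of the $g$ vertices sends exactly one edge off the cycle, and $3$-connectivity forces at least three such edges to each side of a non-facial cycle, so $g\ge 6$); the paper uses the same counting in its degree-$4$ lemma. Where you diverge is the decisive step, namely that \emph{all} faces have size $g$, and there your argument has a genuine gap that you yourself flag: the edge-counting you need (the number of $g$-cycles through an edge is an invariant of its $2$-WL color) is not what Lemma~\ref{la:wl-detects-cycles} provides, which is an existence statement only. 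The counting claim is in fact true -- the $2$-WL color of a pair determines the number of walks of each length and color pattern between its endpoints, and in a graph of girth $g\in\{4,5\}$ one can convert walk counts into $g$-cycle counts by subtracting the degenerate walks -- but establishing it is a separate piece of work that your proposal defers rather than carries out, and without it the hexagonal-prism-type configurations you mention are not excluded.

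The paper closes this step by a cheaper, vertex-local argument that avoids counting altogether. Fix $v$ with neighbors $u_1,u_2,u_3$. Since all three arcs $(v,u_i)$ have the same color, $2$-stability forces all three unordered pairs $\{u_i,u_j\}$ to receive the same pair of $2$-WL colors; hence if one pair of neighbors of $v$ lies on a $4$-cycle (respectively $5$-cycle) through $v$, then all three pairs do. Combined with $3$-connectivity, these three short cycles are facial and account for all three faces at $v$, so every face incident to $v$ has size $g$; Condition~\ref{item:edge-transitive-degree-3-2} together with Lemma~\ref{la:wl-detects-cycles} then transfers this to every vertex, and since every face is incident to some vertex, all faces have size $g$. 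Euler's formula then pins down the cube and the dodecahedron exactly as you describe. So your route is salvageable, but as written it stops short of the one step that actually requires an argument, and the paper's proof replaces that step with a different mechanism (uniformity of the color-pairs among the neighbors of a single vertex) rather than cycle counting.
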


\begin{proof}
 As in the proof of the previous lemma, first observe that $G$ is undirected, since $\deg(v)$ is odd for all $v \in V(G)$ and Condition \ref{item:edge-transitive-degree-5-2} holds.
 Also note that $G$ is $3$-connected by Theorem~\ref{thm:3-connected-wl}.
 
 Now first suppose that $G$ contains a triangle and let $v \in V(G)$ be a vertex of a triangle, i.e., $G[N(v)]$ contains at least one edge.
 Also, $G[N(v)]$ is regular by Condition \ref{item:edge-transitive-degree-3-3} and Observation \ref{obs:neighborhood-degree}.
 Since $\deg(v) = 3$, it follows that $G[N(v)]$ is a complete graph on three vertices.
 Hence, $G[N[v]] \cong K_4$.
 Since every vertex in the graph $G[N[v]]$ already has three neighbors in $G[N[v]]$ and $G$ is connected, it holds that $G$ is isomorphic to $K_4$.
 
 Next, assume that $G$ does not contain a triangle, but a cycle of length $4$, i.e., there are vertices $v_1,v_2,v_3,v_4 \in V(G)$ such that $v_1v_2,v_2v_3,v_3v_4,v_1v_4 \in E(G)$.
 Note that $v_1v_3,v_2v_4 \notin E(G)$ since $G$ does not contain a triangle.
 Let $v \coloneqq v_1$.
 Since $\WL{2}{G}(v,w_1) = \WL{2}{G}(v,w_2)$ for all $w_1,w_2 \in N(v)$, it follows that
 \[\{\WL{2}{G}(w_1,w_2),\WL{2}{G}(w_2,w_1)\} = \{\WL{2}{G}(w_3,w_4),\WL{2}{G}(w_4,w_3)\}\]
 for all $w_1,w_2,w_3,w_4 \in N(v)$ such that $w_1 \neq w_2$ and $w_3 \neq w_4$.
 But this means that $|N(w_1) \cap N(w_2)| \geq 2$ for all distinct $w_1,w_2 \in N(v)$.
 
 Since $G$ is $3$-connected and $3$-regular, one can easily verify that there is no vertex $w$ such that $|N(v) \cap N(w)| = 3$.
 Let $N(v) \eqqcolon \{u_1,u_2,u_3\}$.
 Hence, there are $w_{1,2},w_{1,3},w_{2,3} \in V(G) \setminus \{v,u_1,u_2,u_3\}$ such that $w_{i,j}u_i, w_{i,j}u_j \in E(G)$ for all $i,j \in \{1,2,3\}$.
 Since $G$ is $3$-connected, it follows that all faces incident to $v$ have size $4$.
 Actually, by Condition \ref{item:edge-transitive-degree-3-2} and Lemma \ref{la:wl-detects-cycles}, this implies that all faces have size $4$.
 By Euler's formula, $|V(G)| = 8$ and thus, $G$ is isomorphic to the cube.
 
 Finally, suppose that $G$ neither contains a triangle nor a cycle of length $4$.
 Then $G$ contains a cycle of length $5$ by Euler's formula.
 Let $v \in V(G)$ be a vertex contained in a cycle of length $5$.
 Also let $N(v) \eqqcolon \{u_1,u_2,u_3\}$.
 Without loss of generality, assume that there are $w_{1,2},x_{1,2} \in V(G)$ such that $(v,u_1,w_{1,2},x_{1,2},u_2,v)$ forms a cycle of length $5$.
 
 Since $\WL{2}{G}(v,u_i) = \WL{2}{G}(v,u_j)$ holds for all $i,j \in [3]$, it follows that
 \[\{\WL{2}{G}(w_1,w_2),\WL{2}{G}(w_2,w_1)\} = \{\WL{2}{G}(w_3,w_4),\WL{2}{G}(w_4,w_3)\}\]
 for all $w_1,w_2,w_3,w_4 \in N(v)$ such that $w_1 \neq w_2$ and $w_3 \neq w_4$.
 But this implies that, for all $i,j \in [3]$ with $i < j \in [3]$, there are $w_{i,j},x_{i,j} \in V(G)$ such that $(v,u_i,w_{i,j},x_{i,j},u_j,v)$ forms a cycle of length $5$.
 Now, for each of the three cycles of length $5$, three of its vertices already have three neighbors.
 Since $G$ is $3$-connected, this implies that all three faces incident to $v$ have size $5$.
 Actually, by Condition \ref{item:edge-transitive-degree-3-2} and Lemma \ref{la:wl-detects-cycles}, this implies that all faces have size $5$.
 By Euler's formula, $|V(G)| = 20$.
 Moreover, $G$ being $3$-connected also implies that $|\{v,u_1,u_2,u_3,w_{1,2},x_{1,2},w_{1,3},x_{1,3},w_{2,3},x_{2,3}\}| = 10$, i.e., all the listed vertices are distinct.
 Now, it is easy to check that $G$ is isomorphic to the dodecahedron.
\end{proof}

Observe that Lemmas \ref{la:edge-transitive-degree-5} and \ref{la:edge-transitive-degree-3} together give the graphs listed in Part \ref{item:classification-edge-transitive-1}.
The next lemma covers the most involved case $d = 4$, which leads to the graphs listed in Part \ref{item:classification-edge-transitive-2}.

\begin{lemma}
 \label{la:edge-transitive-degree-4}
 Let $G$ be a connected, directed, planar graph such that
 \begin{enumerate}
  \item\label{item:edge-transitive-degree-4-1} $\deg(v) = 4$ for all $v \in V(G)$,
  \item\label{item:edge-transitive-degree-4-2} $\WL{2}{G}(v_1,v_1) = \WL{2}{G}(v_2,v_2)$ for all $v_1,v_2 \in V(G)$, and 
  \item\label{item:edge-transitive-degree-4-3} $\WL{2}{G}(v_1,w_1) = \WL{2}{G}(v_2,w_2)$ for all $(v_1,w_1),(v_2,w_2) \in E(G)$.
 \end{enumerate}
 Then $\undir{G}$ is isomorphic to an octahedron (Figure \ref{fig:graph-octahedron}), a cuboctahedron (Figure \ref{fig:graph-cuboctahedron}), or an icosidodecahedron (Figure \ref{fig:graph-icosidodecahedron}) (where $\undir{G}$ denotes the undirected version of $G$).
\end{lemma}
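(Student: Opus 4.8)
The plan is to argue entirely about the underlying undirected graph $H \coloneqq \undir{G}$. All the hypotheses we use transfer to $H$: it is connected, planar and $4$-regular by assumption, it is $3$-connected by Theorem~\ref{thm:3-connected-wl}, and by Observation~\ref{obs:neighborhood-degree} applied to Condition~\ref{item:edge-transitive-degree-4-3} every $w \in N(v)$ has the same number $\delta$ of common neighbours with $v$, so $H[N(v)]$ is $\delta$-regular; since $|N(v) \cap N(w)|$ is symmetric and $H$ is connected, this $\delta$ is independent of $v$. Because $\deg(v) = 4$ is even, we cannot conclude that $G$ itself is undirected, which is exactly why the statement only pins down $H$. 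As $H[N(v)]$ is a $\delta$-regular graph on four vertices, $\delta \in \{0,1,2,3\}$. Here $\delta = 3$ is impossible, since it gives $H[N[v]] \cong K_5$, contradicting Theorem~\ref{thm:wagner}; and $\delta = 0$ is impossible, since then $H$ would be a triangle-free $4$-regular planar graph, contradicting Corollary~\ref{cor:planar-degree-bound}. I would then split into the two remaining cases $\delta \in \{1,2\}$.

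For $\delta = 2$, the graph $H[N(v)]$ is the unique $2$-regular graph on four vertices, a $4$-cycle. Using that the embedding is essentially unique (Whitney's theorem), this $4$-cycle must be compatible with the cyclic order of the edges around $v$, so each of the four corners at $v$ closes into a triangle; $3$-connectivity forces these triangles to be faces (an enclosed vertex would create a $2$-separator). Hence $H$ is a triangulation, and Euler's formula with $m = 2n$ and $3f = 2m$ yields $n = 6$. The unique $4$-regular planar triangulation on six vertices is the octahedron (Figure~\ref{fig:graph-octahedron}).

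For $\delta = 1$, the graph $H[N(v)]$ is a perfect matching, say on $N(v) = \{n_1,n_2,n_3,n_4\}$. First I would use planarity to exclude the crossing matching relative to the cyclic order around $v$ (its two edges would be forced to cross), so the matching is non-crossing and occupies two opposite corners at $v$; by $3$-connectivity the resulting two triangles are faces, and $\delta = 1$ says each edge lies on exactly one triangle. Thus the faces around every vertex alternate triangle, large face, triangle, large face, i.e.\ $v$ has vertex configuration $3.x.3.y$. The crucial uniformity step is a parity argument on a triangular face $u_1u_2u_3$: the three faces across its edges are non-triangular, and for each $i$ the two non-triangular faces at $u_i$ are exactly the two faces across the two triangle-edges at $u_i$; writing $A,B,C$ for the faces across $u_1u_2,u_2u_3,u_3u_1$, this forces $|A| = |B| = |C|$ and hence a common value $k$, which propagates globally by connectivity. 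So $H$ has configuration $3.k.3.k$, and Euler's formula with $f_3 = 2n/3$ and $2n/k$ faces of size $k$ gives $n = 6k/(6-k)$; this is a positive integer only for $k = 4$ (with $n = 12$) and $k = 5$ (with $n = 30$). A short reconstruction from the configuration $3.k.3.k$, using $3$-connectivity, then identifies $H$ as the cuboctahedron (Figure~\ref{fig:graph-cuboctahedron}) or the icosidodecahedron (Figure~\ref{fig:graph-icosidodecahedron}).

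I expect the main obstacle to lie entirely in the case $\delta = 1$. Setting up the alternating $3.x.3.y$ picture needs the planarity and faciality arguments (non-crossing matching and triangles bounding faces), and the decisive point is the parity argument forcing $x = y$: without it, many choices of face sizes are Euler-consistent---for instance the configuration $3.4.3.6$ satisfies Euler's formula with $n = 24$---and only the parity obstruction around a triangle rules these impostors out and collapses the possibilities to $k \in \{4,5\}$. The final identification of the graph with the named solid from its face vector is routine in spirit but must still be carried out, again leaning on $3$-connectivity and Whitney's theorem.
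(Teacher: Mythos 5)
Your setup, the exclusion of $\delta\in\{0,3\}$, and the case $\delta=2$ all match the paper's proof (the paper reaches $d\geq 1$ by first invoking Corollary~\ref{cor:planar-degree-bound} to find a triangle and placing $v$ on it, but this is the same content). The problem is the case $\delta=1$, and specifically the step you yourself single out as decisive: the claim that the three non-triangular faces $A,B,C$ around a triangular face satisfy $|A|=|B|=|C|$. The facts you list (at $u_1$ the two big faces are $A$ and $C$, at $u_2$ they are $A$ and $B$, at $u_3$ they are $B$ and $C$) impose no relation whatsoever on the three sizes. The parity/odd-cycle argument you are gesturing at only works if you already know that the \emph{unordered pair of big-face sizes is the same pair $\{x,y\}$ at every vertex}; then three pairwise-distinct values cannot all lie in a two-element set, forcing $x=y$. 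That global constancy of the vertex configuration is exactly what is \emph{not} available here: the hypothesis is only that all arcs get the same $2$-WL color, not edge- or vertex-transitivity, and the size of a face incident to a vertex is not a quantity that $2$-WL is known to control. The paper explicitly flags this ("we cannot immediately deduce information about the size of faces from considering WL-colors") as the reason the classical Archimedean-solid argument does not transfer.

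The paper's workaround is worth internalizing because it shows why the restriction to small faces matters. It first proves (via $3$-connectivity) that every $3$- and $4$-cycle, and later every $5$-cycle, is facial; then a face-counting argument with Euler's formula shows some cycle of length $4$ or $5$ must exist; and only then does Lemma~\ref{la:wl-detects-cycles} --- which propagates the existence of cycles of length $\ell\in\{3,4,5\}$ through an edge to all edges of the same color --- combine with faciality to yield the uniform incidence ``every edge lies on one facial $3$-cycle and one facial $k$-cycle'' for a single $k\in\{4,5\}$. Your route would need a substitute for Lemma~\ref{la:wl-detects-cycles} that works for faces of arbitrary size, and no such tool is available; as written, configurations like $3.4.3.6$ (which, as you note, is Euler-consistent with $n=24$) are not actually excluded by your argument.
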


Before diving into the proof, let us remark that the restriction to $\undir{G}$ is necessary and there are in total four directed versions of the graphs listed in the lemma that satisfy all the requirements.
For the sake of completeness, these graphs are listed in Figure \ref{fig:as}.
This list can for example be obtained from existing databases (see, e.g., \cite{BambergHL19}) that list all graphs $G$ (up to a certain number of vertices) for which $\WL{2}{G}(v_1,v_1) = \WL{2}{G}(v_2,v_2)$ for all $v_1,v_2 \in V(G)$.

\begin{proof}
 The graph $\undir{G}$ is $3$-connected by Theorem \ref{thm:3-connected-wl}. By Corollary \ref{cor:planar-degree-bound}, $\undir{G}$ contains a triangle. 
 Let $v \in V(G)$ be a vertex of such a triangle. Then $\undir{G}[N(v)]$ contains at least one edge.
 Moreover, $\undir{G}[N(v)]$ is regular by Property~\ref{item:edge-transitive-degree-4-3} and Observation~\ref{obs:neighborhood-degree}.
 Let $d \in \{1,2,3\}$ such that $\undir{G}[N(v)]$ is $d$-regular.
 Since $G$ is planar, it must hold that $d \leq 2$, otherwise $\undir{G}[N[v]]$ forms a clique of size $5$, which contradicts Theorem \ref{thm:wagner}.

 Suppose that $d = 2$.
 Then, being $3$-connected, $\undir{G}$ is a triangulation, i.e., every face of $\undir{G}$ is a triangle.
 Thus, $3f = 2e = 4n$ and hence, by Euler's formula, $n = 6$.
 The octahedron is the only connected graph with those parameters.

 Now suppose $d = 1$, i.e., $\undir{G}[N(v)]$ is a matching.
 This means that $v$ is contained in exactly two $3$-cycles of $\undir{G}$.
 Thus, every vertex is contained in exactly two $3$-cycles and every edge is contained in exactly one $3$-cycle of $\undir{G}$ (since an edge contained in two $3$-cycles would yield a vertex $v'$ with a neighbor of degree at least $2$ in $N(v')$).

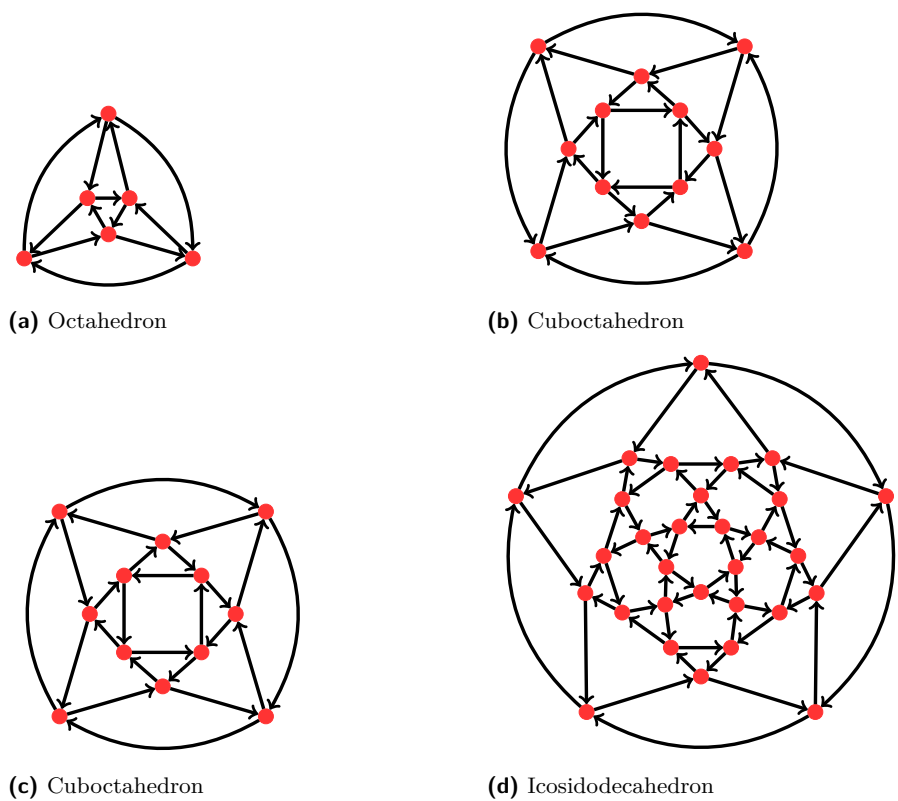
\begin{figure}
 \centering
 \begin{subfigure}[b]{.4\linewidth}
  \scalebox{\figscalesmall}{
  \begin{tikzpicture}
   \node[vertex,red!80] (1) at ($(0,0)+(30:0.4)$) {};
   \node[vertex,red!80] (2) at ($(0,0)+(150:0.4)$) {};
   \node[vertex,red!80] (3) at ($(0,0)+(270:0.4)$) {};
   
   \node[vertex,red!80] (4) at ($(0,0)+(90:1.6)$) {};
   \node[vertex,red!80] (5) at ($(0,0)+(210:1.6)$) {};
   \node[vertex,red!80] (6) at ($(0,0)+(330:1.6)$) {};
   
   \draw[line width=1.6pt,->] (2) edge (1);
   \draw[line width=1.6pt,->] (1) edge (3);
   \draw[line width=1.6pt,->] (3) edge (2);
   
   \draw[line width=1.6pt,->,bend left] (5) edge (4);
   \draw[line width=1.6pt,->,bend left] (4) edge (6);
   \draw[line width=1.6pt,->,bend left] (6) edge (5);
   
   \draw[line width=1.6pt,->] (1) edge (4);
   \draw[line width=1.6pt,->] (6) edge (1);
   \draw[line width=1.6pt,->] (4) edge (2);
   \draw[line width=1.6pt,->] (2) edge (5);
   \draw[line width=1.6pt,->] (5) edge (3);
   \draw[line width=1.6pt,->] (3) edge (6);
  \end{tikzpicture}
  }
  \caption{Octahedron}
  \label{fig:graph-octahedron-dir}
 \end{subfigure}
 \hfill
 \begin{subfigure}[b]{.55\linewidth}
  \scalebox{\figscalesmall}{
  \begin{tikzpicture}
   \node[vertex,red!80] (1) at ($(0,0)+(45:0.9)$) {};
   \node[vertex,red!80] (2) at ($(0,0)+(135:0.9)$) {};
   \node[vertex,red!80] (3) at ($(0,0)+(225:0.9)$) {};
   \node[vertex,red!80] (4) at ($(0,0)+(315:0.9)$) {};
   
   \node[vertex,red!80] (5) at ($(0,0)+(0:1.2)$) {};
   \node[vertex,red!80] (6) at ($(0,0)+(90:1.2)$) {};
   \node[vertex,red!80] (7) at ($(0,0)+(180:1.2)$) {};
   \node[vertex,red!80] (8) at ($(0,0)+(270:1.2)$) {};
   
   \node[vertex,red!80] (9) at ($(0,0)+(45:2.4)$) {};
   \node[vertex,red!80] (10) at ($(0,0)+(135:2.4)$) {};
   \node[vertex,red!80] (11) at ($(0,0)+(225:2.4)$) {};
   \node[vertex,red!80] (12) at ($(0,0)+(315:2.4)$) {};
   
   \foreach \i/\j in {2/1,4/1,2/3,4/3,1/5,1/6,6/2,7/2,3/7,3/8,5/4,8/4,9/5,5/12,9/6,6/10,7/10,11/7,11/8,8/12}{
    \draw[line width=1.6pt,->] (\i) edge (\j);
   }
   \foreach \i/\j in {10/9,12/11}{
    \draw[line width=1.6pt,->,bend left] (\i) edge (\j);
   }
   \foreach \i/\j in {12/9,10/11}{
    \draw[line width=1.6pt,->,bend right] (\i) edge (\j);
   }
   
  \end{tikzpicture}
  }
  \caption{Cuboctahedron}
  \label{fig:graph-cuboctahedron-dir-1}
 \end{subfigure}
 
 \vspace{1.5ex}
 \begin{subfigure}[b]{.4\linewidth}
  \scalebox{\figscalesmall}{
  \begin{tikzpicture}
   \node[vertex,red!80] (1) at ($(0,0)+(45:0.9)$) {};
   \node[vertex,red!80] (2) at ($(0,0)+(135:0.9)$) {};
   \node[vertex,red!80] (3) at ($(0,0)+(225:0.9)$) {};
   \node[vertex,red!80] (4) at ($(0,0)+(315:0.9)$) {};
   
   \node[vertex,red!80] (5) at ($(0,0)+(0:1.2)$) {};
   \node[vertex,red!80] (6) at ($(0,0)+(90:1.2)$) {};
   \node[vertex,red!80] (7) at ($(0,0)+(180:1.2)$) {};
   \node[vertex,red!80] (8) at ($(0,0)+(270:1.2)$) {};
   
   \node[vertex,red!80] (9) at ($(0,0)+(45:2.4)$) {};
   \node[vertex,red!80] (10) at ($(0,0)+(135:2.4)$) {};
   \node[vertex,red!80] (11) at ($(0,0)+(225:2.4)$) {};
   \node[vertex,red!80] (12) at ($(0,0)+(315:2.4)$) {};
   
   \foreach \i/\j in {1/2,4/1,2/3,3/4,1/5,6/1,2/6,7/2,3/7,8/3,5/4,4/8,5/9,12/5,9/6,6/10,10/7,7/11,11/8,8/12}{
    \draw[line width=1.6pt,->] (\i) edge (\j);
   }
   \foreach \i/\j in {10/9,9/12,11/10,12/11}{
    \draw[line width=1.6pt,->,bend left] (\i) edge (\j);
   }
   
  \end{tikzpicture}
  }
  \caption{Cuboctahedron}
  \label{fig:graph-cuboctahedron-dir-2}
 \end{subfigure}
 \hfill
 \begin{subfigure}[b]{.55\linewidth}
  \scalebox{\figscalesmall}{
  \begin{tikzpicture}
   \node[vertex,red!80] (1) at ($(0,0)+(54:0.6)$) {};
   \node[vertex,red!80] (2) at ($(0,0)+(126:0.6)$) {};
   \node[vertex,red!80] (3) at ($(0,0)+(198:0.6)$) {};
   \node[vertex,red!80] (4) at ($(0,0)+(270:0.6)$) {};
   \node[vertex,red!80] (5) at ($(0,0)+(342:0.6)$) {};
   
   \node[vertex,red!80] (6) at ($(0,0)+(90:1.0)$) {};
   \node[vertex,red!80] (7) at ($(0,0)+(162:1.0)$) {};
   \node[vertex,red!80] (8) at ($(0,0)+(234:1.0)$) {};
   \node[vertex,red!80] (9) at ($(0,0)+(306:1.0)$) {};
   \node[vertex,red!80] (10) at ($(0,0)+(18:1.0)$) {};
   
   \node[vertex,red!80] (11) at ($(0,0)+(36:1.6)$) {};
   \node[vertex,red!80] (12) at ($(0,0)+(72:1.6)$) {};
   \node[vertex,red!80] (13) at ($(0,0)+(108:1.6)$) {};
   \node[vertex,red!80] (14) at ($(0,0)+(144:1.6)$) {};
   \node[vertex,red!80] (15) at ($(0,0)+(180:1.6)$) {};
   \node[vertex,red!80] (16) at ($(0,0)+(216:1.6)$) {};
   \node[vertex,red!80] (17) at ($(0,0)+(252:1.6)$) {};
   \node[vertex,red!80] (18) at ($(0,0)+(288:1.6)$) {};
   \node[vertex,red!80] (19) at ($(0,0)+(324:1.6)$) {};
   \node[vertex,red!80] (20) at ($(0,0)+(0:1.6)$) {};
   
   \node[vertex,red!80] (21) at ($(0,0)+(54:2.0)$) {};
   \node[vertex,red!80] (22) at ($(0,0)+(126:2.0)$) {};
   \node[vertex,red!80] (23) at ($(0,0)+(198:2.0)$) {};
   \node[vertex,red!80] (24) at ($(0,0)+(270:2.0)$) {};
   \node[vertex,red!80] (25) at ($(0,0)+(342:2.0)$) {};
   
   \node[vertex,red!80] (26) at ($(0,0)+(90:3.2)$) {};
   \node[vertex,red!80] (27) at ($(0,0)+(162:3.2)$) {};
   \node[vertex,red!80] (28) at ($(0,0)+(234:3.2)$) {};
   \node[vertex,red!80] (29) at ($(0,0)+(306:3.2)$) {};
   \node[vertex,red!80] (30) at ($(0,0)+(18:3.2)$) {};
   
   \foreach \i/\j in {1/2,5/1,2/3,3/4,4/5,6/1,1/10,2/6,7/2,3/7,8/3,4/8,9/4,5/9,10/5,12/6,6/13,14/7,7/15,16/8,8/17,18/9,9/19,10/11,20/10,
                      11/12,11/20,13/12,13/14,15/14,15/16,17/16,17/18,19/18,19/20,21/11,12/21,22/13,14/22,23/15,16/23,24/17,18/24,25/19,20/25,
                      21/26,30/21,26/22,22/27,27/23,23/28,28/24,24/29,29/25,25/30}{
    \draw[line width=1.6pt,->] (\i) edge (\j);
   }
   \foreach \i/\j in {27/26,26/30,28/27,29/28,30/29}{
    \draw[line width=1.6pt,->,bend left] (\i) edge (\j);
   }
  \end{tikzpicture}
  }
  \caption{Icosidodecahedron}
  \label{fig:graph-icosidodecahedron-dir}
 \end{subfigure}
 
 \caption{Directed edge-transitive planar graphs.}
 \label{fig:as}
\end{figure}

 \begin{claim}
  \label{cl:face-cycles-3-4}
  In $\undir{G}$, for $\ell \in \{3,4\}$, every cycle of length $\ell$ is facial.
 \end{claim}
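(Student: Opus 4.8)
The plan is to treat $\ell=3$ and $\ell=4$ uniformly. Since $\undir{G}$ is $3$-connected (Theorem~\ref{thm:3-connected-wl}), Whitney's theorem gives it an essentially unique embedding, so being facial is well defined; moreover a chordless cycle is facial if and only if one of its two sides contains no vertex (a chordless cycle has no edge strictly inside an empty side). I will assume a cycle $C$ is non-facial, so both its interior $I$ and its exterior $O$ contain a vertex, and aim for a contradiction. Throughout I use the structure already established for $d=1$: every edge lies in exactly one triangle, every vertex lies in exactly two triangles, and the two neighbors of a vertex $u$ that do not belong to a given triangle through $u$ are themselves adjacent (they form the opposite edge of the matching $\undir{G}[N(u)]$). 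Note also that $N(I)$ and $N(O)$ are contained in $V(C)$ and separate $I$ from $O$, so $3$-connectivity forces $|N(I)|\ge 3$ and $|N(O)|\ge 3$.

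For $\ell=3$, write $C=\{u,v,w\}$. The two neighbors of $u$ outside $C$ are adjacent, hence lie on the same side of $C$; thus each of $u,v,w$ sends \emph{both} of its non-$C$ edges to a single side and is adjacent to that side only. Since $N(I),N(O)\subseteq\{u,v,w\}$ and both sides are nonempty, $3$-connectivity gives $N(I)=N(O)=\{u,v,w\}$, so all three vertices are adjacent to both sides. This contradicts the fact that each is adjacent to only one side, so $C$ is facial.

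For $\ell=4$, write $C=p,q,r,s$. First, $C$ is chordless: a chord would place one of its edges into two distinct triangles, contradicting that each edge lies in exactly one triangle. For a cycle vertex $p$, its two incident $C$-edges $pq$ and $sp$ carry the (by the $\ell=3$ case facial) triangles $\{p,q,t_{pq}\}$ and $\{p,s,t_{sp}\}$, which are distinct (otherwise $qs$ would be a chord) and therefore are exactly the two triangles through $p$; hence $p$'s two non-$C$ neighbors are precisely the apexes $t_{pq},t_{sp}$, and the side containing the edge $pt_e$ is the side $\sigma(e)\in\{I,O\}$ of the apex $t_e$. As $|N(I)|\ge 3$ and $|N(O)|\ge 3$, at most one vertex has both incident edges labeled $I$ and at most one has both labeled $O$. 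A short inspection of the cyclic word $\sigma(pq)\sigma(qr)\sigma(rs)\sigma(sp)\in\{I,O\}^4$ then leaves only the patterns $IIOO$ and $IOIO$ (two $I$'s and two $O$'s). This is exactly what makes $\ell=4$ harder than $\ell=3$: unlike a triangle vertex, a $4$-cycle vertex may split its two extra edges between the two sides, so the plain pigeonhole argument no longer closes.

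Eliminating these two patterns is the main technical step. In either pattern the interior vertices adjacent to $C$ are exactly the interior apexes, namely $t_{pq},t_{qr}$ for $IIOO$ and $t_{pq},t_{rs}$ for $IOIO$; every interior-to-boundary edge passes through them, so removing these two vertices would disconnect any further interior vertex from $O$. By $3$-connectivity there is no such further vertex, i.e.\ the interior is exactly these apexes. The two interior apexes cannot coincide: for $IIOO$ a common value $t_{pq}=t_{qr}$ would put the edge $qt_{pq}$ into two triangles, and for $IOIO$ a single interior vertex adjacent to all of $p,q,r,s$ would satisfy $\undir{G}[N(\cdot)]\supseteq C$, hence be $2$-regular, contradicting $d=1$. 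So the interior consists of two distinct apexes, each of whose only $C$-neighbors are the two endpoints of its defining edge (one checks the neighborhoods of the remaining two cycle vertices are already saturated by their own apexes). But then such an apex has at most one further (interior) neighbor, so it can neither reach degree $4$ nor complete its second triangle on a matched pair of neighbors — a contradiction. Thus no $4$-cycle is non-facial. The crux, and the step I expect to be most delicate, is precisely this interplay between the $2$-separator argument forced by $3$-connectivity and the degree-$4$ / one-triangle-per-edge constraints that rules out the two surviving apex patterns.
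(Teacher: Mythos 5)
Your proof is correct and follows essentially the same route as the paper's: both determine the off-cycle neighbors of the cycle vertices as triangle apexes via the matching condition $d=1$, and both derive the contradiction from $3$-connectivity by showing that the apexes on one side of a non-facial cycle would form a separator of size at most $2$. The only cosmetic difference is in the $4$-cycle case, where you enumerate the side-patterns of the four apexes and finish with a degree deficit at an interior apex, whereas the paper pigeonholes an area containing one or two apexes and exhibits the extra vertex $w'$ cut off by the $2$-separator $\{w_1,w_2\}$ -- the same two facts ($4$-regularity and the separator bound) applied in the opposite order.
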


 \begin{claimproof}
  Consider a $3$-cycle in $\undir{G}$ with vertices $v_1,v_2,v_3$ and suppose that it is not facial.
  The cycle bounds two areas $A_1$ and $A_2$ and it suffices to show that one of them is empty.
  Suppose this is not the case. Every $v_i$ has two neighbors $v'_i, v''_i \notin \{v_1,v_2,v_3\}$ and all of these six vertices are pairwise distinct, because otherwise, an edge $v_iv_{i+1}$ of $\undir{G}$ would be contained in two $3$-cycles. Since $\undir{G}[v_i]$ induces a matching, it holds for all $i \in [3]$ that either both $v'_i$ and $v''_i$ lie in $A_1$ or both lie in $A_2$. Let $A \in \{A_1,A_2\}$ be the area that contains only $v'_j$ and $v''_j$ for a single $j$.
  Then $v_j$ separates $v'_j$ from $v'_{j+1}$, contradicting the $3$-connectedness of $\undir{G}$.
  Thus, the $3$-cycle must be facial.

  Consider now a $4$-cycle in $\undir{G}$ with vertices $v_1,v_2,v_3,v_4$ and suppose it is not facial.
  For $i \in [4]$, let $v'_i$ be the unique common neighbor of $v_i$ and $v_{i+1}$ (indices taken modulo~$4$).
  Because no edge is contained in two $3$-cycles, the $4$-cycle is induced and all the $v'_i$ are pairwise distinct.
  There must be an area $A \in \{A_1,A_2\}$ that contains at least one and at most two of the $v'_i$.
  In the first case, we obtain a contradiction just as for the $3$-cycle.
  In the second case, let the two vertices in $A$ be $w_1$ and $w_2$.
  Then $w_1$ must have a neighbor $w'$ not contained in $\{v_1,v_2,v_3,v_4,w_2\}$, which is separated from $\{v'_1,v'_2,v'_3,v'_4\} \setminus \{w_1,w_2\}$ by $\{w_1,w_2\}$, again contradicting the $3$-connectedness of $\undir{G}$.
  Thus, the cycle is facial.
 \end{claimproof}

 We now show that $\undir{G}$ contains a cycle of length at most $5$.
 Let $n \coloneqq |V(G)|$. For $i \in [n]$, let $f_i$ be the number of facial cycles of length $i$.
 By Euler's formula, it holds that $n - 2n + \sum_{i=3}^n f_i = 2$.
 Since every vertex participates in exactly two facial $3$-cycles, we have $f_3 = \frac{2n}{3}$.
 Furthermore, since every vertex participates in exactly two facial cycles of length larger than $3$, we obtain that $\sum_{i=4}^n f_i = 2 + \frac{n}{3}$ holds.
 If $\undir{G}$ contained no cycle of length $4$ and no cycle of length $5$, this would imply that $\sum_{i=6}^n f_i = 2 + \frac{n}{3}$.
 However, $\sum_{i=6}^n 6 \cdot f_i \leq \sum_{i=6}^n i \cdot f_i \leq 2n$.
 Hence, $\sum_{i=6}^n f_i \leq \frac{2n}{6}$, which gives $2 + \frac{n}{3} \leq \frac{n}{3}$, a contradiction.
 Thus, $\undir{G}$ contains a cycle of length $4$ or $5$.

 Suppose that $\undir{G}$ contains a $4$-cycle.
 Due to Lemma \ref{la:wl-detects-cycles}, Claim \ref{cl:face-cycles-3-4}, and Property \ref{item:edge-transitive-degree-4-3}, this implies that, in $\undir{G}$, every edge is incident to one facial cycle of length $3$ and one facial cycle of length $4$.
 Therefore, every vertex of $\undir{G}$ is incident to exactly two facial cycles of length $3$ and two facial cycles of length $4$.
 Let $f_3$ and $f_4$ be the number of facial cycles of length $3$ and $4$, respectively.
 Then $f = f_3 + f_4$ and $3f_3 = 2n = 4f_4$.
 Also $2e = 4n$.
 Thus, by Euler's formula $n-2n+2n/3+n/2 = 2$.
 Hence $n = 12$ and $f_3 = 8$.
 The cuboctahedron is the only $4$-regular $3$-connected planar graph with these parameters.

 Now suppose that $\undir{G}$ contains a $5$-cycle and no $4$-cycle.
 In this case, we can extend the previous claim also to $5$-cycles.

 \begin{claim}
  \label{cl:face-cycles-5}
  In $\undir{G}$, every cycle of length $5$ is facial.
 \end{claim}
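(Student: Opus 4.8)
The plan is to argue by contradiction in exact analogy to Claim~\ref{cl:face-cycles-3-4}, now exploiting that we are in the subcase where $\undir{G}$ has no cycle of length~$4$. Suppose some $5$-cycle $C = v_1v_2v_3v_4v_5$ (indices modulo~$5$) is not facial, so it bounds two areas $A_1$ and $A_2$, each containing at least one vertex. First I would record some structure. A chord $v_iv_{i+2}$ of $C$ would create both a triangle and a $4$-cycle, so $C$ is induced. Since every edge lies in exactly one $3$-cycle and, by Claim~\ref{cl:face-cycles-3-4}, every $3$-cycle is facial, each edge $e_i = v_iv_{i+1}$ is incident to a unique facial triangle with an \emph{apex} $t_i$; as $C$ is induced, $t_i \notin V(C)$. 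Using that no edge lies in two $3$-cycles (a shared apex of two cycle edges would produce either a repeated triangle on a common edge or a $4$-cycle), the apexes $t_1,\dots,t_5$ are pairwise distinct. Finally, since $\deg(v_i)=4$ and $\undir{G}[N(v_i)]$ is a matching with $v_{i-1}v_{i+1}\notin E(G)$, the two off-cycle neighbours of $v_i$ are exactly $t_{i-1}$ and $t_i$.

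The decisive observation is an \emph{interface lemma}: any vertex lying in the interior of a region $A_j$ that is adjacent to a cycle vertex $v_i$ must be an off-cycle neighbour of $v_i$, hence an apex. Consequently the apexes contained in a region are precisely the interior vertices of that region adjacent to $C$. If a region contained no apex, its interior would have no edge to $C$ and, being separated from the other region by $C$, would form its own connected component; as $\undir{G}$ is connected, that interior would be empty and $C$ would bound a face, contradicting non-faciality. Hence both $A_1$ and $A_2$ contain at least one apex, and since there are only five apexes, one region, say $A$, contains a set $S$ of at most two of them.

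It remains to turn $S$ into a small separator, exactly as in the second case of Claim~\ref{cl:face-cycles-3-4}. Pick an apex $t\in S$; its two off-cycle neighbours lie in $A$, and since $|S|\le 2$, at least one of them, call it $s^\ast$, is not in $S$. By the interface lemma $s^\ast$ is not an apex, so it is not adjacent to any cycle vertex, and every path from $s^\ast$ to $C\cup A_2$ must leave the interior of $A$ through an apex of $A$, that is, through $S$. Thus $S$ is a separator of size at most $2$ of $\undir{G}$, contradicting that $\undir{G}$ is $3$-connected (Theorem~\ref{thm:3-connected-wl}). Therefore $C$ is facial.

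I expect the main obstacle to be the first paragraph: cleanly establishing that the five apexes are distinct off-cycle vertices and that they account for \emph{all} off-cycle neighbours of the cycle vertices. Everything afterwards, namely the interface lemma, the pigeonhole step, and the separator argument, is then a direct adaptation of the $3$- and $4$-cycle analysis in Claim~\ref{cl:face-cycles-3-4}.
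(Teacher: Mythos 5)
Your proposal is correct and follows essentially the same route as the paper: the paper likewise introduces the five apexes $v_i'$ (the unique common neighbours of consecutive cycle vertices), shows via the no-$4$-cycle and one-triangle-per-edge properties that they are off-cycle and pairwise distinct, and then pigeonholes a region containing at most two of them to extract a separator of size at most $2$, contradicting $3$-connectedness. The only difference is presentational — the paper defers the final separator step to the proof of Claim~\ref{cl:face-cycles-3-4}, whereas you spell it out (correctly) via your interface lemma.
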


 \begin{claimproof}
  Consider a $5$-cycle in $\undir{G}$ with vertices $v_1,v_2,v_3,v_4,v_5$.
  For $i \in [5]$, let $v'_i$ be the unique common neighbor of $v_i$ and $v_{i+1}$ (indices taken modulo~$5$).
  Since there are no $4$-cycles in $\undir{G}$, it holds that $\{v_i \mid i \in [5]\} \cap \{v'_i \mid i \in [5]\} = \emptyset$.
  Also, all the $v'_i$ are pairwise distinct, since otherwise, there would be a $j$ with $v'_j = v'_{j+1}$ or $v'_j = v'_{j+2}$ (indices modulo $5$), which would yield an edge contained in two $3$-cycles.
  
  As before, if the $5$-cycle is not facial, there must be an area $A$ that contains at least one and at most two of the $v'_i$.
  In both cases, we obtain a contradiction just as in the proof of Claim \ref{cl:face-cycles-3-4}.
  Thus, the $5$-cycle is facial.
 \end{claimproof}

 Due to Lemma \ref{la:wl-detects-cycles}, Claim \ref{cl:face-cycles-5}, and Property \ref{item:edge-transitive-degree-4-3}, this implies that in $\undir{G}$, every edge is incident to one facial cycle of length $3$ and one facial cycle of length $5$.
 Therefore, every vertex of $\undir{G}$ is incident to exactly two facial cycles of length $3$ and two facial cycles of length $5$.
 Let $f_3$ and $f_5$ be the number of facial cycles of length $3$ and $5$, respectively.
 Then $f = f_3 + f_5$ and $3f_3 = 2n = 5f_5$. Also $2e = 4n$.
 Thus, by Euler's formula $n-2n+2n/3+2n/5 = 2$. Hence $n = 30$ and $f_3 = 20$.
 The icosidodecahedron is the only $4$-regular $3$-connected planar graph with these parameters.
\end{proof}

Overall, this completes the case $|C_V(G,\chi)| = 1$.
Next, we turn to the case that $|C_V(G,\chi)| = 2$.
This case can be reduced to the previous case by defining an auxiliary graph on one of the two vertex color classes.

\begin{lemma}
 \label{la:edge-transitive-two-vertex-colors}
 Let $G$ be a connected, directed, planar graph of minimum degree at least $3$ such that
 \begin{enumerate}
  \item\label{item:edge-transitive-two-vertex-colors-1} $|\{\WL{2}{G}(v_1,v_1) \mid v_1 \in V(G)\}| \geq 2$, and
  \item\label{item:edge-transitive-two-vertex-colors-2} $\WL{2}{G}(v_1,w_1) = \WL{2}{G}(v_2,w_2)$ for all $(v_1,w_1),(v_2,w_2) \in E(G)$.
 \end{enumerate}
 Then $\undir{G}$ is isomorphic to a cube (Figure \ref{fig:graph-bicol-cube}), a rhombic dodecahedron (Figure \ref{fig:graph-rhombic-dodecahedron}), or a rhombic triacontahedron (Figure \ref{fig:graph-rhombic-triacontahedron}).
\end{lemma}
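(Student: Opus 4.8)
The plan is to reduce this two-color case to the single-color Lemmas~\ref{la:edge-transitive-degree-3}, \ref{la:edge-transitive-degree-4}, and~\ref{la:edge-transitive-degree-5} by building an auxiliary graph on one of the two vertex-color classes. First I would pin down the coarse structure. Writing $\chi \coloneqq \WL{2}{G}$ and $c$ for the common color of all arcs, the source and target colors $\chi(v,v)$ and $\chi(w,w)$ of any arc $(v,w)$ are determined by $c$; hence every arc runs from vertices of a fixed color $a$ to vertices of a fixed color $b$. Since every vertex has degree at least $3$, a vertex cannot be both the source and the target of arcs unless $a=b$, which would force a single vertex color. Thus $a \neq b$, every vertex is a pure source or a pure sink, and $\undir{G}$ is bipartite with the two color classes $V_a$ (sources) and $V_b$ (sinks) as its parts. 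In particular $\undir{G}$ is triangle-free, so by Corollary~\ref{cor:planar-degree-bound} one color class consists entirely of degree-$3$ vertices; call it $V_a$ and let $d_b \geq 3$ be the common degree of the vertices in $V_b$.

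Next I would bound $d_b$ and determine the face structure. With $e = 3\lvert V_a\rvert = d_b\lvert V_b\rvert$ edges and all faces of the bipartite graph of size at least $4$, Euler's formula together with $2e \geq 4f$ yields $\lvert V_b\rvert(6-d_b)/3 \geq 4$, so $d_b \in \{3,4,5\}$, with equality in $2e \geq 4f$ exactly when $\undir{G}$ is a quadrangulation. The same counting shows that girth at least $6$ is impossible, so $\undir{G}$ contains a $4$-cycle. From here I would argue, exactly as in the proof of Lemma~\ref{la:edge-transitive-degree-4} (Claim~\ref{cl:face-cycles-3-4} and the surrounding WL-cycle-detection argument via Lemma~\ref{la:wl-detects-cycles}), that every $4$-cycle is facial and, using $3$-connectedness (Theorem~\ref{thm:3-connected-wl}) together with the fact that all arcs share the color $c$, that every edge lies on exactly two facial $4$-cycles. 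Consequently every face is a $4$-cycle and $\undir{G}$ is a quadrangulation in which each $4$-face reads $u\,a\,u'\,a'$ with $u,u' \in V_b$ and $a,a' \in V_a$.

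Now I would define the auxiliary graph $H$ on vertex set $V_b$, joining the two $V_b$-vertices opposite in a common $4$-face (equivalently, joining $u,u'$ whenever $\dist_{\undir{G}}(u,u')=2$). Drawing each $H$-edge inside its $4$-face yields a planar embedding, so $H$ is planar; since each $V_b$-vertex lies on exactly $d_b$ faces and $\undir{G}$ is $3$-connected, $H$ is simple, connected, and $d_b$-regular. It remains to transfer the WL-homogeneity: I would show that the restriction $\chi' \coloneqq \chi|_{V_b \times V_b}$ is $2$-stable on $H$ and assigns a single color to the diagonal and a single color to all edges of $H$. Stability and the diagonal claim are immediate from $V_b$ being a single color class and from the fact that membership $w \in V_b$ and $H$-adjacency are $\chi$-definable, so that the $H$-refinement of a pair is determined by its $\chi$-color. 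Since $\WL{2}{H}$ is the coarsest stable refinement of the initial coloring of $H$, we get $\chi' \preceq \WL{2}{H}$, whence $\WL{2}{H}$ is itself constant on the diagonal and on the edges of $H$. Thus $H$ meets the hypotheses of Lemma~\ref{la:edge-transitive-degree-3}, \ref{la:edge-transitive-degree-4}, or~\ref{la:edge-transitive-degree-5} according to whether $d_b$ is $3$, $4$, or $5$, and we conclude that $H$ is a tetrahedron, an octahedron, or an icosahedron.

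Finally I would reconstruct $\undir{G}$ from $H$. Each degree-$3$ vertex $a \in V_a$ has three $V_b$-neighbors that are pairwise opposite in the faces at $a$, hence form a triangle bounding a face of $H$; this gives a bijection between $V_a$ and the (triangular) faces of $H$ under which $a$ is adjacent to exactly the vertices of its face. Therefore $\undir{G}$ is the radial (vertex--face incidence) quadrangulation of $H$, and the radial graphs of the tetrahedron, octahedron, and icosahedron are precisely the cube, the rhombic dodecahedron, and the rhombic triacontahedron (Figures~\ref{fig:graph-bicol-cube}, \ref{fig:graph-rhombic-dodecahedron}, and~\ref{fig:graph-rhombic-triacontahedron}). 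The main obstacle I anticipate is the face-structure step: proving that $\undir{G}$ is a genuine quadrangulation (rather than, say, having alternating $4$- and $6$-faces) relies on upgrading the single color $c$ and the WL-detection of $4$-cycles into the statement that every edge bounds two facial $4$-cycles, which requires the same careful interplay of planarity, $3$-connectedness, and coherence of the $2$-stable coloring as in the degree-$4$ case, rather than Euler's formula alone. Establishing the homogeneity of the distance-$2$ color on $V_b$ is the other delicate point, and is precisely what makes the reduction to the single-color lemmas go through.
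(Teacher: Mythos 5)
Your overall skeleton matches the paper's: both proofs establish the bipartition into two color classes with one class of degree~$3$, build an auxiliary planar graph $H$ on the other class, feed $H$ into Lemmas~\ref{la:edge-transitive-degree-3}--\ref{la:edge-transitive-degree-5}, and then reconstruct $G$. The genuine difference is in how $H$ is obtained. The paper defines $E(H)$ directly as ``pairs with a common neighbor in the degree-$3$ class,'' so each degree-$3$ vertex contributes a triangle drawn around it and planarity of $H$ is immediate; no statement about the faces of $G$ is ever needed. You instead first prove that $\undir{G}$ is a quadrangulation and define $H$ by joining opposite vertices of $4$-faces. In the quadrangulation setting the two edge sets coincide, but your route pays for an extra structural step that the paper's construction sidesteps entirely.

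That extra step is also where your justification is shakiest. The claim that every $4$-cycle is facial does \emph{not} follow ``exactly as in'' Claim~\ref{cl:face-cycles-3-4}: that argument leans on the degree-$4$ case's triangle structure (each edge in exactly one $3$-cycle, unique common neighbors $v_i'$), none of which exists in the bipartite setting. The claim is nonetheless provable by a direct $3$-connectivity argument: for a non-facial $4$-cycle $a\,u\,a'\,u'$ with $a,a'$ of degree $3$, the unique third neighbors of $a$ and $a'$ either lie in the same region (so the other region's components see only $\{u,u'\}$) or in different regions (so $\{u,u'\}$ separates them); either way $G$ has a $2$-separator. You should supply this rather than the citation. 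A second gap: from Lemmas~\ref{la:edge-transitive-degree-3}--\ref{la:edge-transitive-degree-5} you may only conclude that $H$ is one of the graphs in Figures~\ref{fig:graph-k4}--\ref{fig:graph-icosidodecahedron}; the jump to ``tetrahedron, octahedron, or icosahedron'' silently discards the cube, dodecahedron, cuboctahedron, and icosidodecahedron. The paper excludes the first two because every edge of $H$ lies in a triangle, and the last two because reconstructing $G$ from them would violate the minimum-degree-$3$ hypothesis (each of their vertices lies in only two triangles). Your radial-graph reconstruction contains the ingredients for these exclusions, but they must be carried out before you assert the shape of $H$.
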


\begin{proof}
 From Conditions \ref{item:edge-transitive-two-vertex-colors-1} and \ref{item:edge-transitive-two-vertex-colors-2}, it follows that there is a bipartition $(V,W)$ of $V(G)$ such that $v \in V$ and $w \in W$ holds for each edge $(v,w) \in E(G)$.
 Moreover, $\WL{2}{G}(v_1,v_1) = \WL{2}{G}(v_2,v_2)$ for all $v_1,v_2 \in V$ and $\WL{2}{G}(w_1,w_1) = \WL{2}{G}(w_2,w_2)$ for all $w_1,w_2 \in W$.

Since $G$ is planar and bipartite, there exists a $v \in V(G)$ such that $\deg_G(v) = 3$ by Corollary \ref{cor:planar-degree-bound}.
 Without loss of generality, suppose that $v \in V$.
 This means that $\deg_G(v) = 3$ holds for all $v \in V$.
 
 Define $H$ to be the graph with $V(H) \coloneqq W$ and
 \[E(H) \coloneqq \{w_1w_2 \mid \exists v \in V \colon (v,w_1),(v,w_2) \in E(G)\}.\]
 Since $\deg_G(v) = 3$ holds for all $v \in V$, the graph $H$ is planar (a plane embedding of $H$ can directly be obtained from an embedding of $G$). Moreover,
 \[\{\WL{2}{G}(w_1,w_2),\WL{2}{G}(w_2,w_1)\} = \{\WL{2}{G}(w_3,w_4),\WL{2}{G}(w_4,w_3)\}\]
 for all edges $(w_1,w_2),(w_3,w_4) \in E(H)$.
 Thus, by Lemmas \ref{la:edge-transitive-degree-5}, \ref{la:edge-transitive-degree-3} and \ref{la:edge-transitive-degree-4}, $H$ is one of the edge-transitive graphs depicted in Figure \ref{fig:graph-k4} -- \ref{fig:graph-icosidodecahedron}.
 From the definition of $H$, it follows that for each $w_1w_2 \in E(H)$, there is a $w_3 \in W$ such that $w_1w_3,w_2w_3 \in E(H)$.
 Hence, $H$ is neither isomorphic to a cube nor a dodecahedron.
 
 Now $G$ can be recovered from $H$ by placing in each triangular face of $H$ a vertex of degree~$3$ adjacent to the three vertices of the face.
 Via a case-by-case analysis, it can be checked that $G$ is one of the graphs listed in the statement of the lemma (recall that $G$ is supposed to have minimum degree at least $3$).
\end{proof}

Observe that Lemma \ref{la:edge-transitive-two-vertex-colors} lists the graphs from Part \ref{item:classification-edge-transitive-3}.
Here, it is notable that the cube appears for a second time because it is bipartite and directing all edges from one bipartition class to the other one also leads to an edge-transitive graph (see Figure \ref{fig:graph-bicol-cube}).

Combining all the lemmas proved above now allows us to conclude Theorem \ref{thm:classification-edge-transitive}.

\begin{proof}[Proof of Theorem \ref{thm:classification-edge-transitive}]
 Since $\WL{2}{G}(v_1,w_1) = \WL{2}{G}(v_2,w_2)$ for all $(v_1,w_1),(v_2,w_2) \in E(G)$ (and there are no isolated vertices), it follows that
 \[|\{\WL{2}{G}(v,v) \mid v \in V(G)\}| \leq 2.\]
 First suppose that $|\{\WL{2}{G}(v,v) \mid v \in V(G)\}| = 1$.
 Then $G$ is $d$-regular for some number $d \geq 3$.

 Since $G$ is planar, we know from Corollary \ref{cor:planar-degree-bound} that $d \leq 5$.
 So $G$ is isomorphic to one of the graphs listed in Parts \ref{item:classification-edge-transitive-1} and \ref{item:classification-edge-transitive-2} by Lemmas \ref{la:edge-transitive-degree-5}, \ref{la:edge-transitive-degree-3}, and \ref{la:edge-transitive-degree-4}.
 
 Otherwise, $|\{\WL{2}{G}(v,v) \mid v \in V(G)\}| = 2$.
 Then $G$ is isomorphic to one of the graphs listed in Part \ref{item:classification-edge-transitive-3} by Lemma \ref{la:edge-transitive-two-vertex-colors}.
\end{proof}

In Theorem \ref{thm:classification-edge-transitive}, we restrict ourselves to graphs that are connected and have minimum degree at least $3$.
Both of these restrictions can easily be lifted as follows.
Let us first consider the restriction on the degree and let $G$ be a connected planar graph such that $\WL{2}{G}(v_1,w_1) = \WL{2}{G}(v_2,w_2)$ holds for all $(v_1,w_1),(v_2,w_2) \in E(G)$.
If $G$ has maximum degree~$2$ or contains a vertex of degree at most~$1$, then it is easy to see that $G$ is either a cycle or isomorphic to a star $K_{1,h}$ for some $h \geq 0$ ($h = 0$ covers the special case that $G$ consists of a single vertex).

\begin{definition}
 Let $H$ be a graph and $s \geq 1$.
 The \emph{$s$-subdivision of $H$} is the graph $H^{(s)}$ obtained from $H$ by replacing each edge with $s$ parallel paths of length $2$.
 Formally, $H^{(s)}$ is the graph with vertex set $V(H^{(s)}) \coloneqq V(H) \uplus (E(H) \times [s])$ and edge set
 \[E(H^{(s)}) \coloneqq \Big\{v(e,i) \;\Big\vert\; e \in E(H), v \in e, i \in [s]\Big\}.\]
\end{definition}

In the remaining case, $G$ has maximum degree at least $3$ and minimum degree at least $2$.
Then it is easy to see that $G$ is one of the graphs from Theorem \ref{thm:classification-edge-transitive}, or an $s$-subdivision of
\begin{itemize}
 \item one of the graphs listed in Parts \ref{item:classification-edge-transitive-1} and \ref{item:classification-edge-transitive-2},
 \item a cycle $C_\ell$ for some $\ell \geq 3$, or
 \item the complete graph on two vertices $K_2$,
\end{itemize}
for some $s \geq 1$.

Finally, if $G$ is not connected, then it is isomorphic to the disjoint union of $\ell$ copies of one of its connected components for some $\ell \geq 2$, because all graphs listed above can be distinguished from each other by $2$-WL.
Actually, it can be checked that all of the graphs are even identified by $2$-WL.
Overall, this gives the following corollary.

\begin{corollary}
 \label{cor:edge-orbits-for-edge-transitive}
 Let $G$ be a directed planar graph such that $\{\!\{\WL{2}{G}(v,w),\WL{2}{G}(w,v)\}\!\} = \{\!\{\WL{2}{G}(v',w'),\WL{2}{G}(w',v')\}\!\}$ holds for all $(v,w),(v',w') \in E(G)$.
 Then $2$-WL determines arc orbits on $G$.
 In particular, $G$ is edge-transitive.
\end{corollary}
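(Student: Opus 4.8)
The plan is to derive the corollary from the classification in Theorem \ref{thm:classification-edge-transitive} by first removing the connectivity and minimum-degree hypotheses, and then strengthening the resulting list of possible graphs into a statement about orbits. Concretely, I claim it suffices to verify, for each graph $G_0$ that can arise, the two properties that (a) $2$-WL identifies $G_0$, and (b) two arcs of $G_0$ receive the same $2$-WL color if and only if they lie in the same $\Aut(G_0)$-orbit. Given (a) and (b), the full arc-orbit determination of Definition \ref{def:determines-orbits} follows from a short pebble-game argument, and edge-transitivity is then the special case $H=G$.

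First I would carry out the classification. Since the $2$-WL color of a pair $(v,w)$ already determines the colors $\WL{2}{G}(v,v)$ and $\WL{2}{G}(w,w)$ of its endpoints, the hypothesis that the multiset $\{\!\{\WL{2}{G}(v,w),\WL{2}{G}(w,v)\}\!\}$ is constant over all edges forces $|\{\WL{2}{G}(v,v)\mid v\in V(G)\}|\le 2$ on every connected component. Writing the constant multiset as $\{\!\{a,b\}\!\}$, I split into two cases. If $a=b$, then every arc incident to an edge carries the same color and the component satisfies the hypotheses of Theorem \ref{thm:classification-edge-transitive} (after its relaxations). If $a\ne b$, then, since the $2$-WL color of $(w,v)$ is always a function of the color of $(v,w)$ (the transpose symmetry of $2$-WL), the colors $a$ and $b$ are mutually transpose; hence exactly one of the two arcs of each edge carries color $a$, and these arcs form a consistent orientation to which the directed version of the classification (Figure \ref{fig:as}) applies. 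Invoking the degree relaxation discussed above, each component is then one of the edge-transitive graphs of Theorem \ref{thm:classification-edge-transitive}, a cycle, a star $K_{1,h}$, or an $s$-subdivision of one of the listed graphs, of a cycle, or of $K_2$. Finally, since all of these graphs are pairwise distinguished by $2$-WL, the global constancy of the edge multiset forces all components to be isomorphic, so $G$ is a disjoint union of $\ell\ge 1$ copies of a single such graph.

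For the bridge from (a) and (b) to arc-orbit determination, suppose $(v_1,v_2)\in A(G)$ and $(w_1,w_2)\in A(H)$ satisfy $\WL{2}{G}(v_1,v_2)=\WL{2}{H}(w_1,w_2)$. By Theorem \ref{thm:eq-wl-pebble-tuples}, Duplicator wins $\BP^{3}(G,H)$ from the initial position $((v_1,v_2),(w_1,w_2))$; since Spoiler may simply remove both pebbles, Duplicator also wins from the empty position, so $G\simeq_2 H$ by Corollary \ref{cor:eq-wl-pebble}, and property (a) yields an isomorphism $\psi\colon G\cong H$. The arc $(\psi^{-1}(w_1),\psi^{-1}(w_2))$ of $G$ has the same $2$-WL color as $(v_1,v_2)$, so by (b) there is $\alpha\in\Aut(G)$ with $\alpha(v_i)=\psi^{-1}(w_i)$; then $\varphi\coloneqq\psi\circ\alpha$ is an isomorphism $G\cong H$ with $\varphi(v_i)=w_i$, as required. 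Taking $H=G$ and using the hypothesis that all edges share the same multiset color, property (b) immediately gives that any two edges lie in a common orbit, i.e., $G$ is edge-transitive.

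The main obstacle I anticipate is the verification of property (b) — that the $2$-WL arc-color classes coincide exactly with the $\Aut$-orbits — across the whole catalogue. The inclusion that equal orbit implies equal color is automatic, but the converse must be checked uniformly for the Platonic and Archimedean graphs, their directed variants from Figure \ref{fig:as}, the cycles and stars, the $s$-subdivisions, and finally the disjoint unions. For the subdivisions one reduces to the base graph, observing that $2$-WL recovers the subdivision vertices and the grouping of parallel paths; for disjoint unions of $\ell$ isomorphic copies the symmetric group acting on the copies enlarges each orbit to match the (copy-independent) $2$-WL colors, provided the union is still identified by $2$-WL. Property (a) itself is less delicate and is already asserted for these graphs in the discussion preceding the corollary.
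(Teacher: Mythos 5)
Your proposal follows essentially the same route as the paper: lift the connectivity and minimum-degree restrictions of Theorem \ref{thm:classification-edge-transitive} to obtain the catalogue of possible graphs (splitting on whether the two arc colors of an edge coincide), observe that a disconnected $G$ must be a disjoint union of isomorphic copies because the listed graphs are pairwise distinguished by $2$-WL, and then reduce the corollary to a per-graph verification. Your explicit decomposition into ``identified by $2$-WL'' plus ``arc color classes coincide with $\Aut$-orbits'', glued together by the pebble-game argument, is a correct formalization of the step the paper compresses into ``it can be checked that all of the graphs are even identified by $2$-WL'', and, like the paper, you defer that finite case-by-case check rather than carrying it out.
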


\section{Graphs Induced by a Single Edge Color}
\label{sec:types}

After considering planar graphs with a single edge color with respect to $2$-WL, we now wish to analyze the $2$-WL coloring of arbitrary planar graphs.
Since, by \cite{KieferN22}, the algorithm $2$-WL implicitly computes the decomposition of a graph into $3$-connected components\footnote{For the formal and quite technical definition of this notion, we refer to \cite{KieferN22}.}, understanding $2$-WL on planar graphs essentially amounts to a study of $3$-connected planar graphs.
Hence, we restrict our attention to those graphs.

\subsection{Notation and Tools}

We start by giving further notation as well as providing basic tools for analyzing the $2$-WL colorings.

Let $G$ be a graph and let $\chi \coloneqq \WL{2}{G}$ denote the coloring computed by $2$-WL on $G$. We define $C_V = C_V(G,\chi) \coloneqq \{\chi(v,v) \mid v \in V(G)\}$ to be the set of \emph{vertex colors}.
Similarly, we define $C_E = C_E(G,\chi) \coloneqq \{\chi(v,w) \mid vw \in E(G)\}$ to be the set of \emph{edge colors}.
For a vertex $v \in V(G)$, we denote by $[v]_\chi$ the color class of $v$, i.e., $[v]_\chi \coloneqq \{w \in V(G) \mid \chi(v,v) = \chi(w,w)\}$.
Also, the same notation is used for any vertex coloring $\lambda \colon V(G) \rightarrow C$, i.e., $[v]_\lambda \coloneqq \{w \in V(G) \mid \lambda(v) = \lambda(w)\}$.
For a vertex color $d \in C_V$, we define $V_d \coloneqq \{v \in V(G) \mid \chi(v,v) = d\}$.
We say a set $W \subseteq V(G)$ is \emph{$\chi$-invariant} if there is a set of vertex colors $D \subseteq C_V$ such that $W = \bigcup_{d \in D}V_d$.
For an edge color $c \in C_E(G,\chi)$ and $v \in V(G)$, we define $N_c^+(v) \coloneqq \{w \in V(G) \mid \chi(v,w) = c\} \subseteq N(v)$ and $N_c^-(v) \coloneqq \{w \in V(G) \mid \chi(w,v) = c\} \subseteq N(v)$.
Also, we define $N_c(v) \coloneqq N_c^+(v) \cup N_c^-(v)$.

Next, let $C \subseteq C_E$ be a set of edge colors.
We define the graph $G[C]$ with 
\[ E(G[C]) \coloneqq \{v_1v_2 \mid v_1, v_2 \in V(G), \WL{2}{G}(v_1,v_2) \in C\} \qquad \text{and} \qquad V(G[C]) \coloneqq \bigcup\limits_{e \in E(G[C])} e \ .\]
In case $C = \{c_1,\dots,c_\ell\}$, we also write $G[c_1,\dots,c_\ell]$ instead of $G[\{c_1,\dots,c_\ell\}]$.
Observe that $G[C]$ is defined as an undirected graph.
However, it may be that $\chi(v_1,v_2) \neq \chi(v_2,v_1)$ for some edges $v_1v_2 \in E(G[C])$.
Since this information turns out to be relevant in some cases, we shall always assume that $G[C]$ is equipped with an arc coloring where colors are inherited from $\chi$.
Slightly abusing notation, we will mostly ignore the additional information provided by the arc coloring, but in some cases rely on it to facilitate the analysis of certain graphs~$G[C]$.

As indicated, we are particularly interested in the case $C = \{c\}$ for a single color~$c$.
Observe that the ends of $c$-colored edges have the same vertex color, i.e., if $\chi(v_1,w_1) = \chi(v_2,w_2) = c$, then $\chi(v_1,v_1) = \chi(v_2,v_2)$ and $\chi(w_1,w_1) = \chi(w_2,w_2)$.
This implies that $1 \leq |C_V(G[c],\chi)| \leq 2$.
We say that $G[c]$ is \emph{unicolored} if $|C_V(G[c],\chi)| = 1$.
Otherwise, we say that $G[c]$ is \emph{bicolored}.

Let $A_1,\dots,A_\ell$ be the vertex sets of the connected components of $G[C]$.
We also define the graph $G/C$ as the one obtained from contracting every set $A_i$ to a single vertex.
Formally, $V(G/C) \coloneqq \{\{v\} \mid v \in V(G) \setminus V(G[C])\} \cup \{A_1,\dots,A_\ell\}$ and $E(G/C) \coloneqq \{X_1X_2 \mid \exists v_1 \in X_1,v_2 \in X_2\colon v_1v_2 \in E(G)\}$. If $C$ consists of a single edge color $\{c\}$, we write $G/c$ instead of $G/C$.

\begin{lemma}[see {\cite[Theorem 3.1.11]{ChenP19}}]
 \label{la:factor-graph-2-wl}
 Let $G$ be a graph and $C$ be a set of edge colors of $\chi \coloneqq \WL{2}{G}$.
 Define $(\chi/C)(X_1,X_2) \coloneqq \{\!\{\chi(v_1,v_2) \mid v_1 \in X_1, v_2 \in X_2\}\!\}$ for all $X_1,X_2 \in V(G/C)$.
 Then $\chi/C$ is a $2$-stable coloring of the graph $G/C$ with respect to $2$-WL.
 
 Moreover, for all $X_1,X_2,X_1',X_2' \in V(G/C)$, it holds that either $(\chi/C)(X_1,X_2) = (\chi/C)(X_1',X_2')$ or $(\chi/C)(X_1,X_2) \cap (\chi/C)(X_1',X_2') = \emptyset$.
\end{lemma}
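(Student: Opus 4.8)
The plan is to exploit that $\chi \coloneqq \WL{2}{G}$ is a genuinely coherent coloring, so that every quantity I need can be read off from its intersection numbers. Because $\chi$ is $2$-stable, for every color $c$ and all colors $c_1,c_2$ the number $|\{w \mid \chi(u,w)=c_1,\ \chi(w,v)=c_2\}|$ depends only on $c=\chi(u,v)$ and not on the specific pair $(u,v)$; similarly, $\chi(u,v)$ determines the transpose color $\chi(v,u)$, and the out-degree $|\{w \mid \chi(u,w)=c_1\}|$ depends only on the vertex color $\chi(u,u)$. The first thing I would establish is that the contraction itself is \emph{color-definable}: the relation ``$u,v$ lie in the same part of $V(G/C)$'' is determined by $\chi(u,v)$, i.e.\ there is a color set $D$ with $u \sim v \iff \chi(u,v)\in D$. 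Let $C'$ consist of $C$ together with all transpose colors, so that $u,v$ are adjacent in $G[C]$ iff $\chi(u,v)\in C'$. I would then show by induction on $k$ that whether $u,v$ are joined in $G[C]$ by a walk of length at most $k$ depends only on $\chi(u,v)$: the base case $k=0$ uses that $\chi$ separates diagonal from off-diagonal pairs, and the inductive step writes ``there is a $C'$-neighbor $w$ of $u$ reachable from $v$ in at most $k-1$ steps'' as $\sum_{c_1\in C',\, c_2\in D_{k-1}} |\{w \mid \chi(u,w)=c_1,\ \chi(w,v)=c_2\}|$, which by coherence is a function of $\chi(u,v)$ alone. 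As $G$ is finite, $\sim$ is reachability for $k=|V(G)|$, so $D \coloneqq D_{|V(G)|}$ works.

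Using this I would write every part as $X=\{u \mid \chi(v,u)\in D\}$ for an arbitrary $v\in X$, and prove that the factor color of a pair of distinct parts is determined by the color of a single crossing pair. Concretely, for fixed $v_1\in X_1$, $v_2\in X_2$ and any target color $c'$, the multiplicity of $c'$ in $(\chi/C)(X_1,X_2)=\{\!\!\{\chi(u_1,u_2)\mid u_1\in X_1,\ u_2\in X_2\}\!\!\}$ equals $|\{(u_1,u_2)\mid \chi(v_1,u_1)\in D,\ \chi(v_2,u_2)\in D,\ \chi(u_1,u_2)=c'\}|$; summing first over $u_1$ and then over $u_2$ and applying the intersection-number property twice (once along the triangle $v_1,u_1,u_2$, once along $v_1,v_2,u_2$) expresses this count as a function $F(\chi(v_1,v_2),c')$ of the single crossing color $\chi(v_1,v_2)$. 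Hence $(\chi/C)(X_1,X_2)$ is determined by any one crossing color; I would record this as a global map $\Phi$ sending a $\chi$-color to the corresponding factor color. The ``Moreover'' dichotomy then drops out: if $(\chi/C)(X_1,X_2)$ and $(\chi/C)(X_1',X_2')$ share a color $c$, both contain a crossing pair of color $c$, so both equal the multiset $\Phi(c)$ and are therefore identical.

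For $2$-stability I would take pairs $(X_1,X_2)$ and $(Y_1,Y_2)$ with $(\chi/C)(X_1,X_2)=(\chi/C)(Y_1,Y_2)$; since the multisets coincide they share a crossing color, so I can choose $v_1\in X_1$, $v_2\in X_2$, $w_1\in Y_1$, $w_2\in Y_2$ with $\chi(v_1,v_2)=\chi(w_1,w_2)=:c$. For factor colors $\alpha,\beta$ set $S_\alpha \coloneqq \Phi^{-1}(\alpha)$; then $(\chi/C)(X_1,Z)=\alpha$ iff $\chi(v_1,z)\in S_\alpha$ for $z\in Z$, and $(\chi/C)(Z,X_2)=\beta$ iff $\chi(z,v_2)\in S_\beta$. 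Crucially, all parts $Z$ with $(\chi/C)(X_1,Z)=\alpha$ share a common size $s_\alpha=|\alpha|/|X_1|$ (the multiset has $|X_1|\cdot|Z|$ entries), which is what lets me pass from counting vertices to counting parts: the multiplicity of $(\alpha,\beta)$ in the refinement multiset of $(X_1,X_2)$ equals $\tfrac{1}{s_\alpha}\,|\{z \mid \chi(v_1,z)\in S_\alpha,\ \chi(z,v_2)\in S_\beta\}|$, and by coherence this vertex count is a function of $c$ alone. Running the identical computation for $(Y_1,Y_2)$ with the same $c$ yields equal multiplicities, so $\chi/C$ is a fixed point of one $2$-WL round on $G/C$. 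Finally I would check $\chi/C \preceq \WL{2}{G/C}$ by noting that $\chi/C$ refines the initial coloring of $G/C$ (diagonal and off-diagonal pairs have disjoint color supports, and $X_1X_2\in E(G/C)$ iff the multiset contains an edge color of $\chi$); since the canonical stable coloring is the coarsest stable refinement of the initial coloring, any stable refinement of it is refined by nothing coarser, giving $\chi/C \preceq \WL{2}{G/C}$.

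The hard part will be the color-definability of the contraction in the first step: connectivity inside $G[C]$ is a priori a non-local, transitive property, and the entire quotient construction only makes sense once it is known to be recoverable from $\chi(u,v)$. The inductive intersection-number argument, bounding walk length by $|V(G)|$, is exactly what makes this local. Everything afterwards is bookkeeping with intersection numbers; the one point I would take care with is the part-size uniformity $s_\alpha=|\alpha|/|X_1|$, since without it the translation from vertex counts to part counts in the stability argument breaks down.
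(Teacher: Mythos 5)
Your proposal is correct. The paper does not prove this lemma itself but cites it from the theory of coherent configurations (\cite[Theorem~3.1.11]{ChenP19}), and your argument is a faithful self-contained reconstruction of that standard proof: the equivalence classes of $V(G/C)$ form a $\chi$-definable (``closed'') relation by the walk-counting induction, the factor color of a pair of parts is determined by the color of any single crossing pair via iterated intersection numbers, and stability plus the initial-coloring check then follow. The one point worth making explicit in the final step is that when you compare the refinement multisets of $(X_1,X_2)$ and $(Y_1,Y_2)$ you divide the vertex counts by $s_\alpha = |\alpha|/|X_1|$ on one side and by $|\alpha|/|Y_1|$ on the other; these agree because the shared crossing color $c$ determines the vertex colors $\chi(v_1,v_1)=\chi(w_1,w_1)$, and a vertex color determines the size of its part (as $|\{u \mid \chi(v_1,u)\in D\}|$ is a sum of out-degrees), so $|X_1|=|Y_1|$. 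With that sentence added, the argument is complete.
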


Next, let $D \subseteq C_V(G,\chi)$ be a set of vertex colors.
A path $v_0,\dots,v_m$ \emph{avoids $D$} if $\chi(v_i,v_i) \notin D$ for all $i \in [m-1]$.
Observe that the endpoints of the path may have a vertex color contained in the set $D$.

\begin{observation}
 \label{obs:wl-knows-paths-avoiding-colors}
 Let $G$ be a graph and let $\chi$ be $2$-stable with respect to $G$.
 Also let $v_1,w_1,v_2,w_2 \in V(G)$ such that $\chi(v_1,w_1) = \chi(v_2,w_2)$, and let $D \subseteq C_V(G,\chi)$.
 Moreover, suppose there is a path from $v_1$ to $w_1$ that avoids $D$.
 Then there is a path from $v_2$ to $w_2$ that avoids $D$.
\end{observation}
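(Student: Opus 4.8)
The plan is to prove the stronger, self-contained statement that for every length $m \geq 0$, whether there is a $D$-avoiding \emph{walk} of length $m$ from $x$ to $y$ is completely determined by the color $\chi(x,y)$; the observation then follows by specializing to $(x,y) = (v_1,w_1)$ and $(x,y) = (v_2,w_2)$. First I would reduce from paths to walks: a $D$-avoiding path is in particular a $D$-avoiding walk, and conversely, shortcutting a $D$-avoiding walk (repeatedly deleting the segment between two occurrences of the same vertex) yields a $D$-avoiding path. Indeed, the endpoints are preserved, so the internal vertices of the resulting path form a subset of the internal vertices of the original walk and hence still avoid $D$. Thus it suffices to argue about walks.

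The core is an induction on the length $m$. For $m = 1$, a $D$-avoiding walk of length $1$ from $x$ to $y$ exists if and only if $xy \in E(G)$, which is determined by $\chi(x,y)$ since $\chi \preceq \WL{2}{G}$ and the latter refines the edge relation. For the inductive step with $m \geq 2$, I would observe that such a walk exists if and only if there is an intermediate vertex $z$ with $xz \in E(G)$, with $\chi(z,z) \notin D$, and with a $D$-avoiding walk of length $m-1$ from $z$ to $y$. By the induction hypothesis the last condition depends only on $\chi(z,y)$; the first two depend only on $\chi(x,z)$, using that $\chi$ refines adjacency and that the pair color $\chi(x,z)$ determines the endpoint color $\chi(z,z)$ (a standard consequence of $2$-stability, since diagonal colors are distinguishable from off-diagonal ones). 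Hence the entire condition on $z$ is a predicate on the pair of colors $(\chi(x,z),\chi(z,y))$; write $\mathcal{S}$ for the set of color pairs satisfying it.

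It remains to see that the existence of a $z$ with $(\chi(x,z),\chi(z,y)) \in \mathcal{S}$ is determined by $\chi(x,y)$. This is exactly where $2$-stability enters: if $\chi(x,y) = \chi(x',y')$, then
\[\big\{\!\!\big\{(\chi(z,y),\chi(x,z)) \;\big\vert\; z \in V(G)\big\}\!\!\big\} = \big\{\!\!\big\{(\chi(z,y'),\chi(x',z)) \;\big\vert\; z \in V(G)\big\}\!\!\big\},\]
so for every color pair the number of witnesses $z$ agrees for $(x,y)$ and $(x',y')$; in particular, the existence of a witness lying in $\mathcal{S}$ coincides. This closes the induction.

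Finally, I would derive the observation. If $v_1 = w_1$, then $\chi(v_1,w_1)$ is a diagonal color, and $\chi(v_2,w_2)$ being equal to it forces $v_2 = w_2$, so the trivial length-$0$ path works. Otherwise a $D$-avoiding path from $v_1$ to $w_1$ has some length $m \geq 1$; since $\chi(v_1,w_1) = \chi(v_2,w_2)$, the claim yields a $D$-avoiding walk of length $m$ from $v_2$ to $w_2$, which shortcuts to the desired $D$-avoiding path. The only real subtleties are the walk-to-path conversion (making sure $D$-avoidance is preserved under shortcutting) and the clean use of the intersection-number property of $2$-stable colorings; everything else is bookkeeping. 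An alternative route would replay the same induction through the bijective pebble game, leapfrogging a single pebble along the path, but the coherence argument seems more direct.
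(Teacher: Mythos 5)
Your proof is correct. The paper states this as an observation without supplying any argument, and what you give is precisely the standard argument one would write out: induct on walk length using the fact that for a $2$-stable coloring the multiset $\{\!\{(\chi(z,y),\chi(x,z)) \mid z \in V(G)\}\!\}$ is determined by $\chi(x,y)$, together with the facts that $\chi(x,z)$ determines adjacency and the endpoint color $\chi(z,z)$. You also handle the two points where a sloppy write-up could go wrong -- that shortcutting a walk keeps all surviving internal vertices at formerly internal positions (so $D$-avoidance is preserved even though the endpoints themselves may have colors in $D$), and the degenerate diagonal case $v_1 = w_1$ -- so there is nothing to add.
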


We say a set of edge colors $C \subseteq C_E(G,\chi)$ is \emph{locally determined} if $C \subseteq C_E(G[A],\chi)$ for every vertex set $A$ of a connected component of $G[C]$, i.e., all edge colors from $C$ appear in every connected component of $G[C]$.
Observe that all singleton sets are locally determined, i.e., if $|C| = 1$, then $C$ is locally determined.

\begin{lemma}
 \label{la:path-between-color-components}
 Let $G$ be a connected graph and let $\chi$ be $2$-stable with respect to $G$.
 Also, let $C \subseteq C_E(G,\chi)$ be locally determined and suppose $A_1$, $A_2$, $A_3$ are the vertex sets of distinct connected components of $G[C]$.
 Then there is a path $u_0,\dots,u_m$ in $G$ such that $u_0 \in A_1$, $u_m \in A_3$, and $u_i \notin A_2$ for all $i \in [0,m]$.
\end{lemma}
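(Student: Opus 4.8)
The plan is to reduce the statement to a connectivity claim about the factor graph $G/C$ and then rule out that a single contracted component acts as a cut vertex between two others.

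First I would pass to the factor graph. By Lemma~\ref{la:factor-graph-2-wl}, the coloring $\hat\chi \coloneqq \chi/C$ is $2$-stable on $\hat G \coloneqq G/C$, and the three components $A_1,A_2,A_3$ are three distinct vertices of $\hat G$. Since $G$ is connected, so is $\hat G$. Now observe that a path in $\hat G$ from the vertex $A_1$ to the vertex $A_3$ that avoids the vertex $A_2$ expands (by replacing each traversed contracted vertex $A_j$ with $j\neq 2$ by a connecting subpath through the corresponding component, all of which are disjoint from $A_2$) into a walk in $G$ from $A_1$ to $A_3$ meeting no vertex of $A_2$; extracting a subpath yields exactly the path required by the lemma. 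Hence it suffices to show that $A_2$ is not a cut vertex of $\hat G$ separating $A_1$ from $A_3$, i.e.\ that $A_1$ and $A_3$ lie in the same connected component of $\hat G - A_2$.

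Second, I would record that the three component-vertices receive the same color under $\hat\chi$, that is, $\hat\chi(A_1,A_1)=\hat\chi(A_2,A_2)=\hat\chi(A_3,A_3)$. This is where local determinedness enters: because $\chi$ is $2$-stable and every color of $C$ occurs in every component of $G[C]$, the components cannot be told apart by $\chi$. Indeed, if two components differed, the refinement underlying stability would assign different colors to the endpoints of some $c$-arc ($c\in C$) in the two components, thereby splitting $c$; the resulting colors would then fail to occur in every component, contradicting that $C$ is locally determined. Consequently all components of $G[C]$ are indistinguishable by $2$-WL, and in particular the multisets $\{\!\{\chi(u,v)\mid u,v\in A_i\}\!\}=\hat\chi(A_i,A_i)$ coincide for $i\in\{1,2,3\}$. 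Write $\kappa$ for this common color; note $A_1,A_3$ therefore also lie in the color class $V_\kappa$, so one genuinely has to avoid the single vertex $A_2$ rather than its whole color class.

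Finally, the crux is to derive a contradiction from the assumption that $A_2$ separates $A_1$ from $A_3$, given that $A_1,A_2,A_3$ all have color $\kappa$. My approach is a symmetry/transfer argument via the bijective pebble game (Theorem~\ref{thm:eq-wl-pebble-tuples}): since $A_1,A_2,A_3$ are pairwise indistinguishable by $\hat\chi$, any two of them may play identical roles, so the separation property transfers along the color class $V_\kappa$. Concretely, one shows that whether a $\kappa$-vertex separates two given $\kappa$-vertices is invariant under this equivalence, and then feeds this back into the finite block--cut structure of $\hat G$: transferring the configuration $A_2\mid A_1,A_3$ repeatedly produces an unbounded chain of distinct $\kappa$-vertices, each separated from the previous by a further separator, which is impossible in a finite graph. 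I expect this last step to be the main obstacle, since making the transfer of a \emph{single-vertex} separation precise is exactly the delicate point --- Observation~\ref{obs:wl-knows-paths-avoiding-colors} only transfers avoidance of a whole vertex-color class, which is too strong here. An alternative finish would instead invoke the separator-detection machinery behind Theorem~\ref{thm:3-connected-wl} to argue directly that a $2$-stable coloring cannot place a cut vertex between two vertices of its own color.
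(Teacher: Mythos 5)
Your setup is sound as far as it goes: reducing the lemma to the question of whether the contracted vertex $A_2$ can separate $A_1$ from $A_3$, and using local determinedness together with Lemma \ref{la:factor-graph-2-wl} to argue that the components are indistinguishable, are both ingredients of the paper's argument. But the decisive step --- actually excluding that $A_2$ separates $A_1$ from $A_3$ --- is exactly where your proposal stops, and the sketch you offer does not close it. The ``transfer plus infinite descent'' argument would require showing that separation of two $\kappa$-vertices by a single $\kappa$-vertex is a $2$-WL-invariant configuration, and, as you note yourself, Observation \ref{obs:wl-knows-paths-avoiding-colors} only transfers avoidance of an entire vertex color class, not of one chosen vertex. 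Your fallback of invoking the machinery behind Theorem \ref{thm:3-connected-wl} also does not apply in the graph $G/C$: that theorem (and the cited \cite[Theorem 3.15]{KieferN22}) concerns graphs in which \emph{all} vertices share one $2$-WL color, whereas $G/C$ retains every vertex of $G$ outside $V(G[C])$, with arbitrary colors, so you would need a strictly stronger (and unproved) statement that no cut vertex of a $2$-stably colored graph separates two vertices of its own color. This is a genuine gap at the crux of the proof.

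The paper closes it by building a different auxiliary graph. Let $D$ be the set of vertex colors of endpoints of $C$-colored edges, so that $\bigcup_i A_i = \bigcup_{d \in D} V_d$. Define $H$ on vertex set $\{A_1,\dots,A_\ell\}$ \emph{only}, joining $A_i$ to $A_j$ whenever some vertex of $A_i$ is connected to some vertex of $A_j$ by a path avoiding $D$. Two things then come for free. First, $H$ is connected and all of its vertices receive the same color under $\WL{2}{H}$ (by Lemma \ref{la:factor-graph-2-wl}, local determinedness, and Observation \ref{obs:wl-knows-paths-avoiding-colors}), so \cite[Theorem 3.15]{KieferN22} applies verbatim to give that $H$ is $2$-connected --- no bespoke descent argument is needed. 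Second, every edge of $H$ lifts to a path in $G$ whose internal vertices avoid all of $\bigcup_{d\in D} V_d \supseteq A_2$, so a path from $A_1$ to $A_3$ in $H - \{A_2\}$ lifts to a walk in $G$ avoiding $A_2$, from which the desired path is extracted by deleting cycles. The idea you are missing is to discard the non-component vertices of $G$ entirely and encode reachability through them as adjacency, precisely so that the monochromatic $2$-connectivity theorem becomes applicable.
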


\begin{proof}
 Let $D \coloneqq \{\chi(v,v),\chi(w,w) \mid v,w \in V(G), \chi(v,w) \in C\}$ be the set of vertex colors that appear on endpoints of $C$-colored edges.
 Also, suppose that $A_1,\dots,A_\ell$ are the vertex sets of all connected components of $G[C]$.
 Consider the graph $H$ with vertex set $V(H) \coloneqq \{A_1,\dots,A_\ell\}$ and edge set
 $E(H) \coloneqq \{A_iA_j \mid \exists v \in A_i, w \in A_j\colon \text{there is a path from $v$ to $w$ that avoids } D\}$.
 Using Lemma \ref{la:factor-graph-2-wl} and the fact that $C$ is locally determined as well as Observation \ref{obs:wl-knows-paths-avoiding-colors}, we get that $\WL{2}{H}(A_i,A_i) = \WL{2}{H}(A_j,A_j)$ for all $i,j \in [\ell]$.
 Moreover, $H$ is connected because $G$ is connected.
 Together with \cite[Theorem 3.15]{KieferN22}, this means that $H$ is $2$-connected.
 So $A_3$ is reachable from $A_1$ in the graph $H - \{A_2\}$.
 This implies the statement of the lemma since every path in $H$ that does not visit $A_2$ can be lifted to a walk in $G$ that does not visit a vertex from $A_2$.
 Omitting all cycles from the walk, we obtain the desired path.
\end{proof}

\subsection{Edge Types}
\label{sec:types-defs}

To analyze $2$-WL on $3$-connected planar graphs, we consider the graphs $G[c]$ for suitable edge colors $c \in C_E$.
Towards this end, it turns out to be useful to group the graphs $G[c]$ according to the number of faces of each connected component of $G[c]$.
Note that $2$-WL detects connected components of graphs.
More precisely, it holds that
\begin{itemize}
 \item all connected components in $G[c]$ have the same size because 2-WL detects, for every $w \in V(G[c])$, the set of vertices reachable in the arc-colored graph $(G,\chi)$ from $w$ via edges $uv$ of color $c$ (i.e., $\chi(u,v) = c$ or $\chi(v,u) = c$), and
 \item all connected components in $G[c]$ have the same multiset of vertex degrees in $G[c]$ since otherwise the vertex colors and, thus, also the arc colors would be different.
\end{itemize}
In combination, all connected components of $G[c]$ have the same number of vertices and edges and hence, by Euler's formula, they also have the same number of faces. 
We distinguish between three types in $C_E$.

\subparagraph{Type I.}
For the first category, we consider those graphs $G[c]$ that have only one face.
To be more precise, we say that $c \in C_E$ has \emph{Type I} if $(G[c])[A]$ has a single face for every vertex set $A$ of a connected component of $G[c]$. It is not difficult to see that $G[c]$ is isomorphic to a disjoint union of stars $K_{1,h}$ for some $h \in [n]$.

\subparagraph{Type II.}
For the second category, we consider those graphs $G[c]$ where every connected component has exactly two faces.
Formally, we say that $c \in C_E$ has \emph{Type II} if $(G[c])[A]$ has exactly two faces for every vertex set $A$ of a connected component of $G[c]$.
In this case, $G[c]$ is a disjoint union of cycles of the same length.
Also, it is not difficult to see that every connected component of $G[c]$ is either a directed cycle (i.e., $\chi(v_1,v_2) \neq \chi(v_2,v_1)$ holds for every edge $v_1v_2 \in E(G[c])$), or an undirected cycle in which all vertices have the same color with respect to $2$-WL, or an undirected cycle with two vertex colors that alternate along the cycle.

\subparagraph{Type III.}
Finally, for the last category, we consider those graphs $G[c]$ where each connected component has at least three faces.
Again, to be precise, we say that $c \in C_E$ has \emph{Type III} if $(G[c])[A]$ has at least three faces for every vertex set $A$ of a connected component of $G[c]$.

\medskip

Also, we define the \emph{type} of an edge $v_1v_2 \in E(G)$ as the type of its color $\chi(v_1,v_2)$ (note that the type of $\chi(v_1,v_2)$ is equal to the type of $\chi(v_2,v_1)$).

In the following, we derive several properties of the graphs $G[c]$ depending on the type of~$c$, as well as properties of $G$ depending on which types of edge colors occur.
Towards this end, we also define the \emph{type of $G$} as the maximal type of any edge color $c \in C_E$.
So we say that $G$ has Type III if there is some $c \in C_E$ of Type III.
The graph $G$ has Type II if there is some $c \in C_E$ of Type II but there is no $c' \in C_E$ of Type III.
Lastly, $G$ has Type I if every $c \in C_E$ has Type I.
Two example graphs are displayed in Figure \ref{fig:type-examples}.

\subsection{Graphs of Fixing Number One}

We start by investigating $3$-connected planar graphs of Type I, see Figure \ref{fig:type-i-example} for an example.
It turns out that such graphs have fixing number $1$ with respect to $1$-WL (after coloring all edges with their $2$-WL colors), which in particular implies that $2$-WL identifies all graphs of Type I.
The proof is based on the following result.

\begin{figure}
 \centering
 \begin{subfigure}[b]{.48\linewidth}
  \centering
  \scalebox{\figscalesmall}{
  \begin{tikzpicture}
   \node[vertex,red!80] (1) at (0,0) {};
   
   \node[vertex,blue!80] (2) at ($(0,0)+(00:1.0)$) {};
   \node[vertex,blue!80] (3) at ($(0,0)+(60:1.0)$) {};
   \node[vertex,blue!80] (4) at ($(0,0)+(120:1.0)$) {};
   \node[vertex,blue!80] (5) at ($(0,0)+(180:1.0)$) {};
   \node[vertex,blue!80] (6) at ($(0,0)+(240:1.0)$) {};
   \node[vertex,blue!80] (7) at ($(0,0)+(300:1.0)$) {};
   
   \node[vertex,blue!80] (8) at ($(0,0)+(00:1.0)+(330:0.8)$) {};
   \node[vertex,blue!80] (9) at ($(0,0)+(60:1.0)+(90:0.8)$) {};
   \node[vertex,blue!80] (10) at ($(0,0)+(120:1.0)+(90:0.8)$) {};
   \node[vertex,blue!80] (11) at ($(0,0)+(180:1.0)+(210:0.8)$) {};
   \node[vertex,blue!80] (12) at ($(0,0)+(240:1.0)+(210:0.8)$) {};
   \node[vertex,blue!80] (13) at ($(0,0)+(300:1.0)+(330:0.8)$) {};
   
   \node[vertex,red!80] (14) at ($(0,0)+(90:3.0)$) {};
   \node[vertex,blue!80] (15) at ($(0,0)+(130:3.0)$) {};
   \node[vertex,blue!80] (16) at ($(0,0)+(170:3.0)$) {};
   \node[vertex,red!80] (17) at ($(0,0)+(210:3.0)$) {};
   \node[vertex,blue!80] (18) at ($(0,0)+(250:3.0)$) {};
   \node[vertex,blue!80] (19) at ($(0,0)+(290:3.0)$) {};
   \node[vertex,red!80] (20) at ($(0,0)+(330:3.0)$) {};
   \node[vertex,blue!80] (21) at ($(0,0)+(10:3.0)$) {};
   \node[vertex,blue!80] (22) at ($(0,0)+(50:3.0)$) {};
   
   \node[vertex,blue!80] (23) at ($(0,0)+(00:4.0)$) {};
   \node[vertex,blue!80] (24) at ($(0,0)+(60:4.0)$) {};
   \node[vertex,blue!80] (25) at ($(0,0)+(120:4.0)$) {};
   \node[vertex,blue!80] (26) at ($(0,0)+(180:4.0)$) {};
   \node[vertex,blue!80] (27) at ($(0,0)+(240:4.0)$) {};
   \node[vertex,blue!80] (28) at ($(0,0)+(300:4.0)$) {};
   
   \foreach \i/\j in {1/2,1/3,1/4,1/5,1/6,1/7,14/9,14/10,14/15,14/22,14/24,14/25,17/11,17/12,17/16,17/18,17/26,17/27,20/8,20/13,20/19,20/21,20/23,20/28}{
    \draw[line width=1.6pt] (\i) edge (\j);
   }
   \foreach \i/\j in {2/8,3/9,4/10,5/11,6/12,7/13,15/16,18/19,21/22,23/24,25/26,27/28}{
    \draw[line width=1.6pt,darkpastelgreen] (\i) edge (\j);
   }
   \foreach \i/\j in {2/3,4/5,6/7,9/22,10/15,11/16,12/18,8/21,13/19,23/28,24/25,26/27}{
    \draw[line width=1.6pt,lipicsYellow] (\i) edge (\j);
   }
   \foreach \i/\j in {2/7,3/4,5/6,8/13,9/10,11/12,15/25,16/26,18/27,19/28,21/23,22/24}{
    \draw[line width=1.6pt,magenta] (\i) edge (\j);
   }
   
  \end{tikzpicture}
  }
  \caption{A graph $G$ of Type I. Each edge color $c \in C_E$ defines a graph $G[c]$ that is isomorphic to a disjoint union of stars.
   Individualizing an arbitrary {\sf blue} vertex and performing $1$-WL results in a discrete coloring.
   Hence, the graph is identified by $2$-WL.}
  \label{fig:type-i-example}
 \end{subfigure}
 \hfill
 \begin{subfigure}[b]{.48\linewidth}
  \centering
  \scalebox{\figscalesmall}{
  \begin{tikzpicture}
   
   \node[vertex,red!80] (1) at (0,0) {};
   
   \node[vertex,blue!80] (2) at ($(0,0)+(00:1.0)$) {};
   \node[vertex,blue!80] (3) at ($(0,0)+(60:1.0)$) {};
   \node[vertex,blue!80] (4) at ($(0,0)+(120:1.0)$) {};
   \node[vertex,blue!80] (5) at ($(0,0)+(180:1.0)$) {};
   \node[vertex,blue!80] (6) at ($(0,0)+(240:1.0)$) {};
   \node[vertex,blue!80] (7) at ($(0,0)+(300:1.0)$) {};
   
   \node[vertex,springgreen] (8) at ($(0,0)+(30:1.2)$) {};
   \node[vertex,springgreen] (9) at ($(0,0)+(150:1.2)$) {};
   \node[vertex,springgreen] (10) at ($(0,0)+(270:1.2)$) {};
   
   \node[vertex,blue!80] (11) at ($(0,0)+(30:1.8)$) {};
   \node[vertex,blue!80] (12) at ($(0,0)+(150:1.8)$) {};
   \node[vertex,blue!80] (13) at ($(0,0)+(270:1.8)$) {};
   
   \node[vertex,red!80] (14) at ($(0,0)+(90:3.0)$) {};
   \node[vertex,red!80] (15) at ($(0,0)+(210:3.0)$) {};
   \node[vertex,red!80] (16) at ($(0,0)+(330:3.0)$) {};
   
   \node[vertex,blue!80] (17) at ($(0,0)+(30:4.0)$) {};
   \node[vertex,blue!80] (18) at ($(0,0)+(150:4.0)$) {};
   \node[vertex,blue!80] (19) at ($(0,0)+(270:4.0)$) {};
   
   \node[vertex,springgreen] (20) at ($(0,0)+(270:4.8)$) {};
   
   \foreach \i/\j in {1/14,1/15,1/16}{
    \draw[line width=1.6pt] (\i) edge (\j);
   }
   \draw[line width=1.6pt, bend right] (14) edge (15);
   \draw[line width=1.6pt, bend left] (14) edge (16);
   \draw[line width=1.6pt, bend right] (15) edge (16);
   
   \foreach \i/\j in {1/2,1/3,1/4,1/5,1/6,1/7,2/16,3/14,4/14,5/15,6/15,7/16,11/14,11/16,12/14,12/15,13/15,13/16,17/14,17/16,18/14,18/15,19/15,19/16}{
    \draw[line width=1.6pt, darkpastelgreen] (\i) edge (\j);
   }
   
   \foreach \i/\j in {2/3,4/5,6/7,2/11,3/11,4/12,5/12,6/13,7/13}{
    \draw[line width=1.6pt, lipicsYellow] (\i) edge (\j);
   }
   \draw[line width=1.6pt, lipicsYellow, bend right = 40] (17) edge (18);
   \draw[line width=1.6pt, lipicsYellow, bend left = 40] (17) edge (19);
   \draw[line width=1.6pt, lipicsYellow, bend right = 40] (18) edge (19);
   
   \foreach \i/\j in {2/8,3/8,4/9,5/9,6/10,7/10,8/11,9/12,10/13,19/20}{
    \draw[line width=1.6pt, magenta] (\i) edge (\j);
   }
   \draw[line width=1.6pt, magenta, bend left = 40] (17) edge (20);
   \draw[line width=1.6pt, magenta, bend right = 40] (18) edge (20);
  \end{tikzpicture}
  }
  \caption{A graph $G$ of Type III. The edge colors {\sf black} and {\sf green} have Type III, {\sf yellow} has Type II, and {\sf pink} has Type I.
   Note that $G[c]$ is connected for every edge color $c$ of Type III whereas the other edge colors induce non-connected subgraphs.}
  \label{fig:type-iii-example}
 \end{subfigure}
 \caption{Two $3$-connected planar graphs where all vertices and edges are colored by their $2$-WL color. For visualization purposes, we only color edges and do not distinguish between potentially different colors of two arcs $(v,w)$ and $(w,v)$.}
 \label{fig:type-examples}
\end{figure}
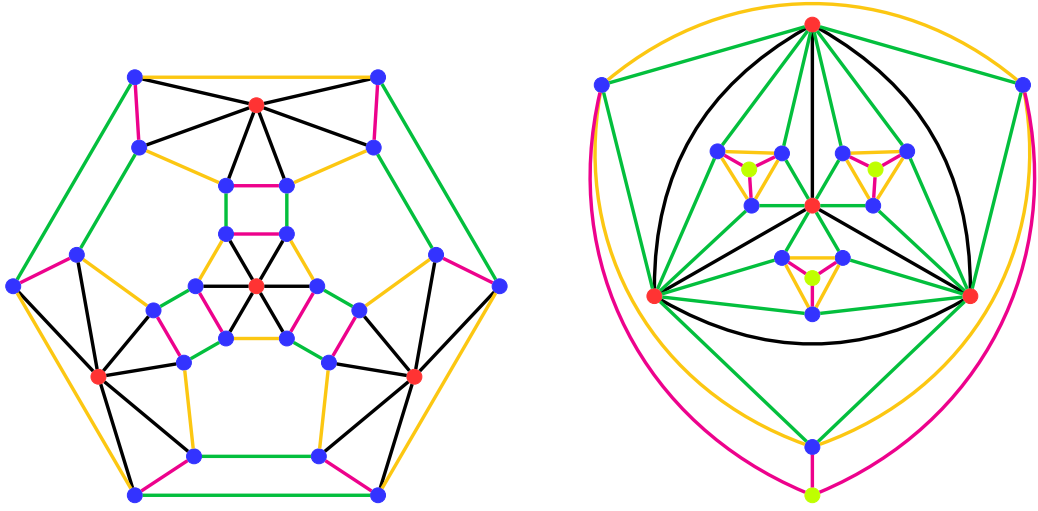

\begin{theorem}[{\cite[Lemma 23]{KieferPS19}}]
 \label{thm:splitting-with-tutte}
 Let $G$ be a $3$-connected planar graph and suppose $v_1,v_2,v_3 \in V(G)$ are pairwise distinct vertices lying on a common face of $G$.
 Then $\WL{1}{G,v_1,v_2,v_3}$ is discrete.
\end{theorem}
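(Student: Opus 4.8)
The plan is to prove discreteness through Tutte's rubber-band (barycentric) embedding theorem, which is exactly what the label of the statement suggests. The strategy is to exhibit an \emph{injective} map $\Phi\colon V(G)\to\RR^2$ that is constant on the color classes of $\chi\coloneqq\WL{1}{G,v_1,v_2,v_3}$. Indeed, if $\Phi$ is $\chi$-constant, then $\chi$ refines the coloring $u\mapsto\Phi(u)$; since $\Phi$ is injective, this target coloring is already discrete, and a coloring that refines a discrete coloring must itself be discrete. So the whole proof reduces to constructing such a $\Phi$ and verifying its two properties: injectivity and $\chi$-constancy.

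First I would fix the geometry. By Whitney's theorem the common face $F$ containing $v_1,v_2,v_3$ is a well-defined combinatorial object, so I may choose $F$ as the outer face and pin its boundary cycle to the corners of a strictly convex polygon, respecting the cyclic order of $F$; the three individualized vertices $v_1,v_2,v_3$ serve to fix this assignment and, in particular, to break the reflection that is the only ambiguity left by Whitney's theorem. Placing every interior vertex at the barycenter of its neighbors, Tutte's theorem yields a straight-line planar embedding $\Phi$. Two facts about $\Phi$ will be used: it is \emph{injective} (distinct vertices sit at distinct points), and its coordinate functions form the \emph{unique} solution of the discrete Dirichlet problem $\Phi_u=\frac{1}{\deg(u)}\sum_{w\in N(u)}\Phi_w$ for interior $u$, with $\Phi$ prescribed on $V(F)$.

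The heart of the argument is then to show that $\Phi$ is constant on the classes of $\chi$. Here I would use that the stable coloring $\chi$ induces an \emph{equitable} partition $\pi$: vertices in one class have equal degree, and for any two classes $C,D$ each vertex of $C$ has the same number $b_{CD}$ of neighbors in $D$. Let $\tilde\Phi$ be obtained from $\Phi$ by replacing each value with the average of $\Phi$ over its $\pi$-class. Summing the barycentric equations over an interior class and using the handshake identity $b_{CD}\,|C|=b_{DC}\,|D|$ shows that $\tilde\Phi$ again satisfies the interior barycentric equations; and, provided $\tilde\Phi$ also meets the boundary conditions (see below), uniqueness of the Dirichlet solution gives $\tilde\Phi=\Phi$, that is, $\Phi$ is $\pi$-constant. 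Combined with injectivity of $\Phi$, this completes the proof.

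I expect the boundary to be the main obstacle. The averaging step requires that every vertex of $V(F)$ be a singleton class of $\chi$, because a polygon corner is an extreme point of the convex hull and therefore cannot equal the average of $\Phi$ over a larger class (any such average lies strictly inside the hull, hence differs from the corner). Thus the real work is to show that color refinement, started from only the three individualized vertices, separates \emph{all} vertices on $F$. My plan for this is to first note that $v_1,v_2,v_3$ fix an orientation of the boundary cycle, and then to sweep around $F$, arguing that $3$-connectedness forces each successive boundary vertex to receive a color distinct from those already assigned; the absence of $2$-separators (Theorem~\ref{thm:3-connected-wl}) is what prevents two boundary vertices from remaining refinement-equivalent. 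Once this boundary individualization is in place, the interior is handled cleanly by the harmonic/equitability argument above, and the only remaining routine checks are the equitability identities and the uniqueness of the discrete Dirichlet solution on the connected graph $G$.
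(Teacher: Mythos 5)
Your overall strategy---a barycentric (Tutte) embedding whose fibers are shown to be unions of color classes via equitability of the stable partition, followed by ``injective and class-constant implies discrete''---is exactly the route taken in the cited source \cite{KieferPS19} and mirrored in this paper's own spring-embedding machinery (Lemmas \ref{la:spring-embedding-properties} and \ref{la:coloring-from-spring-embedding}); note the paper does not reprove the statement but imports it. Your averaging computation for interior vertices (using $b_{CD}|C|=b_{DC}|D|$ and equal degrees within a class) and the uniqueness of the discrete Dirichlet solution are both fine.

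The genuine gap is the boundary step. By pinning the \emph{entire} face $F$ to a convex polygon you oblige yourself to prove beforehand that every vertex of $V(F)$ is already a singleton of $\chi=\WL{1}{G,v_1,v_2,v_3}$, and the ``sweep around $F$ using $3$-connectedness'' you sketch is not an argument: two boundary vertices lying far from $v_1,v_2,v_3$ along the face cycle can have isomorphic bounded-radius neighborhoods, and whatever eventually separates them under $1$-WL may have to propagate through the interior of the graph, so there is no local reason why each successive vertex along the cycle receives a fresh color. The absence of $2$-separators does not by itself feed into $1$-WL refinement, and Theorem \ref{thm:3-connected-wl}, which you invoke, is a statement about $2$-WL in the converse direction. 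In effect, ``all of $V(F)$ becomes discrete'' is a special case of the theorem you are proving, so the argument is circular at this point. The standard repair---and what \cite{KieferPS19} actually does---is to pin only $v_1,v_2,v_3$ at the corners of a triangle and let \emph{all} remaining vertices, including the other vertices of $F$, satisfy the barycentric equations; then the Dirichlet boundary consists exactly of the three individualized vertices, which are singletons by construction, and your equitability argument applies verbatim to every other class. The price is that one must prove injectivity of this \emph{partial} spring embedding, since Tutte's theorem as usually stated pins the whole outer face; that generalization is precisely the content of the cited lemma (and of Lemma \ref{la:spring-embedding-properties}\ref{item:spring-embedding-item-1} here in the triangular-face setting). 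Without either that generalization or an actual proof that $V(F)$ is discrete under $\chi$, the proof is incomplete.
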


Here, $\WL{1}{G,v_1,v_2,v_3}$ denotes the coloring computed by $1$-WL after individualizing $v_1$, $v_2$, and $v_3$.
Recall that for a vertex coloring $\lambda\colon V \rightarrow C$ and $v \in V$, we write $[v]_\lambda \coloneqq \{w \in V \mid \lambda(v) = \lambda(w)\}$ for the color class of $v$.
We also define $\Fix(\lambda) \coloneqq \{v \in V \mid |[v]_\lambda| = 1\}$.
For a graph $G$ and vertices $v_1,\dots,v_\ell \in V(G)$, we define
\[\Disc_G(v_1,\dots,v_\ell) \coloneqq \Fix(\WL{1}{G,\WL{2}{G},v_1,\dots,v_\ell}).\]
In other words, $\Disc_G(v_1,\dots,v_\ell)$ is the set of all vertices appearing in a singleton color class after performing $1$-WL on $G$ where every pair is colored with its $2$-WL-color, and where $v_1,\dots,v_\ell$ are individualized.

\begin{lemma}
 \label{la:wl-pair-orbits-from-fixing-number}
 Let $G$ be a graph and let $v_1,\dots,v_\ell \in V(G)$ such that $\Disc_G(v_1,\dots,v_\ell) = V(G)$.
 Also define $k \coloneqq \max\{2,\ell+1\}$.
 Then $k$-WL determines pair orbits in $G$.
\end{lemma}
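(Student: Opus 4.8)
The plan is to invoke the pebble-game characterization (Theorem~\ref{thm:eq-wl-pebble-tuples}) to convert the color equality in Definition~\ref{def:determines-orbits} into a Duplicator win, and then to read off an explicit isomorphism from a fixed winning strategy, using the discreteness hypothesis to make the construction well-defined. I treat the main case $\ell \geq 1$, so that $k = \ell+1$ and the game $\BP^{k+1}$ has exactly $\ell+2$ pebbles; the degenerate case $\ell=0$ (with $k=2$) is identical but simpler, with the tuple $(v_1,\dots,v_\ell,x)$ replaced by the single vertex $x$ carrying its $2$-WL vertex color. So fix arbitrary $a,b\in V(G)$ and $a',b'\in V(H)$ with $\WL{k}{G}(a,b,\dots,b)=\WL{k}{H}(a',b',\dots,b')$. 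By Theorem~\ref{thm:eq-wl-pebble-tuples}, Duplicator wins $\BP^{k+1}(G,H)$ from this position, and since removing pebbles is one of Spoiler's moves, Duplicator also wins from the two-pebble position $((a,b),(a',b'))$ (this also forces $|V(G)|=|V(H)|$, as the round-$1$ multisets must agree). The remaining $\ell=k-1$ free pebbles now let Spoiler play $v_1,\dots,v_\ell$ in $G$; fixing a winning Duplicator strategy, let $w_1',\dots,w_\ell'$ be its responses. Every position reached is winning, so $\{a\mapsto a',\, b\mapsto b',\, v_i\mapsto w_i'\}$ is a partial isomorphism; moreover, since winnability is invariant under permuting coordinates on both sides and under deleting pebbles, removing $b$ and reordering yields $\WL{k}{G}(v_1,\dots,v_\ell,a)=\WL{k}{H}(w_1',\dots,w_\ell',a')$, and symmetrically for $b$.

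The crux, and the step I expect to be hardest, is to prove that the map $x\mapsto \WL{k}{G}(v_1,\dots,v_\ell,x)$ is injective on $V(G)$. I would show, by induction on the round number $r$, that $\WL{k}{G}(v_1,\dots,v_\ell,u)$ determines the color of $u$ after $r$ rounds of $1$-WL on the pair-colored, individualized graph $(G,\WL{2}{G},v_1,\dots,v_\ell)$; since that $1$-WL coloring is discrete by the hypothesis $\Disc_G(v_1,\dots,v_\ell)=V(G)$, injectivity follows. For $r=0$ this uses that $\WL{k}{G}$ refines $\WL{2}{G}$ on every pair of coordinates, so the tuple color determines $\WL{2}{G}(u,u)$ and whether $u=v_i$. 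For the inductive step, the stable color $\WL{k}{G}(v_1,\dots,v_\ell,x)$ determines the multiset, over all $w\in V(G)$, of the $k$-tuples of colors obtained by substituting $w$ into each coordinate of $(v_1,\dots,v_\ell,x)$. For a fixed $w$, the substitution into the last coordinate gives $\WL{k}{G}(v_1,\dots,v_\ell,w)$, which by induction encodes the $r$-round $1$-WL color of $w$, while the substitution into the first coordinate keeps $x$ in the last slot and hence encodes $\WL{2}{G}(x,w)$ and $\WL{2}{G}(w,x)$. Reading these two entries jointly for each $w$ reconstructs precisely the multiset aggregated by $1$-WL to form the $(r+1)$-round color of $x$ (this joint reading is exactly where $\ell\geq 1$, i.e.\ a coordinate other than the one holding $x$, is needed). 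The same injectivity then transfers to $H$: from $\WL{k}{G}(v_1,\dots,v_\ell,a)=\WL{k}{H}(w_1',\dots,w_\ell',a')$ together with Duplicator's winning bijection at the position pebbling $v_1,\dots,v_\ell$, the multisets $\{\!\{\WL{k}{G}(v_1,\dots,v_\ell,x)\mid x\in V(G)\}\!\}$ and $\{\!\{\WL{k}{H}(w_1',\dots,w_\ell',y)\mid y\in V(H)\}\!\}$ coincide, and as the first consists of $|V(G)|$ distinct colors, so does the second, making $y\mapsto \WL{k}{H}(w_1',\dots,w_\ell',y)$ injective as well.

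Finally I would assemble the isomorphism. Starting from the winning position with pebbles on $v_1,\dots,v_\ell$ (reached by deleting $a,b$), define $\varphi(x)$ to be Duplicator's response when Spoiler plays $x$. The two injectivity statements make $\varphi$ a well-defined bijection $V(G)\to V(H)$ independent of Duplicator's particular choice: any two responses $y,y'$ to the same $x$ give $\WL{k}{H}(w_1',\dots,w_\ell',y)=\WL{k}{G}(v_1,\dots,v_\ell,x)=\WL{k}{H}(w_1',\dots,w_\ell',y')$, so $y=y'$, and distinct $x$ receive distinct colors hence distinct images. For any $x,x'\in V(G)$, the position pebbling $v_1,\dots,v_\ell,x,x'$ (using all $\ell+2=k+1$ pebbles) is reached in a winning play, so $\{v_i\mapsto w_i',\, x\mapsto\varphi(x),\, x'\mapsto\varphi(x')\}$ is a partial isomorphism; as $x,x'$ are arbitrary, $\varphi$ preserves adjacency and all colors and is therefore an isomorphism. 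Comparing the identity $\WL{k}{G}(v_1,\dots,v_\ell,a)=\WL{k}{H}(w_1',\dots,w_\ell',a')$ with the defining equation $\WL{k}{G}(v_1,\dots,v_\ell,a)=\WL{k}{H}(w_1',\dots,w_\ell',\varphi(a))$ and invoking injectivity on the $H$-side yields $\varphi(a)=a'$, and symmetrically $\varphi(b)=b'$. Hence $\varphi$ witnesses that $k$-WL determines pair orbits of $G$.
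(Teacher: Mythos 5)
Your proof is correct and follows essentially the same route as the paper's: both pass to the bijective pebble game via Theorem~\ref{thm:eq-wl-pebble-tuples}, pebble $v_1,\dots,v_\ell$ on top of the given pair, and use the discreteness of $\Disc_G(v_1,\dots,v_\ell)$ to extract a canonical color-matching bijection that is then verified to be an isomorphism fixing the pebbled pair. The only presentational difference is that you make explicit, via the round-by-round induction, that $\WL{k}{G}(v_1,\dots,v_\ell,\cdot)$ refines $\WL{1}{G,\WL{2}{G},v_1,\dots,v_\ell}$, whereas the paper encapsulates this step in a Spoiler winning strategy in the case where the individualized colorings fail to match.
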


\begin{proof}
 Let $H$ be a second graph such that $|V(G)| = |V(H)|$ (otherwise $k$-WL clearly distinguishes between $G$ and $H$).
 Also, let $u_1,u_2 \in V(G)$ and $w_1,w_2 \in V(H)$ such that
 \[\WL{k}{G}(u_1,u_2,\dots,u_2) = \WL{k}{H}(w_1,w_2,\dots,w_2).\]
 This means that Duplicator has a winning strategy in the game $\BP^{k+1}(G,H)$ from the initial position $((u_1,u_2),(w_1,w_2))$.
 Now, suppose that in the next $\ell$ rounds Spoiler places pebbles on $v_1,\dots,v_\ell$.
 This means, the game moves to a position
 \[\big((u_1,u_2,v_1,\dots,v_\ell),(w_1,w_2,v_1',\dots,v_\ell')\big).\]
 Observe that $k+1 \geq \ell + 2$, so Spoiler can indeed have pebbles placed on all these vertices at the same time.
 
 Now, $\lambda_G \coloneqq \WL{1}{G,\WL{2}{G},v_1,\dots,v_\ell}$ is discrete.
 Let $\lambda_H \coloneqq \WL{1}{H,\WL{2}{H},v_1',\dots,v_\ell'}$.
 First suppose that for every $u \in V(G)$, there is some $w \in V(H)$ such that $\lambda_G(u) = \lambda_H(w)$.
 Also suppose that $\lambda_G(u_j) = \lambda_H(w_j)$ for both $j \in \{1,2\}$.
 Since $|V(G)| = |V(H)|$, this implies that $\lambda_H$ is also discrete.
 Let $\varphi\colon V(G) \rightarrow V(H)$ be the bijection defined via $\varphi(u) = w$ for the unique $w \in V(H)$ such that $\lambda_G(u) = \lambda_H(w)$.
 It is easy to see that $\varphi$ constitutes an isomorphism between $G$ and $H$, and $\varphi(u_j) = w_j$ for both $j \in \{1,2\}$.
 
 In the other case, we give a winning strategy for Spoiler from the current position, which leads to a contradiction.
 To win the game, Spoiler's first goal is to move to a position $\big((u,v_1,\dots,v_\ell),(w,v_1',\dots,v_\ell')\big)$ such that $\lambda_G(u) \neq \lambda_H(w)$.
 If $\lambda_G(u_j) \neq \lambda_H(w_j)$ for some $j \in \{1,2\}$, then Spoiler can simply remove the other pair $(u_{3-j},w_{3-j})$.
 In the other case, there is some $u \in V(G)$ such that $\lambda_G(u) \notin \im(\lambda_H)$ (here, $\im(\lambda_H) \coloneqq \{\lambda_H(v) \mid v \in V(H)\}$ denotes the image of $\lambda_H$).
 Spoiler first removes the pebbles from $(u_1,w_1)$ and $(u_2,w_2)$.
 Then he plays a new pair of pebbles $(u,w)$ such that $\lambda_G(u) \notin \im(\lambda_H)$.
 
 Now, by Theorem \ref{thm:eq-wl-pebble-tuples}, Spoiler can reach a position $\big((u_1',u_2',v_1,\dots,v_\ell),(w_1',w_2',v_1',\dots,v_\ell')\big)$ such that one of the following holds.
 \begin{enumerate}
  \item $\WL{2}{G}(u_1',u_2') \neq \WL{2}{G}(w_1',w_2')$,
  \item $u_1' = u_2' \not\Leftrightarrow w_1' = w_2'$, or
  \item $u_j' = v_i \not\Leftrightarrow w_j' = v_i'$ for some $j \in \{1,2\}$ and $i \in [\ell]$.
 \end{enumerate}
 In the latter two cases, Spoiler wins immediately.
 Otherwise, Spoiler eventually wins by first removing all pebbles from $v_1,\dots,v_\ell$ and $v_1',\dots,v_\ell'$, and then using $3 \geq k+1$ pebble pairs in total to simulate $2$-WL building on Theorem \ref{thm:eq-wl-pebble-tuples}.
\end{proof}

The following lemma provides a sufficient condition for a $3$-connected planar graph to have fixing number $1$.

\begin{lemma}
 \label{la:face-cycle-with-star-edges}
 Let $G$ be a $3$-connected planar graph and suppose there is a face $F$ such that every edge $e \in E(F)$ has Type I.
 Then there is a vertex $v \in V(G)$ such that $\Disc_G(v) = V(G)$.
\end{lemma}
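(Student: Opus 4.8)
The plan is to individualize a single, carefully chosen vertex of $F$ and then let $1$-WL, run on $G$ with every pair colored by $\chi \coloneqq \WL{2}{G}$, \emph{propagate} singleton color classes along the boundary cycle of $F$ until at least three vertices of $F$ sit in singleton classes. Once this is achieved, Theorem~\ref{thm:splitting-with-tutte} immediately forces the whole coloring to be discrete, which is exactly the desired conclusion $\Disc_G(v)=V(G)$.

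The workhorse will be the following observation about Type~I colors. Write $\lambda \coloneqq \WL{1}{G,\chi,v}$ and suppose a vertex $u$ already lies in a singleton $\lambda$-class. If there is an edge color $c$ with $|N_c(u)|=1$, say $N_c(u)=\{u'\}$, then I claim $u'$ also ends up in a singleton $\lambda$-class. Indeed, since $u$ is the unique vertex of its color, $1$-WL can refer to $u$, and $u'$ is the only vertex $x$ with $\chi(x,u)=\chi(u',u)$: any other such $x$ would, as $2$-WL recognizes edge colors, yield a second $c$-edge incident to $u$, contradicting $|N_c(u)|=1$. Hence refinement on the pair-colored graph isolates $u'$. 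Concretely, this applies whenever $u$ is a leaf of a star $G[c]$ (so $\deg_{G[c]}(u)=1$) or $G[c]$ is a perfect matching: individualizing $u$ individualizes its unique $c$-neighbor.

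For the choice of vertex, I would write the boundary of $F$ as a cycle $v_0v_1\cdots v_{k-1}$ with $k=|F|\ge 3$ and set $c_i \coloneqq \chi(v_i,v_{i+1})$ (indices mod $k$); each $G[c_i]$ is a disjoint union of stars $K_{1,h_i}$. Call $v_i$ \emph{good} if $\deg_{G[c_{i-1}]}(v_i)=\deg_{G[c_i]}(v_i)=1$. By the propagation observation, individualizing a good $v_i$ individualizes both boundary neighbors $v_{i-1},v_{i+1}$, giving three distinct individualized vertices on the common face $F$. For the remaining case, I would orient every boundary edge from each of its degree-$1$ endpoints toward the other: a star edge ($h_i\ge 2$) contributes one arc (leaf $\to$ center) and a matching edge ($h_i=1$) contributes two. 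Then ``$v_i$ good'' means $v_i$ has out-degree $2$, so if no vertex is good, every out-degree is at most $1$. Summing out-degrees yields $(\#\text{stars})+2(\#\text{matchings})\le k=(\#\text{stars})+(\#\text{matchings})$, forcing no matching edges and out-degree exactly $1$ at every vertex; thus the orientation is a directed cycle on $V(F)$, and starting from any vertex the propagation observation individualizes all $k\ge 3$ vertices of $F$ one after another.

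In both cases I obtain a vertex $v$ and two further vertices $v',v''$ of $F$ that are singletons of $\lambda=\WL{1}{G,\chi,v}$; since individualizing already-singleton vertices changes nothing, $\lambda\equiv\WL{1}{G,\chi,v,v',v''}$, which refines $\WL{1}{G,v,v',v''}$. As $v,v',v''$ are pairwise distinct and lie on the common face $F$, Theorem~\ref{thm:splitting-with-tutte} makes $\WL{1}{G,v,v',v''}$ discrete, so $\lambda$ is discrete and $\Disc_G(v)=V(G)$. I expect the main obstacle to be the boundary-walk bookkeeping, i.e.\ guaranteeing that propagation can be driven around $F$ from a single seed; the out-degree count together with the good-vertex dichotomy is precisely what rules out a ``stuck'' configuration. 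The second delicate point is the propagation observation itself, which hinges on $2$-WL certifying that the $c$-edge at $u$ is unique.
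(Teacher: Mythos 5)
Your proof is correct and follows essentially the same route as the paper: orient each Type I boundary edge away from its degree-one endpoint(s), find either a boundary vertex of out-degree two or a coherently directed boundary cycle, propagate singleton color classes along these arcs from one individualized seed, and finish with Theorem~\ref{thm:splitting-with-tutte}. Your out-degree counting argument in fact supplies the justification for the existence of three face vertices with $(v_1,v_2),(v_2,v_3) \in E(H)$ or $(v_1,v_2),(v_1,v_3) \in E(H)$, a configuration the paper's proof asserts without spelling out the details.
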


\begin{proof}
 Let $H$ be a directed graph with vertex set $V(H) \coloneqq V(G)$ and edge set
 \[E(H) \coloneqq \left\{(v,w) \, \Big\vert \, vw \in E(G) \wedge \deg_{G[\WL{2}{G}(v,w)]}(v) = 1\right\}.\]
 Intuitively speaking, we add a directed edge $(v,w)$ to the graph $H$ if $w$ is the only neighbor of $v$ reachable via an edge of color $\WL{2}{G}(v,w)$.
 In particular, if $v$ is individualized, then $w$ is also fixed after performing $1$-WL.
 
 For every edge $vw \in E(G)$ of Type I, it holds that $(v,w) \in E(H)$ or $(w,v) \in E(H)$.
 Hence, there are three vertices $v_1,v_2,v_3 \in V(G)$ lying on the face $F$ of $G$ such that $(v_1,v_2),(v_2,v_3) \in E(H)$ or $(v_1,v_2),(v_1,v_3) \in E(H)$.
 
 Now consider the coloring $\lambda \coloneqq \WL{1}{G,\WL{2}{G},v_1}$.
 Let $c \coloneqq \WL{2}{G}(v_1,v_2)$.
 By definition of the edge set of $H$, it holds that $v_2$ is the only neighbor of $v_1$ which is adjacent via an edge of color $c$.
 Hence, $|[v_2]_\lambda| = 1$.
 By the same argument, $|[v_3]_\lambda| = 1$.
 Since $v_1,v_2,v_3$ all lie on the face $F$, it follows from Theorem \ref{thm:splitting-with-tutte} that $\lambda$ is discrete.
 In other words, $\Disc_G(v_1) = V(G)$.
\end{proof}

\begin{corollary}
 \label{cor:wl-pair-orbits-type-i}
 Let $G$ be a $3$-connected planar graph of Type I.
 Then there is a vertex $v \in V(G)$ such that $\Disc_G(v) = V(G)$.
 In particular, $2$-WL determines pair orbits of $G$.
\end{corollary}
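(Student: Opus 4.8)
The plan is to derive the corollary directly from the two immediately preceding lemmas, so that essentially no new work is required. The key observation is purely definitional: by the definition of the type of a graph, $G$ being of Type~I means that \emph{every} edge color $c \in C_E$ has Type~I, and hence every edge of $G$ has Type~I. This lets us feed the two structural lemmas with exactly their required hypotheses.

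First I would establish the discreteness claim. Since $G$ is a $3$-connected planar graph, its faces are well-defined combinatorial objects by Whitney's theorem, and $G$ certainly has at least one face (being $3$-connected, $G$ has at least four vertices and is $2$-connected, so every edge lies on a facial cycle). Fix an arbitrary face $F$ of $G$. Because every edge of $G$ has Type~I, in particular every edge $e \in E(F)$ has Type~I. Lemma~\ref{la:face-cycle-with-star-edges} then applies verbatim and yields a vertex $v \in V(G)$ with $\Disc_G(v) = V(G)$, which is exactly the first assertion of the corollary.

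For the second assertion, I would simply invoke Lemma~\ref{la:wl-pair-orbits-from-fixing-number} with the single individualized vertex $v$ obtained above, i.e.\ with $\ell = 1$. In that case $k = \max\{2, \ell+1\} = 2$, so the lemma tells us that $2$-WL determines pair orbits of $G$, as claimed.

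There is essentially no obstacle at this stage: all of the genuine difficulty has already been absorbed into Lemma~\ref{la:face-cycle-with-star-edges} (the construction of the auxiliary directed graph recording which neighbors become fixed, together with the appeal to Theorem~\ref{thm:splitting-with-tutte} to discretize from three cofacial vertices) and into Lemma~\ref{la:wl-pair-orbits-from-fixing-number} (the pebble-game simulation). The only points that need any care are the two routine sanity checks above, namely that a face exists at all and that the Type~I hypothesis on $G$ transfers to every edge on that face; both are immediate.
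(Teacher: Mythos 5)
Your proposal is correct and is exactly the derivation the paper intends: since $G$ has Type I, every edge (hence every edge of any face) has Type I, so Lemma~\ref{la:face-cycle-with-star-edges} gives the vertex $v$ with $\Disc_G(v) = V(G)$, and Lemma~\ref{la:wl-pair-orbits-from-fixing-number} with $\ell = 1$ (so $k = 2$) yields that $2$-WL determines pair orbits. The paper leaves this proof implicit, and your two sanity checks (existence of a face, transfer of the Type~I hypothesis) are indeed the only points requiring any comment.
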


We also record the following useful lemma, which is another consequence of Theorem \ref{thm:splitting-with-tutte}.

\begin{lemma}
 \label{la:fixing-cycle-and-extra-vertex}
 Let $G$ be a $3$-connected planar graph.
 Suppose $v_1,\dots,v_\ell \in V(G)$ form a cycle in $G$, i.e., $v_1,\dots,v_\ell$ are pairwise distinct, $v_1v_\ell \in E(G)$ and $v_iv_{i+1} \in E(G)$ holds for all $i \in [\ell-1]$.
 Let $w \in V(G) \setminus \{v_1,\dots,v_\ell\}$.
 Let $\lambda$ be a vertex coloring of $G$ such that $\{v_1,\dots,v_\ell,w\} \subseteq \Disc(\lambda)$ and $\lambda$ is $1$-stable with respect to $G$.
 Then $\lambda$ is discrete, i.e., $\Disc(\lambda) = V(G)$.
\end{lemma}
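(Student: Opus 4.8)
The plan is to reduce the statement to Theorem~\ref{thm:splitting-with-tutte} by exhibiting three pairwise distinct vertices in $\Disc(\lambda)$ that lie on a common face of $G$. Indeed, suppose $a,b,c \in \Disc(\lambda)$ are cofacial. Since $a,b,c$ are singleton classes of $\lambda$, the coloring $\lambda$ refines the initial coloring of the individualized graph $(G,a,b,c)$; as $\lambda$ is $1$-stable, monotonicity of the $1$-WL refinement operator then gives $\lambda \preceq \WL{1}{G,a,b,c}$. By Theorem~\ref{thm:splitting-with-tutte} the coloring $\WL{1}{G,a,b,c}$ is discrete, and hence so is the finer coloring $\lambda$. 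Thus the whole task becomes locating three cofacial members of $\Disc(\lambda)$; we may freely use that $\{v_1,\dots,v_\ell,w\} \subseteq \Disc(\lambda)$ from the start, together with any further vertices that $1$-stability forces into singleton classes.

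By Whitney's theorem, $G$ has an essentially unique plane embedding, so I may speak of faces and of the two regions into which the cycle $C = v_1\cdots v_\ell$, viewed as a Jordan curve, divides the plane. The vertex $w \notin V(C)$ lies in exactly one region; without loss of generality it lies in the bounded (interior) region. If the exterior region contains no vertex of $G$, then $C$ bounds a face of $G$, and since $\ell \ge 3$, the fixed vertices $v_1,v_2,v_3$ are three cofacial members of $\Disc(\lambda)$ and we are done. Hence I may assume that $C$ is a \emph{separating} cycle, with at least $w$ strictly inside and at least one vertex strictly outside.

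In the separating case the difficulty is that no three of $v_1,\dots,v_\ell,w$ need to be cofacial, so I must produce a fixed \emph{interior} vertex that shares a face with an edge of $C$. The role of $w$ is to break the interior/exterior symmetry: removing $V(C)$ separates the interior from the exterior, so the connected component of $w$ in $G-V(C)$ consists solely of interior vertices, which singles out the interior side (without $w$, an interior$\leftrightarrow$exterior swap fixing $C$ pointwise could survive, as it does for the octahedron). Because each $v_i$ is a singleton of $\lambda$, the cycle carries pairwise distinct fixed colors, and I would argue that fixedness then propagates inward from $C$ by $3$-connectivity and planar rigidity. The clean illustrative sub-case is that in which $w$ is the \emph{only} interior vertex: then every interior face has all of its $\ge 3$ vertices among $\{w,v_1,\dots,v_\ell\}\subseteq\Disc(\lambda)$, so any such face already supplies three cofacial fixed vertices. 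In general, I would show (by a propagation argument, e.g.\ an induction on the number of interior vertices) that some interior vertex $x$ adjacent to the fixed cycle is forced into a singleton class by $1$-stability; then the face on the interior side of a suitable cycle edge $v_iv_{i+1}$ contains $v_i$, $v_{i+1}$, and $x$, three cofacial members of $\Disc(\lambda)$, and Theorem~\ref{thm:splitting-with-tutte} finishes the proof.

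The hard part will be exactly this inward propagation in the separating case: making rigorous the intuition that a pointwise-fixed separating cycle together with a single marked interior vertex rigidifies the interior, so that $1$-stability must fix some interior vertex cofacial with $C$. This is where planarity and $3$-connectivity are indispensable — $3$-connectivity forbids the small separators that would let an interior piece be permuted independently of the fixed cycle, and the marked vertex $w$ is precisely what rules out the interior$\leftrightarrow$exterior swap compatible with fixing $C$ pointwise. Everything else, namely the reduction and the non-separating case, is routine.
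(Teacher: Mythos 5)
Your reduction is sound: if three pairwise distinct members of $\Disc(\lambda)$ lie on a common face, then $\lambda$ refines the initial coloring individualizing them, and $1$-stability gives $\lambda \preceq \WL{1}{G,a,b,c}$, which is discrete by Theorem~\ref{thm:splitting-with-tutte}. The non-separating case is also fine. But the entire substance of the lemma sits in the step you explicitly defer --- ``I would show (by a propagation argument \dots) that some interior vertex $x$ adjacent to the fixed cycle is forced into a singleton class'' --- and no argument is given for it. This is not a routine detail: a pointwise-fixed cycle plus a marked vertex forcing rigidity of a whole component is exactly what needs proving, and a naive vertex-by-vertex induction runs into trouble (e.g.\ an interior vertex adjacent to the cycle need not be cofacial with two \emph{consecutive} cycle vertices, so even after fixing such an $x$ you do not automatically get three cofacial fixed vertices; you would need to fix a vertex on the interior face incident to some cycle edge $v_iv_{i+1}$, which your sketch does not address). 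As written, the proposal assumes a statement essentially as strong as the lemma itself.

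The paper closes this gap without any propagation induction, via an auxiliary-graph construction you may want to compare against: let $A$ be the component of $G - \{v_1,\dots,v_\ell\}$ containing $w$, and form $H$ on $A \cup \{v_1,\dots,v_\ell\} \cup \{z\}$ by keeping all edges of $G$ meeting $A$, the cycle edges, and joining a fresh apex $z$ to every $v_i$. This $H$ is planar and $3$-connected, so Theorem~\ref{thm:splitting-with-tutte} applied to $H$ with $v_1,\dots,v_\ell$ individualized yields a discrete coloring on $H$. Since $w$ is individualized in $G$, the set $A$ is $\lambda$-invariant, so every refinement step performed in $H$ can be simulated in $G$, giving $\lambda|_A \preceq (\WL{1}{H,\lambda_H,v_1,\dots,v_\ell})|_A$ and hence $A \subseteq \Disc(\lambda)$; this supplies three cofacial fixed vertices in $G$ and your reduction then finishes. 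Note that the paper works with the component of $w$ in $G - V(C)$ rather than with the topological interior of the Jordan curve; this sidesteps the issue that the interior region may contain several components, only one of which is pinned down by $w$.
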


The lemma says that in a $3$-connected planar graph $G$, it suffices to fix a cycle and one additional vertex in order to fix the entire graph.
For example, this allows us to extract from the presence of certain $2$-WL-detectable subgraphs bounds on the fixing number of the entire $3$-connected planar graph $G$.
Note that in the case that the fixing number in the subgraph is $1$, Lemma \ref{la:wl-pair-orbits-from-fixing-number} yields that $2$-WL determines pair orbits in $G$.

\begin{proof}[Proof of Lemma \ref{la:fixing-cycle-and-extra-vertex}]
 Let $A$ denote the vertex set of the connected component of $G - \{v_1,\dots,v_\ell\}$ that contains $w$.
 Let $H$ be the graph defined via $V(H) \coloneqq A \cup \{v_1,\dots,v_\ell\} \cup \{z\}$ (where $z$ denotes some fresh vertex) and
 \[E(H) \coloneqq E_G(A, A \cup \{v_1,\dots,v_\ell\}) \cup \{v_iv_{i+1} \mid i \in [\ell-1]\} \cup \{v_1v_\ell\} \cup \{zv_i \mid i \in [\ell]\}.\]
 Also, let $\lambda_H \coloneqq V(H) \rightarrow \{1,2\}$ be the vertex coloring defined via $\lambda_H(z) = 1$ and $\lambda_H(v) = 2$ for all $v \in V(H) \setminus \{z\}$.
 It is easy to see that $H$ is planar and $3$-connected.
 Thus, $\WL{1}{H,\lambda_H,v_1,\dots,v_\ell}$ is discrete by Theorem \ref{thm:splitting-with-tutte}.
 
 Now, consider the coloring $\lambda$.
 Since $w$ is individualized, it holds that $A$ is $\lambda$-invariant.
 But this means that
 \[\lambda|_A \preceq \big(\WL{1}{H,\lambda_H,v_1,\dots,v_\ell}\big)|_A,\]
 since all refinement steps that are performed in $H$ can also be done in $G$.
 In particular, $A \subseteq \Disc(\lambda)$.
 But this means $\Disc(\lambda)$ contains three distinct vertices $u_1,u_2,u_3$ that lie on a common face of $G$.
 So, $\Disc(\lambda) = V(G)$ by Theorem \ref{thm:splitting-with-tutte}.
\end{proof}

\subsection{Three Faces}
We now turn to edge colors of Types II and III.
For both types, it is not difficult to see that it is in general impossible to bound the fixing number by $1$ (see, e.g., the graphs displayed in Figures \ref{fig:type-iii-example} and \ref{fig:type-ii-examples}).
Instead, our focus here lies on investigating how edge colors of the corresponding type can appear within a $3$-connected planar graph.

We first focus on edge colors of Type III. Ssee Figure \ref{fig:type-iii-example} for an example.
Let $G$ be a $3$-connected planar graph and let $c \in C_E(G,\chi)$ be an edge color of Type III, where $\chi \coloneqq \WL{2}{G}$.
By Corollary \ref{cor:edge-orbits-for-edge-transitive}, the graph $G[c]$ is edge-transitive, which already puts severe restrictions on $G[c]$.
However, as it turns out, due to the planarity and $3$-connectedness of $G$, many edge-transitive graphs can in fact not appear as subgraphs $G[c]$.
In the following, we classify the graphs $G[c]$ induced by an edge color $c$ of Type III.
The following lemma is a useful tool for the proof of our classification in Theorem \ref{thm:type-iii-classification}, but also an interesting insight by itself, since it also yields restrictions on how different colors $c,c'$ of Type III can appear together in one graph $G$.

\begin{lemma}
 \label{la:3-faces-connected}
 Let $G$ be a $3$-connected planar graph and let $c \in C_E(G,\WL{2}{G})$ be an edge color of Type III.
 Then $G[c]$ is connected.
 Moreover, for every edge color $c' \in C_E(G,\WL{2}{G})$ of Type III, it holds that $V(G[c]) \cap V(G[c']) \neq \emptyset$.
\end{lemma}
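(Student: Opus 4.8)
The plan is to treat both claims through the planar embedding of $G$ (essentially unique by Whitney's theorem) together with the strong homogeneity that $2$-WL imposes on $G[c]$. First I would record the structural facts I will lean on. Since all edges of $G[c]$ carry the same color $c$, Corollary~\ref{cor:edge-orbits-for-edge-transitive} shows that $G[c]$ is edge-transitive, so all its connected components are pairwise isomorphic and, being of Type~III, each has at least three faces; by Euler's formula each component therefore has cyclomatic number at least $2$, i.e.\ at least two bounded faces in the embedding. I would also note that $V(G[c])$ is $\chi$-invariant, so a vertex outside $V(G[c])$ never shares a color with a vertex of $G[c]$; consequently, letting $D$ be the set of vertex colors occurring on $V(G[c])$, a path avoids $V(G[c])$ in its interior if and only if it avoids $D$, which is exactly the situation to which Observation~\ref{obs:wl-knows-paths-avoiding-colors} applies.

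For connectivity, suppose $G[c]$ has components $A_1,\dots,A_\ell$ with $\ell\ge 2$. In the embedding, two disjoint connected subgraphs each lie in a single face of the other, so I would organize the components by the strict partial order in which $A_i$ \emph{is enclosed by} $A_j$ when $A_i$ lies in a bounded face of $A_j$; planarity makes this antisymmetric and transitive. The core of the argument is a color-symmetry obstruction. Fix a component reachable from another component by a path whose interior avoids $V(G[c])$---such a path exists because $G$ is connected, and by Lemma~\ref{la:path-between-color-components} (applied to the locally determined set $\{c\}$) one can even route between prescribed components when $\ell\ge 3$. The endpoint $w$ of this path on its target component lies on the boundary of the face through which the path enters; since that component has at least three faces, it also contains a vertex $w'$ of the same $\chi$-color incident only to \emph{other} faces, from which no such $D$-avoiding path to a different component exists (any escape from $w'$ must first cross $G[c]$). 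By Observation~\ref{obs:wl-knows-paths-avoiding-colors} this forces $\chi(w,w)\neq\chi(w',w')$, a contradiction. Theorem~\ref{thm:3-connected-wl} and the $3$-connectedness of $G$ feed in to guarantee that the relevant faces are bounded by genuine cycles and that no degenerate small separator collapses the picture.

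For the \emph{moreover} part I would use the connectivity just established. If $c'$ is another Type~III color and $V(G[c])\cap V(G[c'])=\emptyset$, then $G[c']$ is connected and disjoint from the connected graph $G[c]$, hence lies entirely inside a single face $F$ of $G[c]$. Because $G$ is connected, some boundary vertex $w^\ast$ of $F$ admits a path into the interior of $F$ reaching $V(G[c'])$ and avoiding $V(G[c])$ (equivalently, avoiding $D$)---trace any path from $G[c']$ to the outside and stop at the first vertex of $G[c]$ it meets. Since $c$ is itself Type~III, $G[c]$ has at least three faces and is therefore not bounded by a single face; this lets me find $w'$ with $\chi(w',w')=\chi(w^\ast,w^\ast)$ that is incident only to faces other than $F$, so every $D$-avoiding path from $w'$ stays within a face disjoint from $G[c']$ and cannot reach $V(G[c'])$. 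Observation~\ref{obs:wl-knows-paths-avoiding-colors} again yields $\chi(w^\ast,w^\ast)\neq\chi(w',w')$, contradicting the $\chi$-invariance of the color class.

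The main obstacle is the topological bookkeeping hidden in the phrases ``a twin incident only to other faces'' and ``reaching $V(G[c'])$ within $F$''. Concretely, one must rule out the degenerate situation in which a single face's boundary meets every $\chi$-color class of $G[c]$, so that no off-face twin exists; this is exactly where the hypothesis of at least three faces---rather than two, as for the cycles of Type~II---is indispensable, and where I expect to split into the \emph{unicolored} and \emph{bicolored} cases for $G[c]$ and to use $3$-connectedness, via the $2$-separator detection of \cite{KieferN22}, to control the face boundaries. Making the ``$D$-avoiding path through the interior of a face'' argument fully rigorous against the fixed Whitney embedding, uniformly over all components, is the step I expect to require the most care.
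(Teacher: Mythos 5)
Your overall framework matches the paper's: fix the components $A_1,\dots,A_\ell$, use $D$-avoiding paths together with Observation~\ref{obs:wl-knows-paths-avoiding-colors} to show that the attachment set of the component $X$ containing the other $A_i$'s is a union of $\chi$-color classes of $A_1$, and use planarity to force that attachment set onto a single face of $G[A_1]$. Up to that point your argument is essentially the paper's Claim~\ref{claim:neighborhood-color-closed}.

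The gap is in the closing step. You claim that, because each component has at least three faces, there is always a vertex $w'$ with $\chi(w',w')=\chi(w,w)$ that is incident only to \emph{other} faces, i.e.\ that no $\chi$-color class of $G[A_1]$ can lie entirely on one face. You flag this as the delicate point, but your proposed resolution (``this is exactly where the hypothesis of at least three faces is indispensable'') is false: there are Type~III graphs in which a whole color class sits on a single face. For $G[c]\cong K_{2,s}$ with $s\ge 3$ (which has $s\ge 3$ faces), both degree-$s$ vertices are incident to \emph{every} face; and for an $s$-subdivision of a cycle $C_\ell$ with $s\ge 2$, all $\ell$ branch vertices are incident to a common face. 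In these cases your off-face twin $w'$ simply does not exist, and the color-symmetry contradiction evaporates. The paper's proof is forced to treat exactly these cases by entirely different means: it invokes the classification of Theorem~\ref{thm:classification-edge-transitive} (including the subdivision cases below it), checks the minimum-degree-$3$ graphs directly, and then, for $K_{2,s}$ and for subdivided cycles, builds explicit $K_{3,s}$- and $K_{3,s+1}$-minors (using $3$-connectedness to produce the connecting paths $P_i$ and Observation~\ref{obs:wl-knows-paths-avoiding-colors} to replicate them across the color class) and contradicts planarity via Wagner's theorem. The $3$-connectedness of $G$ is also used directly to kill the $|N_G(X)|=2$ subcase of $K_{2,s}$. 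None of this is recoverable from the ``at least three faces'' hypothesis alone, so as written your proof does not go through; the same defect propagates to your argument for the \emph{moreover} statement, which reuses the off-face-twin step.
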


\begin{proof}
 For the first part of the lemma, suppose that $G[c]$ is not connected and let $A_1,\dots,A_\ell$ be the vertex sets of the connected components of $G[c]$.
 Also let $D \coloneqq \{\chi(v,v),\chi(w,w) \mid v,w \in V(G), \chi(v,w) = c\}$.
 Observe that $|D| \in \{1,2\}$.
 By Lemma \ref{la:path-between-color-components}, there is a vertex set $X$ of a connected component of the graph $G - A_1$ such that $A_i \subseteq X$ for all $i \in \{2,\dots,\ell\}$.
 
 \begin{claim}
  \label{claim:neighborhood-color-closed}
  Let $d \in D$ and suppose that $N_G(X) \cap V_d \neq \emptyset$.
  Then $V_d \cap A_1 \subseteq N_G(X)$.
 \end{claim}
 \begin{claimproof}
  Consider the set
  \begin{align*}
   M \coloneqq \Big\{(v,w) \in (V(G))^2 \;\Big|\; &v \in A_i \text{ and } w \in A_j \text{ for some } i \neq j \in [\ell] \text{ and }\\
                                                  &\text{there is a path from } v \text{ to } w \text{ that avoids } D\Big\}
  \end{align*}
  This set is detected by $2$-WL, i.e., there is a set of colors $C_M$ such that $(v,w) \in M$ if and only if $\WL{2}{G}(v,w) \in C_M$ for all $v,w \in V(G)$ (cf.\ Observation \ref{obs:wl-knows-paths-avoiding-colors}).
  Now let $u \in N_G(X) \cap V_d$ and let $v \in X$ such that $uv \in E(G)$.
  Since $G[X]$ is connected and $A_2,\dots,A_\ell \subseteq X$, there is some $w \in A_2 \cup \dots \cup A_\ell$ such that there is a path from $v$ to $w$ that avoids $D$.
  Extending this path with the edge $uv$ gives a path from $u$ to $w$ that avoids $D$.
  In particular, $(u,w) \in M$ and $\WL{2}{G}(u,w) \in C_M$.
  
  Now let $u' \in V_d \cap A_1$ be a second vertex.
  Since $\WL{2}{G}(u,u) = \WL{2}{G}(u',u')$ there is some vertex $w' \in V(G)$ such that $\WL{2}{G}(u',w') \in C_M$.
  So $(u',w') \in M$.
  Let $j \in \{2,\dots,\ell\}$ be such that $w' \in A_j$.
  Note that $w' \in X$.
  Since there is a path from $u'$ to $w'$ that avoids $D$, it follows that $u' \in N_G(X)$.
 \end{claimproof}
 
 On the other hand, by planarity, there is some face $F$ of $G[A_1]$ such that all vertices contained in $N_G(X)$ must be incident to $F$ (the connected subgraph $G[X]$ is contained in the face $F$).
 
 Let $v \in N_G(X)$ and let $d \coloneqq \WL{2}{G}(v,v)$.
 By the last claim and the comment above, there is some face $F$ of $G[A_1]$ such that all vertices from $V_d \cap A_1$ are incident to $F$.
 
 Now, let us analyze in which cases this is possible based on Theorem \ref{thm:classification-edge-transitive} and the comments below the theorem.
 First, $G[A_1]$ has minimum degree at least $2$ and maximum degree at least $3$ since $c$ has Type III.
 If $G$ has minimum degree at least $3$, then $G[A_1]$ is isomorphic to one of the graphs from Figure \ref{fig:transitive} by Theorem \ref{thm:classification-edge-transitive}.
 However, for none of these graphs, there is a color $d \in D$ and a face $F$ of $G[A_1]$ such that all vertices from $V_d \cap A_1$ are incident to $F$.
 So this case cannot occur.
 By the same argument, $G[A_1]$ cannot be an $s$-subdivision of one of the graphs from Figure \ref{fig:transitive}.
 
 Next, suppose that $G[A_1]$ is an $s$-subdivision of the graph $K_2$, i.e., $G[A_1]$ is isomorphic to $K_{2,s}$ for some $s \geq 3$ (observe that $s \leq 2$ is not possible since $c$ as Type III).
 In this case, $|D| = 2$.
 Suppose $D = \{d_1,d_2\}$ where $V_{d_1}$ contains all degree-$2$-vertices of $G[c]$ and $V_{d_2}$ contains the remaining vertices of $G[c]$.
 If $N_G(X) \cap V_{d_1} \neq \emptyset$, then $V_{d_1} \cap A_1 \subseteq N_G(X)$ by Claim \ref{claim:neighborhood-color-closed}.
 But this contradicts $G$ being planar since contracting $X$ results in a subgraph isomorphic to $K_{3,s}$ (recall that $s \geq 3$).
 So $N_G(X) \subseteq V_{d_2} \cap A_1$.
 But $|V_{d_2} \cap A_1| = 2$, which contradicts the $3$-connectedness of $G$.
 
 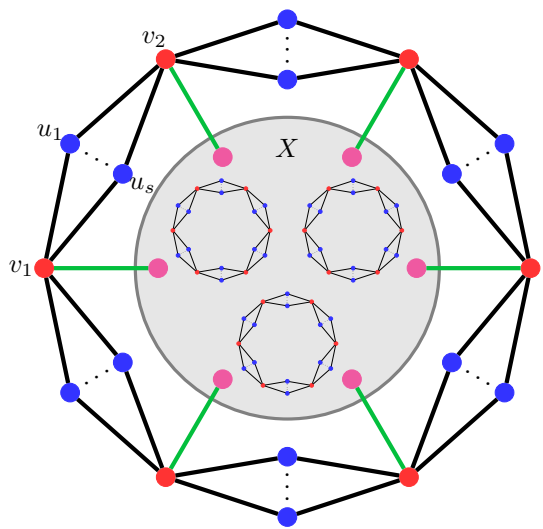
\begin{figure}
  \centering
  \begin{tikzpicture}
   \draw[line width=1.2pt, gray, fill = gray!20] (0,0) circle (2.0cm);
   
   \foreach \i in {0,...,5}{
    \node[vertex,red!80] (u\i) at ($(0,0)+(60*\i:3.2)$) {};
    \node[vertex,blue!80] (v\i-1) at ($(0,0)+(30+60*\i:2.5)$) {};
    \node[vertex,blue!80] (v\i-2) at ($(0,0)+(30+60*\i:3.3)$) {};
    \node[vertex,magenta!80] (w\i) at ($(0,0)+(60*\i:1.7)$) {};
    
    \node[rotate = 30+60*\i] at ($(0,0)+(30+60*\i:2.925)$) {$\dots$};
   }
   \foreach \i/\j in {0/1,1/2,2/3,3/4,4/5,5/0}{
    \draw[line width=1.6pt] (u\i) edge (v\i-1);
    \draw[line width=1.6pt] (u\i) edge (v\i-2);
    \draw[line width=1.6pt] (u\j) edge (v\i-1);
    \draw[line width=1.6pt] (u\j) edge (v\i-2);
    \draw[line width=1.6pt,darkpastelgreen] (u\i) edge (w\i);
   }
   
   \node at (0,1.6) {$X$};
   \node at (180:3.5) {$v_1$};
   \node at (120:3.5) {$v_2$};
   \node at (150:3.6) {$u_1$};
   \node at (150:2.2) {$u_s$};
   
   \node at (30:1){
    \scalebox{0.2}{
    \begin{tikzpicture}
     \foreach \i in {0,...,5}{
      \node[vertex,red!80] (u\i) at ($(0,0)+(60*\i:3.2)$) {};
      \node[vertex,blue!80] (v\i-1) at ($(0,0)+(30+60*\i:2.5)$) {};
      \node[vertex,blue!80] (v\i-2) at ($(0,0)+(30+60*\i:3.3)$) {};
      
      \node[rotate = 30+60*\i] at ($(0,0)+(30+60*\i:2.925)$) {$\dots$};
     }
     \foreach \i/\j in {0/1,1/2,2/3,3/4,4/5,5/0}{
      \draw[line width=1.6pt] (u\i) edge (v\i-1);
      \draw[line width=1.6pt] (u\i) edge (v\i-2);
      \draw[line width=1.6pt] (u\j) edge (v\i-1);
      \draw[line width=1.6pt] (u\j) edge (v\i-2);
     }
    \end{tikzpicture}
    }
   };
   
   \node at (150:1){
    \scalebox{0.2}{
    \begin{tikzpicture}
     \foreach \i in {0,...,5}{
      \node[vertex,red!80] (u\i) at ($(0,0)+(60*\i:3.2)$) {};
      \node[vertex,blue!80] (v\i-1) at ($(0,0)+(30+60*\i:2.5)$) {};
      \node[vertex,blue!80] (v\i-2) at ($(0,0)+(30+60*\i:3.3)$) {};
      
      \node[rotate = 30+60*\i] at ($(0,0)+(30+60*\i:2.925)$) {$\dots$};
     }
     \foreach \i/\j in {0/1,1/2,2/3,3/4,4/5,5/0}{
      \draw[line width=1.6pt] (u\i) edge (v\i-1);
      \draw[line width=1.6pt] (u\i) edge (v\i-2);
      \draw[line width=1.6pt] (u\j) edge (v\i-1);
      \draw[line width=1.6pt] (u\j) edge (v\i-2);
     }
    \end{tikzpicture}
    }
   };
   
   \node at (270:1){
    \scalebox{0.2}{
    \begin{tikzpicture}
     \foreach \i in {0,...,5}{
      \node[vertex,red!80] (u\i) at ($(0,0)+(60*\i:3.2)$) {};
      \node[vertex,blue!80] (v\i-1) at ($(0,0)+(30+60*\i:2.5)$) {};
      \node[vertex,blue!80] (v\i-2) at ($(0,0)+(30+60*\i:3.3)$) {};
      
      \node[rotate = 30+60*\i] at ($(0,0)+(30+60*\i:2.925)$) {$\dots$};
     }
     \foreach \i/\j in {0/1,1/2,2/3,3/4,4/5,5/0}{
      \draw[line width=1.6pt] (u\i) edge (v\i-1);
      \draw[line width=1.6pt] (u\i) edge (v\i-2);
      \draw[line width=1.6pt] (u\j) edge (v\i-1);
      \draw[line width=1.6pt] (u\j) edge (v\i-2);
     }
    \end{tikzpicture}
    }
   };
  \end{tikzpicture}
  \caption{Visualization for the proof of Lemma \ref{la:3-faces-connected}. The vertices from $V_{d_1}$ are colored {\sf blue} and the vertices from $V_{d_2}$ are colored {\sf red}. Also, edges of the graph $G[c]$ are {\sf black}. Since $G$ is planar, there is no path from $u_s$ to a vertex $x_s \in A_1 \setminus \{u_1,\dots,u_s,v_1,v_2\}$ that avoids $D$.}
  \label{fig:type-iii-connected}
 \end{figure}

 So $G[A_1]$ is isomorphic to an $s$-subdivision of a cycle $C_\ell$ of length $\ell$ for some $\ell \geq 3$ and $s \geq 2$.
 Again suppose $D = \{d_1,d_2\}$, where $V_{d_1}$ contains all degree-$2$-vertices of $G[c]$ and $V_{d_2}$ contains the remaining vertices of $G[c]$ (which have degree $2s$ in $G[c]$).
 We obtain $N_G(X) = V_{d_2} \cap A_1$ by the same arguments as above.
 Now let $u_1,\dots,u_s \in A_1 \cap V_{d_1}$ be a set of $s$ distinct vertices and $v_1,v_2 \in A_1 \cap V_{d_2}$  such that $N_{G[c]}(u_i) = N_{G[c]}(u_j) = \{v_1,v_2\}$ for all $i,j \in [s]$ (see Figure \ref{fig:type-iii-connected}).
 Also, let $v_3 \in A_1 \cap V_{d_2}$ be another vertex that is distinct from $v_1$ and $v_2$ (such a vertex exists since $|A_1 \cap V_{d_2}| = \ell \geq 3$).
 Since $G$ is $3$-connected, there is a path $u_1 = w_0,\dots,w_\ell = v_3$ in $G$ that visits neither $v_1$ nor $v_2$.
 Now, consider the last occurrence $w_i$ of a vertex from $u_1,\dots,u_s$ on the path.
 Also, consider the first occurrence $w_j$, $j > i$, of a vertex from $A_1 \setminus \{u_1,\dots,u_s\}$ on the path.
 Consider the subpath $P = w_i,\dots,w_j$.
 Since $N_G(X) = V_{d_2} \cap A_1$, we conclude that $P$ does not visit any vertex from $X$.
 In particular, $P$ avoids $D$.
 Using Observation \ref{obs:wl-knows-paths-avoiding-colors}, we conclude that such a path exists for all vertices $u_1,\dots,u_s$ (by the same arguments as in Claim \ref{claim:neighborhood-color-closed}), i.e., for every $i \in [s]$, there is a path $P_i$ from $u_i$ to some vertex $x_i \in A_1 \setminus \{u_1,\dots,u_s,v_1,v_2\}$ that avoids $D$.
 By the same argument as above, the path $P_i$ does not visit any vertex from $X$.
 However, by Theorem \ref{thm:wagner}, this contradicts the planarity of $G$, as it allows us to construct a minor $K_{3,s+1}$ with vertices $v_1,v_2,A_1 \setminus \{u_1,\dots,u_s,v_1,v_2\}$ on the left side and $u_1,\dots,u_s,X$ on the right side (the paths $P_i$ are used to connect $u_i$ to $A_1 \setminus \{u_1,\dots,u_s,v_1,v_2\}$ for all $i \in [s]$).
 
 So overall, $G[c]$ is connected, which completes the first part of the proof.
 
 \medskip
 
 Now, let us turn to the second statement.
 Let $c' \in C_E(G,\WL{2}{G})$ be a second edge color of Type III.
 By the first part of the lemma, both $G[c]$ and $G[c']$ are connected.
 Suppose that $V(G[c]) \cap V(G[c']) = \emptyset$.
 Let $X$ denote the connected component of $G - V(G[c])$ that contains $V(G[c'])$.
 Now, we can derive a contradiction by employing exactly the same arguments as in the first case.
\end{proof}

\begin{theorem}
 \label{thm:type-iii-classification}
 Let $G$ be a $3$-connected planar graph and let $c \in C_E(G,\WL{2}{G})$ be of Type III.
 Then one of the following holds.
 \begin{enumerate}
  \item\label{item:type-iii-classification-1} $G[c]$ is bicolored and isomorphic to $K_{2,\ell}$ for some $\ell \geq 3$,
  \item\label{item:type-iii-classification-2} $G[c]$ is bicolored and isomorphic to a $2$-subdivision of a cycle $C_\ell$ for some $\ell \geq 3$,
  \item\label{item:type-iii-classification-3} $G[c]$ is bicolored and isomorphic to a graph from Fig.\ \ref{fig:graph-bicol-cube} -- \ref{fig:graph-rhombic-triacontahedron},
  \item\label{item:type-iii-classification-4} $G[c]$ is unicolored and isomorphic to a graph from Fig.\ \ref{fig:graph-k4} -- \ref{fig:graph-icosidodecahedron},
  \item\label{item:type-iii-classification-5} $G[c]$ is bicolored and isomorphic to a $1$-subdivision of a graph from Fig.\ \ref{fig:graph-k4} -- \ref{fig:graph-icosidodecahedron}, or
  \item\label{item:type-iii-classification-6} $G[c]$ is bicolored and isomorphic to a $2$-subdivision of a graph from Fig.\ \ref{fig:graph-k4} -- \ref{fig:graph-icosahedron}.
 \end{enumerate}
\end{theorem}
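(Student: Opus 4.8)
The plan is to first determine the abstract isomorphism type of $G[c]$ from the classification of edge-transitive planar graphs, and then to use the planar embedding of $G$ together with a $2$-WL-detectable reachability relation to decide exactly how much subdivision is admissible. By Lemma~\ref{la:3-faces-connected}, $G[c]$ is connected, and since all edges of $G[c]$ carry the single color $c$, the graph $G[c]$ is edge-transitive by Corollary~\ref{cor:edge-orbits-for-edge-transitive}. Being connected and planar, it is therefore one of the graphs in the classification extending Theorem~\ref{thm:classification-edge-transitive}: a uni- or bicolored solid, a cycle, a star, or an $s$-subdivision of a solid from Parts~\ref{item:classification-edge-transitive-1} and~\ref{item:classification-edge-transitive-2}, of a cycle, or of $K_2$. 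Because $c$ has Type III, every component has at least three faces, i.e.\ $|E(G[c])| \ge |V(G[c])| + 1$ by Euler's formula. This discards stars ($1$ face), cycles ($2$ faces), and $1$-subdivisions of cycles ($2$ faces); it forces $\ell \ge 3$ in $K_{2,\ell} = K_2^{(\ell)}$, giving item~\ref{item:type-iii-classification-1}; and it keeps the solids, which are bicolored in item~\ref{item:type-iii-classification-3} and unicolored in item~\ref{item:type-iii-classification-4}. It then remains to decide which subdivisions of cycles and of the seven uniform solids survive; note that every such subdivision is bicolored, since its base vertices and its degree-$2$ subdivision vertices have different degrees inside $G[c]$, which $2$-WL detects.

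The core of the argument bounds the subdivision parameter. Since $G$ is $3$-connected, Whitney's theorem gives it a unique embedding, which $G[c]$ inherits; each degree-$2$ subdivision vertex $w$ thus has, by $3$-connectedness, a neighbor outside $G[c]$ lying in one of the two faces of $G[c]$ incident to $w$. Writing $D$ for the set of vertex colors occurring in $G[c]$, I would invoke the reachability relation of Observation~\ref{obs:wl-knows-paths-avoiding-colors}: whether two vertices are joined by a path whose internal vertices avoid $D$ is read off their $2$-WL color. As all subdivision vertices share one color, the number of \emph{other} subdivision vertices reachable from a given one by such $D$-avoiding paths is the same for all of them; moreover, $2$-WL recognizes which of these partners lie in the same \emph{bundle} of parallel paths (equivalently, share both their $c$-neighbors). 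A $2$-separator argument is needed to rule out that a bundle is sealed off: if all faces flanking a bundle carried no external vertices, then the two endpoints of the bundle would separate it (together with the vertices trapped in its lenses) from the rest of $G$, contradicting $3$-connectedness, so the relevant faces must carry external vertices.

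Comparing these reachability counts then finishes the classification. In the nested drawing of a bundle, two of its subdivision vertices are $D$-avoiding-reachable only when their paths are consecutive, so the same-bundle relation is a path $w_1 - \dots - w_s$. For a base graph with several bundles (cycles and solids), the two ends of this path lie on distinct large faces, so the relation is a genuine path whose interior vertices reach two same-bundle partners and whose ends reach only one; for $s \ge 3$ this contradicts uniformity and rules out these subdivisions. The single exception is $K_{2,\ell}$, whose one bundle has an outer face joining $w_1$ and $w_\ell$, closing the relation into a symmetric cycle (item~\ref{item:type-iii-classification-1}). For $s = 2$ the same-bundle relation is symmetric, so I would instead compare total reachability counts: each subdivision vertex reaches, through the subdivided base-face flanking it, the other subdivision vertices on that face, giving a count equal to the size of that base-face. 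When all base-faces are congruent this count agrees for both subdivision vertices of every bundle, which holds for cycles (item~\ref{item:type-iii-classification-2}) and the Platonic solids (item~\ref{item:type-iii-classification-6}, Figures~\ref{fig:graph-k4}--\ref{fig:graph-icosahedron}); but for the cuboctahedron and icosidodecahedron the two faces of an edge are a triangle and a square (resp.\ pentagon), yielding counts $3$ and $4$ (resp.\ $3$ and $5$) and a contradiction. The $1$-subdivisions of all seven uniform solids survive and give item~\ref{item:type-iii-classification-5}.

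I expect the main obstacle to be exactly this embedding step. Since $2$-WL cannot see face sizes directly, the intuitive ``interior versus boundary subdivision vertex'' and ``triangle versus square face'' distinctions must be expressed purely through the $D$-avoiding reachability relation, and $3$-connectedness has to be used both to guarantee that the relevant faces are non-empty and to ensure that, once non-empty, they connect all the boundary subdivision vertices so that the combinatorial counts are actually realized. Carrying out this bookkeeping uniformly across bundles — while correctly isolating the $K_{2,\ell}$ exception — is the delicate part, and it will likely reuse the forbidden-$K_{3,t}$-minor technique of Lemma~\ref{la:3-faces-connected} to convert the reachability mismatches into violations of planarity.
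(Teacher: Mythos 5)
Your overall route is the one the paper takes: connectivity from Lemma~\ref{la:3-faces-connected}, edge-transitivity from Corollary~\ref{cor:edge-orbits-for-edge-transitive}, the candidate list from Theorem~\ref{thm:classification-edge-transitive} and the remarks following it, and then planarity combined with the $D$-avoiding reachability relation of Observation~\ref{obs:wl-knows-paths-avoiding-colors} to bound the subdivision parameter and to eliminate the $2$-subdivisions of the cuboctahedron and icosidodecahedron. However, both of the hard steps have genuine gaps. For $s\ge 3$, you assert that the same-bundle reachability relation on the subdivision vertices $w_1,\dots,w_s$ of one bundle is the full path $w_1-\dots-w_s$, so that interior vertices have degree $2$ and ends have degree $1$. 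Only one direction of this is true: consecutive paths are the only candidates for a $D$-avoiding connection, but a lens between two consecutive parallel paths may be empty, or may contain vertices that do not join its two boundary subdivision vertices. The relation is then only a \emph{regular subgraph} of a path (hence empty, or a perfect matching for even $s$), and your degree comparison yields nothing. The paper uses a different statistic: by $3$-connectedness at least one $u_i$ admits a $D$-avoiding path to a vertex of $G[c]$ outside $\{u_1,\dots,u_s,v_1,v_2\}$, color-uniformity then forces every $u_i$ to admit one, and contracting the outside component produces a $K_{3,s}$-minor together with $v_1,v_2$, which contradicts Theorem~\ref{thm:wagner} for $s\ge 3$. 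Your counting framework can be repaired along these lines (track reachability to \emph{non-twins} rather than to twins), but as written the step does not close.

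The second gap is in the exclusion of the $2$-subdivided cuboctahedron and icosidodecahedron. Your argument reads off ``a count equal to the size of the flanking base-face,'' which presupposes that a subdivision vertex $u_1$ reaches, via $D$-avoiding paths, \emph{all} other subdivision (or branch) vertices on the subdivided face flanking it. That completeness is exactly what cannot be assumed: a face of $G[c]$ may contain several components of $G-V(G[c])$, each attached to only part of the boundary, and $3$-connectedness does not force the face to connect all its boundary vertices (contrary to what you suggest in your closing paragraph). The paper's proof is organized precisely around this obstruction: it splits on whether the third branch vertex $v_3$ of the hexagonal face lies in $R(u_1)$, and in each branch extracts a contradiction from a different $2$-WL-invariant — the cardinalities $|M(u_2,v_1)|\neq|M(u_2,v_2)|$ of reachable base-neighbors in one case, and in the other a $2$-regular auxiliary graph on the subdivision vertices that would have to contain both a $3$-cycle and a $4$-cycle inside a single vertex color class. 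Without some such case analysis your face-size comparison is not a valid $2$-WL-detectable quantity, so this step also remains open.
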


\begin{proof}
 By Lemma \ref{la:3-faces-connected}, the graph $G[c]$ is connected.
 Also, since $c$ has Type III, $G[c]$ has minimum degree at least $2$ and maximum degree at least $3$.
 If $G[c]$ has minimum degree at least $3$, then $G[c]$ is isomorphic to one of the graphs from Figure \ref{fig:transitive} by Theorem \ref{thm:classification-edge-transitive}.
 All these graphs are listed in the theorem (see Parts \ref{item:type-iii-classification-3} and \ref{item:type-iii-classification-4}).
 
 So suppose $G[c]$ has minimum degree $2$.
 Since $G[c]$ has maximum degree at least three, $G[c]$ is bicolored.
 By the comments below Theorem \ref{thm:classification-edge-transitive}, $G[c]$ is isomorphic to one of the following graphs:
 \begin{enumerate}[label=(\alph*)]
  \item\label{item:type-iii-bicolored-1} an $s$-subdivision of one of the graphs from Figure \ref{fig:graph-k4} -- \ref{fig:graph-icosidodecahedron} for some $s \geq 1$,
  \item\label{item:type-iii-bicolored-2} an $s$-subdivision of the cycle $C_\ell$ for some $\ell \geq 3$ and $s \geq 1$, or
  \item\label{item:type-iii-bicolored-3} an $s$-subdivision of the complete graph $K_2$ for some $s \geq 1$.
 \end{enumerate}
 
 First consider Option \ref{item:type-iii-bicolored-3}.
 In this case, $G[c]$ is isomorphic to $K_{2,s}$ for some $s \geq 1$.
 Also, since $G[c]$ has Type III, it holds that $s \geq 3$.
 This case is covered by Part \ref{item:type-iii-classification-1}.
 
 For the other two options, suppose $D \coloneqq \{\chi(v,v),\chi(w,w) \mid v,w \in V(G), \chi(v,w) = c\}$ is the set of vertex colors of $G[c]$.
 Recall that $|D| = 2$, i.e., $G[c]$ is bicolored.
 Suppose $D = \{d_1,d_2\}$, where $V_{d_1}$ contains all degree-$2$-vertices of $G[c]$ and $V_{d_2}$ contains the remaining vertices of $G[c]$ (which have degree $r \geq 3$).
 
 We first argue that $s \leq 2$ for Options \ref{item:type-iii-bicolored-1} and \ref{item:type-iii-bicolored-2}.
 For ease of notation, let $A \coloneqq V(G[c])$.
 Observe that $A = V_{d_1} \cup V_{d_2}$ holds.
 
 \begin{claim}
  \label{claim:no-3-subdivision}
  Suppose $G[c]$ is isomorphic to an $s$-subdivision of a graph $H$ such that $|V(H)| \geq 3$.
  Also suppose that $H$ as well as all $H - \{v_1,v_2\}$ for $v_1v_2 \in E(H)$ are connected.
  Then $s \leq 2$.
 \end{claim}
 \begin{claimproof}
  Let $u_1,\dots,u_s \in V_{d_1}$ be a set of $s$ distinct vertices and $v_1,v_2 \in V_{d_2}$ such that $N_{G[c]}(u_i) = N_{G[c]}(u_j) = \{v_1,v_2\}$ for all $i,j \in [s]$.
  Also, let $v_3 \in V_{d_2}$ be another vertex that is distinct from $v_1,v_2$ (which exists since $|V(H)| \geq 3$).
  Since $G$ is $3$-connected, there is a path $u_1 = w_0,\dots,w_\ell = v_3$ in $G$ that visits neither $v_1$ nor $v_2$.
  Now, consider the last occurrence $w_i$ of a vertex from $u_1,\dots,u_s$ on the path.
  Also, consider the first occurrence $w_j$, $j > i$, of a vertex from $A \setminus \{u_1,\dots,u_s\}$ on the path.
  Consider the subpath $P = w_i,\dots,w_j$.
  We have that $P$ avoids $D$ and it connects $w_i \in \{u_1,\dots,u_s\}$ to $w_j \in A \setminus \{u_1,\dots,u_s,v_1,v_2\}$.
  Using Observation \ref{obs:wl-knows-paths-avoiding-colors}, we conclude that such a path $P$ exists for all vertices $u_1,\dots,u_s$, i.e., for every $i \in [s]$, there is a path $P_i$ from $u_i$ to some vertex $x_i \in A \setminus \{u_1,\dots,u_s,v_1,v_2\}$ that avoids $D$.
  
  Now, we use these paths $P_i$ to construct a minor $K_{3,s}$ as follows.
  Since $H - \{v_1,v_2\}$ is connected for every $v_1v_2 \in E(H)$, it follows that $G[c] - \{u_1,\dots,u_s,v_1,v_2\}$ is connected.
  Let $X$ be the connected component of $G - \{u_1,\dots,u_s,v_1,v_2\}$ that contains $A \setminus \{u_1,\dots,u_s,v_1,v_2\}$.
  We contract $X$ to a single vertex.
  We have that $v_1,v_2 \in N_G(X)$ since $A \setminus \{u_1,\dots,u_s,v_1,v_2\} \subseteq X$ and $H$ is connected.
  Also $u_i \in N_G(X)$, since the path $P_i$ avoids $D$ and its endpoint $x_i$ is contained in $X$.
  In total, this gives a minor $K_{3,s}$ with vertices $v_1,v_2,X$ on the left side and vertices $u_1,\dots,u_s$ on the right side.
  Since $G$ is planar, it follows with Theorem \ref{thm:wagner} that $s \leq 2$.
 \end{claimproof}

 Now, consider Option \ref{item:type-iii-bicolored-2}, i.e., $G[c]$ is isomorphic to an $s$-subdivision of a cycle $C_\ell$ for $\ell \geq 3$ and $s \geq 1$.
 Then $s \leq 2$ by Claim \ref{claim:no-3-subdivision}.
 Also, $s \geq 2$ since $G[c]$ has Type III.
 So $s = 2$, which is covered by Part \ref{item:type-iii-classification-2}.
 
 So it remains to consider Option \ref{item:type-iii-bicolored-1}, i.e., $G[c]$ is isomorphic to an $s$-subdivision of one of the graphs from Figure \ref{fig:graph-k4} -- \ref{fig:graph-icosidodecahedron} for some $s \geq 1$.
 Again, $s \leq 2$ by Claim \ref{claim:no-3-subdivision} (observe that Claim \ref{claim:no-3-subdivision} is applicable since the graphs from Figure \ref{fig:graph-k4} -- \ref{fig:graph-icosidodecahedron} are all $3$-connected).
 The case $s = 1$ is covered by Part \ref{item:type-iii-classification-5}.
 Also, for the graphs from Figure \ref{fig:graph-k4} -- \ref{fig:graph-icosahedron}, the case $s = 2$ is covered by Part \ref{item:type-iii-classification-6}.
 Hence, it remains to prove that $G[c]$ is not isomorphic to a $2$-subdivision of a cuboctahedron or an icosidodecahedron.
 
 Suppose towards a contradiction that $G[c]$ is isomorphic to a $2$-subdivision of a cuboctahedron.
 Let $H$ denote the cuboctahedron with vertex set $V_{d_2}$ (i.e., $G[c]$ is isomorphic to the $2$-subdivision of $H$).
 As before, let $u_1,u_2 \in V_{d_1}$ be distinct vertices and $v_1,v_2 \in V_{d_2}$ such that $N_{G[c]}(u_1) = N_{G[c]}(u_2) = \{v_1,v_2\}$.
 Let us fix a planar embedding of $G$ and consider the induced embedding of $G[c]$.
 Let $F,F_1,F_2$ denote the three faces so that $u_1$ and $u_2$ are incident to $F$, $u_1$ is incident to $F_1$ but not $F_2$, and $u_2$ is incident to $F_2$ but not $F_1$.
 We have that $|F| = 4$ and $\{|F_1|,|F_2|\} = \{6,8\}$.
 Without loss of generality, suppose that $|F_1| = 6$.
 A visualization is given in Figure \ref{fig:type-iii-classification}.
 
 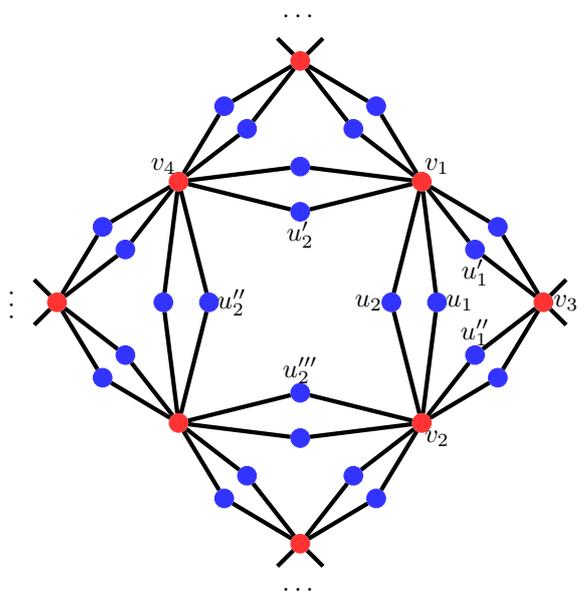
\begin{figure}
  \centering
  \begin{tikzpicture}
   \node[vertex,red!80] (1) at (1.6,1.6) {};
   \node[vertex,red!80] (2) at (-1.6,1.6) {};
   \node[vertex,red!80] (3) at (-1.6,-1.6) {};
   \node[vertex,red!80] (4) at (1.6,-1.6) {};
   
   \node[vertex,red!80] (5) at (3.2,0) {};
   \node[vertex,red!80] (6) at (0,3.2) {};
   \node[vertex,red!80] (7) at (-3.2,0) {};
   \node[vertex,red!80] (8) at (0,-3.2) {};
   
   \node[vertex,blue!80] (u1-1) at (1.2,0) {};
   \node[vertex,blue!80] (u2-1) at (0,1.2) {};
   \node[vertex,blue!80] (u3-1) at (-1.2,0) {};
   \node[vertex,blue!80] (u4-1) at (0,-1.2) {};
   \node[vertex,blue!80] (u1-2) at (1.8,0) {};
   \node[vertex,blue!80] (u2-2) at (0,1.8) {};
   \node[vertex,blue!80] (u3-2) at (-1.8,0) {};
   \node[vertex,blue!80] (u4-2) at (0,-1.8) {};
   
   \node[vertex,blue!80] (u5-1) at (2.3,0.7) {};
   \node[vertex,blue!80] (u6-1) at (0.7,2.3) {};
   \node[vertex,blue!80] (u7-1) at (-2.3,0.7) {};
   \node[vertex,blue!80] (u8-1) at (-0.7,2.3) {};
   \node[vertex,blue!80] (u9-1) at (-2.3,-0.7) {};
   \node[vertex,blue!80] (u10-1) at (-0.7,-2.3) {};
   \node[vertex,blue!80] (u11-1) at (2.3,-0.7) {};
   \node[vertex,blue!80] (u12-1) at (0.7,-2.3) {};
   
   \node[vertex,blue!80] (u5-2) at (2.6,1.0) {};
   \node[vertex,blue!80] (u6-2) at (1.0,2.6) {};
   \node[vertex,blue!80] (u7-2) at (-2.6,1.0) {};
   \node[vertex,blue!80] (u8-2) at (-1.0,2.6) {};
   \node[vertex,blue!80] (u9-2) at (-2.6,-1.0) {};
   \node[vertex,blue!80] (u10-2) at (-1.0,-2.6) {};
   \node[vertex,blue!80] (u11-2) at (2.6,-1.0) {};
   \node[vertex,blue!80] (u12-2) at (1.0,-2.6) {};
   
   \foreach \i/\j in {1/1,1/2,2/2,2/3,3/3,3/4,4/4,4/1,5/5,1/5,1/6,6/6,6/8,2/8,2/7,7/7,7/9,3/9,3/10,8/10,8/12,4/12,4/11,5/11}{
    \foreach \c in {1,2}{
     \draw[line width=1.6pt] (\i) edge (u\j-\c);
    }
   }
   
   \draw[line width=1.6pt] (5) edge (3.5,0.3);
   \draw[line width=1.6pt] (5) edge (3.5,-0.3);
   \node[rotate=90] at (3.8,0) {$\dots$};
   \draw[line width=1.6pt] (6) edge (0.3,3.5);
   \draw[line width=1.6pt] (6) edge (-0.3,3.5);
   \node at (0,3.8) {$\dots$};
   \draw[line width=1.6pt] (7) edge (-3.5,0.3);
   \draw[line width=1.6pt] (7) edge (-3.5,-0.3);
   \node[rotate=90] at (-3.8,0) {$\dots$};
   \draw[line width=1.6pt] (8) edge (0.3,-3.5);
   \draw[line width=1.6pt] (8) edge (-0.3,-3.5);
   \node at (0,-3.8) {$\dots$};
    
   \node at (1.8,1.8) {$v_1$};
   \node at (1.8,-1.8) {$v_2$};
   \node at (2.1,0) {$u_1$};
   \node at (0.9,0) {$u_2$};
   
   \node at (3.5,0) {$v_3$};
   \node at (-1.8,1.8) {$v_4$};
   
   \node at (2.3,0.4) {$u_1'$};
   \node at (2.3,-0.4) {$u_1''$};
   
   \node at (0,0.9) {$u_2'$};
   \node at (-0.9,0) {$u_2''$};
   \node at (0,-0.9) {$u_2'''$};
  \end{tikzpicture}
  \caption{Visualization for the proof of Theorem \ref{thm:type-iii-classification}. The vertices from $V_{d_1}$ are colored {\sf blue} and the vertices from $V_{d_2}$ are colored {\sf red}. Also, edges of the graph $G[c]$ are {\sf black}.}
  \label{fig:type-iii-classification}
 \end{figure}
 
 As in the proof of Claim \ref{claim:no-3-subdivision}, there is a path $P$ from $u_1$ to some vertex $x_1 \in A \setminus \{u_1,u_2,v_1,v_2\}$ such that $P$ avoids $D$.
 For $u \in A$, let
 \[R(u) \coloneqq \{v \in A \mid \text{there is a path from } u \text{ to } v \text{ that avoids } D\}.\]
 Observe that $R(u_1) \subseteq V(F) \cup V(F_1)$ by planarity (recall that $V(F)$ denotes the set of all vertices incident to the face $F$).
 Similarly, $R(u_2) \subseteq V(F) \cup V(F_2)$.
 
 Now, we distinguish several cases depending on which elements are contained in $R(u_1)$.
 Clearly, $v_1,v_2 \in R(u_1)$ (via trivial paths of length $1$).
 We have that $|V(F_1) \cap V_{d_2}| = 3$ and $v_1,v_2 \in V(F_1) \cap V_{d_2}$.
 Let $v_3$ denote the third vertex from $V(F_1) \cap V_{d_2}$.
 First suppose that $v_3 \in R(u_1)$.
 Since $R(u_1) \subseteq V(F) \cup V(F_1)$ and $(V(F) \cup V(F_1)) \cap V_{d_2} = \{v_1,v_2,v_3\}$, it follows that $|R(u_1) \cap V_{d_2}| = 3$.
 Since $\WL{2}{G}(u_1,u_1) = \WL{2}{G}(u_2,u_2)$ and using Observation \ref{obs:wl-knows-paths-avoiding-colors}, we get that $|R(u_2) \cap V_{d_2}| = 3$.
 Clearly $v_1,v_2 \in R(u_2)$.
 Let $v_4 \in R(u_2) \cap V_{d_2}$ denote the third vertex in the set.
 We have that $v_4 \in V(F_2)$ and either $v_1v_4 \in E(H)$ or $v_2v_4 \in E(H)$.
 Without loss of generality, suppose that $v_1v_4 \in E(H)$.
 Now, we claim that $\WL{2}{G}(u_2,v_1) \neq \WL{2}{G}(u_2,v_2)$.
 Towards this end, for all $u \in V_{d_1}$ and $v \in V_{d_2}$ such that $uv \in E(G[c])$, consider the set
 \[M(u,v) \coloneqq \{w \in V_{d_2} \mid vw \in E(H) \text{ and there is a path from } u \text{ to } w \text{ that avoids } D\}.\]
 We have that $M(u_2,v_1) = \{v_2,v_4\}$ and $M(u_2,v_2) = \{v_1\}$.
 So $|M(u_2,v_1)| \neq |M(u_2,v_2)|$.
 Using Observation \ref{obs:wl-knows-paths-avoiding-colors}, it follows that $\WL{2}{G}(u_2,v_1) \neq \WL{2}{G}(u_2,v_2)$ since $2$-WL ``knows the cardinality of $M(u,v)$''.
 But this is a contradiction since $\WL{2}{G}(u_2,v_1) = c =  \WL{2}{G}(u_2,v_2)$.
 
 So $v_3 \notin R(u_1)$.
 Observe that $|V(F_1) \cap V_{d_1}| = 3$ and $u_1 \in V(F_1) \cap V_{d_1}$.
 Let $u_1',u_1''$ denote the other two vertices contained in $V(F_1) \cap V_{d_1}$ so that $v_1u_1',v_2u_1'' \in E(G[c])$.
 Since $R(u_1) \setminus \{u_1,u_2,v_1,v_2\} \neq \emptyset$, $R(u_1) \subseteq V(F) \cup V(F_1)$ and $V(F) = \{u_1,u_2,v_1,v_2\}$, we conclude that $u_1' \in R(u_1)$ or $u_1'' \in R(u_1)$.
 Without loss of generality, suppose that $u_1' \in R(u_1)$.
 Using similar arguments as above, it follows that $u_1'' \in R(u_1)$ since otherwise $\WL{2}{G}(u_1,v_1) \neq \WL{2}{G}(u_1,v_2)$.
 
 Now, we say that $u,u' \in V_{d_1}$ are \emph{$c$-twins} if $N_{G[c]}(u) = N_{G[c]}(u')$.
 Observe that $u_1$ and $u_2$ are $c$-twins.
 Consider the auxiliary graph $F$ with $V(F) \coloneqq V_{d_1}$ and
 \begin{align*}
  E(F) \coloneqq \big\{uu' \;\big|\; &u \text{ and } u' \text{ are not } c \text{-twins and } \dist_{G[c]}(u,u') = 2 \text{ and}\\
                                     &\text{there is a path from } u \text{ to } u' \text{ that avoids } D \big\}
 \end{align*}
 By the comments above, we have that $N_F(u_1) = \{u_1',u_1''\}$.
 Also, using Observation \ref{obs:wl-knows-paths-avoiding-colors}, we conclude that $\WL{2}{G}|_{(V(F))^2} \preceq \WL{2}{F}$ and hence, $F$ is a regular graph.
 It follows that $F$ is $2$-regular.
 Let $\{u_2,u_2',u_2'',u_2'''\} = V(F_2) \cap V_{d_1}$.
 Now, $F[\{u_2,u_2',u_2'',u_2'''\}]$ has to be a $4$-cycle, but $F[\{u_1,u_1',u_1''\}]$ has to be a $3$-cycle.
 But this means that $\WL{2}{F}(u_1,u_1) \neq \WL{2}{F}(u_2,u_2)$.
 It follows that $\WL{2}{G}(u_1,u_1) \neq \WL{2}{G}(u_2,u_2)$ which is a contradiction.
 
 So $G[c]$ is not isomorphic to a $2$-subdivision of a cuboctahedron.
 The same argument can be used to show that $G[c]$ is also not isomorphic to a $2$-subdivision of an icosidodecahedron.
\end{proof}

Observe that the classification provided in Theorem \ref{thm:type-iii-classification} is optimal in the sense that every graph listed in the theorem can actually appear as a graph $G[c]$ for some edge color $c$ within a $3$-connected planar graph.
An easy way to see this is to take one of the graphs listed in the theorem, embed this graph $H$ in the plane, place a fresh vertex $v_F$ into every face $F$ and connect it to all vertices on the boundary of $F$. 
The resulting graph $G$ is $3$-connected and planar, and $H = G[c]$ for some edge color $c \in C_E(G,\WL{2}{G})$.

Also note that for a $3$-connected planar graph $G$ and an edge color $c \in C_E(G,\WL{2}{G})$ of Type III, the automorphism group $\Aut(G)$ is isomorphic to a subgroup of $\Aut(G[c])$.
Indeed, every automorphism $\gamma \in \Aut(G)$ naturally restricts to an automorphism $\gamma|_{V(G[c])}$ of $G[c]$, since the coloring computed by $2$-WL is invariant.
This gives rise to a homomorphism $\varphi\colon \Aut(G) \rightarrow \Aut(G[c])\colon \gamma \mapsto \gamma|_{V(G[c])}$.
By Theorem \ref{thm:type-iii-classification} and Lemma \ref{la:fixing-cycle-and-extra-vertex}, we obtain that $\Disc_G(w_1,\dots,w_\ell) = V(G)$ where $\{w_1,\dots,w_\ell\} = V(G[c])$.
This implies that the kernel of $\varphi$ is trivial, which implies that $\Aut(G)$ is isomorphic to a subgroup of $\Aut(G[c])$.

\section{Disjoint Unions of Cycles}
\label{sec:cycles}

\begin{figure}
 \centering
 \begin{subfigure}[b]{.48\linewidth}
  \centering
  \scalebox{\figscalesmall}{
  \begin{tikzpicture}
   
   \node[vertex,red!80] (1) at ($(0,0)+(330:0.6)$) {};
   \node[vertex,red!80] (2) at ($(0,0)+(90:0.6)$) {};
   \node[vertex,red!80] (3) at ($(0,0)+(210:0.6)$) {};
   
   \node[vertex,blue!80] (4) at ($(0,0)+(30:1.1)$) {};
   \node[vertex,blue!80] (5) at ($(0,0)+(150:1.1)$) {};
   \node[vertex,blue!80] (6) at ($(0,0)+(270:1.1)$) {};
   
   \node[vertex,springgreen] (7) at ($(0,0)+(315:1.5)$) {};
   \node[vertex,springgreen] (8) at ($(0,0)+(345:1.5)$) {};
   \node[vertex,springgreen] (9) at ($(0,0)+(75:1.5)$) {};
   \node[vertex,springgreen] (10) at ($(0,0)+(105:1.5)$) {};
   \node[vertex,springgreen] (11) at ($(0,0)+(195:1.5)$) {};
   \node[vertex,springgreen] (12) at ($(0,0)+(225:1.5)$) {};
   
   \node[vertex,springgreen] (13) at ($(0,0)+(30:3.0)$) {};
   \node[vertex,springgreen] (14) at ($(0,0)+(150:3.0)$) {};
   \node[vertex,springgreen] (15) at ($(0,0)+(270:3.0)$) {};
   
   \node[vertex,red!80] (16) at ($(0,0)+(330:2.5)$) {};
   \node[vertex,red!80] (17) at ($(0,0)+(90:2.5)$) {};
   \node[vertex,red!80] (18) at ($(0,0)+(210:2.5)$) {};
   
   \node[vertex,blue!80] (19) at ($(0,0)+(305:2.4)$) {};
   \node[vertex,blue!80] (20) at ($(0,0)+(355:2.4)$) {};
   \node[vertex,blue!80] (21) at ($(0,0)+(65:2.4)$) {};
   \node[vertex,blue!80] (22) at ($(0,0)+(115:2.4)$) {};
   \node[vertex,blue!80] (23) at ($(0,0)+(185:2.4)$) {};
   \node[vertex,blue!80] (24) at ($(0,0)+(235:2.4)$) {};
   
   \node[vertex,red!80] (25) at ($(0,0)+(315:3.2)$) {};
   \node[vertex,red!80] (26) at ($(0,0)+(345:3.2)$) {};
   \node[vertex,red!80] (27) at ($(0,0)+(75:3.2)$) {};
   \node[vertex,red!80] (28) at ($(0,0)+(105:3.2)$) {};
   \node[vertex,red!80] (29) at ($(0,0)+(195:3.2)$) {};
   \node[vertex,red!80] (30) at ($(0,0)+(225:3.2)$) {};
   
   \node[vertex,blue!80] (31) at ($(0,0)+(330:4.0)$) {};
   \node[vertex,springgreen] (32) at ($(0,0)+(30:4.0)$) {};
   \node[vertex,blue!80] (33) at ($(0,0)+(90:4.0)$) {};
   \node[vertex,springgreen] (34) at ($(0,0)+(150:4.0)$) {};
   \node[vertex,blue!80] (35) at ($(0,0)+(210:4.0)$) {};
   \node[vertex,springgreen] (36) at ($(0,0)+(270:4.0)$) {};
   
   \foreach \i/\j in {1/7,1/8,2/9,2/10,3/11,3/12,7/16,8/16,9/17,10/17,11/18,12/18,15/25,25/36,13/26,26/32,13/27,27/32,14/28,28/34,14/29,29/34,15/30,30/36}{
    \draw[line width=1.6pt] (\i) edge (\j);
   }
   \foreach \i/\j in {1/2,1/3,2/3,16/25,16/26,25/26,17/27,17/28,27/28,18/29,18/30,29/30}{
    \draw[line width=1.6pt,darkpastelgreen] (\i) edge (\j);
   }
   \foreach \i/\j in {1/4,1/6,2/4,2/5,3/5,3/6,16/19,16/20,17/21,17/22,18/23,18/24,19/25,20/26,21/27,22/28,23/29,24/30,25/31,26/31,27/33,28/33,29/35,30/35}{
    \draw[line width=1.6pt,lipicsYellow] (\i) edge (\j);
   }
   \foreach \i/\j in {4/8,4/9,5/10,5/11,6/7,6/12,7/19,8/20,9/21,10/22,11/23,12/24,13/20,13/21,14/22,14/23,15/19,15/24,31/32,31/36,32/33,33/34,34/35,35/36}{
    \draw[line width=1.6pt,magenta] (\i) edge (\j);
   }

  \end{tikzpicture}
  }
  \caption{A graph $G$ of Type IIa where each edge color has Type IIa.
   The {\sf black} and {\sf green} edges induce a connected subgraph that is isomorphic to a parallel subdivision of a truncated tetrahedron.}
  \label{fig:type-iia-example}
 \end{subfigure}
 \hfill
 \begin{subfigure}[b]{.48\linewidth}
  \centering
  \scalebox{\figscalesmall}{
  \begin{tikzpicture}
   \foreach \i in {0,...,7}{
    \node[vertex,springgreen] (a\i) at ($(0,0)+(22.5+\i*45:0.8)$) {};
   }
   \foreach \i in {0,...,7}{
    \node[vertex,blue!80] (b\i) at ($(0,0)+(\i*45:1.6)$) {};
   }
   \foreach \i in {0,...,7}{
    \node[vertex,red!80] (c\i) at ($(0,0)+(\i*45:2.4)$) {};
   }
   \foreach \i in {0,...,7}{
    \node[vertex,blue!80] (d\i) at ($(0,0)+(\i*45:3.2)$) {};
   }
   \foreach \i in {0,...,7}{
    \node[vertex,springgreen] (e\i) at ($(0,0)+(22.5+\i*45:4.0)$) {};
   }
   
   \foreach \i/\j in {0/1,1/2,2/3,3/4,4/5,5/6,6/7,7/0}{
    \draw[line width=1.6pt] (c\i) edge (c\j);
   }
   \foreach \i in {0,...,7}{
    \draw[line width=1.6pt,darkpastelgreen] (b\i) edge (c\i);
    \draw[line width=1.6pt,darkpastelgreen] (c\i) edge (d\i);
   }
   \foreach \i/\j in {0/1,1/2,2/3,3/4,4/5,5/6,6/7,7/0}{
    \draw[line width=1.6pt,lipicsYellow] (b\i) edge (b\j);
    \draw[line width=1.6pt,lipicsYellow] (d\i) edge (d\j);
   }
   \foreach \i/\j in {0/1,1/2,2/3,3/4,4/5,5/6,6/7,7/0}{
    \draw[line width=1.6pt,magenta] (a\i) edge (a\j);
    \draw[line width=1.6pt,magenta] (e\i) edge (e\j);
   }
   \foreach \i/\j in {0/0,0/1,1/1,1/2,2/2,2/3,3/3,3/4,4/4,4/5,5/5,5/6,6/6,6/7,7/7,7/0}{
    \draw[line width=1.6pt,violet!80] (a\i) edge (b\j);
    \draw[line width=1.6pt,violet!80] (e\i) edge (d\j);
   }
   
  \end{tikzpicture}
  }
  \caption{A graph $G$ of Type IIb.
   The color {\sf black} has Type IIb, {\sf yellow}, {\sf violet} and {\sf pink} have Type IIa, and {\sf green} has Type I.
   Note that $\Aut(G)$ is isomorphic to $(\Aut(G[{\sf black}])) \times \ZZ_2$ because, after individualizing all {\sf red} vertices, we can only swap the ``interior'' and the ``exterior'' region of the {\sf black} cycle.}
  \label{fig:type-iib-example}
 \end{subfigure}
 \caption{Two $3$-connected planar graphs of Type II. All vertices and edges are colored by their $2$-WL colors. For visualization purposes, we only color edges and do not distinguish between potentially different colors of two arcs $(v,w)$ and $(w,v)$.}
 \label{fig:type-ii-examples}
\end{figure}
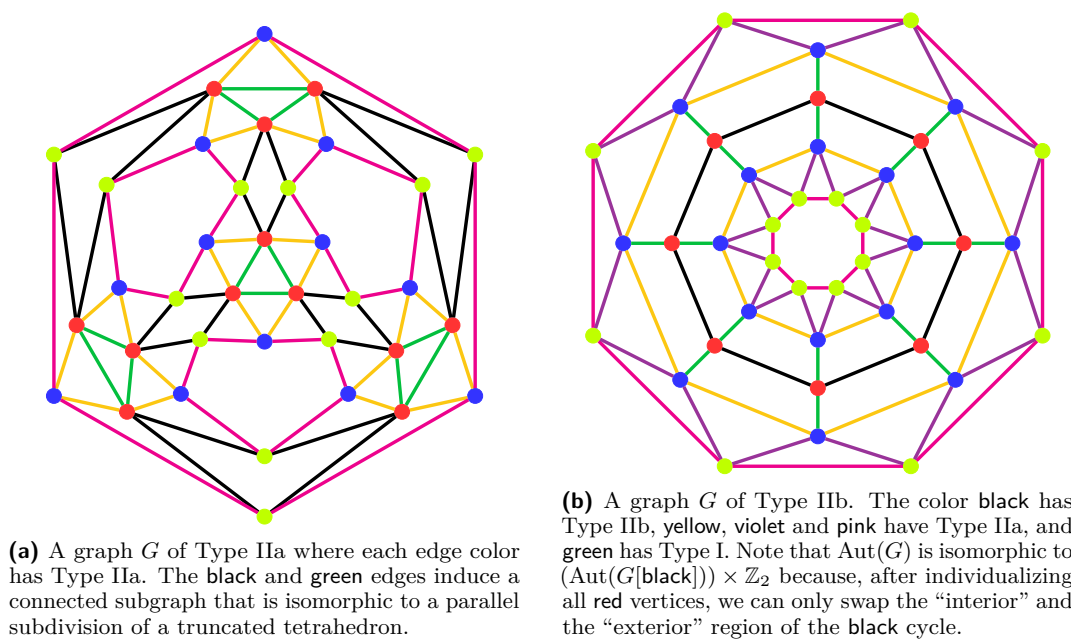

In this section, we consider $3$-connected planar graphs of Type II.
Let $G$ be a $3$-connected planar graph and let $\chi \coloneqq \WL{2}{G}$ be the coloring computed by $2$-WL.
Suppose that $G$ has Type II, i.e., there is an edge color $c \in C_E(G,\chi)$ of Type II, but there is no edge color of Type III.
As before, we wish to understand in which ways edge colors of Type II can occur in $G$.
More precisely, similarly to the case where $G$ has Type III, our goal is to identify and classify connected subgraphs defined by few edge colors. 
Towards this end, we define three subcategories of edge colors of Type II.
Let $c \in C_E(G,\chi)$ be of Type II.
If $G[c]$ is unicolored and $\{(v,w) \mid \chi(v,w) = c\} \neq \{(v,w) \mid \chi(w,v) = c\}$ (i.e., $G[c]$ is a disjoint union of directed cycles), then we say that $c$ has Type IIc.
If $c$ does not have Type IIc, but $G[c]$ is connected, then we say that $c$ has Type IIb.
If $c$ does not have Type IIc, and $G[c]$ is not connected, then we say that $c$ has Type IIa.

Let us remark that the main point of the subtypes is to distinguish between edge colors $c$ of Type II that induce non-connected subgraphs (Type IIa) and those that induce connected subgraphs (Type IIb).
The reason why we additionally single out the directed cycles (Type IIc) is that the existence of an edge color of Type IIc almost always (i.e., with the exception of one graph family) implies that the graph has fixing number $1$, because individualizing a single vertex in a directed cycle fixes all other vertices on the cycle as well.
In particular, we can show that every graph that contains an edge color of Type IIc is identified by $2$-WL.

We also point out that the existence of an edge color of Type IIb immediately puts severe restrictions on the structure of $G$.
For example, if $c \in C_E(G,\chi)$ has Type IIb, it can be shown that $\Aut(G)$ is isomorphic to a subgroup of $(\Aut(G[c])) \times \ZZ_2$ because, after individualizing all vertices of $G[c]$, an automorphism of $G$ can only swap the ``interior'' and the ``exterior'' region of the cycle (see also Figure \ref{fig:type-iib-example}).

Recall that the type of $G$ is defined as the maximal type of any of its edge colors.
We extend this definition to subtypes in the natural way.
For example, $G$ has Type IIb if there is an edge color $c \in C_E(G,\chi)$ of Type IIb, but no edge color of Type III or IIc.
Two example graphs of Type II are given in Figure \ref{fig:type-ii-examples}.

\subsection{Directed Cycles}

We start by analyzing graphs that contain an edge color $c$ of Type IIc.
As indicated above, this is a particularly well-behaved case because we can precisely classify those graphs that do not have fixing number $1$.
The \emph{bipyramid (of order $m \geq 3$)} is the graph $P_m^*$ with vertex set $V(P_m^*) \coloneqq \{u_1,u_2\} \cup \{v_i \mid i \in [m]\}$ and edge set $E(P_m^*) = \{u_iv_j \mid i \in [2], j \in [m]\} \cup \{v_iv_{i+1} \mid i \in [\ell-1]\} \cup \{v_1v_\ell\}$.

\begin{lemma}
 \label{la:directed-cycle}
 Let $G$ be a $3$-connected planar graph.
 Also let $c \in C_E(G,\WL{2}{G})$ be an edge color of Type IIc.
 Then there is a vertex $v \in V(G)$ such that $\Disc_G(v) = V(G)$, or $G$ is isomorphic to a bipyramid.
\end{lemma}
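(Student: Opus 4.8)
The plan is to exploit the directed structure of the cycles in $G[c]$ to fix an entire cycle by individualizing a single vertex, and then to invoke Lemma~\ref{la:fixing-cycle-and-extra-vertex} to propagate this to the whole graph. Since $c$ has Type~IIc, the graph $G[c]$ is a disjoint union of directed cycles of some common length $\ell$, meaning that every $v \in V(G[c])$ has a \emph{unique} out-neighbor $s(v)$ with $\chi(v,s(v)) = c$ (and a unique in-neighbor), where $\chi \coloneqq \WL{2}{G}$. First I would pick an arbitrary $v_0 \in V(G[c])$ and show that individualizing $v_0$ already fixes the whole directed cycle $C$ through $v_0$: once $v_0$ lies in a singleton class, $s(v_0)$ is the unique vertex joined to $v_0$ by a $c$-colored arc and is therefore isolated by $1$-WL; iterating the successor relation around the cycle shows $V(C) \subseteq \Disc_G(v_0)$. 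By the invariance of $2$-WL, whether $\Disc_G(v_0) = V(G)$ holds does not depend on the particular choice of $v_0 \in V(G[c])$.

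Since $C$ is a cycle of $G$ lying entirely in $\Disc_G(v_0)$, Lemma~\ref{la:fixing-cycle-and-extra-vertex} reduces the task to finding a \emph{single} additional fixed vertex outside $C$: if some $w \in V(G) \setminus V(C)$ lies in $\Disc_G(v_0)$, then $\Disc_G(v_0) = V(G)$ and we are done. So assume towards a contradiction that $\Disc_G(v_0) = V(C)$, and set $A \coloneqq V(G) \setminus V(C)$, which is nonempty since $G$ is $3$-connected and hence $G \neq C$. The crucial observation is that, because the vertices of $C$ receive pairwise distinct colors, any $w \in A$ adjacent to $C$ is distinguished from every other vertex of $A$ as soon as its neighborhood $N(w) \cap V(C)$ differs, as a set, from that of all other vertices. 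Hence the assumption forces the vertices of $A$ to come in groups sharing identical neighborhoods on $C$; in particular, at least two distinct vertices of $A$ have the same neighborhood $S \subseteq V(C)$, and $3$-connectedness forces $|S| \geq 3$.

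I would then bring in planarity. Fixing a plane embedding, the cycle $C$ separates the plane into an interior and an exterior region, and every vertex of $A$ lies in exactly one of them. Two vertices of $A$ with a common neighborhood $S \subseteq V(C)$ of size at least $3$ cannot lie on the same side of $C$: a vertex attached to three vertices of $C$ from inside splits the interior disk into three regions, and a second such vertex then cannot reach all three vertices of $S$ without a crossing, contradicting planarity. If, on the other hand, the two vertices lie on opposite sides, then any third vertex attached to $C$, or either of them having a further neighbor in $A$ on its own side, produces a $K_{3,3}$- or $K_5$-minor together with the arcs of $C$, contradicting Theorem~\ref{thm:wagner}. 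A careful enumeration of the remaining configurations, using that $C$ is fixed and $G$ is $3$-connected, leaves only the possibility $A = \{u_1,u_2\}$ with $u_1$ interior and $u_2$ exterior, each adjacent to \emph{all} of $V(C)$; this is exactly the bipyramid $P_\ell^*$. The case where $G[c]$ consists of several directed cycles I would dispose of separately, arguing via Lemma~\ref{la:path-between-color-components} that after fixing one cycle the refinement reaches and fixes a vertex on a neighboring cycle, so that the exceptional bipyramid can arise only when $G[c]$ is a single cycle.

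The main obstacle is the structural case analysis of the third paragraph: turning the purely combinatorial constraint that no vertex of $A$ is fixed into a rigid geometric picture. Ruling out every non-bipyramid attachment pattern of $A$ to the fixed cycle $C$ — in particular controlling vertices of $A$ with few neighbors on $C$ and the recursive refinement \emph{among} the vertices of $A$ — is where planarity (via Wagner's theorem) and $3$-connectedness must be combined most delicately, and this is the step I expect to require the most care.
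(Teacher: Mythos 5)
Your opening moves match the paper's: a Type IIc color gives each vertex of $G[c]$ a unique $c$-successor, so individualizing one vertex fixes its whole cycle under $1$-WL, and Lemma~\ref{la:fixing-cycle-and-extra-vertex} reduces everything to producing one further fixed vertex off the cycle. The gap is in how you propose to produce it (or conclude bipyramid), and it is not merely a matter of care --- two of your claims do not hold as stated. In the single-cycle case you reason only about vertices of $A = V(G) \setminus V(C)$ that are adjacent to $C$, and you assert that $3$-connectedness forces two unfixed such vertices to share a common $C$-neighborhood $S$ with $|S| \ge 3$. Neither step is sound: an individual vertex of $A$ may have zero, one, or two neighbors on $C$ and arbitrarily many neighbors inside $A$ ($3$-connectedness constrains $N_G(K)$ for a whole component $K$ of $G - V(C)$, not the $C$-trace of a single vertex), and whether a vertex of $A$ gets fixed depends on the recursive refinement \emph{within} $A$, which your planarity picture never touches. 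Your subsequent ``careful enumeration'' presupposes that every surviving vertex of $A$ is adjacent to essentially all of $V(C)$, which is the conclusion one must derive, not the starting point. The paper derives it via the components $K_1,\dots,K_p$ of $G - V(C)$: when the sets $N_G(K_i)$ are pairwise distinct it builds a $3$-connected planar apex graph on $K_i \cup V(C) \cup \{z\}$ and invokes Theorem~\ref{thm:splitting-with-tutte} to make all of $K_i$ discrete (this is the device that controls the refinement inside $A$); when two components share their neighborhood, a segment argument based on the $2$-stability of $\chi$ forces $N_G(K_i) = N_G(K_j) = V(C)$ and $p=2$, and an iterative segment argument shows every vertex of $K_1 \cup K_2$ is adjacent to all of $V(C)$ --- since otherwise some vertex whose neighborhood is a proper segment exists and is fixed by Lemma~\ref{la:fixing-cycle-and-extra-vertex}. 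Only then does the $K_{3,3}$-minor bound give $|K_1 \cup K_2| \le 2$ and the bipyramid.

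The case of several directed cycles cannot be dispatched by asserting that ``the refinement reaches and fixes a vertex on a neighboring cycle'': fixing $A_1$ pointwise does not by itself fix anything outside $A_1$, and Lemma~\ref{la:path-between-color-components} alone does not supply a fixed vertex. The paper's argument here is again substantive: it partitions $V(G)$ into the set $M$ of vertices reaching at least two components of $G[c]$ via paths avoiding the cycle color and the sets $B_i$ attached to a single component, shows $M$ lies in the exterior of every cycle, and then either applies Theorem~\ref{thm:splitting-with-tutte} directly (when $M = V(G)$, three cycle vertices share a face) or constructs a $3$-connected planar auxiliary graph on $A_i \cup B_i \cup \{z\}$ to make some $B_i$ discrete. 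Some concrete mechanism of this kind for pushing discreteness off the fixed cycle is the missing core of your proposal in both cases.
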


\begin{figure}
 \centering
  \scalebox{\figscalesmall}{
  \begin{tikzpicture}
   \node[vertex,red!80] (1) at ($(0,0)+(0:1.0)$) {};
   \node[vertex,red!80] (2) at ($(0,0)+(90:1.0)$) {};
   \node[vertex,red!80] (3) at ($(0,0)+(180:1.0)$) {};
   \node[vertex,red!80] (4) at ($(0,0)+(270:1.0)$) {};
   
   \node[vertex,red!80] (5) at ($(0,0)+(45:2.0)$) {};
   \node[vertex,red!80] (6) at ($(0,0)+(135:2.0)$) {};
   \node[vertex,red!80] (7) at ($(0,0)+(225:2.0)$) {};
   \node[vertex,red!80] (8) at ($(0,0)+(315:2.0)$) {};
   
   \node[vertex,red!80] (9) at ($(0,0)+(0:2.4)$) {};
   \node[vertex,red!80] (10) at ($(0,0)+(90:2.4)$) {};
   \node[vertex,red!80] (11) at ($(0,0)+(180:2.4)$) {};
   \node[vertex,red!80] (12) at ($(0,0)+(270:2.4)$) {};
   
   \node[vertex,red!80] (13) at ($(0,0)+(45:3.2)$) {};
   \node[vertex,red!80] (14) at ($(0,0)+(135:3.2)$) {};
   \node[vertex,red!80] (15) at ($(0,0)+(225:3.2)$) {};
   \node[vertex,red!80] (16) at ($(0,0)+(315:3.2)$) {};
   
   \node[vertex,red!80] (17) at ($(0,0)+(0:3.6)$) {};
   \node[vertex,red!80] (18) at ($(0,0)+(90:3.6)$) {};
   \node[vertex,red!80] (19) at ($(0,0)+(180:3.6)$) {};
   \node[vertex,red!80] (20) at ($(0,0)+(270:3.6)$) {};
   
   \node[vertex,red!80] (21) at ($(0,0)+(45:6.0)$) {};
   \node[vertex,red!80] (22) at ($(0,0)+(135:6.0)$) {};
   \node[vertex,red!80] (23) at ($(0,0)+(225:6.0)$) {};
   \node[vertex,red!80] (24) at ($(0,0)+(315:6.0)$) {};
   
   \node[vertex,blue!80] (v1) at ($(0,0)+(17:1.6)$) {};
   \node[vertex,blue!80] (v2) at ($(0,0)+(107:1.6)$) {};
   \node[vertex,blue!80] (v3) at ($(0,0)+(197:1.6)$) {};
   \node[vertex,blue!80] (v4) at ($(0,0)+(287:1.6)$) {};
   
   \node[vertex,blue!80] (v5) at ($(0,0)+(35:3.9)$) {};
   \node[vertex,blue!80] (v6) at ($(0,0)+(125:3.9)$) {};
   \node[vertex,blue!80] (v7) at ($(0,0)+(215:3.9)$) {};
   \node[vertex,blue!80] (v8) at ($(0,0)+(305:3.9)$) {};

   \foreach \i/\j in {1/9,2/10,3/11,4/12,9/5,10/6,11/7,12/8,5/1,6/2,7/3,8/4,13/17,14/18,15/19,16/20,17/21,18/22,19/23,20/24,21/13,22/14,23/15,24/16}{
    \draw[->,line width=1.6pt] (\i) edge (\j);
   }
   
   \foreach \i/\j in {1/2,2/3,3/4,4/1,10/5,11/6,12/7,9/8,5/13,6/14,7/15,8/16,13/18,14/19,15/20,16/17,17/9,18/10,19/11,20/12,21/24,24/23,23/22,22/21}{
    \draw[->,line width=1.6pt,darkpastelgreen] (\i) edge (\j);
   }
   
   \foreach \i/\j in {2/5,3/6,4/7,1/8,9/13,10/14,11/15,12/16,18/21,19/22,20/23,17/24}{
    \draw[line width=1.6pt,magenta] (\i) edge (\j);
   }
   
   \foreach \i/\j in {1/1,1/5,1/9,2/2,2/6,2/10,3/3,3/7,3/11,4/4,4/8,4/12,5/13,5/17,5/21,6/14,6/18,6/22,7/15,7/19,7/23,8/16,8/20,8/24}{
    \draw[line width=1.6pt,lipicsYellow] (v\i) edge (\j);
   }
  \end{tikzpicture}
  }
  \caption{A graph $G$ of Type IIc.
   The colors {\sf black} and {\sf green} have Type IIc, and {\sf pink} and {\sf yellow} have Type I.
   Individualizing an arbitrary {\sf red} vertex and performing $1$-WL results in a discrete coloring.
   Hence, the graph is identified by $2$-WL.
   Also, setting $c$ to be {\sf black} in the proof of Lemma \ref{la:directed-cycle}, each set $B_i$ consists of a single {\sf blue} vertex whereas $M$ is the set of {\sf red} vertices.
  }
  \label{fig:type-iic-example}
\end{figure}
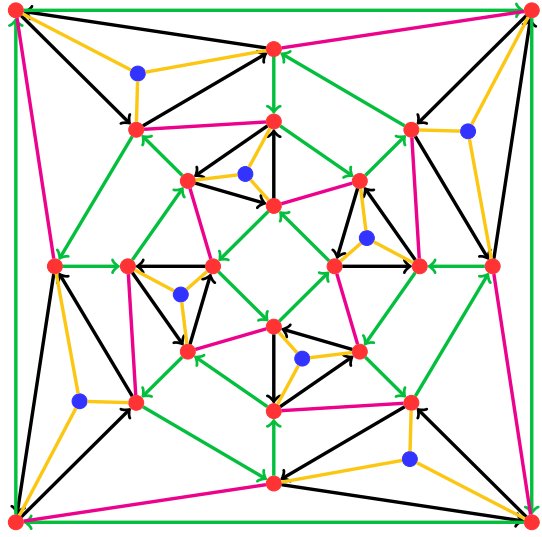

\begin{proof}
 Let $\chi \coloneqq \WL{2}{G}$ and let $A_1,\dots,A_\ell$ be the vertex sets of the connected components of $G[c]$.
 We distinguish several cases.
 An example is also given in Figure \ref{fig:type-iic-example}.
 
 First suppose $\ell \geq 2$.
 Let us fix a planar embedding of $G$.
 For every $i \in [\ell]$, the cycle $(G[c])[A_i]$ splits the plane into two regions, the interior region and the exterior region.
 We denote by $\exte_c(A_i)$ the set of vertices located in the exterior region, and $\inte_c(A_i)$ denotes the set of vertices located in the interior region.
 By Lemma \ref{la:path-between-color-components}, we may assume without loss of generality that $A_j \subseteq \exte_c(A_i)$ for all distinct $i,j \in[\ell]$.
 
 Let $D \coloneqq \{\chi(v,v) \mid v \in V(G[c])\}$.
 Recall that a path $u_0,\dots,u_m$ \emph{avoids $D$} if $\chi(u_i,u_i) \notin D$ for all $i \in [m-1]$, i.e., the color of no internal vertex of the path is contained in the set $D$.
 For a vertex $u \in V(G)$, we define
 \[\Theta_c(u) \coloneqq \{i \in [\ell] \mid \exists w \in A_i \colon \text{there is a path from $u$ to $w$ that avoids $D$}\}.\]
 Moreover, we define
 \[M \coloneqq \{u \in V(G) \mid |\Theta_c(u)| \geq 2\}.\]
 Since $2$-WL detects whether there is a path from $u$ to $w$ that avoids $D$ (see Observation \ref{obs:wl-knows-paths-avoiding-colors}), we conclude that $M$ is $\chi$-invariant.
 Moreover, since $G$ is connected, there is some $i \in [\ell]$ such that $A_i \cap M \neq \emptyset$.
 Since $M$ is $\chi$-invariant and $G[c]$ is unicolored, it follows that $A_i \subseteq M$ for all $i \in [\ell]$.
 Finally, for $i \in [\ell]$, we define
 \[B_i \coloneqq \{u \in V(G) \mid \Theta_c(u) = \{i\}\}.\]
 
 \begin{claim}
  $M \setminus A_i \subseteq \exte_c(A_i)$ for all $i \in [\ell]$.
 \end{claim}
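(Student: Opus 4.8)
The plan is to prove the claim by a planarity argument, fixing an index $i \in [\ell]$ and an arbitrary vertex $u \in M \setminus A_i$ and showing that $u$ cannot lie in the interior region $\inte_c(A_i)$. Since the simple cycle $(G[c])[A_i]$ partitions the vertices of $G$ into the three parts $\inte_c(A_i)$, $A_i$, and $\exte_c(A_i)$, and since $u \notin A_i$ by assumption, ruling out $u \in \inte_c(A_i)$ immediately yields $u \in \exte_c(A_i)$, which is exactly what we want.

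So I would suppose toward a contradiction that $u \in \inte_c(A_i)$. Because $u \in M$, we have $|\Theta_c(u)| \geq 2$, so $\Theta_c(u)$ contains some index $j \neq i$; hence there is a path $P$ from $u$ to some vertex $w \in A_j$ that avoids $D$. At this point I would invoke the normalization established at the start of the proof via Lemma \ref{la:path-between-color-components}, namely that $A_j \subseteq \exte_c(A_i)$ for all $j \neq i$, so that the endpoint $w$ lies in the exterior region of the cycle $A_i$.

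The heart of the argument is then the Jordan-curve observation combined with the unicoloredness of $G[c]$. In the fixed planar embedding, the simple cycle $(G[c])[A_i]$ is a closed Jordan curve separating $\inte_c(A_i)$ from $\exte_c(A_i)$, and the path $P$ runs from the interior point $u$ to the exterior point $w$ along edges of the embedding, which do not cross the edges of $A_i$. Consequently, $P$ must pass through some vertex $x \in A_i$. Since both endpoints $u$ and $w$ lie outside $A_i$, this $x$ is an \emph{internal} vertex of $P$. But $G[c]$ is unicolored (as $c$ has Type IIc), so $\chi(x,x) \in D$, contradicting the fact that $P$ avoids $D$. This contradiction shows $u \notin \inte_c(A_i)$ and completes the proof of the claim.

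I expect the only genuinely delicate point to be the step asserting that $P$ must traverse a vertex of $A_i$: here one must use that the embedding is planar, so that the edges of $P$ cannot cross the edges of the cycle, and therefore the only way to pass from the interior to the exterior region of the Jordan curve is through one of its boundary vertices. Everything else---the three-way partition of $V(G)$, the existence of an index $j \neq i$ in $\Theta_c(u)$, and the color membership $\chi(x,x) \in D$---is immediate from the definitions and the standing assumption that $c$ has Type IIc.
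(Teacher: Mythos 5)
Your proposal is correct and follows essentially the same route as the paper's own proof: assume $u \in \inte_c(A_i)$, use $|\Theta_c(u)| \geq 2$ to find a $D$-avoiding path to some $A_j \subseteq \exte_c(A_i)$, and derive a contradiction from planarity forcing the path through an internal vertex of $A_i$ whose color lies in $D$. The extra care you take in noting that the crossing vertex is internal to the path (since both endpoints avoid $A_i$) is a detail the paper leaves implicit, but the argument is the same.
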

 \begin{claimproof}
  This follows directly from the fact that $A_j \subseteq \exte_c(A_i)$ for all distinct $i,j \in [\ell]$.
  Indeed, suppose there are some $u \in M \setminus A_i$ and $i \in [\ell]$ such that $u \in \inte_c(A_i)$.
  By definition, $|\Theta_c(u)| \geq 2$, and there is some $i \neq j \in [\ell]$ and a vertex $w \in A_j$ such that $w$ is reachable from $u$ via a path that avoids $D$.
  But $w \in \exte_c(A_i)$.
  Hence, any path from $u$ to $w$ must visit some vertex $v \in A_i$ since $G$ is planar.
  This is a contradiction since $\chi(v,v) \in D$.
 \end{claimproof}
 
 Now suppose that $M = V(G)$.
 Then $\inte_c(A_i) = \emptyset$ for all $i \in [\ell]$, i.e., there are $v_1,v_2,v_3 \in A_i$ such that $v_1,v_2,v_3$ lie on a common face of $G$.
 Consider the set $\Disc_G(v_1)$.
 Clearly, $A_i \subseteq \Disc_G(v_1)$ since $A_i$ forms a directed cycle via edge color $c$.
 But this implies that $\Disc_G(v_1) = V(G)$ using Theorem \ref{thm:splitting-with-tutte}.
 
 \begin{claim}
  \label{claim:many-directed-cycles}
  Suppose there is some $i \in [\ell]$ such that $B_i \neq \emptyset$ and let $u \in A_i$.
  Then $\lambda \coloneqq \WL{1}{G,\WL{2}{G},u}$ is discrete.
 \end{claim}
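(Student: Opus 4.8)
The plan is to reduce the statement to fixing a single vertex off the cycle $A_i$ and then invoke Lemma \ref{la:fixing-cycle-and-extra-vertex}. First I would individualize $u$ and exploit that $A_i$ is a directed cycle with respect to the color $c$: every vertex on $A_i$ has a unique out-neighbor via a $c$-colored arc, so starting from $u$ and following these arcs, each round of $1$-WL isolates the successor of an already isolated vertex. Hence $A_i \subseteq \Disc_G(u)$, i.e., the whole cycle $A_i$ lies in the discrete part of $\lambda \coloneqq \WL{1}{G,\WL{2}{G},u}$. Since $A_i$ forms a cycle in the $3$-connected planar graph $G$, Lemma \ref{la:fixing-cycle-and-extra-vertex} tells us that it now suffices to place a single further vertex $w \in V(G)\setminus A_i$ into $\Disc(\lambda)$; the lemma will then upgrade $\lambda$ to a discrete coloring.

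To produce such a $w$, I would look at the region cut off by $A_i$ in the fixed planar embedding, namely $\inte_c(A_i)$. If $\inte_c(A_i)=\emptyset$, then $A_i$ bounds a face of $G$, so three consecutive vertices of $A_i$ — all already in $\Disc(\lambda)$ — lie on a common face, and Theorem \ref{thm:splitting-with-tutte} immediately gives $\Disc(\lambda)=V(G)$. Otherwise $\inte_c(A_i)\neq\emptyset$; here the hypothesis $B_i\neq\emptyset$ places us outside the $M=V(G)$ situation and makes the interior the relevant region to exploit. I would then mimic the auxiliary-graph construction from the proof of Lemma \ref{la:fixing-cycle-and-extra-vertex}: form the graph $H$ on vertex set $A_i\cup\inte_c(A_i)\cup\{z\}$, where $z$ is a fresh apex adjacent to every vertex of $A_i$, keeping all edges of $G$ induced on $A_i\cup\inte_c(A_i)$, and color $z$ with a private color via some $\lambda_H$. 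As in that proof, $H$ is planar and $3$-connected, and for consecutive $u_1,u_2\in A_i$ the triangle $zu_1u_2$ bounds a face, so Theorem \ref{thm:splitting-with-tutte} yields that $\WL{1}{H,\lambda_H,z,u_1,u_2}$ is discrete.

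It remains to transfer this discreteness back to $G$. Since $z$ is adjacent only to $A_i$, every refinement step that $1$-WL performs on the interior vertices of $H$ uses only edges inside $A_i\cup\inte_c(A_i)$ together with the colors on the already fully fixed cycle $A_i$, all of which are available in $G$ once $A_i\subseteq\Disc(\lambda)$. Consequently $\lambda|_{\inte_c(A_i)}\preceq\big(\WL{1}{H,\lambda_H,z,u_1,u_2}\big)|_{\inte_c(A_i)}$, so $\inte_c(A_i)\subseteq\Disc(\lambda)$; picking any $w\in\inte_c(A_i)$ and applying Lemma \ref{la:fixing-cycle-and-extra-vertex} to the cycle $A_i$ and the extra vertex $w$ finishes the proof. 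The main obstacle is precisely the middle step: one must verify that the apex graph $H$ is genuinely $3$-connected so that Theorem \ref{thm:splitting-with-tutte} applies, and that $\inte_c(A_i)$ is exactly the set of vertices trapped on one side of the separating cycle $A_i$. This is where planarity of $G$ together with $\ell\ge 2$ and $B_i\neq\emptyset$ is essential, ruling out the symmetric bipyramid-type configuration in which fixing the cycle would leave a residual interior/exterior swap — a configuration that instead only arises in the separate $\ell=1$ case treated elsewhere.
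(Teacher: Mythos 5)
Your overall strategy---fix the directed cycle $A_i$ pointwise, attach an apex $z$ over $A_i$ to a set of vertices whose neighborhood lies entirely in $A_i$, apply Theorem \ref{thm:splitting-with-tutte} in the auxiliary graph, transfer the refinement back to $G$, and finish with Lemma \ref{la:fixing-cycle-and-extra-vertex}---is exactly the paper's. The difference is the choice of auxiliary set: the paper runs this on $B_i$, the set of vertices whose only $D$-avoiding connections lead to $A_i$, while you run it on the geometric interior $\inte_c(A_i)$. This is where a genuine gap appears. Your transfer step only establishes that $\lambda$ restricted to $\inte_c(A_i)$ is injective; to conclude $\inte_c(A_i)\subseteq\Disc(\lambda)$ you additionally need that no interior vertex shares its $\lambda$-color with a vertex \emph{outside} $A_i\cup\inte_c(A_i)$, i.e., that $\inte_c(A_i)$ is $\lambda$-invariant. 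Nothing in your argument rules out a component $Z$ of $G-V(G[c])$ sitting in the exterior region with $N_G(Z)\subseteq A_i$, whose vertices $1$-WL cannot separate from interior vertices with the same attachment pattern; this is precisely the bipyramid configuration, and while it is in fact excluded when $\ell\ge 2$, excluding it requires an argument you do not give. Without it, Lemma \ref{la:fixing-cycle-and-extra-vertex} cannot be invoked, since it needs a vertex $w$ with $|[w]_\lambda|=1$ where the color class is taken over all of $V(G)$, not just over $\inte_c(A_i)$.

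The paper sidesteps this by working with $B_i$, which is $\chi$-invariant by construction (reachability via $D$-avoiding paths is detected by $2$-WL, Observation \ref{obs:wl-knows-paths-avoiding-colors}); invariance together with injectivity of $\lambda|_{B_i}$ immediately yields singleton classes in $G$. Your argument can be repaired either by switching to $B_i$ (one checks $N_G(B_i)\subseteq A_i$, so the same apex construction and transfer go through verbatim), or by proving $\inte_c(A_i)=B_i$ in this setting---which is true but needs a separate planarity argument: an exterior component of $B_i$ would attach to at least three vertices of $A_i$ from the same region that also contains the connected set $M\setminus A_i$, whose neighborhood is all of $A_i$, and these two attachments cannot be drawn in the plane simultaneously. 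A smaller warning sign: your proof never uses the hypothesis $B_i\neq\emptyset$, and although the conclusion happens to hold without it when $\ell\ge 2$, that should have prompted a second look at where the invariance of your chosen set is actually needed.
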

 \begin{claimproof}
  First, since $G[c]$ forms a directed cycle on $A_i$, it holds that $|[v]_\lambda| = 1$ for all $v \in A_i$.
  Also, $B_i$ is $\lambda$-invariant.
  
  Consider the graph $H_i$ defined by $V(H_i) \coloneqq A_i \cup B_i \cup \{z\}$ (where $z \notin V(G)$ is a fresh vertex) and
  \[E(H_i) \coloneqq E_G(A_i \cup B_i,A_i \cup B_i) \cup \{zv \mid v \in A_i\}.\]
  Also, let $\lambda_{H_i} \coloneqq V(H_i) \rightarrow \{1,2\}$ be the vertex coloring defined via $\lambda_{H_i}(z) = 1$ and $\lambda_{H_i}(v) = 2$ for all $v \in V(H_i) \setminus \{z\}$.
  
  We first argue that $H_i$ is planar.
  Indeed, by Lemma \ref{la:path-between-color-components}, we conclude that $G[M \setminus A_i]$ is connected.
  Also, $A_i \subseteq M$, which means that for every $v \in A_i$, there are some $i \neq j \in [\ell]$ and $w \in A_j$ such that there is a path $P$ from $v$ to $w$ that avoids $D$.
  By definition of the set $M$, every vertex from $P$ is contained in $M$.
  This implies that $A_i \subseteq N_G(M \setminus A_i)$.
  So $H_i$ is a minor of $G$, which implies that $H_i$ is planar.
  
  Next, we argue that $H_i$ is $3$-connected.
  Suppose towards a contradiction that $S$ is a separator of $H_i$ of size $|S| \leq 2$.
  First, we may assume $z \notin S$ since $H_i[A_i]$ is $2$-connected.
  Let $w_1,w_2$ be two vertices from distinct connected components of $H_i - S$.
  Since $\deg_{H_i}(z) \geq 3$, we may assume that $w_j \neq z$ for both $j \in \{1,2\}$.
  Because $G$ is $3$-connected, there is a path $w_1 = u_0,\dots,u_p = w_2$ from $w_1$ to $w_2$ in $G - S$.
  Let $u_s$ be the first vertex on the path that is not contained in $A_i \cup B_i$, and $u_t$ be the last vertex on the path that is not contained in $A_i \cup B_i$.
  Since $N_G(B_i) \subseteq A_i$ it follows that $u_{s-1},u_{t+1} \in A_i$.
  But then $w_1 = u_0,\dots,u_{s-1},z,u_{t+1},\dots,u_p = w_2$ is a path from $w_1$ to $w_2$ in $H - S$, which is a contradiction.
  
  So together, $H_i$ is planar and $3$-connected.
  Thus, $\WL{1}{H_i,\lambda_{H_i},v_1,\dots,v_k}$ is discrete by Lemma \ref{la:fixing-cycle-and-extra-vertex}, where $A_i = \{v_1,\dots,v_k\}$ (note that $z$ is individualized as well).
  
  Now, consider the coloring $\lambda$.
  Since $B_i$ is $\lambda$-invariant and $|[v]_\lambda| = 1$ for all $v \in A_i$, we get that
  \[\lambda|_{B_i} \preceq (\WL{1}{H_i,\lambda_{H_i},v_1,v_2})|_{B_i},\]
  since any refinement done in the graph $H_i$ can also be performed in $G$.
  So $|[v]_\lambda| = 1$ for all $v \in A_i \cup B_i$.
  Since $B_i \neq \emptyset$, it follows that $\lambda$ is discrete by Lemma \ref{la:fixing-cycle-and-extra-vertex}.
 \end{claimproof}
 
 Hence, it remains to consider the case $\ell = 1$.
 Let $K_1,\dots,K_p$ be the vertex sets of the connected components of $G - A_1$.
 Note that $N_G(K_i) \subseteq A_1$ for all $i \in [p]$.
 
 We distinguish two further subcases.
 First suppose that $N_G(K_i) \neq N_G(K_j)$ for all distinct $i,j \in [p]$.
 
 Let $\lambda \coloneqq \WL{1}{G,\WL{2}{G},u}$ for some $u \in A_1$.
 Clearly, $\lambda(v_1) \neq \lambda(v_2)$ for all $v_1 \in A_1$ and $v_1 \neq v_2 \in V(G)$.  
 Also,
 \[\{\lambda(w) \mid w \in K_i\} \cap \{\lambda(w) \mid w \in K_j\} = \emptyset\]
 for all distinct $i,j \in [p]$.
 Let $i \in [p]$.
 Consider the graph $H_i$ defined via $V(H_i) \coloneqq K_i \cup A_1 \cup \{z\}$ (where $z \notin V(G)$ is a fresh vertex) and
 \[E(H_i) \coloneqq E_G(A_i \cup A_1,K_i \cup A_1) \cup \{zv \mid v \in A_1\}.\]
 Also, let $\lambda_{H_i} \coloneqq V(H_i) \rightarrow \{1,2\}$ be the vertex coloring defined via $\lambda_{H_i}(z) = 1$ and $\lambda_{H_i}(v) = 2$ for all $v \in V(H_i) \setminus \{z\}$.
 It is easy to see that $H$ is planar and $3$-connected.
 Moreover, $\WL{1}{H,\lambda_H,v_1,v_2}$ is discrete by Theorem \ref{thm:splitting-with-tutte} where $v_1,v_2 \in A_1$ are two arbitrary vertices such that $v_1v_2 \in E(G[c])$.
 
 As before, it holds that $\lambda|_{K_i} \preceq (\WL{1}{H,\lambda_H,v_1,v_2})|_{K_i}$ which implies that $|[v']_\lambda| = 1$ for all $v' \in A_i \cup K_i$.
 But now, there are three vertices $u_1,u_2,u_3 \in A_i \cup K_i$ that lie on a common face of $G$.
 Hence, $\lambda$ is discrete by Theorem \ref{thm:splitting-with-tutte}.
 
 \medskip
 
 Finally, assume there are $i,j \in [p]$ such that $N_G(K_i) = N_G(K_j)$.
 Note that, by $3$-connectedness, it holds that $|N_G(K_i)| = |N_G(K_j)| \geq 3$.
 
 \begin{claim}
  $N_G(K_i) = A_1$ and $p = 2$.
 \end{claim}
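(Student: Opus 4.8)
The plan is to prove this claim using only planarity and $3$-connectedness; no further properties of the $2$-WL coloring are needed here. I would fix a planar embedding of $G$. Since $(G[c])[A_1]$ is a cycle, $A_1$ is a Jordan curve partitioning the plane into an interior and an exterior region, and each component $K_1,\dots,K_p$ lies entirely in one of the two regions. Write $S \coloneqq N_G(K_i) = N_G(K_j)$ and recall that $3$-connectedness gives $|S| \geq 3$.

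First I would show that $K_i$ and $K_j$ lie on \emph{opposite} sides of $A_1$. Suppose not, say both lie in the interior. Pick $s_1,s_2,s_3 \in S$ and contract $K_i$ and $K_j$ to single vertices $x$ and $y$; routing the three arcs of $A_1$ between $s_1,s_2,s_3$ as single edges yields a plane drawing of a triangle $s_1s_2s_3$ with $x,y$ in its interior, each joined to all three corners. The three edges incident to $x$ split the triangle into three smaller triangles; $y$ lies in one of them and is thus enclosed by two spokes of $x$ together with one side of $s_1s_2s_3$, so the edge from $y$ to the opposite corner must cross a spoke of $x$, contradicting planarity. Hence, without loss of generality, $K_i$ is interior and $K_j$ is exterior.

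Next I would show $S = A_1$. Suppose some $a \in A_1 \setminus S$ exists, and let $s_1,s_2$ be the two elements of $S$ flanking $a$, i.e., $a$ lies on the open arc $P$ of $A_1$ between consecutive $S$-vertices $s_1,s_2$. Since $K_i$ is interior and adjacent to all of $S$, contracting it to a hub shows that the interior region incident to $P$ is bounded by $P$ and the two spokes to $s_1,s_2$; symmetrically for the exterior and $K_j$. Every vertex of the open arc $P$ lies in $A_1 \setminus S$, hence is non-adjacent to both $K_i$ and $K_j$, and each such vertex sends its interior-side edges into the first region and its exterior-side edges into the second. Any further component meeting $P$ therefore sits inside one of these two regions and, being distinct from $K_i,K_j$, attaches only to vertices of $P$ after removing $\{s_1,s_2\}$. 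Consequently, deleting $\{s_1,s_2\}$ separates $P$ together with every component it meets from the rest of $G$, while $|S| \geq 3$ keeps $K_i,K_j$ attached to the remaining cycle; thus $\{s_1,s_2\}$ is a $2$-separator, contradicting $3$-connectedness. So $S = A_1$.

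Finally, with $N_G(K_i) = N_G(K_j) = A_1$, I would rule out further components. If some $K' \notin \{K_i,K_j\}$ existed, say in the interior, then contracting the interior hub $K_i$ (adjacent to every cycle vertex) splits the interior disk into triangular regions, each bounded by a single edge $a_m a_{m+1}$ of $A_1$; the component $K'$ lies in one region and, not meeting $K_i$, satisfies $N_G(K') \subseteq \{a_m,a_{m+1}\}$ of size at most $2$, again contradicting $3$-connectedness, and the exterior case is symmetric. Hence $p = 2$. I expect the second step to be the main obstacle: one must ensure that \emph{both} the interior and the exterior neighborhoods of the arc $P$ are genuinely capped, so that $\{s_1,s_2\}$ really separates, which is precisely why establishing the opposite-sides dichotomy first is essential.
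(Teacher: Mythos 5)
Your proof is correct, but it takes a genuinely different route from the paper's. The paper argues via the $2$-WL coloring: since all directed edges of the cycle $(G[c])[A_1]$ receive the same color, all vertex pairs at a fixed segment-distance $d$ along the cycle receive the same color, and the color of a pair $v_1,v_3 \in N_G(K_i)$ encodes that the pair has a common neighbor in some component of $G - A_1$; planarity then localizes that component to $K_i$ (and hence $K_j$), and an iterative application sweeps this membership around the whole cycle to get $N_G(K_i) = A_1$. You instead prove the statement as a purely structural fact about $3$-connected planar graphs, using only that $A_1$ is a cycle and that two components of $G - A_1$ share the neighborhood $S$ with $|S| \geq 3$: the opposite-sides dichotomy, then the $2$-separator $\{s_1,s_2\}$ capping any gap arc $P$, then the triangular-region argument for $p=2$. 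Each step checks out, including the point you flag yourself — the separator argument genuinely needs both the interior and the exterior of $P$ to be capped, which is exactly what the opposite-sides step provides. Your approach is more elementary and arguably more robust than the paper's somewhat terse "iterative application", and it isolates the claim from the WL machinery entirely. One presentational point worth making explicit: $G[A_1]$ need not be a chordless cycle, so you should state that chords emanating from vertices of the open arc $P$ are also confined to the two capped regions (your phrase about interior-side and exterior-side edges does cover this, since such a chord cannot cross the spokes from $s_1,s_2$ to $K_i$ resp.\ $K_j$, but the reader has to notice that chords exist to be handled).
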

 \begin{claimproof}
  Let $v_1,v_2,v_3 \in N_G(K_i)$ be three distinct vertices such that $v_2$ appears on the directed path from $v_1$ to $v_3$ in $G[c]$.
  Also, let $S \subseteq A_1$ denote the segment between $v_1$ and $v_3$ (excluding $v_1$ and $v_3$).
  Observe that $v_2 \in S$.
  The color $\WL{2}{G}(v_1,v_3)$ encodes that $v_1$ and $v_3$ both have a neighbors in a common connected component of $G - A_1$ (e.g., $K_i$).
  Let $d \coloneqq |S|$.
  Then, since all (directed) edges in $G[c]$ have the same $\WL{2}{G}$-color, all vertex pairs $v,v' \in A_1$, for which the segment from $v$ to $v'$ on $G[c]$ has length $d$, must have a common neighbor in a connected component of $G - A_1$.
  In particular, this must hold for every choice of $v \in S$. In that case, $v' \in A_1 \setminus S$.
  Note that $v_1,v_3$ and vertices $w_1 \in N(v_1) \cap K_i, w'_1\in N(v_1) \cap K_j$ and $w_3 \in N(v_3) \cap K_i, w'_3 \in N(v_3) \cap K_i$ bound an area which contains $v$ via their connecting edges.
  Thus, by planarity, the common neighbor of $v$ and $v'$ must also belong to $K_i$ and, since we have assumed $N(K_i) = N(K_j)$, the vertices $v$ and $v'$ also have a common neighbor in $K_j$.
  An iterative application of this argument shows that $N_G(K_i) = N_G(K_j) = A_1$.
  Since $G$ is planar and $3$-connected, this also implies that $p = 2$.
 \end{claimproof}

 So $p = 2$ and $N_G(K_1) = N_G(K_2) = A_1$.
 If there is some $v \in K_1 \cup K_2$ for which $N(v)$ is a segment of $G[c]$ that is distinct from $A_1$, the coloring $\WL{1}{G,\WL{2}{G},v}$ is discrete by Lemma \ref{la:fixing-cycle-and-extra-vertex}.
 So suppose that, for all $v \in K_1 \cup K_2$, the neighborhood $N(v)$ is not a segment of $A_1$.
 In particular, $|N(v)| \geq 2$.
 Suppose that there is a vertex $v' \in K_1$ with $N(v') \subsetneq A_1$.
 Let $v_1,\dots,v_\ell$ denote the neighbors of $v'$ on $A_1$ in cyclic ordering (where $\ell \geq 2$).
 For $i \in [\ell]$, let $S_i$ be the segment of $G[c]$ from $v_i$ to $v_{i+1}$, including $v_i$ and $v_{i+1}$ (where indices are taken modulo $\ell$).
 By planarity, every vertex $v'' \neq v'$ with $v'' \in K_1$ satisfies $N(v'') \subseteq S_i$ for some $i \in [\ell]$.
 Now, pick some $i \in [\ell]$ such that $|S_i| > 2$ and $|S_i|$ is minimal among all sets $S_j$, $j \in [\ell]$ for which $|S_j| > 2$.
 Since $N_G(K_1) = A_1$ there is some vertex $v'' \in K_1$ such that $N(v'') \subseteq S_i$.
 Now, we can repeat the same argument for $v''$ instead of $v'$, leading to a segment $S_{i'}'$ defined in the same way as $S_i$.
 Note that $|S_{i'}'| < |S_i|$.
 However, this is a contraction since repeating this argument must eventually lead to some vertex $v''' \in K_1$ whose neighborhood is a segment.
 
 Thus, every vertex $v \in K_1 \cup K_2$ satisfies $N(v) = A_1$.
 Since $|A_1| \geq 3$, we conclude that $|K_1 \cup K_2| \leq 2$ using Theorem \ref{thm:wagner} (because there is a complete bipartite graph between $A_1$ and $K_1 \cup K_2$).
 It follows that $K_1$ and $K_2$ are singletons with vertices $u_1$ and $u_2$ respectively.
 So $G$ is isomorphic to a bipyramid.
\end{proof}

\begin{corollary}
 Let $G$ be a $3$-connected planar graph and suppose $G$ contains an edge color of Type IIc.
 Then $2$-WL determines pair orbits in $G$.
\end{corollary}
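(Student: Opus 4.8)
The plan is to reduce the statement to Lemma~\ref{la:directed-cycle}. Fixing a Type~IIc edge color $c$ of $G$, that lemma leaves exactly two possibilities: either there is a vertex $v \in V(G)$ with $\Disc_G(v) = V(G)$, or $G$ is isomorphic to a bipyramid $P_m^*$.

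In the first case I would simply invoke Lemma~\ref{la:wl-pair-orbits-from-fixing-number}. The equality $\Disc_G(v) = V(G)$ says precisely that individualizing the single vertex $v$ (after coloring all pairs by their $2$-WL color) already yields a discrete coloring, so the hypothesis of that lemma holds with $\ell = 1$, and thus with $k = \max\{2, \ell + 1\} = 2$. The conclusion is then immediate: $2$-WL determines pair orbits in $G$.

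The genuinely delicate case is $G \cong P_m^*$, and I expect it to be the main obstacle. Here the shortcut through a small fixing number is unavailable: individualizing a single vertex never discretizes the bipyramid, since fixing a cycle vertex pins down the entire directed $c$-cycle but leaves the two apices interchangeable, whereas fixing an apex determines the other apex but leaves the directed cycle free to rotate. Hence the fixing number is $2$, so feeding $\ell = 2$ into Lemma~\ref{la:wl-pair-orbits-from-fixing-number} would establish only that $3$-WL determines pair orbits; moreover, in the definition of pair orbits at level $k = 2$ the two pebbles \emph{are} the colored tuple rather than additional individualizations, so the missing power cannot be recovered from that lemma. Instead, I would argue directly with the bijective pebble game $\BP^3$ through Theorem~\ref{thm:eq-wl-pebble-tuples}: for a pair $(v_1, v_2)$ in $G$ and a pair $(w_1, w_2)$ in some graph $H$ admitting no isomorphism $G \cong H$ with $v_i \mapsto w_i$, I would give Spoiler a winning strategy from the position $\big((v_1, v_2), (w_1, w_2)\big)$, which is exactly what Definition~\ref{def:determines-orbits} requires. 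The feature to exploit is the \emph{orientation} of the Type~IIc color: even with only three pebbles, Spoiler can walk them step by step along the unique directed $c$-cycle, reconstructing the whole cycle together with the incidences of the apices (which are set apart from the cycle vertices by meeting no $c$-arc). This forces $H$ to be a bipyramid of the same order and $(w_1, w_2)$ into the $\Aut(P_m^*)$-orbit prescribed by $(v_1, v_2)$, so that a suitable automorphism composed with a fixed isomorphism $H \cong G$ provides the desired map. I expect the care to concentrate in verifying that this tracing strategy succeeds for every orbit type of pairs, in particular for apex--apex and apex--cycle pairs, where the two initial pebbles provide little local information and the reconstruction of the cycle must be driven almost entirely by the single roving pebble.
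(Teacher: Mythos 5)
Your proof takes exactly the paper's route: Lemma~\ref{la:directed-cycle} splits the situation into the fixing-number-one case, which you (like the paper) dispatch via Lemma~\ref{la:wl-pair-orbits-from-fixing-number}, and the bipyramid case, which the paper settles with ``it is easy to check''. Your pebble-game sketch is a legitimate way of carrying out that check (and your observation that the fixing-number lemma would only yield $3$-WL for the bipyramid correctly explains why a direct verification is required there), so the two proofs coincide in substance.
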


\begin{proof}
 If there is a vertex $v \in V(G)$ such that $\Disc_G(v) = V(G)$, then $2$-WL determines pair orbits in $G$ by Lemma \ref{la:wl-pair-orbits-from-fixing-number}.
 Otherwise, $G$ is a bipyramid and it is easy to check that $2$-WL determines pair orbits in $G$.
\end{proof}

\subsection{Definable Matchings}

In the remainder of this section, we analyze graphs of Type IIa and, similarly as before, aim at finding highly regular connected substructures.
(Observe that, if $G$ has Type IIb, then a witnessing edge color already provides such an object).
Unfortunately, we need to allow for two further possible outcomes. 
First, we are again satisfied with finding a vertex $v \in V(G)$ such that $\Disc_G(v) = V(G)$, which in particular implies that $2$-WL determines pair orbits on $G$ (see Lemma \ref{la:wl-pair-orbits-from-fixing-number}).
As the other potential outcome, we consider \emph{definable matchings}.

\begin{definition}
 Let $G$ be a graph and let $\chi\coloneqq \WL{2}{G}$.
 A color $c \in C_E(G,\chi)$ \emph{defines a matching} if for every $(v,w) \in \chi^{-1}(c)$, it holds that
 \begin{enumerate}[label=(\roman*)]
  \item $\chi(v,v) \neq \chi(w,w)$,
  \item $\{v' \in V(G) \mid \chi(v',w) = c\} = \{v\}$, and
  \item $\{w' \in V(G) \mid \chi(v,w') = c\} = \{w\}$.
 \end{enumerate}
\end{definition}

Suppose there is some $c \in C_E(G,\chi)$ that defines a matching.
Such a situation is generally beneficial since we can simply contract all edges of color $c$ and move to a smaller graph.
This operation neither changes the $2$-WL coloring (see Lemma \ref{la:factor-graph-2-wl}) nor identification of the graph by $2$-WL, as shown in the next lemma (see also \cite{FuhlbruckKV21}).

Recall the definition of the coloring $\chi/c$ given in Lemma \ref{la:factor-graph-2-wl}.

\begin{lemma}
 \label{la:definable-matching}
 Let $G$ be a graph and let $\chi\coloneqq \WL{2}{G}$.
 Also, let $c \in C_E(G,\chi)$ be an edge color that defines a matching.
 Suppose that $2$-WL determines arc orbits (resp.\ pair orbits) of the arc-colored graph $(G/c,\lambda)$, where $\lambda(X_1,X_2) \coloneqq (\chi/c)(X_1,X_2)$ for all $(X_1,X_2) \in A(G/c)$.
 Then $2$-WL determines arc orbits (resp.\ pair orbits) of $G$.
\end{lemma}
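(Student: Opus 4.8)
The plan is to transfer a winning strategy for Spoiler from the contracted graph back to the original graph, using the key structural fact that a matching-defining color $c$ sets up a $2$-WL-detectable bijection between the two endpoint color classes of $c$. Concretely, let $V_a$ and $V_b$ be the two vertex color classes joined by $c$ (distinct by condition (i)); conditions (ii) and (iii) say that every vertex in $V_a$ has a unique $c$-partner in $V_b$ and vice versa, so contracting all $c$-edges merges each such pair into one vertex of $G/c$ while leaving all other vertices as singletons. Thus $V(G)$ is partitioned into the contracted pairs and the untouched vertices, and there is a canonical surjection $\pi\colon V(G)\to V(G/c)$ that is two-to-one exactly on the $c$-matched pairs.

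The main work is to argue contrapositively via the pebble game (Corollary \ref{cor:eq-wl-pebble} and Theorem \ref{thm:eq-wl-pebble-tuples}). Suppose $2$-WL does \emph{not} determine arc (resp.\ pair) orbits of $G$: there is a graph $H$ and vertices witnessing equal $2$-WL colors in $G$ and $H$ with no orbit-respecting isomorphism. First I would show that $H$ must itself admit a matching-defining color $c'$ whose $2$-WL color matches that of $c$, since the matching-defining property of $c$ is encoded in $\WL{2}{G}$ (uniqueness of $c$-partners and the inequality of endpoint colors are all $2$-WL-invariant by Lemma \ref{la:factor-graph-2-wl}); hence $H$ has an analogous contraction $H/c'$. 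Then I would build, from $H$ and the witnessing vertices, a corresponding witness for $G/c$ against $H/c'$ by pushing everything through $\pi$ and $\pi'$, showing that the hypothesis (that $2$-WL determines arc/pair orbits of $(G/c,\lambda)$) is violated. This is the clean direction: contraction can only lose, never gain, distinguishing power in a controlled way, and the induced coloring $\chi/c$ faithfully records the multiset of $\chi$-colors across contracted classes.

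For the converse direction — deducing orbits of $G$ from orbits of $G/c$ — I would take an isomorphism $\psi\colon G/c \cong H/c'$ respecting $\lambda$ (with the prescribed images on the given vertices) and \emph{lift} it to an isomorphism $\varphi\colon G\cong H$. The lift is forced: a contracted vertex $X=\{v,v'\}$ of $G/c$ with $v\in V_a$, $v'\in V_b$ must map to a contracted vertex $\psi(X)=\{w,w'\}$ of $H/c'$, and because $c$ joins two \emph{different} color classes, the roles of $v$ and $v'$ are canonically distinguishable (the one in $V_a$ goes to the one in the $V_a$-analogue), so $\varphi(v),\varphi(v')$ are determined; singleton vertices map via $\psi$ directly. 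I would then verify $\varphi$ is edge- and color-preserving: edges inside $G$ correspond either to $c$-edges (mapped to $c'$-edges by construction) or to edges projecting to edges of $G/c$, where $\lambda=\chi/c$ being preserved by $\psi$ together with condition (i) pins down the individual endpoints. The hardest step is this last verification — ensuring that preservation of the \emph{multiset} coloring $\chi/c$ on $G/c$, rather than the finer arc coloring $\chi$ on $G$, still suffices to force $\varphi$ to respect $\chi$ on every pair; the delicate point is ruling out that two $c$-partners get ``swapped'' across the contraction, which is exactly where condition (i), $\chi(v,v)\neq\chi(w,w)$, does the decisive work.
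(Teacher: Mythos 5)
Your plan is correct and matches the paper's proof in all essentials: push the witnessing arc/pair down to $G/c$ via Lemma \ref{la:factor-graph-2-wl}, invoke the hypothesis to get an isomorphism $\psi$ of the contracted colored graphs, and lift $\psi$ uniquely to $\varphi\colon G\cong H$ using condition (i) to disambiguate the two vertices inside each contracted pair, with the final color-preservation check resting on exactly the point you flag (the multiset $(\chi/c)(Z_1,Z_2)$ has at most one entry per pair of endpoint vertex colors). The contrapositive/pebble-game wrapper is an unnecessary detour --- the paper argues directly --- but it changes nothing of substance.
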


\begin{proof}
 We give the proof for \textquoteleft determining arc orbits\textquoteright, the arguments for \textquoteleft determining pair orbits\textquoteright\ are analogous.
 Let $G'$ be a second graph and define $\chi'\coloneqq \WL{2}{G'}$.
 Let $(v,w) \in A(G)$ and $(v',w') \in A(G')$ such that $\chi(v,w) = \chi'(v',w')$.
 Let $X,Y \in V(G/c)$ be the unique elements such that $v \in X$ and $w \in Y$.
 Similarly, pick $X',Y' \in V(G'/c)$ such that $v' \in X'$ and $w' \in Y'$.
 Then $(\chi/c)(X,Y) = (\chi'/c)(X',Y')$ holds by Lemma \ref{la:factor-graph-2-wl}.
 
 Let $\lambda'$ be the arc coloring of $G'/c$ defined by
 \[\lambda'(X_1,X_2) \coloneqq (\chi'/c)(X_1,X_2)\]
 for all $(X_1,X_2) \in A(G'/c)$.
 Then $(\WL{2}{G/c,\lambda})(X,Y) = (\WL{2}{G'/c,\lambda'})(X',Y')$ holds by Lemma~\ref{la:factor-graph-2-wl}.
 Since $2$-WL determines arc orbits of $(G/c,\lambda)$, there is an isomorphism $\psi\colon (G/c,\lambda) \cong (G'/c,\lambda')$ such that $\psi(X) = X'$ and $\psi(Y) = Y'$.
 
 Now define a bijection $\varphi\colon V(G) \rightarrow V(G')$ as follows.
 Let $u \in V(G)$ and let $Z \in V(G/c)$ be the unique set such that $u \in Z$.
 Since $c$ defines a matching, it holds that $|Z| \leq 2$ and $\chi(v_1,v_1) \neq \chi(v_2,v_2)$ for all distinct $v_1,v_2 \in Z$.
 Hence, there is a unique $u' \in \psi(Z)$ such that $\chi(u,u) = \chi'(u',u')$.
 Set $\varphi(u) \coloneqq u'$ and proceed analogously with all other vertices in $V(G)$.
 Clearly, $\varphi$ is a bijection and $\varphi(v) = v'$ and $\varphi(w) = w'$.
 So it remains to argue that $\varphi$ is an isomorphism between $G$ and $G'$.
 Let $u_1,u_2 \in V(G)$ such that $u_1u_2 \in E(G)$.
 Let $Z_1,Z_2 \in V(G/c)$ such that $u_1 \in Z_1$ and $u_2 \in Z_2$.
 Then $(\chi/c)(Z_1,Z_2) = (\chi'/c)(\psi(Z_1),\psi(Z_2))$.
 Since $c$ defines a matching, it follows that $\chi(u_1,u_2) = \chi'(\varphi(u_1),\varphi(u_2))$.
 In particular, $\varphi(u_1)\varphi(u_2) \in E(G')$.
\end{proof}

\subsection{Spring Embeddings}

Towards analyzing graphs of Type IIa, we also need some additional tools that allow us to prove that certain $3$-connected planar graphs $G$ have fixing number $1$, i.e., $\Disc_G(v) = V(G)$ for some $v \in V(G)$.
More precisely, we start by giving an extension of Theorem \ref{thm:splitting-with-tutte} that can be applied to planar graphs that are not necessarily $3$-connected.

Let $G$ be a planar graph and let $(T,\beta)$ denote its (unique) tree decomposition into triconnected components in the sense of Tutte \cite{Tutte84}. 
For $t \in V(T)$, we denote by $G[[\beta(t)]]$ the \emph{torso} of $G$ on the set $\beta(t)$, i.e., $V(G[[\beta(t)]]) \coloneqq \beta(t)$ and
\begin{align*}
 E(G[[\beta(t)]]) \coloneqq\;\;\;\;\; &E(G[\beta(t)])\\
                               \cup\; &\{vw \mid v,w \in N_G(Z) \text{ for some connected component $Z$ of } G - \beta(t)\}.
\end{align*}
The edges of the second type are usually called \emph{virtual edges} of $G[[\beta(t)]]$.

Now pick some $t \in V(T)$ and suppose $v_1,v_2,v_3$ forms a triangular face in $G[[\beta(t)]]$.
We inductively define mappings $\mu_i \colon V(G) \rightarrow \RR^2$ for all $i \geq 0$.
We set $\mu_0(v_1) \coloneqq (0,1)$, $\mu_0(v_2) \coloneqq (1,0)$, $\mu_0(v_3) \coloneqq (1,1)$, and $\mu(v) \coloneqq (0,0)$ for all $v \in V(G) \setminus \{v_1,v_2,v_3\}$.
For $i \geq 0$, we define
\[\mu_{i+1}(v) \coloneqq \begin{cases}
                          \mu_0(v)                                        &\text{if } v \in \{v_1,v_2,v_3\}\\
                          \frac{1}{\deg_G(v)} \sum_{w \in N_G(v)}\mu_i(w) &\text{otherwise}
                         \end{cases}.\]
This sequence converges to a mapping $\mu_\infty\colon V(G) \rightarrow \RR^2$ such that
\[\mu_\infty(v) = \frac{1}{\deg_G(v)} \sum_{w \in N_G(v)}\mu(w)\]
for all $v \in V(G) \setminus \{v_1,v_2,v_3\}$.
Tutte's Spring Embedding Theorem \cite{Tutte63} states that if $G$ is $3$-connected (and hence $T$ consists of a single node), then $\mu_\infty$ provides a straight-line embedding of $G$ in the plane.
In this paper, we exploit the slightly stronger statement that $\mu_\infty|_{\beta(t)}$ provides a straight-line embedding for $G[[\beta(t)]]$, i.e., the unique triconnected component of $G$ that contains $v_1,v_2,v_3$.
This extension can be proved in the same way as the original Spring Embedding Theorem.
A proof of Tutte's Spring Embedding Theorem is for example given in \cite[Chapter 3.2]{Lovasz19}, and this proof can directly be modified for the variant stated here.
More precisely, we shall use the following lemma.

\begin{lemma}
 \label{la:spring-embedding-properties}
 The mapping $\mu_\infty$ satisfies the following conditions:
 \begin{enumerate}[label=(\arabic*)]
  \item\label{item:spring-embedding-item-1} $\mu_\infty(v) \neq \mu_\infty(w)$ for all distinct $v,w \in \beta(t)$,
  \item\label{item:spring-embedding-item-2} $\mu_\infty(v) \neq \mu_\infty(w)$ for all $v \in Z$ and $w \in \beta(t)$ where $Z$ denotes a connected component of $G - \beta(t)$ for which $|N_G(Z)| = 2$, and
  \item\label{item:spring-embedding-item-3} $\mu_\infty(v) = \mu_\infty(w)$ for all $v \in Z$ and $w \in N_G(Z)$ where $Z$ denotes a connected component of $G - \beta(t)$ for which $|N_G(Z)| = 1$.
 \end{enumerate}
\end{lemma}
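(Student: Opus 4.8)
The plan is to treat the extended Spring Embedding Theorem announced above---that $\mu_\infty|_{\beta(t)}$ is a straight-line planar embedding of the torso $G[[\beta(t)]]$---as a black box, and to derive the three itemized properties from it by applying the discrete maximum principle to the two coordinate functions of $\mu_\infty$. Write $g_1,g_2\colon V(G)\to\RR$ for these coordinates, so that $\mu_\infty(v)=(g_1(v),g_2(v))$. By the defining fixed-point equation of $\mu_\infty$, each $g_j$ is \emph{discrete harmonic} off the three fixed vertices: $g_j(v)=\frac{1}{\deg_G(v)}\sum_{w\in N_G(v)}g_j(w)$ for every $v\in V(G)\setminus\{v_1,v_2,v_3\}$. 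Property~\ref{item:spring-embedding-item-1} is then immediate, since a straight-line embedding assigns pairwise distinct points to distinct vertices, and $\mu_\infty|_{\beta(t)}$ is such an embedding.

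For Property~\ref{item:spring-embedding-item-3}, fix a component $Z$ of $G-\beta(t)$ with $N_G(Z)=\{a\}$. Since $Z\cap\{v_1,v_2,v_3\}=\emptyset$ and every neighbour in $G$ of a vertex of $Z$ lies in $Z\cup\{a\}$, each $g_j$ restricts to a harmonic function on $Z$ whose only boundary vertex is $a$. First I would invoke the maximum principle: if $g_j$ attained a value $M>g_j(a)$ on $Z$, then picking a maximiser $u\in Z$ forces, via the averaging identity and the connectedness of $Z$, every vertex of $Z$ to attain $M$; but the vertex of $Z$ adjacent to $a$ then has an average strictly below $M$ (its neighbourhood contains $a$ with $g_j(a)<M$), a contradiction. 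The symmetric bound rules out values below $g_j(a)$, so $g_j\equiv g_j(a)$ on $Z$ for both $j$, i.e.\ $\mu_\infty(v)=\mu_\infty(a)$ for all $v\in Z$, which is exactly the claim.

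Property~\ref{item:spring-embedding-item-2} combines the same principle with the embedding. Let $Z$ be a component with $N_G(Z)=\{a,b\}$; by Property~\ref{item:spring-embedding-item-1} we have $\mu_\infty(a)\neq\mu_\infty(b)$. After rotating coordinates so that the line through $\mu_\infty(a)$ and $\mu_\infty(b)$ is horizontal, the vertical coordinate is harmonic on $Z$ with equal boundary values at $a$ and $b$, hence constant there, so all of $Z$ is mapped onto the line through $\mu_\infty(a)$ and $\mu_\infty(b)$; the horizontal coordinate is harmonic with distinct boundary values, so the strong maximum principle places every $\mu_\infty(v)$, $v\in Z$, strictly between $\mu_\infty(a)$ and $\mu_\infty(b)$, that is, in the open segment joining them. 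Since $N_G(Z)=\{a,b\}$, the pair $ab$ is a (possibly virtual) edge of $G[[\beta(t)]]$, and in the straight-line embedding of the torso the relative interior of this edge contains no image of a vertex of $\beta(t)$. Together with strictness at the endpoints, this yields $\mu_\infty(v)\neq\mu_\infty(w)$ for every $v\in Z$ and $w\in\beta(t)$.

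The main obstacle is not any single one of the three maximum-principle arguments, which become routine once the harmonic structure is isolated, but rather justifying the engine itself: the extension of Tutte's theorem from the $3$-connected case to the torso $G[[\beta(t)]]$ of an arbitrary planar $G$. Here one must verify that the averaging scheme, run on all of $G$, reproduces on $\beta(t)$ precisely the barycentric equilibrium that Tutte's proof solves for the $3$-connected graph $G[[\beta(t)]]$---in particular, that eliminating the dangling components (those with $|N_G(Z)|\le 2$) via the harmonic ``boundary reduction'' leaves the torso's equations untouched and renders each virtual edge a genuine spring. I would handle this via the boundary-reduction computation already sketched for Properties~\ref{item:spring-embedding-item-2} and~\ref{item:spring-embedding-item-3}, showing that collapsing each dangling $Z$ onto its attachment set does not alter the system governing $\beta(t)$, and then cite the modification of the proof in \cite{Lovasz19} for the conclusion that the resulting equilibrium is a straight-line embedding.
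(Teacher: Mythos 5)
Your proposal is correct, and it actually supplies more than the paper does: the paper states Lemma \ref{la:spring-embedding-properties} without proof, deferring entirely to the remark that the argument in \cite[Chapter 3.2]{Lovasz19} for Tutte's Spring Embedding Theorem ``can directly be modified'' for the torso variant, from which the three properties are meant to be read off. You instead isolate exactly one external input --- that $\mu_\infty|_{\beta(t)}$ is a straight-line embedding of $G[[\beta(t)]]$ --- and derive the three itemized properties from it via the discrete maximum principle. The derivations check out: Property \ref{item:spring-embedding-item-1} is the injectivity of the embedding; Property \ref{item:spring-embedding-item-3} is a self-contained one-boundary-vertex maximum-principle argument (valid because $Z\cap\{v_1,v_2,v_3\}=\emptyset$, so every vertex of $Z$ satisfies the averaging identity, and all neighbours of $Z$ equal the single attachment vertex); and for Property \ref{item:spring-embedding-item-2} the strict two-sided bound correctly places $\mu_\infty(Z)$ in the open segment between $\mu_\infty(a)$ and $\mu_\infty(b)$, which by the paper's definition of the torso is a (possibly virtual) edge of $G[[\beta(t)]]$ whose relative interior cannot contain the image of any vertex in a valid straight-line plane drawing. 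Two small caveats. Your closing paragraph somewhat conflates the maximum-principle computations with the Schur-complement/boundary-reduction step needed to establish the embedding theorem itself; the latter must additionally verify that eliminating a component with $|N_G(Z)|=2$ yields a strictly positive effective weight on the virtual edge $ab$ (so that the weighted Tutte theorem applies to the reduced system), and that verification is not contained in your Property-\ref{item:spring-embedding-item-2} argument. There is, however, no actual circularity, since the embedding theorem is established independently and only then are the three properties derived. In the end both you and the paper rest on the same cited modification of Tutte's proof; your write-up makes that dependence explicit and renders the remaining steps checkable.
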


\begin{lemma}
 \label{la:coloring-from-spring-embedding}
 Let $G$ be a planar graph and let $(T,\beta)$ denote its decomposition into triconnected components.
 Let $t \in V(T)$ and suppose $v_1,v_2,v_3$ forms a triangular face in $G[[\beta(t)]]$.
 Also let $\chi\colon V(G) \rightarrow C$ be a coloring that is $1$-stable with respect to $G$, and $|[v_i]_{\chi}| = 1$ for all $i \in \{1,2,3\}$.
 Then $|[v]_\chi| = 1$ for all $v \in \beta(t)$.
\end{lemma}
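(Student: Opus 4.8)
The plan is to show that the $1$-stable coloring $\chi$ \emph{refines} the spring embedding $\mu_\infty$ associated with the triangular face $v_1,v_2,v_3$, i.e.\ that $\chi(v)=\chi(w)$ implies $\mu_\infty(v)=\mu_\infty(w)$ for all $v,w\in V(G)$, and then to combine this with the fact that $\mu_\infty$ is injective on $\beta(t)$ (part~\ref{item:spring-embedding-item-1} of Lemma~\ref{la:spring-embedding-properties}). Granting the refinement, I would fix $v\in\beta(t)$, assume some $w\neq v$ has $\chi(w)=\chi(v)$ (so $\mu_\infty(w)=\mu_\infty(v)$), and derive a contradiction from the three parts of Lemma~\ref{la:spring-embedding-properties} depending on where $w$ lies.

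To establish the refinement, I would prove by induction on $i$ that each approximation $\mu_i$ is constant on every color class of $\chi$. For the base case, $\mu_0$ takes three distinct values on the singleton classes of $v_1,v_2,v_3$ and the value $(0,0)$ on every remaining vertex, hence is constant on classes. For the inductive step, take $\chi(v)=\chi(w)$; if one of them is an anchor then both equal it (anchors are singletons), so I may assume $v,w\notin\{v_1,v_2,v_3\}$. Since $\chi$ is $1$-stable, $v$ and $w$ have equal degree and the same multiset of neighbor colors $\{\!\!\{\chi(u)\mid u\in N_G(v)\}\!\!\}=\{\!\!\{\chi(u)\mid u\in N_G(w)\}\!\!\}$; as $\mu_i$ is constant on classes by hypothesis, the average $\frac{1}{\deg_G(v)}\sum_{u\in N_G(v)}\mu_i(u)$ depends only on this multiset, so $\mu_{i+1}(v)=\mu_{i+1}(w)$. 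Passing to the pointwise limit preserves every such equality, which yields $\mu_\infty(v)=\mu_\infty(w)$.

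For the conclusion I may assume $G$ is connected (otherwise restrict to the component containing $\beta(t)$), so that every connected component $Z$ of $G-\beta(t)$ satisfies $1\le|N_G(Z)|\le 2$. Given $v\in\beta(t)$ and $w\neq v$ with $\mu_\infty(w)=\mu_\infty(v)$: if $w\in\beta(t)$, injectivity (part~\ref{item:spring-embedding-item-1}) is violated; if $w$ lies in a component $Z$ with $|N_G(Z)|=2$, then part~\ref{item:spring-embedding-item-2} gives $\mu_\infty(w)\neq\mu_\infty(v)$, again a contradiction. The hard part will be the remaining pendant case, $w\in Z$ with $N_G(Z)=\{x\}$: here part~\ref{item:spring-embedding-item-3} collapses all of $Z$ onto $\mu_\infty(x)$, forcing $v=x$ by injectivity and thus \emph{permitting} $\mu_\infty(w)=\mu_\infty(v)$, so the embedding alone does not finish the argument.

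I would resolve this obstacle by exploiting that $x\in\beta(t)$ lies in a $3$-connected torso and therefore has a $G$-neighbor $p\notin Z$ with $\mu_\infty(p)\neq\mu_\infty(x)$ (this holds by part~\ref{item:spring-embedding-item-1} if $p\in\beta(t)$, and by part~\ref{item:spring-embedding-item-2} if $p$ lies in a $2$-attached component, since every torso-edge at $x$ is either a real edge to $\beta(t)$ or a virtual edge witnessed by a $2$-attached component other than $Z$). Because $\chi(w)=\chi(x)$ and $\chi$ is $1$-stable, $w$ must have a neighbor of color $\chi(p)$, and that neighbor lies in $N_G(w)\subseteq Z\cup\{x\}$. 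If it equals $x$, then $\chi(p)=\chi(x)$ and the refinement gives $\mu_\infty(p)=\mu_\infty(x)$; if it is some $p'\in Z$, then the refinement gives $\mu_\infty(p)=\mu_\infty(p')$, which equals $\mu_\infty(x)$ by part~\ref{item:spring-embedding-item-3}. Either way $\mu_\infty(p)=\mu_\infty(x)$, contradicting the choice of $p$. Hence no such $w$ exists and $|[v]_\chi|=1$ for every $v\in\beta(t)$.
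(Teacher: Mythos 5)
Your core argument --- the induction showing that $\chi(v)=\chi(w)$ forces $\mu_i(v)=\mu_i(w)$ for every $i$, passage to the limit $\mu_\infty$, and the case analysis over where a hypothetical $w$ with $\chi(w)=\chi(v)$ can live --- is exactly the paper's proof. The one place you genuinely diverge is the pendant case $N_G(Z)=\{x\}$ with $v=x$: the paper dispatches it by observing that every path from $v_1$ to $w$ passes through $v$, so $\dist_G(v,v_1)<\dist_G(w,v_1)$, and a $1$-stable coloring with $v_1$ individualized separates vertices at different distances from $v_1$. Your alternative --- producing a neighbor $p\notin Z$ of $x$ with $\mu_\infty(p)\neq\mu_\infty(x)$ and then using $1$-stability to force $w$ to have a neighbor of color $\chi(p)$ inside $Z\cup\{x\}$, which collapses $\mu_\infty(p)$ onto $\mu_\infty(x)$ --- is correct (the required $p$ exists because every torso-edge at $x$ is either a real edge into $\beta(t)$, handled by part \ref{item:spring-embedding-item-1} of Lemma \ref{la:spring-embedding-properties}, or a virtual edge witnessed by a $2$-attached component distinct from the $1$-attached $Z$, handled by part \ref{item:spring-embedding-item-2}), but it is more work than the distance argument. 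One small repair is needed: your opening reduction ``assume $G$ is connected by restricting to the component containing $\beta(t)$'' does not by itself exclude a vertex $w$ in a \emph{different} component from sharing a color with some $v\in\beta(t)$, since the conclusion quantifies over all of $V(G)$; the same reachability observation (with $v_1$ a singleton, $1$-stability separates vertices reachable from $v_1$ from those that are not) closes this in one line, and is in fact how the paper treats the $|N_G(Z)|=0$ case.
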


\begin{proof}
 We first show by induction on $i \geq 0$ that
 \[\mu_i(v) \neq \mu_i(w) \;\;\;\Rightarrow\;\;\; \chi(v) \neq \chi(w)\]
 for all $v,w \in V(G)$.
 
 The base case $i = 0$ follows directly from the definition of $\mu_0$ and the fact that $|[v_i]_{\chi}| = 1$ for all $i \in \{1,2,3\}$.
 Suppose $i \geq 0$ and pick $v,w \in V(G)$ such that $\mu_{i+1}(v) \neq \mu_{i+1}(w)$.
 If $\{v,w\} \cap \{v_1,v_2,v_3\} \neq \emptyset$, then $\chi(v) \neq \chi(w)$, since $|[v_i]_{\chi}| = 1$ for all $i \in \{1,2,3\}$.
 So suppose that $v,w \notin \{v_1,v_2,v_3\}$.
 Then
 \[\{\!\!\{\mu_i(v') \mid v' \in N_G(v)\}\!\!\} \neq \{\!\!\{\mu_i(w') \mid w' \in N_G(w)\}\!\!\}.\]

 By the induction hypothesis, we conclude that $\chi(v) \neq \chi(w)$, since $\chi$ is $1$-stable.
 
 Since the sequence $(\mu_i)_{i \geq 0}$ converges to the function $\mu_\infty$, it follows that
 \[\mu_\infty(v) \neq \mu_\infty(w) \;\;\;\Rightarrow\;\;\; \chi(v) \neq \chi(w)\]
 for all $v,w \in V(G)$.
 Now let $v \in \beta(t)$ and $w \in V(G)$ be distinct vertices.
 To prove the lemma, it suffices to show that $\chi(v) \neq \chi(w)$.
 If $w \in \beta(t)$, this follows from Lemma \ref{la:spring-embedding-properties}\ref{item:spring-embedding-item-1}.
 So assume that $w \notin \beta(t)$ and let $Z$ be the vertex set of the connected component of $G - \beta(t)$ that contains $w$.
 Then $|N_G(Z)| \leq 2$.
 If $|N_G(Z)| = 2$, we conclude that $\chi(v) \neq \chi(w)$ by Lemma \ref{la:spring-embedding-properties}\ref{item:spring-embedding-item-2}.
 If $|N_G(Z)| = 0$, then $\chi(v) \neq \chi(w)$ holds because $\chi$ is $1$-stable, $v_1$ is individualized, and $v$ is reachable from $v_1$, but $w$ is not reachable from $v_1$.
 Finally, suppose that $|N_G(Z)| = 1$ and let $w' \in \beta(t)$ be the unique vertex contained in $N_G(Z)$.
 If $w' \neq v$, then $\chi(v) \neq \chi(w)$ by Lemma \ref{la:spring-embedding-properties}, Parts \ref{item:spring-embedding-item-1} and \ref{item:spring-embedding-item-3}.
 So consider the case that $v = w'$.
 Then every path from $v_1$ to $w$ passes through $v$.
 In particular, $\dist_G(v,v_1) < \dist_G(w,v_1)$.
 Since $|[v_1]_{\chi}| = 1$ and $\chi$ is $1$-stable, we conclude that $\chi(v) \neq \chi(w)$.
\end{proof}

\subsection{Obtaining a Connected Subgraph}

Now, towards understanding graphs of Type IIa, we argue under which conditions we can identify connections between different components of $G[c]$ using a single second edge color $d$.
More precisely, let $c \in C_E(G,\WL{2}{G})$ be an edge color of Type IIa, i.e., $G[c]$ is a disjoint union of $\ell \geq 2$ cycles.
As a first step, we aim at identifying a second edge color $d \in C_E(G,\WL{2}{G})$ such that edges of color $d$ connect different components of $G[c]$ (thus, bringing us one step closer to finding a connected subgraph).
This is essentially achieved by the lemma below which either finds such a color $d$ or concludes that one of the other desirable options discussed above is satisfied.

\begin{definition}
 Let $C \subseteq C_E(G,\chi)$ be a set of edge colors.
 For $v,w \in V(G)$, we define the \emph{$C$-distance} between $v$ and $w$, denoted $\dist^C(v,w)$, to be the length of the shortest path
 $v = u_0,u_1,\dots,u_\ell=w$ such that $\chi(u_{i-1},u_i) \in C$ for all $i \in [\ell]$.
\end{definition}

Let $c \in C_E(G,\chi)$ be of Type IIa, and let $d \in C_E(G,\chi)$ be a second edge color.
We say that $c$ \emph{admits short $d$-connections in $G$} if there are vertex sets $A_1,A_2$ of distinct connected components of $G[c]$ and vertices $v_1 \in A_1$ and $v_2 \in A_2$ such that $\dist^D(v_1,v_2) \leq 2$ where $D \coloneqq \{d,d^{-1}\}$ and $d^{-1}$ denotes the ``reverse color'' of $d$, i.e., if $\chi(v,w) = d$ then $d^{-1} \coloneqq \chi(w,v)$.

\begin{lemma}
 \label{la:cycle-connections}
 Let $G$ be a $3$-connected planar graph and let $\chi \coloneqq \WL{2}{G}$ be the coloring computed by $2$-WL.
 Suppose that $G$ has Type IIa.
 Then one of the following holds.
 \begin{enumerate}
  \item there are colors $c,d \in C_E(G,\chi)$ such that $c$ has Type IIa and admits short $d$-connections in $G$,
  \item there is a color $c \in C_E(G,\chi)$ that defines a matching, or
  \item there is a vertex $v \in V(G)$ such that $\Disc_G(v) = V(G)$.
 \end{enumerate}
\end{lemma}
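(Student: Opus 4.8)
The plan is to fix an arbitrary edge color $c\in C_E(G,\chi)$ of Type IIa, so that $G[c]$ is a disjoint union of cycles $A_1,\dots,A_\ell$ with $\ell\ge 2$, and to assume that neither conclusion~1 nor conclusion~2 holds; I then aim to establish conclusion~3. First I would unwind what it means that no color $d$ admits short $d$-connections. Since any edge joining two distinct cycles would already be a $d$-connection of length~$1$ for its own color $d$, this hypothesis immediately yields that there are no edges between distinct components of $G[c]$, so $N_G(A_i)\subseteq V(G)\setminus V(G[c])$ for every $i$. Likewise, a vertex $u$ with two equally colored edges (up to reversal) to two different cycles would be the midpoint of a length-$2$ connection; hence for every vertex $u$ and every color $d$, all cycle-neighbours of $u$ reached through $d$ lie on a single cycle. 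To organize the geometry I would fix the (by Whitney unique) planar embedding and, exactly as in the proof of Lemma~\ref{la:directed-cycle}, use Lemma~\ref{la:path-between-color-components} together with the invariance of $\chi$ to arrange the cycles as siblings, i.e.\ so that $A_j\subseteq\exte_c(A_i)$ for all distinct $i,j$; each interior region $\inte_c(A_i)$ is then \emph{private} and cannot reach another cycle without crossing $A_i$.

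With this picture in place, the next step is a case analysis on the colors of the edges leaving the cycles. By WL-regularity every cycle vertex has the same number $r\ge 1$ of non-$c$ edges, and for each non-$c$ color $d$ incident to a cycle, both the number of $d$-neighbours of a cycle vertex and the number of $d$-in-neighbours of an outside vertex are color invariants. I would first try to read off a definable matching: if some color $d$ is such that every cycle vertex has exactly one $d$-edge, its outside endpoint has exactly one $d$-edge in return, and (automatically, since that endpoint lies outside $G[c]$ and hence has a different vertex color) the two endpoints differ in color, then $d$ satisfies the three conditions of the matching definition and conclusion~2 holds. The substance here is to show, using the two structural facts above together with a $K_{3,s}$-minor argument in the style of Claim~\ref{claim:no-3-subdivision} and Lemma~\ref{la:3-faces-connected}, that an outside vertex cannot be $d$-adjacent to too many cycle vertices without creating a forbidden minor (Theorem~\ref{thm:wagner}); this both bounds the relevant multiplicities and forces the matching structure in the tight regime where $r=1$ and no short connection is available.

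In the remaining situation — no short connection and no definable matching — I would prove conclusion~3 by showing that individualizing a single well-chosen vertex cascades to a discrete coloring. The tool of choice is the spring-embedding machinery: by Lemma~\ref{la:coloring-from-spring-embedding} it suffices to turn three vertices of a triangular face of some triconnected component into singleton color classes, and by Lemma~\ref{la:fixing-cycle-and-extra-vertex} it suffices to fix one cycle together with one additional vertex. The plan is to exploit that, once the private interior regions and the shared exterior are separated by $\chi$ and the connection pattern is neither short nor matching-like, the attachments around each cycle are rigid enough that $1$-WL, run on top of the $\WL{2}{G}$-coloring, propagates a single individualized vertex to a full cycle and one further vertex, after which Lemma~\ref{la:fixing-cycle-and-extra-vertex} applies. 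The combinatorial input is again planarity: the $K_{3,s}$-minor bounds restrict how outside vertices may attach, leaving only configurations in which the individualization does spread.

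The main obstacle I anticipate is the reflection symmetry of an undirected Type~IIa cycle: individualizing a single vertex of such a cycle does not orient it, so fixing the cycle from one vertex is outright impossible whenever the attachment pattern around it is symmetric. The crucial point to establish is therefore a dichotomy: any such surviving symmetry must manifest \emph{either} as a short $d$-connection \emph{or} as a definable matching, so that precisely in the genuinely rigid remaining case the individualization really does cascade. Making this dichotomy precise — showing that the simultaneous absence of short connections and of matchings forces the attachments to be asymmetric enough for the spring-embedding argument to fix a triangular face, and hence (via Lemma~\ref{la:coloring-from-spring-embedding} and the $3$-connectedness of $G$) the whole graph — is where the bulk of the careful planarity-plus-regularity bookkeeping will be required.
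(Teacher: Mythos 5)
Your high-level architecture matches the paper's: assume no short connections, derive structural rigidity of the attachments, read off either a definable matching or a single-vertex fixing via the spring-embedding machinery. But there are two concrete gaps. First, you fix an \emph{arbitrary} color $c$ of Type IIa, whereas the paper chooses $c$ to minimize $|M(c)|$, the number of vertices outside $V(G[c])$ that can reach two distinct components of $G[c]$ by paths avoiding the cycle vertex colors. This minimality is not cosmetic: the key step (Claim \ref{claim:disjoint-neighborhoods-to-the-middle}) is that no vertex $u \in M$ has \emph{two} $d$-in-neighbors among cycle vertices for any color $d$ --- including two neighbors \emph{on the same cycle} $A_i$. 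Your argument only excludes neighbors on two \emph{different} cycles (via the length-$2$ connection observation); the same-cycle case is exactly where the paper needs a subcase in which $d$ turns out to have Type IIa and $M(d) \subsetneq M(c)$, contradicting minimality. Without this, your matching step breaks: a cycle vertex $v$ can have a unique $d$-neighbor $u \in M$ while $u$ has two $d$-in-neighbors on $A_i$, so condition (ii) of the matching definition fails and you have no fallback.

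Second, your final case is a plan rather than a proof, and you explicitly defer the hardest point (the ``dichotomy''). The paper's resolution of the reflection-symmetry obstacle you correctly identify is quite specific: one does \emph{not} individualize a cycle vertex. Instead, one builds an auxiliary apex graph $H$ (adding a vertex $z_i$ inside each cycle $A_i$, restricted to $V(G[c]) \cup M$), proves $H$ is planar and $3$-connected, then contracts all but one of the $M$-neighbors $u_3,\dots,u_{\delta-1}$ of a chosen cycle vertex $v$ into a single vertex $u^*$ and adds the edge $u_2u^*$, creating a \emph{triangular face} $v,u_2,u^*$ in a triconnected component to which Lemma \ref{la:coloring-from-spring-embedding} applies. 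The single individualized vertex in the conclusion is the outside vertex $u_2 \in M$; it fixes $v$ precisely because of the same-cycle half of Claim \ref{claim:disjoint-neighborhoods-to-the-middle} that your proposal is missing, and the triangle then breaks the reflection. Neither the minimality device nor the contraction-to-a-triangle construction appears in your proposal, so as written the argument does not close.
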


\begin{proof}
 Consider an edge color $c \in C_E(G,\chi)$ of Type IIa.
 Let $A_1,\dots,A_\ell$ be the vertex sets of the connected components of $G[c]$.
 Observe that $\ell \geq 2$.
 Also fix a plane embedding of $G$.
 For every $i \in [\ell]$, the cycle $(G[c])[A_i]$ splits the plane into two regions, the interior region and the exterior region.
 We denote by $\exte_c(A_i)$ the set of vertices located in the exterior region, and $\inte_c(A_i)$ denotes the set of vertices located in the interior region.
 By Lemma \ref{la:path-between-color-components}, we may assume without loss of generality that
 \begin{equation}
  \label{eq:comps-ext}
  A_j \subseteq \exte_c(A_i)
 \end{equation}
 for all distinct $i,j \in[\ell]$.
 
 Let $D \coloneqq \{\chi(v,v) \mid v \in V(G[c])\}$.
 Recall that a path $u_0,\dots,u_m$ \emph{avoids $D$} if $\chi(u_i,u_i) \notin D$ for all $i \in [m-1]$, i.e., the color of no internal vertices of the path is contained in the set $D$.
 For a vertex $u \in V(G)$, we define
 \[\Theta_c(u) \coloneqq \{i \in [\ell] \mid \exists w \in A_i \colon \text{there is a path from $u$ to $w$ that avoids $D$}\}.\]
 Moreover, we define
 \[M(c) \coloneqq \{u \in V(G) \setminus V(G[c]) \mid |\Theta_c(u)| \geq 2\}.\]
 Since $2$-WL detects whether there is a path from $u$ to $w$ that avoids $D$ (see Observation \ref{obs:wl-knows-paths-avoiding-colors}), we conclude that $M(c)$ is $\chi$-invariant.
 Now, we pick an edge color $c \in C_E(G,\chi)$ of Type IIa for which $|M(c)|$ is minimal.
 Let $M \coloneqq M(c)$.
 A visualization is given in Figure \ref{fig:type-ii-a-short-connections}.
 
 \begin{claim}
  $M \subseteq \exte_c(A_i)$ for all $i \in [\ell]$.
 \end{claim}
 \begin{claimproof}
  This follows directly from the fact that $A_j \subseteq \exte_c(A_i)$ for all distinct $i,j \in [\ell]$.
  Indeed, suppose there are $u \in M$ and $i \in [\ell]$ such that $u \in \inte_c(A_i)$.
  By definition, $|\Theta_c(u)| \geq 2$ and there is some $i \neq j \in [\ell]$ and a vertex $w \in A_j$ such that $w$ is reachable from $u$ via a path that avoids $D$.
  But $w \in \exte_c(A_i)$ by \eqref{eq:comps-ext}.
  Hence, any path from $u$ to $w$ must visit some vertex $v \in A_i$ because of the plane embedding of $G$. This is a contradiction since $\chi(v,v) \in D$.
 \end{claimproof}
 
 Now, suppose there is no edge color $d \in C_E$ such that $c$ admits short $d$-connections (otherwise the first option of the lemma is satisfied).
 Recall that for an edge color $d \in C_E$ and a vertex $v \in V(G)$, we use the notation $N_d^+(v) \coloneqq \{w \in V(G) \mid \chi(v,w) = d\}$.
 
 \begin{claim}
  \label{claim:disjoint-neighborhoods-to-the-middle}
  For every $d \in C_E$ and every $v \neq w \in V(G[c])$, it holds that
  \[N_d^+(v) \cap N_d^+(w) \cap M = \emptyset.\]
 \end{claim}
 \begin{claimproof}
  Suppose towards a contradiction that there are an edge color $d \in C_E$ and distinct vertices $v,w \in V(G[c])$ such that there is some vertex $u \in N_d^+(v) \cap N_d^+(w) \cap M$.
  Let $W \coloneqq N_d^-(u)$.
  Observe that $v,w \in W$.
  Since $c$ does not admit short $d$-connections, there is some $i \in [\ell]$ such that $W \subseteq A_i$.
  All endpoints of $d$-colored edges in the graph $G[c]$ have the same vertex color $c_V$.
  We define $v_1,\dots,v_q$ to be the list of vertices in $A_i$ which have vertex color $c_V$, numbered according to the cyclic order along the cycle $(G[c])[A_i]$.
  
  First suppose $|W| \geq 3$.
  Then $G[d]$ contains a vertex of degree at least $3$, which means that $d$ does not have Type II.
  Since $G$ has Type IIa, we conclude that $d$ has Type~I.
  Let $U \coloneqq [u]_\chi$ denote the vertex color class of $u$.
  Also let $U_i \coloneqq \{u' \in U \mid N_d^-(u') \subseteq A_i\}$.
  The collection $\{N_d^-(u') \mid u' \in U_i\}$ forms a partition of $v_1,\dots,v_q$.
  Since $G$ is planar and $U_i \subseteq M \subseteq \exte_c(A_i)$, it follows that each set $N_d^-(u')$ forms an interval with respect to the cyclic order on $v_1,\dots,v_q$.
  But this is only possible if $U_i = \{u\}$ and $N_d^-(u) = \{v_1,\dots,v_q\}$ because $\chi$ is $2$-stable.
  Now, by Lemma \ref{la:path-between-color-components}, all sets $A_j$ for $i \neq j \in [\ell]$ are located in the same face of $(G[c,d])[A_i \cup U_i]$.
  Using the fact that $G$ is 3-connected, this provides vertices $x_1,x_2 \in A_i$ such that


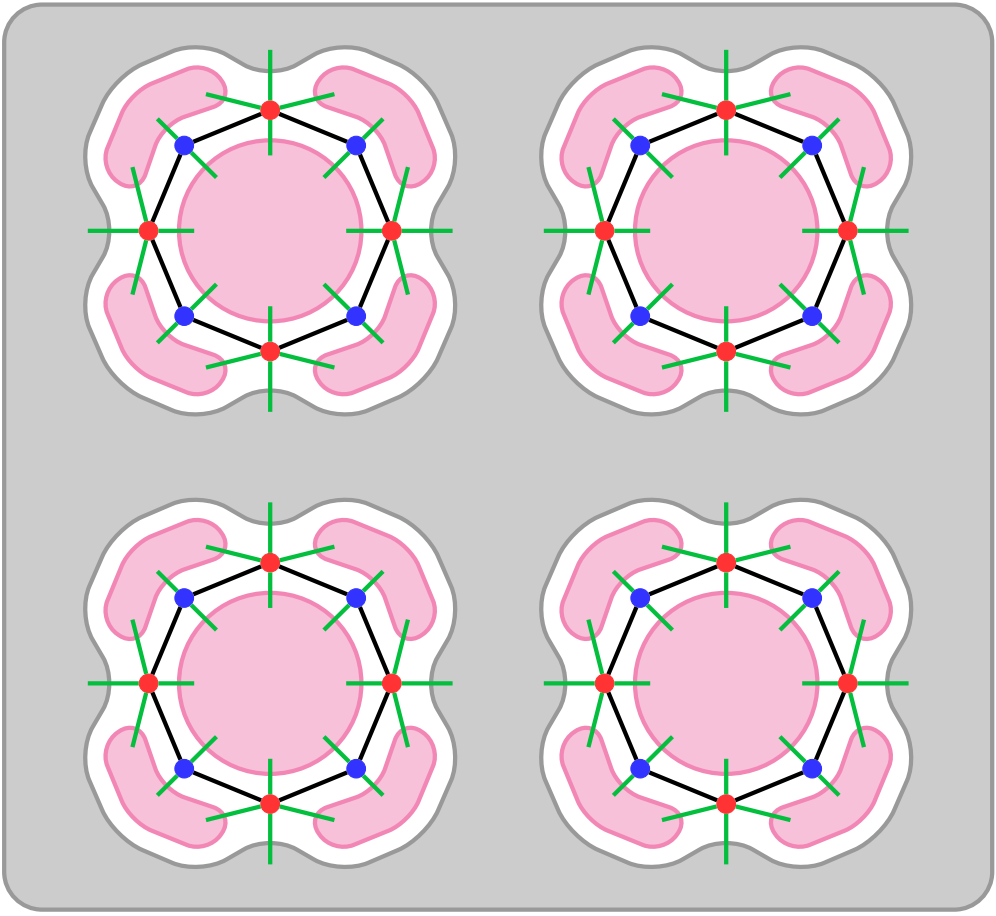
\begin{figure}
  \centering
  \begin{tikzpicture}
   
   \draw[rounded corners = 0.5cm, line width=1.6pt, gray!80, fill=gray!40] (-3.5,-3) rectangle (9.5,9);
   
   \foreach \p/\q in {0/0,6/0,0/6,6/6}{
    
    \draw[rounded corners = 0.4cm, line width=1.6pt, gray!80, fill=white] ($(\p,\q) + (0:2)$) -- ($(\p,\q) + (20:2.7)$) -- ($(\p,\q) + (45:2.9)$) -- ($(\p,\q) + (70:2.7)$) -- ($(\p,\q) + (90:2)$) -- ($(\p,\q) + (110:2.7)$) -- ($(\p,\q) + (135:2.9)$) -- ($(\p,\q) + (160:2.7)$) -- ($(\p,\q) + (180:2)$) -- ($(\p,\q) + (200:2.7)$) -- ($(\p,\q) + (225:2.9)$) -- ($(\p,\q) + (250:2.7)$) -- ($(\p,\q) + (270:2)$) -- ($(\p,\q) + (290:2.7)$) -- ($(\p,\q) + (315:2.9)$) -- ($(\p,\q) + (340:2.7)$) -- cycle;
    \draw[line width=1.6pt, magenta!60, fill=magenta!30] (\p,\q) circle (1.2cm);
    
    \foreach \i in {0,1,2,3}{
      \draw[rounded corners = 0.3cm, line width=1.6pt, magenta!60, fill=magenta!30] ($(\p,\q) + (15 + 90*\i:1.8)$) -- ($(\p,\q) + (45 + 90*\i:2)$) -- ($(\p,\q) + (75 + 90*\i:1.8)$) -- ($(\p,\q) + (70 + 90*\i:2.4)$) -- ($(\p,\q) + (45 + 90*\i:2.6)$) -- ($(\p,\q) + (20 + 90*\i:2.4)$) -- cycle;
    }
    
    \foreach \i in {0,1,2,3}{
     \node[vertex,red!80] (u\i-\p-\q) at ($(\p,\q) + (90*\i:1.6)$) {};
     \node[vertex,blue!80] (v\i-\p-\q) at ($(\p,\q) + (45 + 90*\i:1.6)$) {};
     
     \draw[line width=1.6pt, darkpastelgreen] (u\i-\p-\q) edge ($(\p,\q) + (90*\i:1.0)$);
     \draw[line width=1.6pt, darkpastelgreen] (v\i-\p-\q) edge ($(\p,\q) + (45 + 90*\i:1.0)$);
     \draw[line width=1.6pt, darkpastelgreen] (u\i-\p-\q) edge ($(\p,\q) + (-25 + 90*\i:2.0)$);
     \draw[line width=1.6pt, darkpastelgreen] (u\i-\p-\q) edge ($(\p,\q) + (90*\i:2.4)$);
     \draw[line width=1.6pt, darkpastelgreen] (u\i-\p-\q) edge ($(\p,\q) + (25 + 90*\i:2.0)$);
     \draw[line width=1.6pt, darkpastelgreen] (v\i-\p-\q) edge ($(\p,\q) + (45 + 90*\i:2.1)$);
    }
    \foreach \i/\j in {0/0,0/1,1/1,1/2,2/2,2/3,3/3,3/0}{
     \draw[line width=1.6pt] (v\i-\p-\q) edge (u\j-\p-\q);
    }
    
   }
  \end{tikzpicture}
  \caption{Visualization of planar graphs of Type IIa. The vertices of the graph $G[c]$ are colored {\sf red} and {\sf blue} ($G[c]$ is bicolored in the visualization) and edges of the graph $G[c]$ are colored {\sf black}.
   All other edges indicated in the visualization are colored green.
   The set $M$ is marked gray and the set $V(G) \setminus (V(G[c]) \cup M)$ is contained in the magenta regions.}
  \label{fig:type-ii-a-short-connections}
 \end{figure}


  \begin{enumerate}
   \item $\chi(x_1,x_1) = \chi(x_2,x_2)$,
   \item there is some $i \neq j \in [\ell]$ and a vertex $z \in A_j \cup U_j$ such that $x_1$ is reachable from $z$ via a path that avoids $\{\chi(v,v) \mid v \in V(G[c,d])\}$, and
   \item there is no vertex $z \in A_j \cup U_j$, $i \neq j \in [\ell]$, such that $x_2$ is reachable from $z$ via a path that avoids $\{\chi(v,v) \mid v \in V(G[c,d])\}$.
  \end{enumerate}
  This contradicts $\chi$ being $2$-stable (see Observation \ref{obs:wl-knows-paths-avoiding-colors}).
  
  So $|W| = 2$.
  Using similar arguments as above, $N_d^-(u')$ again forms an interval with respect to the cyclic order on $v_1,\dots,v_q$ for all $u' \in U_i$.
  In this case, we conclude that either $d$ has Type II, or $(G[c])[A_i]$ is a bicolored cycle of length $4$ and $W = \{u\}$.
  In the latter case, we can use the same arguments as above (exploiting $3$-connectedness) to conclude that this case can not occur.
  
  So suppose $d$ has Type II.
  Observe that $d$ has Type IIa since $G$ has Type IIa.
  Also recall that $M \subseteq \exte_c(A_i)$ for all $i \in [\ell]$.
  Together, this means that $M(d) \subseteq M \setminus V(G[d])$ contradicting the fact that $|M| = |M(c)|$ is minimal among all edge colors of Type IIa.
 \end{claimproof}

 Now, fix some $i \in [\ell]$ and a vertex $v \in A_i$ such that $N_G(v) \cap M \neq \emptyset$.
 If there is an edge color $d \in C_E$ such that $|N_d(v) \cap M| = 1$, then $d$ defines a matching by the previous claim and since $M$ is $\chi$-invariant.
 So suppose that no such edge color exists.
 We argue that there is some $w \in N_G(v) \cap M$ such that $\Disc_G(w) = V(G)$, which completes the proof.
 
 Towards this end, consider the following auxiliary graph $H$ defined via
 \[V(H) \coloneqq V(G[c]) \cup M \cup \{z_1,\dots,z_\ell\}\]
 and
 \[E(H) \coloneqq E_G(M,V(G[c]) \cup M) \cup E(G[c]) \cup \{z_iv \mid i \in [\ell], v \in A_i\}.\]
 We equip $H$ with a vertex coloring $\lambda\coloneqq V(H) \rightarrow \{1,2,3\}$ defined via
 \[\lambda(u) \coloneqq \begin{cases}
                         1 &\text{if } u \in M,\\
                         2 &\text{if } u \in V(G[c]),\\
                         3 &\text{if } u \in \{z_1,\dots,z_\ell.\}
                        \end{cases}.\]
 
 \begin{claim}
  $H$ is planar and $3$-connected.
 \end{claim}
 \begin{claimproof}
  It is easy to see that $H$ is planar since $M \subseteq \exte_c(A_i)$ for all $i \in [\ell]$, so the vertex $z_i$ can simply be placed in the interior region of $(G[c])[A_i]$.
  
  So it remains to show that $H$ is $3$-connected.
  Let $v,w \in V(H) \setminus \{z_1,\dots,z_\ell\}$ be two distinct vertices.
  We argue that there are three internally vertex-disjoint paths from $v$ to $w$ in $H$.
  Since $G$ is $3$-connected, there are internally vertex-disjoint paths $P_1,P_2,P_3$ from $v$ to $w$ in $G$ by Menger's Theorem (see, e.g., \cite[Chapter 3.3]{Diestel18}).
  We translate those three paths into internally vertex-disjoint paths $Q_1,Q_2,Q_3$ from $v$ to $w$ in $H$.
  
  For $i \in [\ell]$, let $B_i \coloneqq \{u \in V(G) \setminus V(G[c]) \mid \Theta_c(u) = \{i\}\}$.
  Observe that $V(G) = (V(H)\setminus \{z_1,\dots,z_\ell\}) \cup \bigcup_{i \in [\ell]}B_i$.
  
  Consider a path $P_r$, $r \in \{1,2,3\}$.
  If $P_r$ never visits a vertex from $A_i \cup B_i$, for all $i \in [\ell]$, then the path $P_r$ also exists in $H$, and we can simply set $Q_r \coloneqq P_r$.
  So there is some $i \in [\ell]$ such that $P_r$ visits some vertex from $A_i \cup B_i$.
  Let $u$ be the first occurrence of a vertex from $A_i$ on $P_r$, and $u'$ the last occurrence of a vertex from $A_i$ on $P_r$.
  Observe that all occurrences of vertices from $A_i \cup B_i$ are on the subpath from $u$ to $u'$.
  Now, we simply replace the subpath from $u$ to $u'$ by a path from $u$ to $u'$ in the graph $H[A_i \cup \{z_i\}]$.
  Since $H[A_i \cup \{z_i\}]$ is $3$-connected, this operation can be performed for all three paths $P_1,P_2,P_3$ in parallel.
  
  Repeating this procedure for all $i \in [\ell]$, we obtain the desired paths $Q_1,Q_2,Q_3$.
  Overall, this implies that $H$ is $3$-connected.
 \end{claimproof}

 Recall that we fixed some $i \in [\ell]$ and a vertex $v \in A_i$ such that $N_G(v) \cap M \neq \emptyset$.
 Let $z_i,u_1,\dots,u_\delta$ be the neighbors of $v$ in the graph $H$, listed in clockwise orientation starting at $z_i$ for some fixed embedding of $H$.
 We have that $u_1,u_\delta \in A_i$, and $u_2,\dots,u_{\delta-1} \in M$.
 By the comments above, we have that $\delta \geq 4$, i.e., $v$ has at least two neighbors in the set $M$.
 Now, we define another auxiliary graph $\widehat{H}$ which is obtained from $H$ by contradicting $u_3,\dots,u_{\delta-1}$ to a single vertex $u^*$, and adding an edge between $u_2$ and $u^*$.
 Clearly, $\widehat{H}$ is still a planar graph.
 Let $\widehat{\lambda}$ be the corresponding vertex coloring obtained from $\lambda$ ($u_3,\dots,u_{\delta-1}$ receive the same color under $\lambda$ which also becomes the color of $u^*$).
 Let $(T,\beta)$ denote the decomposition into triconnected components of $\widehat{H}$.
 
 \begin{claim}
  \label{claim:3-connected-part-in-contracted-graph}
  There is some node $t \in V(T)$ such that $A_i \cup \{z_i,u_2,u^*\} \subseteq \beta(t)$.
 \end{claim}
 \begin{claimproof}
  Since $\widehat{H}[A_i \cup \{z_i\}]$ is $3$-connected, there is a unique node $t \in V(T)$ such that $A_i \cup \{z_i\} \subseteq \beta(t)$.
  We argue that $u_2,u^* \in \beta(t)$.
  
  First, consider the vertex $u^*$.
  Since $H$ is $3$-connected, there are three internally vertex-disjoint paths $P_1,P_2,P_3$ from $u_3$ to $z_i$ in the graph $H$.
  These paths can easily be lifted to paths in $\widehat{H}$.
  Indeed, for $r \in \{1,2,3\}$, let $u$ be the last appearance of a vertex from $u_3,\dots,u_{\delta-1}$ on the path $P_r$.
  We define $Q_r$ to be the subpath of $P_r$ that starts in $u$ (and ends in $z_i$) where we replace $u$ by $u^*$.
  It is easy to see that $Q_1,Q_2,Q_3$ give three internally vertex-disjoint paths from $u^*$ to $z_i$ in the graph $\widehat{H}$.
  This implies that $u^* \in \beta(t)$.
  
  For the vertex $u_2$, the argument is slightly more complicated.
  We have that $v,u_2,u^*$ forms a triangle in the graph $\widehat{H}$.
  Hence, there is some node $t' \in V(T)$ such that $v,u_2,u^* \in \beta(t')$.
  If $t=t'$, we are done.
  So suppose $t \neq t'$.
  Then $\{v,u^*\}$ separates $u_2$ from $z_i$.
  Let $W$ denote the vertex set of the connected component of the graph $G - \{v,u^*\}$ that contains $u_2$.
  Observe that $N_{\widehat{H}}(W) = \{v,u^*\}$.
  We have that $W \subseteq V(H)$.
  Let $j \in \{3,\dots,\delta-1\}$ be the highest index such that $u_j \in N_H(W)$.
  But now, by planarity, $\{v,u_j\}$ separates $W$ from $A_i \setminus \{v\}$ in the graph $H$.
  This contradicts $H$ being $3$-connected.
 \end{claimproof}

 Now, consider the coloring $\widehat{\kappa} \coloneqq \WL{1}{\widehat{H},\widehat{\lambda},v,u_2}$, i.e., the stable coloring computed by $1$-WL after individualizing $v$ and $u_2$ in the graph $\widehat{H}$.
 It is easy to see that $|[u^*]_{\widehat{\kappa}}| = 1$ since it is the only neighbor of $v$ of color $1$ (apart from $u_2$, which is individualized).
 Also, $v,u_2,u^*$ forms a triangular face in $G[[\beta(t)]]$ (where $t \in V(T)$ is the node from Claim \ref{claim:3-connected-part-in-contracted-graph}) since a planar embedding of $\widehat{H}$ is inherited from the embedding of $H$.
 Hence, by Lemma \ref{la:coloring-from-spring-embedding} and Claim \ref{claim:3-connected-part-in-contracted-graph}, we conclude that $|[x]_{\widehat{\kappa}}| = 1$ for all $x \in A_i \cup \{z_i,u_2,u^*\}$.
 
 Next, consider the coloring $\kappa \coloneqq \WL{1}{H,\lambda,v,u_2}$.
 Clearly, the set $\{u_3,\dots,u_{\delta-1}\}$ is $\kappa$-invariant.
 This implies that any refinement made in $(\widehat{H},\widehat{\lambda})$ can also be made in $(H,\lambda)$.
 In particular, $|[x]_{\kappa}| = 1$ for all $x \in A_i \cup \{z_i,u_2\}$.
 Since $H$ is $3$-connected, we conclude that $\kappa$ is discrete by Lemma \ref{la:fixing-cycle-and-extra-vertex}.
 
 To complete the proof, we now lift the analysis of colorings one more time from $H$ to~$G$.
 Consider the set $\Disc_G(u_2)$.
 From Claim \ref{claim:disjoint-neighborhoods-to-the-middle}, we can conclude that $v \in \Disc_G(u_2)$.
 Now, after individualizing $v$ and $u_2$, any refinement made by $1$-WL in the graph $H$ can also be made in $G$.
 This is trivial for all vertices except for the $z_i$.
 However, when $z_i$ and $z_j$ receive different colors, eventually the color sets on $A_i$ and $A_j$ will be disjoint in $G$.
 This allows us to ``simulate'' the refinement process on the $z_i$-vertices.
 
 So we get that $V(G[c]) \cup M \subseteq \Disc_G(u_2)$.
 It follows that $\Disc_G(u_2) = V(G)$ by Lemma \ref{la:fixing-cycle-and-extra-vertex}.
\end{proof}

\subsection{The Classification Theorem}

\begin{figure}
 \begin{center}
 \begin{subfigure}[b]{.27\linewidth}
  \begin{tikzpicture}[scale = 0.8]
   \node[vertex,red!80] (1) at ($(0,0)+(45:0.8)$) {};
   \node[vertex,blue!80] (2) at ($(0,0)+(135:0.8)$) {};
   \node[vertex,red!80] (3) at ($(0,0)+(225:0.8)$) {};
   \node[vertex,blue!80] (4) at ($(0,0)+(315:0.8)$) {};
   
   \node[vertex,blue!80] (5) at ($(0,0)+(45:2.4)$) {};
   \node[vertex,red!80] (6) at ($(0,0)+(135:2.4)$) {};
   \node[vertex,blue!80] (7) at ($(0,0)+(225:2.4)$) {};
   \node[vertex,red!80] (8) at ($(0,0)+(315:2.4)$) {};
   
   \draw[line width=1.6pt] (1) edge (2);
   \draw[line width=1.6pt] (1) edge (4);
   \draw[line width=1.6pt] (2) edge (3);
   \draw[line width=1.6pt] (3) edge (4);
   
   \draw[line width=1.6pt] (5) edge (6);
   \draw[line width=1.6pt] (5) edge (8);
   \draw[line width=1.6pt] (6) edge (7);
   \draw[line width=1.6pt] (7) edge (8);
   
   \draw[line width=1.6pt] (1) edge (5);
   \draw[line width=1.6pt] (2) edge (6);
   \draw[line width=1.6pt] (3) edge (7);
   \draw[line width=1.6pt] (4) edge (8);
   
  \end{tikzpicture}
  \caption{Bicolored cube}
  \label{fig:graph-bi-cube}
 \end{subfigure}
 \hspace{3cm}
 \begin{subfigure}[b]{.27\linewidth}
  \begin{tikzpicture}[scale=0.8]
   
   \node[vertex,red!80] (1a) at ($(0,0)+(45:0.8)+(45:0.2)$) {};
   \node[vertex,red!80] (1b) at ($(0,0)+(45:0.8)+(180:0.4)$) {};
   \node[vertex,red!80] (1c) at ($(0,0)+(45:0.8)+(270:0.4)$) {};
   \node[vertex,blue!80] (2) at ($(0,0)+(135:0.8)$) {};
   \node[vertex,red!80] (3a) at ($(0,0)+(225:0.8)+(0:0.4)$) {};
   \node[vertex,red!80] (3b) at ($(0,0)+(225:0.8)+(90:0.4)$) {};
   \node[vertex,red!80] (3c) at ($(0,0)+(225:0.8)+(225:0.2)$) {};
   \node[vertex,blue!80] (4) at ($(0,0)+(315:0.8)$) {};
   
   \node[vertex,blue!80] (5) at ($(0,0)+(45:2.4)$) {};
   \node[vertex,red!80] (6a) at ($(0,0)+(135:2.4)+(0:0.5)$) {};
   \node[vertex,red!80] (6b) at ($(0,0)+(135:2.4)+(270:0.5)$) {};
   \node[vertex,red!80] (6c) at ($(0,0)+(135:2.4)+(315:0.7)$) {};
   \node[vertex,blue!80] (7) at ($(0,0)+(225:2.4)$) {};
   \node[vertex,red!80] (8a) at ($(0,0)+(315:2.4)+(90:0.5)$) {};
   \node[vertex,red!80] (8b) at ($(0,0)+(315:2.4)+(135:0.7)$) {};
   \node[vertex,red!80] (8c) at ($(0,0)+(315:2.4)+(180:0.5)$) {};
   
   \draw[line width=1.6pt] (1b) edge (2);
   \draw[line width=1.6pt] (1c) edge (4);
   \draw[line width=1.6pt] (2) edge (3b);
   \draw[line width=1.6pt] (3a) edge (4);
   
   \draw[line width=1.6pt] (5) edge (6a);
   \draw[line width=1.6pt] (5) edge (8a);
   \draw[line width=1.6pt] (6b) edge (7);
   \draw[line width=1.6pt] (7) edge (8c);
   
   \draw[line width=1.6pt] (1a) edge (5);
   \draw[line width=1.6pt] (2) edge (6c);
   \draw[line width=1.6pt] (3c) edge (7);
   \draw[line width=1.6pt] (4) edge (8b);
   
   \foreach \i in {1,3,6,8}{
    \foreach \v/\w in {a/b,a/c,b/c}{
     \draw[line width=1.6pt,darkpastelgreen] (\i\v) edge (\i\w);
    }
   }
  \end{tikzpicture}
  \caption{Chamfered tetrahedron}
  \label{fig:graph-chamfered-tetrahedron}
 \end{subfigure}
 \end{center}
 \caption{A chamfered tetrahedron is obtained from a bicolored cube by truncating all {\sf red} vertices.}
 \label{fig:chamfered}
\end{figure}
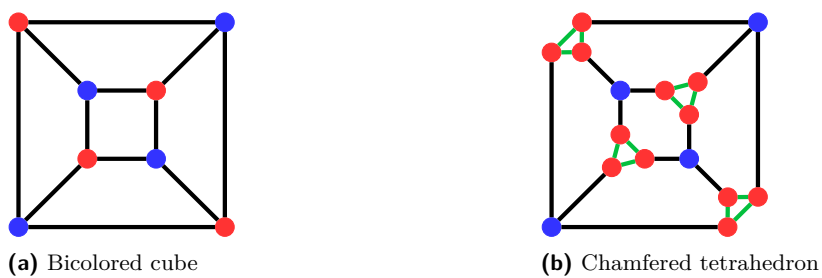

We now provide the main classification result of this section.
We start by defining the graphs that appear in it. 
Let $G$ be a $3$-connected planar graph and let $W \subseteq V(G)$ be a set of vertices.
We define $H$ to be the graph obtained from $G$ as follows. Let $w$ be a vertex in $W$. First, subdivide each edge incident to $w$ once. This gives $\ell \coloneqq \deg_G(w)$ new vertices, which we call $w_1, \dots, w_\ell$ according to the unique cyclic order in any embedding of $G$. We then remove all edges $ww_i$ and insert edges $w_iw_{(i+1) \bmod \ell}$ for $i \in [\ell]$, turning $w_1, \dots, w_\ell$ into a cycle. Each $w_i$ inherits the color of $w$. We call these steps the \emph{truncation} of $w$. One by one, we then truncate each vertex in $W$. (Note that their order does not matter for the final result.) The obtained graph is $H$; see Figure \ref{fig:chamfered} for an example.
Now, we can define the following graphs.
\begin{itemize}
 \item For every $3$-connected planar graph $G$, the \emph{truncated $G$} is obtained from $G$ by truncating all vertices.
 \item A \emph{chamfered tetrahedron} (Figure \ref{fig:graph-chamfered-tetrahedron}) is obtained from a bicolored cube (Figure \ref{fig:graph-bi-cube}) by truncating all {\sf red} vertices\footnote{The name \emph{chamfered tetrahedron} comes from an alternative construction that obtains a chamfered tetrahedron by truncation of all edges of a tetrahedron.}.
 \item A \emph{chamfered cube} is obtained from a rhombic dodecahedron (Figure \ref{fig:graph-rhombic-dodecahedron}) by truncating all {\sf blue} vertices.
 \item A \emph{chamfered octahedron} is obtained from a rhombic dodecahedron (Figure \ref{fig:graph-rhombic-dodecahedron}) by truncating all {\sf red} vertices.
 \item A \emph{chamfered dodecahedron} is obtained from a rhombic triacontahedron (Figure \ref{fig:graph-rhombic-triacontahedron}) by truncating all {\sf blue} vertices.
 \item A \emph{chamfered icosahedron} is obtained from a rhombic triacontahedron (Figure \ref{fig:graph-rhombic-triacontahedron}) by truncating all {\sf red} vertices.
\end{itemize}
Next, let $G$ be a planar graph. We define the \emph{$C_4$-subdivision of $G$} to be the graph obtained from $G$ by replacing each edge $vw \in E(G)$ with four vertices $(vw,1)$, $(vw,2)$, $(vw,3)$, $(vw,4)$ and edges $(vw,1)(vw,2)$, $(vw,2)(vw,3)$, $(vw,3)(vw,4)$, $(vw,4)(vw,1)$ and $v(vw,1)$, $w(vw,3)$.

Also, for $m \geq 2$, we define the graph $C_m^*$ with vertex set $V(C_m^*) \coloneqq [m] \times [4]$ and edge set
\begin{align*}
 E(C_m^*) {}\coloneqq{} \phantom{{}\cup{}} &\{(i,1)(i,2), (i,2)(i,3), (i,3)(i,4), (i,4)(i,1) \mid i \in [m]\}\\
                    {}\cup{} &\{(i,3)(i+1,1) \mid i \in [m-1]\} \cup \{(m,3)(1,1)\}.
\end{align*}
Finally, for $h \geq 3$, we define $K_{2,h}^*$ to be the graph with vertex set $V(K_{2,h}^*) \coloneqq \{u_1,u_2\} \uplus ([h] \times [4])$ and edge set
\begin{align*}
 E(K_{2,h}^*) {}\coloneqq{} \phantom{{}\cup{}} &\{(i,1)(i,2), (i,2)(i,3), (i,3)(i,4), (i,4)(i,1) \mid i \in [h]\}\\
                       {}\cup{} &\{u_1(i,1) \mid i \in [h]\} {}\cup{} \{u_2(i,3) \mid i \in [h]\}.
\end{align*}
Examples for the last three constructions can be found in Figure \ref{fig:c4-extensions}. 

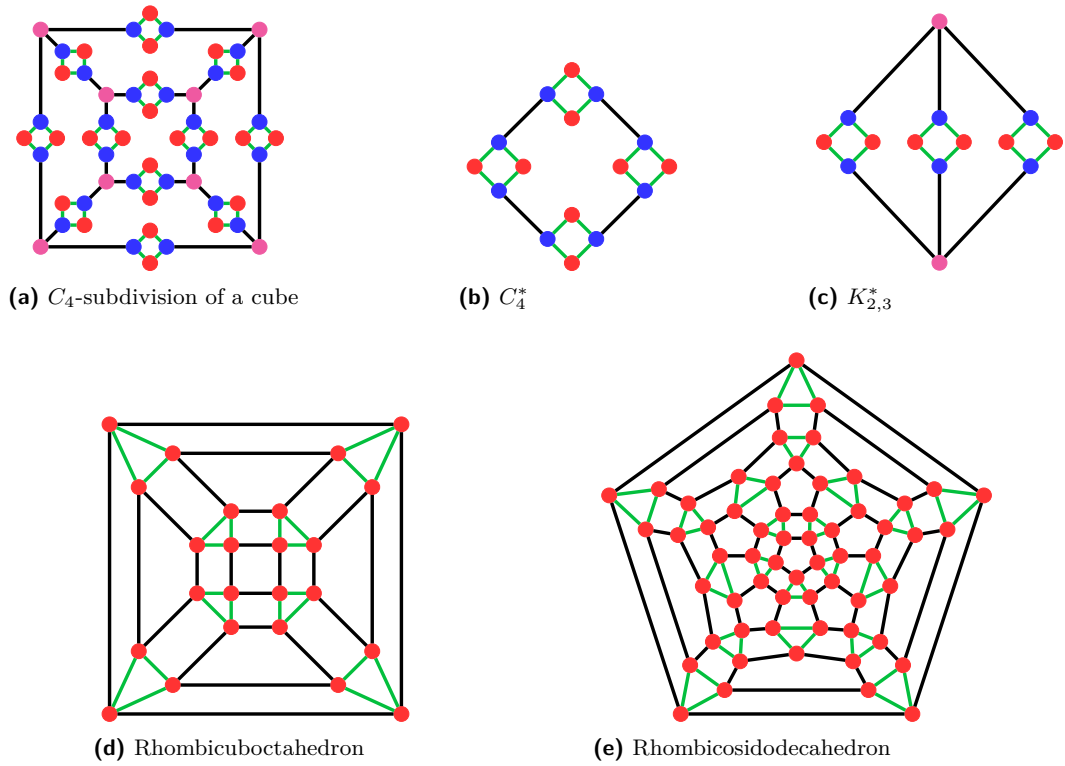
\begin{figure}
 \centering
 \begin{subfigure}[b]{.3\linewidth}
  \scalebox{\figscalelarge}{
  \begin{tikzpicture}[scale=0.9]
   \node[vertex,magenta!80] (1) at (0.8,0.8) {};
   \node[vertex,magenta!80] (2) at (-0.8,0.8) {};
   \node[vertex,magenta!80] (3) at (-0.8,-0.8) {};
   \node[vertex,magenta!80] (4) at (0.8,-0.8) {};
   
   \node[vertex,magenta!80] (5) at (2.0,2.0) {};
   \node[vertex,magenta!80] (6) at (-2.0,2.0) {};
   \node[vertex,magenta!80] (7) at (-2.0,-2.0) {};
   \node[vertex,magenta!80] (8) at (2.0,-2.0) {};
   
   \node[vertex,red!80] (9-1) at ($(0,0.8)+(0.0,0.3)$) {};
   \node[vertex,blue!80] (9-2) at ($(0,0.8)+(0.3,0.0)$) {};
   \node[vertex,red!80] (9-3) at ($(0,0.8)+(0.0,-0.3)$) {};
   \node[vertex,blue!80] (9-4) at ($(0,0.8)+(-0.3,0.0)$) {};
   
   \node[vertex,red!80] (10-1) at ($(0,-0.8)+(0.0,0.3)$) {};
   \node[vertex,blue!80] (10-2) at ($(0,-0.8)+(0.3,0.0)$) {};
   \node[vertex,red!80] (10-3) at ($(0,-0.8)+(0.0,-0.3)$) {};
   \node[vertex,blue!80] (10-4) at ($(0,-0.8)+(-0.3,0.0)$) {};
   
   \node[vertex,blue!80] (11-1) at ($(0.8,0)+(0.0,0.3)$) {};
   \node[vertex,red!80] (11-2) at ($(0.8,0)+(0.3,0.0)$) {};
   \node[vertex,blue!80] (11-3) at ($(0.8,0)+(0.0,-0.3)$) {};
   \node[vertex,red!80] (11-4) at ($(0.8,0)+(-0.3,0.0)$) {};
   
   \node[vertex,blue!80] (12-1) at ($(-0.8,0)+(0.0,0.3)$) {};
   \node[vertex,red!80] (12-2) at ($(-0.8,0)+(0.3,0.0)$) {};
   \node[vertex,blue!80] (12-3) at ($(-0.8,0)+(0.0,-0.3)$) {};
   \node[vertex,red!80] (12-4) at ($(-0.8,0)+(-0.3,0.0)$) {};
   
   \node[vertex,red!80] (13-1) at ($(0,2.0)+(0.0,0.3)$) {};
   \node[vertex,blue!80] (13-2) at ($(0,2.0)+(0.3,0.0)$) {};
   \node[vertex,red!80] (13-3) at ($(0,2.0)+(0.0,-0.3)$) {};
   \node[vertex,blue!80] (13-4) at ($(0,2.0)+(-0.3,0.0)$) {};
   
   \node[vertex,red!80] (14-1) at ($(0,-2.0)+(0.0,0.3)$) {};
   \node[vertex,blue!80] (14-2) at ($(0,-2.0)+(0.3,0.0)$) {};
   \node[vertex,red!80] (14-3) at ($(0,-2.0)+(0.0,-0.3)$) {};
   \node[vertex,blue!80] (14-4) at ($(0,-2.0)+(-0.3,0.0)$) {};
   
   \node[vertex,blue!80] (15-1) at ($(2.0,0)+(0.0,0.3)$) {};
   \node[vertex,red!80] (15-2) at ($(2.0,0)+(0.3,0.0)$) {};
   \node[vertex,blue!80] (15-3) at ($(2.0,0)+(0.0,-0.3)$) {};
   \node[vertex,red!80] (15-4) at ($(2.0,0)+(-0.3,0.0)$) {};
   
   \node[vertex,blue!80] (16-1) at ($(-2.0,0)+(0.0,0.3)$) {};
   \node[vertex,red!80] (16-2) at ($(-2.0,0)+(0.3,0.0)$) {};
   \node[vertex,blue!80] (16-3) at ($(-2.0,0)+(0.0,-0.3)$) {};
   \node[vertex,red!80] (16-4) at ($(-2.0,0)+(-0.3,0.0)$) {};
   
   \node[vertex,blue!80] (17-1) at ($(1.4,1.4)+(0.2,0.2)$) {};
   \node[vertex,red!80] (17-2) at ($(1.4,1.4)+(-0.2,0.2)$) {};
   \node[vertex,blue!80] (17-3) at ($(1.4,1.4)+(-0.2,-0.2)$) {};
   \node[vertex,red!80] (17-4) at ($(1.4,1.4)+(0.2,-0.2)$) {};
   
   \node[vertex,red!80] (18-1) at ($(-1.4,1.4)+(0.2,0.2)$) {};
   \node[vertex,blue!80] (18-2) at ($(-1.4,1.4)+(-0.2,0.2)$) {};
   \node[vertex,red!80] (18-3) at ($(-1.4,1.4)+(-0.2,-0.2)$) {};
   \node[vertex,blue!80] (18-4) at ($(-1.4,1.4)+(0.2,-0.2)$) {};
   
   \node[vertex,blue!80] (19-1) at ($(-1.4,-1.4)+(0.2,0.2)$) {};
   \node[vertex,red!80] (19-2) at ($(-1.4,-1.4)+(-0.2,0.2)$) {};
   \node[vertex,blue!80] (19-3) at ($(-1.4,-1.4)+(-0.2,-0.2)$) {};
   \node[vertex,red!80] (19-4) at ($(-1.4,-1.4)+(0.2,-0.2)$) {};
   
   \node[vertex,red!80] (20-1) at ($(1.4,-1.4)+(0.2,0.2)$) {};
   \node[vertex,blue!80] (20-2) at ($(1.4,-1.4)+(-0.2,0.2)$) {};
   \node[vertex,red!80] (20-3) at ($(1.4,-1.4)+(-0.2,-0.2)$) {};
   \node[vertex,blue!80] (20-4) at ($(1.4,-1.4)+(0.2,-0.2)$) {};
   
   \foreach \i in {9,...,20}{
    \foreach \v/\w in {1/2,2/3,3/4,1/4}{
     \draw[line width=1.6pt,darkpastelgreen] (\i-\v) edge (\i-\w);
    }
   }
   
   \foreach \v/\w in {1/9-2,2/9-4,2/12-1,3/12-3,3/10-4,4/10-2,4/11-3,1/11-1,5/13-2,6/13-4,6/16-1,7/16-3,7/14-4,8/14-2,8/15-3,5/15-1,
                      1/17-3,5/17-1,2/18-4,6/18-2,3/19-1,7/19-3,4/20-2,8/20-4}{
    \draw[line width=1.6pt] (\v) edge (\w);
   }
   
  \end{tikzpicture}
  }
  \caption{$C_4$-subdivision of a cube}
  \label{fig:graph-c4-cube}
 \end{subfigure}
 \hfill
 \begin{subfigure}[b]{.205\linewidth}
  \scalebox{\figscalelarge}{
  \begin{tikzpicture}
   
   \node[vertex,blue!80] (1-1) at ($(0,1.2)+(0.4,0)$) {};
   \node[vertex,red!80] (1-2) at ($(0,1.2)+(0,0.4)$) {};
   \node[vertex,blue!80] (1-3) at ($(0,1.2)+(-0.4,0)$) {};
   \node[vertex,red!80] (1-4) at ($(0,1.2)+(0,-0.4)$) {};
   
   \node[vertex,red!80] (2-1) at ($(1.2,0)+(0.4,0)$) {};
   \node[vertex,blue!80] (2-2) at ($(1.2,0)+(0,0.4)$) {};
   \node[vertex,red!80] (2-3) at ($(1.2,0)+(-0.4,0)$) {};
   \node[vertex,blue!80] (2-4) at ($(1.2,0)+(0,-0.4)$) {};
   
   \node[vertex,blue!80] (3-1) at ($(0,-1.2)+(0.4,0)$) {};
   \node[vertex,red!80] (3-2) at ($(0,-1.2)+(0,0.4)$) {};
   \node[vertex,blue!80] (3-3) at ($(0,-1.2)+(-0.4,0)$) {};
   \node[vertex,red!80] (3-4) at ($(0,-1.2)+(0,-0.4)$) {};
   
   \node[vertex,red!80] (4-1) at ($(-1.2,0)+(0.4,0)$) {};
   \node[vertex,blue!80] (4-2) at ($(-1.2,0)+(0,0.4)$) {};
   \node[vertex,red!80] (4-3) at ($(-1.2,0)+(-0.4,0)$) {};
   \node[vertex,blue!80] (4-4) at ($(-1.2,0)+(0,-0.4)$) {};
   
   \foreach \i in {1,2,3,4}{
    \foreach \v/\w in {1/2,2/3,3/4,1/4}{
     \draw[line width=1.6pt,darkpastelgreen] (\i-\v) edge (\i-\w);
    }
   }
   
   \draw[line width=1.6pt] (1-1) edge (2-2);
   \draw[line width=1.6pt] (1-3) edge (4-2);
   \draw[line width=1.6pt] (2-4) edge (3-1);
   \draw[line width=1.6pt] (3-3) edge (4-4);
   
  \end{tikzpicture}
  }
  \caption{$C_4^*$}
  \label{fig:graph-c-4-star}
 \end{subfigure}
 \hfill
 \begin{subfigure}[b]{.24\linewidth}
  \scalebox{\figscalelarge}{
  \begin{tikzpicture}
   
   \node[vertex,red!80] (1-1) at ($(0,0)+(0.4,0)$) {};
   \node[vertex,blue!80] (1-2) at ($(0,0)+(0,0.4)$) {};
   \node[vertex,red!80] (1-3) at ($(0,0)+(-0.4,0)$) {};
   \node[vertex,blue!80] (1-4) at ($(0,0)+(0,-0.4)$) {};
   
   \node[vertex,red!80] (2-1) at ($(1.5,0)+(0.4,0)$) {};
   \node[vertex,blue!80] (2-2) at ($(1.5,0)+(0,0.4)$) {};
   \node[vertex,red!80] (2-3) at ($(1.5,0)+(-0.4,0)$) {};
   \node[vertex,blue!80] (2-4) at ($(1.5,0)+(0,-0.4)$) {};
   
   \node[vertex,red!80] (3-1) at ($(-1.5,0)+(0.4,0)$) {};
   \node[vertex,blue!80] (3-2) at ($(-1.5,0)+(0,0.4)$) {};
   \node[vertex,red!80] (3-3) at ($(-1.5,0)+(-0.4,0)$) {};
   \node[vertex,blue!80] (3-4) at ($(-1.5,0)+(0,-0.4)$) {};
   
   \node[vertex,magenta!80] (u1) at (0,2) {};
   \node[vertex,magenta!80] (u2) at (0,-2) {};
   
   \foreach \i in {1,2,3}{
    \foreach \v/\w in {1/2,2/3,3/4,1/4}{
     \draw[line width=1.6pt,darkpastelgreen] (\i-\v) edge (\i-\w);
    }
   }
   
   \draw[line width=1.6pt] (u1) edge (1-2);
   \draw[line width=1.6pt] (u1) edge (2-2);
   \draw[line width=1.6pt] (u1) edge (3-2);
   \draw[line width=1.6pt] (u2) edge (1-4);
   \draw[line width=1.6pt] (u2) edge (2-4);
   \draw[line width=1.6pt] (u2) edge (3-4);
   
  \end{tikzpicture}
  }
  \caption{$K_{2,3}^*$}
  \label{fig:graph-k23-star}
 \end{subfigure}
 \vspace{0.22cm}
 \begin{center}
 \begin{subfigure}[b]{.31\linewidth}
  \scalebox{\figscalelarge}{
  \begin{tikzpicture}[scale = 0.8]
   \node[vertex,red!80] (1) at (0.5,0.5) {};
   \node[vertex,red!80] (2) at (-0.5,0.5) {};
   \node[vertex,red!80] (3) at (0.5,-0.5) {};
   \node[vertex,red!80] (4) at (-0.5,-0.5) {};
   
   \node[vertex,red!80] (5) at (1.2,0.5) {};
   \node[vertex,red!80] (6) at (0.5,1.2) {};
   \node[vertex,red!80] (7) at (-1.2,0.5) {};
   \node[vertex,red!80] (8) at (-0.5,1.2) {};
   \node[vertex,red!80] (9) at (1.2,-0.5) {};
   \node[vertex,red!80] (10) at (0.5,-1.2) {};
   \node[vertex,red!80] (11) at (-1.2,-0.5) {};
   \node[vertex,red!80] (12) at (-0.5,-1.2) {};
   
   \node[vertex,red!80] (13) at (2.4,1.7) {};
   \node[vertex,red!80] (14) at (1.7,2.4) {};
   \node[vertex,red!80] (15) at (-2.4,1.7) {};
   \node[vertex,red!80] (16) at (-1.7,2.4) {};
   \node[vertex,red!80] (17) at (2.4,-1.7) {};
   \node[vertex,red!80] (18) at (1.7,-2.4) {};
   \node[vertex,red!80] (19) at (-2.4,-1.7) {};
   \node[vertex,red!80] (20) at (-1.7,-2.4) {};
   
   \node[vertex,red!80] (21) at (3.0,3.0) {};
   \node[vertex,red!80] (22) at (-3.0,3.0) {};
   \node[vertex,red!80] (23) at (3.0,-3.0) {};
   \node[vertex,red!80] (24) at (-3.0,-3.0) {};
   
   \foreach \i/\j in {1/2,2/4,3/4,1/3,5/9,6/8,7/11,10/12,5/13,6/14,7/15,8/16,9/17,10/18,11/19,12/20,13/17,14/16,15/19,18/20,21/22,22/24,23/24,21/23}{
    \draw[line width=1.6pt] (\i) edge (\j);
   }
   \foreach \i/\j in {1/5,1/6,2/7,2/8,3/9,3/10,4/11,4/12,5/6,7/8,9/10,11/12,13/21,14/21,15/22,16/22,17/23,18/23,19/24,20/24,13/14,15/16,17/18,19/20}{
    \draw[line width=1.6pt,darkpastelgreen] (\i) edge (\j);
   }
   
  \end{tikzpicture}
  }
  \caption{Rhombicuboctahedron}
  \label{fig:graph-rhombicuboctahedron}
 \end{subfigure}
 \hspace{2cm}
 \begin{subfigure}[b]{.37\linewidth}
  \scalebox{\figscalelarge}{
  \begin{tikzpicture}[scale = 0.9]
   
   \node[vertex,red!80] (1) at ($(0,0)+(54:0.4)$) {};
   \node[vertex,red!80] (2) at ($(0,0)+(126:0.4)$) {};
   \node[vertex,red!80] (3) at ($(0,0)+(198:0.4)$) {};
   \node[vertex,red!80] (4) at ($(0,0)+(270:0.4)$) {};
   \node[vertex,red!80] (5) at ($(0,0)+(342:0.4)$) {};
   
   \foreach[count=\i] \v in {6,...,15}{
    \node[vertex,red!80] (\v) at ($(0,0)+(36*\i:0.8)$) {};
   }
   \foreach[count=\i] \v in {16,...,25}{
    \node[vertex,red!80] (\v) at ($(0,0)+(36*\i:1.4)$) {};
   }
   
   \foreach[count=\i] \v in {26,...,30}{
    \node[vertex,red!80] (\v) at ($(0,0)+(18+72*\i:1.7)$) {};
   }
   \foreach[count=\i] \v in {31,...,35}{
    \node[vertex,red!80] (\v) at ($(0,0)+(-18+72*\i:1.8)$) {};
   }
   
   \node[vertex,red!80] (36) at ($(0,0)+(82:2.2)$) {};
   \node[vertex,red!80] (37) at ($(0,0)+(98:2.2)$) {};
   \node[vertex,red!80] (38) at ($(0,0)+(154:2.2)$) {};
   \node[vertex,red!80] (39) at ($(0,0)+(170:2.2)$) {};
   \node[vertex,red!80] (40) at ($(0,0)+(226:2.2)$) {};
   \node[vertex,red!80] (41) at ($(0,0)+(242:2.2)$) {};
   \node[vertex,red!80] (42) at ($(0,0)+(298:2.2)$) {};
   \node[vertex,red!80] (43) at ($(0,0)+(314:2.2)$) {};
   \node[vertex,red!80] (44) at ($(0,0)+(10:2.2)$) {};
   \node[vertex,red!80] (45) at ($(0,0)+(26:2.2)$) {};
   
   \node[vertex,red!80] (46) at ($(0,0)+(82:2.8)$) {};
   \node[vertex,red!80] (47) at ($(0,0)+(98:2.8)$) {};
   \node[vertex,red!80] (48) at ($(0,0)+(154:2.8)$) {};
   \node[vertex,red!80] (49) at ($(0,0)+(170:2.8)$) {};
   \node[vertex,red!80] (50) at ($(0,0)+(226:2.8)$) {};
   \node[vertex,red!80] (51) at ($(0,0)+(242:2.8)$) {};
   \node[vertex,red!80] (52) at ($(0,0)+(298:2.8)$) {};
   \node[vertex,red!80] (53) at ($(0,0)+(314:2.8)$) {};
   \node[vertex,red!80] (54) at ($(0,0)+(10:2.8)$) {};
   \node[vertex,red!80] (55) at ($(0,0)+(26:2.8)$) {};
   
   \foreach[count=\i] \v in {56,...,60}{
    \node[vertex,red!80] (\v) at ($(0,0)+(18+72*\i:3.6)$) {};
   }
   
   \foreach \i/\j in {1/2,2/3,3/4,4/5,1/5,7/8,9/10,11/12,13/14,15/6,6/16,7/17,8/18,9/19,10/20,11/21,12/22,13/23,14/24,15/25,
                      17/26,18/26,19/27,20/27,21/28,22/28,23/29,24/29,25/30,16/30,31/36,31/45,32/37,32/38,33/39,33/40,34/41,34/42,35/43,35/44,
                      36/46,37/47,38/48,39/49,40/50,41/51,42/52,43/53,44/54,45/55,46/55,47/48,49/50,51/52,53/54,56/57,57/58,58/59,59/60,56/60}{
    \draw[line width=1.6pt] (\i) edge (\j);
   }
   \foreach \i/\j in {1/6,1/7,2/8,2/9,3/10,3/11,4/12,4/13,5/14,5/15,6/7,8/9,10/11,12/13,14/15,16/17,18/19,20/21,22/23,24/25,
                      16/31,17/31,18/32,19/32,20/33,21/33,22/34,23/34,24/35,25/35,26/36,26/37,27/38,27/39,28/40,28/41,29/42,29/43,30/44,30/45,
                      36/37,38/39,40/41,42/43,44/45,46/47,48/49,50/51,52/53,54/55,46/56,47/56,48/57,49/57,50/58,51/58,52/59,53/59,54/60,55/60}{
    \draw[line width=1.6pt,darkpastelgreen] (\i) edge (\j);
   }
   
  \end{tikzpicture}
  }
  \caption{Rhombicosidodecahedron}
  \label{fig:graph-rhombicosidodecahedron}
 \end{subfigure}
 \end{center}
 \caption{Examples for the constructions from the classification for graphs of Type IIa.}
 \label{fig:c4-extensions}
\end{figure}

Let $H$ be a graph and $f\colon E(H) \rightarrow \NN$ be a function.
The \emph{$f$-subdivision of $H$} is the graph $H^{(f)}$ obtained from $H$ by replacing each edge $e$ with $f(e)$ parallel paths of length~$2$ (if $f(e) = 0$, the edge $e$ remains unaltered).
Formally, $H^{(f)}$ is the graph with vertex set $V(H^{(f)}) \coloneqq V(H) \uplus \{(e,i) \mid e \in E(H), i \in [f(e)]\}$ and edge set
$E(H^{(s)}) \coloneqq \{e \in E(H) \mid f(e) = 0\} \cup \{v(e,i) \mid e \in E(H), v \in e, i \in [f(e)]\}$.
A graph $G$ is a \emph{parallel subdivision of $H$} if there is a function $f\colon E(H) \rightarrow \NN$ such that $G$ is isomorphic to $H^{(f)}$.

\begin{theorem}
 \label{thm:type-ii-connected-subgraphs-classification}
 Let $G$ be a $3$-connected planar graph of Type IIa and let $\chi \coloneqq \WL{2}{G}$ be the coloring computed by $2$-WL.
 Then one of the following options holds.
 \begin{enumerate}[label=(\Alph*)]
  \item\label{item:type-ii-connected-subgraphs-classification-1} There is a vertex $v \in V(G)$ such that $\Disc_G(v) = V(G)$,
  \item there is an edge color $c \in C_E(G,\chi)$ that defines a matching, or
  \item\label{item:type-ii-connected-subgraphs-classification-3} there are colors $c,d \in C_E(G,\chi)$ such that $G[c,d]$ is isomorphic to a parallel subdivision of one of the following graphs:
  \begin{enumerate}[label=\arabic*.]
   \item a truncated tetrahedron, a truncated cube, a truncated octahedron, a truncated dodecahedron, a truncated icosahedron,
   \item an $m$-side prism for $m \geq 3$,
   \item a cuboctahedron (with two edge colors), a rhombicuboctahedron, a rhombicosidodecahedron,
   \item a $C_4$-subdivision of one of the graphs from Figure \ref{fig:graph-k4} -- \ref{fig:graph-icosahedron},
   \item a $C_m^*$ for $m \geq 2$,
   \item a $K_{2,h}^*$ for $h \geq 3$,
   \item a chamfered tetrahedron, a chamfered cube, a chamfered octahedron, a chamfered dodecahedron, or a chamfered icosahedron.
  \end{enumerate}
 \end{enumerate}
\end{theorem}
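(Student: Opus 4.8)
The plan is to use Lemma~\ref{la:cycle-connections} as the entry point. Since $G$ has Type IIa, one of its three alternatives holds, and the second and third immediately yield Options (B) and (A) of the theorem. Hence I may assume the first alternative, so there are edge colors $c,d \in C_E(G,\chi)$ such that $c$ has Type IIa and admits short $d$-connections. Fixing such $c,d$, I write $A_1,\dots,A_\ell$ for the vertex sets of the connected components (cycles) of $G[c]$, all of the same length by the properties of $2$-WL, and aim to show that $G[c,d]$ is a parallel subdivision of one of the graphs listed in Option~(C).

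First I would show that $G[c,d]$ is connected. Because $2$-WL detects $d$-connections between $c$-components (Observation~\ref{obs:wl-knows-paths-avoiding-colors}) and $c$ admits short $d$-connections, every $c$-component is joined to another via a $d$-path of length at most $2$; combining this with Lemma~\ref{la:path-between-color-components} and the factor-graph stability of Lemma~\ref{la:factor-graph-2-wl} shows that the auxiliary ``component graph'' on $A_1,\dots,A_\ell$ is connected, whence so is $G[c,d]$. Next, $2$-WL-regularity forces the local join pattern to be uniform: every vertex of a $c$-cycle carries the same number of incident $d$-edges, arranged in the same cyclic way relative to the cycle, and the ``middle'' vertices of length-$2$ $d$-connections (when $d$ has Type II) all look alike. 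The natural device is then to contract each $c$-cycle $A_i$ to a single vertex; by Lemma~\ref{la:factor-graph-2-wl} the quotient $G[c,d]/c$ carries a $2$-stable coloring, is planar, and inherits strong regularity from $\chi$.

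From here the argument splits according to how $d$ attaches to the $c$-cycles. If $d$ matches up $c$-cycles directly (distance-$1$ connections), the quotient recovers a $3$-connected, essentially edge-transitive planar graph, to which Theorem~\ref{thm:classification-edge-transitive} and the remarks on subdivisions below it apply; unfolding the cycles back in yields the prisms, the truncated Platonic solids, the cuboctahedron, the rhombicuboctahedron, the rhombicosidodecahedron, and the cyclic families $C_m^*$ and $K_{2,h}^*$. If instead the $d$-connections pass through genuine middle vertices, these middle vertices realize the subdivided edges of the base graph, producing the $C_4$-subdivisions and, more delicately, the chamfered solids, which are recognized by relating the quotient to the bicolored Type~III graphs of Theorem~\ref{thm:classification-edge-transitive} (the bicolored cube, rhombic dodecahedron, and rhombic triacontahedron). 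In every branch, planarity via Wagner's Theorem~\ref{thm:wagner}, $3$-connectedness, Euler's formula, and the detection of short cycles (Lemma~\ref{la:wl-detects-cycles}) are used to pin down the number and lengths of the faces, and hence the exact isomorphism type.

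The hard part will be the breadth and bookkeeping of this case analysis: unlike the single-color situation of Section~\ref{sec:edge-transitive}, here two interacting colors must be tracked simultaneously, and one must verify both that each listed graph genuinely arises and that no other configuration is consistent with $2$-WL-regularity. I expect the chamfered solids to be the most demanding cases, since there the quotient is not itself one of the basic edge-transitive graphs but a truncation of a bicolored Type~III graph, so the classification from Section~\ref{sec:edge-transitive} must be imported and combined with a careful planar face analysis; ruling out spurious ``$2$-subdivision'' variants, exactly as was done for the cuboctahedron and icosidodecahedron in Theorem~\ref{thm:type-iii-classification}, is likely to be the technically most intricate step.
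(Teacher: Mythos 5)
Your overall strategy coincides with the paper's: enter via Lemma~\ref{la:cycle-connections}, fix colors $c,d$ with short $d$-connections, contract the $c$-cycles using Lemma~\ref{la:factor-graph-2-wl}, classify the quotient $(G/c)[\widetilde{d}]$ via Theorem~\ref{thm:classification-edge-transitive}, and then unfold, splitting on how $d$ attaches to the cycles (this is the paper's unicolored/bicolored dichotomy for $(G/c)[\widetilde{d}]$). The list of outcomes you anticipate, and the need to rule out the $C_4$-subdivisions of the cuboctahedron and icosidodecahedron by the method of Theorem~\ref{thm:type-iii-classification}, also match the paper (Lemma~\ref{la:type-ii-connected-subgraphs-classification}).

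There is, however, a concrete gap in your first step: the claim that $G[c,d]$ is connected does not follow from the argument you give. That $c$ admits short $d$-connections, together with $2$-WL-regularity, only shows that every component $A_i$ is $d$-joined to \emph{some other} component, i.e., that every vertex of the auxiliary component graph has positive degree; it does not rule out that the components pair up or group into several connected pieces (a priori $G[c,d]$ could be a disjoint union of prisms, say). Lemma~\ref{la:path-between-color-components} concerns paths avoiding the vertex colors of $G[C]$ and yields $2$-connectedness of a different auxiliary graph; it does not give connectivity of the $d$-connection graph. In the paper the logical order is reversed: one first determines the isomorphism type of a single connected component $B$ of $G[c,d]$, and only then deduces connectivity via Lemma~\ref{la:disconnected-color-subgraphs}, whose use requires knowing that no vertex color class of $(G[c,d])[B]$ lies entirely on one face (so that a disconnected $G[c,d]$ would force a $K_5$- or $K_{3,3}$-minor, contradicting Theorem~\ref{thm:wagner}). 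Moreover, this minor argument genuinely fails for the family $C_m^*$ (whose degree-$2$ color class does lie on a common face), and a separate argument in the spirit of the cycle case of Lemma~\ref{la:3-faces-connected} is needed there (Lemma~\ref{la:c-m-star-connected}). You should restructure the proof so that connectivity is established per case after the local classification, rather than asserted up front. A smaller omission: in the unicolored case you should first dispose of the subcase where $G[d]$ is bicolored, which yields a definable matching (Option~(B)) by a degree-counting argument, before assuming $G[d]$ is a union of $K_2$'s or of cycles.
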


\begin{remark}
 There are four Archimedean solids that are explicitly listed neither in Theorem \ref{thm:type-iii-classification} nor in Theorem \ref{thm:type-ii-connected-subgraphs-classification}.
 These are the truncated cuboctahedron, the truncated icosidodecahedron, the snub cube, and the snub dodecahedron.
 The graphs corresponding to these solids have fixing number $1$ under $1$-WL and hence, they implicitly appear in Theorem \ref{thm:type-ii-connected-subgraphs-classification}, Option \ref{item:type-ii-connected-subgraphs-classification-1}.
\end{remark}

We now dive into the lengthy proof of Theorem \ref{thm:type-ii-connected-subgraphs-classification}.
Using Lemma \ref{la:cycle-connections}, we may assume that there are colors $c,d \in C_E(G,\chi)$ such that $c$ has Type IIa and admits short $d$-connections in~$G$.
The following lemma turns out to be useful to show that the graph $G[c,d]$ is connected.

Recall that a set of edge colors $C \subseteq C_E(G,\chi)$ is locally determined if $C \subseteq C_E(G[A],\chi)$ for every connected component $A$ of $G[C]$, i.e., all edge colors from $C$ appear in every connected component of $G[C]$.

\begin{lemma}
 \label{la:disconnected-color-subgraphs}
 Let $G$ be a connected graph and let $\chi \coloneqq \WL{2}{G}$ be the coloring computed by $2$-WL.
 Let $C \subseteq C_E(G,\chi)$ be a locally determined set of edge colors such that $G[C]$ is disconnected.
 Let $A$ be the vertex set of a connected component of $G[C]$.
 Then there is a $\chi$-invariant set $W \subseteq V(G)$ and a vertex set $Z$ of a connected component of $G - A$ such that $V(G[C]) \setminus A \subseteq Z$ and $N_G(Z) = W \cap A$.
\end{lemma}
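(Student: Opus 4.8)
The plan is to recognize this lemma as an abstraction of the argument already carried out inside the proof of Lemma~\ref{la:3-faces-connected} (in particular the reasoning around Claim~\ref{claim:neighborhood-color-closed}), and to reuse its structure. Write $A = A_1$ and let $A_1,\dots,A_\ell$ be the vertex sets of the connected components of $G[C]$, so $\ell \geq 2$ since $G[C]$ is disconnected, and set $D \coloneqq \{\chi(v,v),\chi(w,w) \mid v,w \in V(G),\ \chi(v,w) \in C\}$. First I would locate $Z$. For $\ell \geq 3$, applying Lemma~\ref{la:path-between-color-components} with $A_1$ as the avoided component yields, for every pair of distinct $i,j \in \{2,\dots,\ell\}$, a path from $A_i$ to $A_j$ in $G - A_1$; hence $A_2,\dots,A_\ell$ all lie in a single connected component $Z$ of $G - A$ (for $\ell = 2$ this is immediate, as $A_2$ is connected). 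This already gives $V(G[C]) \setminus A = A_2 \cup \dots \cup A_\ell \subseteq Z$, and $N_G(Z) \subseteq A$ holds by definition of a connected component of $G - A$.

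The heart of the proof is to show that $N_G(Z)$ is \emph{color-closed} inside $A$, i.e.\ that $N_G(Z)\cap V_d \neq \emptyset$ implies $V_d \cap A \subseteq N_G(Z)$ for every $d \in C_V$; granting this, the set $W \coloneqq \bigcup\{V_d \mid V_d \cap N_G(Z) \neq \emptyset\}$ is $\chi$-invariant and satisfies $N_G(Z) = W \cap A$. Two preliminary observations make this accessible. First, by $2$-stability of $\chi$, being incident to a $C$-colored edge is determined by one's vertex color, so $\{y \in V(G) \mid \chi(y,y) \in D\} = V(G[C])$; consequently a path avoids $D$ (in the sense of Observation~\ref{obs:wl-knows-paths-avoiding-colors}) exactly when all of its internal vertices lie outside $G[C]$. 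Second, the set
\[
 M \coloneqq \{(v,w) \in V(G)^2 \mid v,w \text{ lie in distinct components of } G[C] \text{ and some path from } v \text{ to } w \text{ avoids } D\}
\]
is detected by $2$-WL: both membership conditions depend only on $\chi(v,w)$ (the first because $2$-WL detects the components of $G[C]$, the second by Observation~\ref{obs:wl-knows-paths-avoiding-colors}), so there is a color set $C_M$ with $(v,w) \in M \iff \chi(v,w) \in C_M$.

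With these in place I would prove color-closedness as follows. Given $u \in N_G(Z)\cap V_d$, pick a neighbor $v \in Z$ of $u$; using that $G[Z]$ is connected and contains $A_2 \cup \dots \cup A_\ell$, extend the edge $uv$ by a shortest path inside $G[Z]$ to the first vertex $w \in A_2 \cup \dots \cup A_\ell$ (or take $w=v$ if $v$ itself lies in some $A_j$). All internal vertices of the resulting $u$--$w$ path lie in $Z \setminus V(G[C])$, so the path avoids $D$ and $(u,w) \in M$, i.e.\ $\chi(u,w) \in C_M$. Now let $v' \in V_d \cap A$ be arbitrary; since $\chi(u,u) = \chi(v',v') = d$ and $\chi$ is $2$-stable, there is a $w'$ with $\chi(v',w') = \chi(u,w) \in C_M$, hence $(v',w') \in M$, and because $v' \in A = A_1$ the partner satisfies $w' \in A_{j'}$ for some $j' \geq 2$, so $w' \in Z$. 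The witnessing $D$-avoiding path from $v'$ to $w'$ has all vertices except $v'$ outside $A$, so it gives a path in $G - A$ ending at $w' \in Z$; its first vertex is therefore a neighbor of $v'$ lying in $Z$, whence $v' \in N_G(Z)$. This yields $V_d \cap A \subseteq N_G(Z)$ and completes the argument. The main obstacle is precisely this color-closedness step: the delicate part is encoding the combinatorial predicate ``$v'$ has a neighbor in $Z$'' as the $2$-WL-invariant relation $M$, which hinges on the identification $\{y \mid \chi(y,y)\in D\} = V(G[C])$ and on carefully lifting a $D$-avoiding path back into $G - A$ so that it reaches $Z$.
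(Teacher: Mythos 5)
Your proposal is correct and follows essentially the same route as the paper: locate $Z$ via Lemma~\ref{la:path-between-color-components}, then show $N_G(Z)$ is color-closed inside $A$ by exploiting that the relation ``lies in a different component of $G[C]$ and is reachable by a $D$-avoiding path'' is determined by the $2$-WL pair color (Observation~\ref{obs:wl-knows-paths-avoiding-colors}). The paper's proof is exactly this argument, merely phrased without naming the relation $M$ explicitly, and it likewise credits the reasoning to the first part of Lemma~\ref{la:3-faces-connected}.
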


We remark that the lemma is proved by the same arguments already used in the first part of the proof of Lemma \ref{la:3-faces-connected}.

\begin{proof}
 Let $A_1,\dots,A_\ell$ denote the connected components of $G[C]$.
 Without loss of generality assume that $A = A_1$.
 By Lemma \ref{la:path-between-color-components}, there is a vertex set $Z$ of some connected component of the graph $G - A$ such that $A_i \subseteq Z$ for all $i \in \{2,\dots,\ell\}$.
 Now let $v \in N_G(Z)$ and $w \in A$ such that $\chi(v,v) = \chi(w,w)$.
 To prove the lemma, it suffices to show that $w \in N_G(Z)$.
 Let $D \coloneqq C_V(G[C],\chi)$ denote the set of vertex colors present in the graph $G[C]$.
 Observe that $\bigcup_{i \in [\ell]} A_i = \bigcup_{d \in D} V_d(G,\chi)$.
 Since $v \in N_G(Z)$, there is some $i \in \{2,\dots,\ell\}$ and a vertex $v' \in A_i$ such that there is a path from $v$ to $v'$ that avoids $D$.
 Now let $w' \in V(G)$ such that $\chi(v,v') = \chi(w,w')$ (such a vertex exists since $\chi(v,v) = \chi(w,w)$).
 Then $\chi(v',v') = \chi(w',w')$ and hence, $w' \in \bigcup_{i \in [\ell]} A_i$.
 Since $2$-WL knows the connected components of $G[C]$ and $v,v'$ are in distinct connected components of $G[C]$, it follows that $w,w'$ are also in distinct connected components of $G[C]$.
 In other words, there is some $j \in \{2,\dots,\ell\}$ such that $w' \in A_j$.
 In particular, $w' \in Z$.
 Moreover, by Observation \ref{obs:wl-knows-paths-avoiding-colors}, there is a path from $w$ to $w'$ that avoids $D$.
 All this is only possible if $w \in N_G(Z)$.
\end{proof}

Before continuing, let us briefly explain how the previous lemma can be applied to show that certain graphs $G[C]$ are connected assuming $G$ is planar.
As a simple example, suppose that $C = \{c\}$ and every connected component of $G[c]$ is isomorphic to $K_4$.
In particular, $G[c]$ is unicolored.
Suppose towards a contradiction that $G[c]$ is disconnected and let $A$ be the vertex set of a connected component of $G[c]$.
Then the last lemma provides a connected component $Z$ of $G - A$ and a $\chi$-invariant set $W$ such that $N_G(Z) = W \cap A$.
Since $N_G(Z) \neq \emptyset$ ($G$ is connected) and $\chi(v,v) = \chi(w,w)$ for all $v,w \in A$, it follows that $N_G(Z) = A$.
However, $G[A]$ is isomorphic to $K_4$ and hence, contracting $Z$ to a single vertex, we obtain a $K_5$-minor, which contradicts by Theorem \ref{thm:wagner} the planarity of $G$.
So $G[c]$ is connected.

This argument can be applied whenever for every connected component $A$ of $G[C]$ and every vertex color $d \in C_V(G[C],\chi)$, there is no face $F$ of $G[C]$ (where the embedding is inherited from some embedding of $G$) such that every vertex from $A \cap V_d$ is incident to $F$.
We shall repeatedly use this argument in the classification below.

Also, we need one more auxiliary result to cover a certain case where Lemma \ref{la:disconnected-color-subgraphs} is not applicable.

\begin{lemma}
 \label{la:c-m-star-connected}
 Let $G$ be a $3$-connected planar graph and let $\chi \coloneqq \WL{2}{G}$ be the coloring computed by $2$-WL.
 Suppose there are colors $c,d \in C_E(G,\chi)$ and a vertex set $A$ of a connected component of $G[c,d]$ such that $(G[c,d])[A]$ is isomorphic to a parallel subdivision of $C_m^*$ for some $m \geq 2$.
 Then $G[c,d]$ is connected.
\end{lemma}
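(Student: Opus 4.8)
The statement to prove is Lemma~\ref{la:c-m-star-connected}: if $(G[c,d])[A]$ is isomorphic to a parallel subdivision of $C_m^*$, then $G[c,d]$ is connected. The natural instinct is to invoke Lemma~\ref{la:disconnected-color-subgraphs} exactly as in the worked example following it, but the remark immediately preceding the lemma warns that this is precisely the case where that argument fails. Let me think carefully about why.

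The $C_m^*$ graph consists of $m$ copies of a $4$-cycle (the gadgets indexed by $i\in[m]$, on vertices $(i,1),(i,2),(i,3),(i,4)$) strung together in a larger cycle via the edges $(i,3)(i+1,1)$. A parallel subdivision replaces each edge by some number of length-$2$ paths. The key structural feature: each little $4$-cycle has exactly two "connector" vertices, $(i,1)$ and $(i,3)$, and the gadget is arranged so that the whole component $A$ has an *outer face* whose boundary can consist of just the connector vertices and the subdivision vertices along the long connecting edges. This is the situation the generic argument cannot handle — for one of the two vertex colors $d$ (the connector color), there *is* a face $F$ of $(G[c,d])[A]$ incident to all vertices of that color, namely the big outer cycle. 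So a naive contraction does not immediately produce a $K_5$- or $K_{3,3}$-minor.

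**The plan.** My approach is to suppose for contradiction that $G[c,d]$ is disconnected, apply Lemma~\ref{la:disconnected-color-subgraphs} to obtain a $\chi$-invariant set $W$ and a connected component $Z$ of $G-A$ with $V(G[c,d])\setminus A\subseteq Z$ and $N_G(Z)=W\cap A$, and then argue that the only candidate for $W\cap A$ — the set of connector vertices $\{(i,1),(i,3)\mid i\in[m]\}$ — still yields a forbidden minor once the $3$-connectedness of $G$ is fully exploited. Concretely, I would first observe that $W$, being $\chi$-invariant, is a union of $\WL{2}{G}$-vertex-color classes; within $A$ the vertices $(i,1),(i,3)$ form one such class (they all have the same degree and local environment in the subdivision), while the $(i,2),(i,4)$ form another, and the subdivision vertices form further classes. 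Since $Z$ is a single component carrying all of $A_2,\dots,A_\ell$ and $G$ is connected, $W\cap A\ne\emptyset$; I would rule out each possibility for which color class $W\cap A$ equals except the connector class, using that $N_G(Z)\subseteq$ the boundary of a single face of $(G[c,d])[A]$ by planarity (the subgraph $G[Z]$ sits inside one face), just as in Lemma~\ref{la:3-faces-connected}.

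**The crux.** So the reduction forces $N_G(Z)=\{(i,1),(i,3)\mid i\in[m]\}$, all connectors, lying on the outer cycle of $A$. The hard part, and the heart of the proof, is then to derive a contradiction from this specific configuration using $3$-connectedness. Here I would contract $Z$ to a single vertex $z$ adjacent to all $2m$ connectors; together with the gadget structure this should produce a non-planar minor. The cleanest route: pick three gadgets (possible since $m\ge 2$ gives at least $2$, but I may need to treat small $m$ separately, and note that a single gadget $C_2^*$ still has $4$ connectors), and build a $K_{3,3}$ by using $z$ as one vertex on one side, two connectors $(1,1),(1,3)$ from a single gadget — joined through the interior $4$-cycle paths — as branch sets on that side, and exploiting that $3$-connectedness of $G$ forces an additional internal connection among the connectors that the planar outer-cycle embedding cannot accommodate. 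I expect the main obstacle to be handling the small cases $m=2$ and verifying that the $3$-connectivity of $G$ (not merely of the component $A$) genuinely rules out placing all of $Z$ in the outer face: I will likely need to show that if $Z$ attaches only to connectors on the outer cycle, then some connector or subdivision vertex becomes a $2$-separator of $G$, contradicting Theorem~\ref{thm:3-connected-wl}'s hypothesis that $G$ is $3$-connected. Once that separator is exhibited, the contradiction closes and $G[c,d]$ must be connected.
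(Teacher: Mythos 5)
Your setup (steps 1--3) is exactly the paper's: the published proof of this lemma simply defers to the portion of the proof of Lemma~\ref{la:3-faces-connected} that treats an $s$-subdivision of a cycle. That is, one assumes disconnectedness, invokes Lemma~\ref{la:disconnected-color-subgraphs} to obtain a component $Z$ of $G-A$ with $N_G(Z)=W\cap A$ for a $\chi$-invariant $W$, uses planarity to force $N_G(Z)$ onto the boundary of a single face of $(G[c,d])[A]$, and concludes that $N_G(Z)$ consists of connector vertices (for $m\geq 2$ the interior color class $\{(i,2),(i,4)\mid i\in[m]\}$ cannot lie on one face). Each gadget then plays the role of two parallel subdivision vertices $u_1=(i,2)$, $u_2=(i,4)$ between two hubs $v_1=(i,1)$, $v_2=(i,3)$, and the contradiction is a $K_{3,3}$-minor. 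So far this matches the paper.

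The gap is in your crux: the $K_{3,3}$ model you sketch cannot work. You place $z$ (the contracted $Z$) on the same side as the connectors $(1,1),(1,3)$, so $z$ must be adjacent to all three branch sets on the opposite side; but by your own step 3, $N_G(Z)$ contains only connector vertices, so every branch set opposite to $z$ must contain a connector --- and the connectors together with $Z$ all lie on a common face of the embedding, which is precisely why no forbidden minor can be extracted from them alone. The interior vertices $(1,2),(1,4)$, which are \emph{not} adjacent to $Z$, are indispensable and must sit on the side \emph{opposite} to $Z$. The correct bipartition (the one from Lemma~\ref{la:3-faces-connected}) is: $(1,1)$, $(1,3)$ and $R$ (the rest of $A$) on one side; $(1,2)$, $(1,4)$ and $Z$ on the other. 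The adjacencies from $Z$ are immediate, and the missing edges $(1,2)$--$R$ and $(1,4)$--$R$ are supplied by $3$-connectedness: a path from $(1,2)$ to a third connector avoiding $(1,1),(1,3)$ yields, after passing to a subpath and transferring to $(1,4)$ via Observation~\ref{obs:wl-knows-paths-avoiding-colors}, paths that avoid the vertex colors of $A$ internally and hence cannot enter $Z$, since entering $Z$ requires crossing $N_G(Z)$. Your fallback of exhibiting a $2$-separator directly is also insufficient on its own: $(1,2)$ and $(1,4)$ may attach to components of $G-A$ other than $Z$, so $\{(1,1),(1,3)\}$ need not separate anything; it is exactly the color-avoiding-path argument that converts those attachments into the connections to $R$ needed for the minor, which then contradicts Theorem~\ref{thm:wagner}. (Minor point: for $m=2$ there are only two gadgets, so ``pick three gadgets'' is not available, but the construction above needs only one gadget plus the rest of $A$.)
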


\begin{proof}
 The proof is identical to the second part of the proof of Lemma \ref{la:3-faces-connected} covering the case of an $s$-subdivision of a cycle $C_\ell$ for some $\ell \geq 3$ and $s \geq 2$.
\end{proof}

Now, we are ready to prove the second main lemma of this section, which together with Lemma \ref{la:cycle-connections} will yield Theorem \ref{thm:type-ii-connected-subgraphs-classification}.

\begin{lemma}
 \label{la:type-ii-connected-subgraphs-classification}
 Let $G$ be a $3$-connected planar graph and let $\chi \coloneqq \WL{2}{G}$ be the coloring computed by $2$-WL.
 Suppose that $G$ has Type IIa.
 Also suppose there are colors $c,d \in C_E(G,\chi)$ such that $c$ has Type IIa and admits short $d$-connections in $G$.
 Then there is a vertex $v \in V(G)$ such that $\Disc_G(v) = V(G)$, or there is an edge color $c' \in C_E(G,\chi)$ that defines a matching, or $G[c,d]$ is isomorphic to one of the following graphs:
 \begin{enumerate}
  \item a (parallel subdivision of a) truncated tetrahedron, a truncated cube, a truncated octahedron, a truncated dodecahedron, a truncated icosahedron,
  \item a (parallel subdivision of an) $m$-side prism for $m \geq 3$,
  \item a (parallel subdivision of a) cuboctahedron (with two edge colors), a rhombicuboctahedron, or a rhombicosidodecahedron,
  \item a $C_4$-subdivision of one of the graphs from Figure \ref{fig:graph-k4} -- \ref{fig:graph-icosahedron},
  \item a (parallel subdivision of a) $C_m^*$ for $m \geq 2$,
  \item a $K_{2,h}^*$ for $h \geq 3$,
  \item a (parallel subdivision of a) chamfered tetrahedron, a chamfered cube, a chamfered octahedron, a chamfered dodecahedron, or a chamfered icosahedron.
 \end{enumerate}
\end{lemma}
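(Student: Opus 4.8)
The plan is to mirror the strategy from the proof of Theorem~\ref{thm:type-iii-classification}: contract the $c$-cycles, invoke the classification of edge-transitive planar graphs from Theorem~\ref{thm:classification-edge-transitive} on the resulting graph, and then ``un-contract'' to read off $G[c,d]$. First I would establish that $G[c,d]$ is connected. Since $c$ admits short $d$-connections, $2$-stability of $\chi$ forces every $c$-cycle to be incident to $d$-edges, so both colors occur in each component of $G[c,d]$ and hence $\{c,d\}$ is locally determined. If $G[c,d]$ were disconnected with a component $A$, then Lemma~\ref{la:disconnected-color-subgraphs} supplies a $\chi$-invariant set $W$ and a component $Z$ of $G-A$ with $N_G(Z)=W\cap A$. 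Exactly as in the first part of the proof of Lemma~\ref{la:3-faces-connected}, I would argue by planarity that $W\cap A$ cannot be confined to a single face of $(G[c,d])[A]$ unless $(G[c,d])[A]$ is a parallel subdivision of some $C_m^*$, which is the case handled separately by Lemma~\ref{la:c-m-star-connected}. In every other case this produces a forbidden $K_{3,s}$- or $K_5$-minor (Theorem~\ref{thm:wagner}), so $G[c,d]$ is connected.

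Next I would read off the local structure. Because $\chi$ is $2$-stable and $c$ has Type~IIa, every vertex of a $c$-cycle sees the same multiset of incident $d$-arcs, and planarity fixes the cyclic arrangement of these $d$-edges around each cycle. I would split according to whether the witnessing short $d$-connection is a single $d$-edge joining two distinct cycles ($\dist^D=1$) or a two-edge $d$-path through an intermediate vertex ($\dist^D=2$). In the first regime I form $\widehat H\coloneqq (G[c,d])/c$, which is planar by contraction and whose cycle-vertices carry one color; by Lemma~\ref{la:factor-graph-2-wl} the coloring $\chi/c$ is $2$-stable and all cycle-to-cycle $d$-arcs receive a single color. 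When $\ell\ge 3$ and each cycle sends at least three $d$-edges to distinct cycles, $\widehat H$ is a simple $3$-connected planar graph satisfying the hypotheses of Theorem~\ref{thm:classification-edge-transitive}, so by that theorem and the remarks following it $\widehat H$ is a Platonic solid, an octahedron/cuboctahedron/icosidodecahedron, or a bicolored graph of Figure~\ref{fig:graph-bicol-cube}--\ref{fig:graph-rhombic-triacontahedron}. Since un-contracting a degree-$k$ vertex into a $k$-cycle is precisely truncation, these yield the truncated solids, the two-colored Archimedean solids (cuboctahedron with two edge colors, rhombicuboctahedron, rhombicosidodecahedron), and the chamfered solids, respectively; the low-multiplicity configurations (in particular $\ell=2$) give the prisms directly. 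In the second regime the intermediate vertices act as hubs, and contracting the cycles returns a base graph in which each original edge carries a $4$-cycle, giving a $C_4$-subdivision of a graph from Figure~\ref{fig:graph-k4}--\ref{fig:graph-icosahedron}, or, when the base is a cycle or a $K_{2,h}$, a $C_m^*$ or a $K_{2,h}^*$.

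Throughout I would bound subdivision multiplicities by the same device used in the subdivision argument inside the proof of Theorem~\ref{thm:type-iii-classification}: whenever two cycle-vertices would share $s\ge 3$ common attachment cycles or hubs, the color-avoiding paths provided by Observation~\ref{obs:wl-knows-paths-avoiding-colors} together with $3$-connectedness assemble a $K_{3,s}$-minor, contradicting planarity. The degenerate configurations must then be funneled into the first two outcomes of the statement. If the $d$-pattern meeting a cycle consists of a unique edge per endpoint with distinct endpoint colors, then $d$ (or $c$) defines a matching. Otherwise, when the contracted structure collapses far enough that individualizing a single vertex discretizes the coloring, I would conclude $\Disc_G(v)=V(G)$ by combining the spring-embedding refinement of Lemma~\ref{la:coloring-from-spring-embedding} with Lemma~\ref{la:fixing-cycle-and-extra-vertex}, exactly as in the endgame of Lemma~\ref{la:cycle-connections}.

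The main obstacle I anticipate is the faithful bookkeeping of the bicolored (chamfered) and hub-mediated cases. I must verify that $2$-stability of $\chi$ on $G[c,d]$ is strong enough to force the contracted graph to be genuinely edge-color-uniform, so that Theorem~\ref{thm:classification-edge-transitive} truly applies, and that planarity together with $3$-connectedness rules out every parallel subdivision and attachment pattern outside the list while simultaneously certifying that each listed graph does arise with a valid $2$-stable coloring. The distance-$2$ regime is especially delicate because the hubs themselves may carry nontrivial $d$-degree, and cleanly separating the genuine $C_4$-subdivision, $C_m^*$, and $K_{2,h}^*$ cases from those that instead yield a definable matching or fixing number $1$ requires the careful local analysis that will dominate the length of the argument.
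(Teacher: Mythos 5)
Your overall architecture---contract the $c$-cycles, classify the quotient via Theorem \ref{thm:classification-edge-transitive}, and un-contract---is the same as the paper's, and your connectivity and $K_{3,s}$-minor devices match the ones actually used. But there is a concrete gap in how you produce the chamfered solids. You place them in the distance-$1$ regime, where $\widehat{H} = (G[c,d])/c$ has only contracted-cycle vertices; such a quotient is necessarily unicolored (all $c$-cycles share the same vertex-color data), so it can never be one of the bicolored graphs of Figures \ref{fig:graph-bicol-cube}--\ref{fig:graph-rhombic-triacontahedron}, and even if it were, un-contracting \emph{every} vertex of a bicolored cube yields a truncated cube, not a chamfered tetrahedron. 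The chamfered solids in fact arise in your second (hub) regime: the quotient is bicolored with one class the contracted cycles and the other class the hubs, Lemma \ref{la:edge-transitive-two-vertex-colors} identifies it as a bicolored cube, rhombic dodecahedron, or rhombic triacontahedron, and only the cycle class is truncated. Reaching that point requires showing that each hub meets each adjacent cycle in exactly one vertex and that the quotient color cannot have Type I in $G/c$ --- both established in the paper via $K_5$/$K_{3,3}$-minor arguments that your sketch does not supply; your second regime as written only produces the $C_4$-subdivisions, $C_m^*$, and $K_{2,h}^*$.

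A second gap: the exclusions from the list are not addressed. Item 4 restricts $C_4$-subdivisions to the graphs of Figures \ref{fig:graph-k4}--\ref{fig:graph-icosahedron}, so you must rule out the $C_4$-subdivisions of the cuboctahedron and icosidodecahedron; the paper does this with a dedicated planarity argument about components of $G - V(G[c,d])$ attached inside the two distinct faces adjacent to a $4$-cycle, in the spirit of the exclusion of $2$-subdivisions in Theorem \ref{thm:type-iii-classification}. Likewise, the truncated cuboctahedron and truncated icosidodecahedron do appear as quotient outputs and must be absorbed into the fixing-number-$1$ outcome, and the antiprism configuration (an entire $c$-cycle contained in a single $d$-cycle) must be shown to contradict Type IIa. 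None of these follows routinely from the machinery you cite, so as written your argument would certify a strictly larger list of graphs $G[c,d]$ than the one claimed.
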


\begin{proof}
 Let $A_1,\dots,A_\ell$ denote the vertex sets of the connected components of $G[c]$.
 Also, let $\widehat{A}_i$ denote the set of vertices in $A_i$ that are incident to a $d$-colored edge.
 Observe that either $G[c]$ is unicolored, in which case $A_i = \widehat{A}_i$, or $G[c]$ is bicolored, which means that $|\widehat{A}_i| = |A_i|/2$.
 Also note that $\bigcup_{i \in [\ell]}\widehat{A}_i$ is $\chi$-invariant.
 Finally, observe that $\ell \geq 2$ since $c$ has Type IIa.
 
 For the remainder of the proof, we usually assume that $A_i = \widehat{A}_i$.
 If this is not the case, we can restrict ourselves to $\widehat{A}_i$ by defining a cycle on this set by connecting vertices of distance$2$ in $G[c]$.
 Note that reverting this step is covered by the classification due to allowing parallel subdivisions (each vertex in $A_i \setminus \widehat{A}_i$ has degree $2$ in $G[c,d]$).
 There is one case where this strategy does not exactly work out, namely if $G[c]$ is a disjoint union of bicolored cycles of length $4$.
 In this case, $|\widehat{A}_i| = 2$ and it is not possible to define a cycle on $\widehat{A}_i$.
 
 Similarly to the proof of Lemma \ref{la:cycle-connections}, we fix a planar embedding of $G$.
 For every $i \in [\ell]$, the cycle $(G[c])[A_i]$ splits the plane into two regions, the interior region and the exterior region.
 We denote by $\exte_c(A_i)$ the set of vertices located in the exterior region, and $\inte_c(A_i)$ denotes the set of vertices located in the interior region.
 By Lemma \ref{la:path-between-color-components}, we may assume without loss of generality that $A_j \subseteq \exte_c(A_i)$ for all distinct $i,j \in[\ell]$.
 This also means that $V(G[c,d]) \setminus V(G[c]) \subseteq \exte_c(A_i)$ for all $i \in [\ell]$.
 
 By Lemma \ref{la:factor-graph-2-wl}, it holds that $\chi/c$ is a $2$-stable coloring for $G/c$.
 Also, there is a unique color $\widetilde{d} \in \im(\chi/c)$ such that $d \in \widetilde{d}$ (recall that colors in the image of $\chi/c$ are multisets of colors from the image of $\chi$).
 Consider the graph $(G/c)[\widetilde{d}]$ obtained from $G[c,d]$ by contracting all sets $A_i$ to single vertices.
 We use the results from Section \ref{sec:edge-transitive} to classify $(G/c)[\widetilde{d}]$, and then reconstruct $G[c,d]$.
 Towards this end, we distinguish several cases.
 In many cases, we rely on Lemma \ref{la:disconnected-color-subgraphs} to argue that $G[c,d]$ is connected.
 Let us remark that the lemma is applicable since $\{c,d\}$ is locally determined by the choice of the colors $c,d$.
 
 \begin{description}
  \item[{Case $(G/c)[\widetilde{d}]$ is unicolored:}]
   This means every connected component of $(G/c)[\widetilde{d}]$ is isomorphic to $K_2$, or to $C_m$ for some $m \geq 3$, or to one of the graphs from Figure \ref{fig:graph-k4} -- \ref{fig:graph-icosidodecahedron} using Theorem~\ref{thm:classification-edge-transitive}.
   Observe that every vertex from $(G/c)[\widetilde{d}]$ is one of the sets $A_i$.
   In particular $\ell = |V((G/c)[\widetilde{d}])|$.
   
   We first cover the case that $G[d]$ is bicolored.
   
   \begin{claim}
    If $G[d]$ is bicolored, then $d$ defines a matching in $G$.
   \end{claim}
   \begin{claimproof}
    Suppose that $G[d]$ is bicolored.
    First observe that $V(G[c]) = V(G[c,d])$.
    This implies that $G[c]$ is bicolored as well.
    Suppose $C_V(G[c]) = \{c_V,d_V\}$ is the set of the two vertex colors appearing in $G[c,d]$.
    Let $V_1 \coloneqq V_{c_V}$ and $V_2 \coloneqq V_{d_V}$ denote the corresponding color classes.
    Note that $G[c,d]$ is bipartite with bipartition $(V_1,V_2)$.
    Also observe that $|V_1| = |V_2|$ since $G[c]$ has Type II, i.e., it is a disjoint union of bicolored cycles.
    Since $2$-WL cannot distinguish between vertices from $V_i$, there is some number $r_i$ such that $\deg_{G[c,d]}(v_i) = r_i$ for all $v_i \in V_i$, $i \in \{1,2\}$.
    Moreover, $|E(G[c,d])| = r_1 \cdot |V_1| = r_2 \cdot |V_2|$.
    Since $|V_1| = |V_2|$, we conclude that $r \coloneqq r_1 = r_2$, i.e., $G[c,d]$ is $r$-regular.
    Since $G[c,d]$ is planar and bipartite, it follows that $r \leq 3$ using Corollary \ref{cor:planar-degree-bound}.
    On the other hand, $|N_c(v)| = 2$ and $|N_d(v)| \geq 1$ for every $v \in V(G[c,d])$.
    So together, $r = 3$ and $|N_d(v)| = 1$ for every $v \in V(G[c,d])$.
    But $|N_d(v)| = 1$ for all $v \in V(G[c,d])$ means that $d$ defines a matching.
   \end{claimproof}

   So we may assume that $G[d]$ is unicolored and hence, it is isomorphic to a disjoint union of $K_2$'s, or a disjoint union of cycles $C_m$ for $m \geq 3$ (recall that $G$ has Type IIa).
   
   \medskip
   
   First suppose that $G[d]$ is isomorphic to a disjoint union of $K_2$'s.
   Consider a connected component $B$ of the graph $G[c,d]$ and the corresponding component $\widetilde{B}$ in the graph $(G/c)[\widetilde{d}]$.
   If $((G/c)[\widetilde{d}])[\widetilde{B}]$ is isomorphic to $K_2$ and $|\widehat{A}_i| \geq 3$ for all $i \in [\ell]$, then it is easy to see that $(G[c,d])[B]$ is isomorphic to a parallel subdivision of an $m$-side prism for some $m \geq 3$ ($m = |\widehat{A}_i|$ for all $i \in [\ell]$).
   It follows that $G[c,d]$ is connected by Lemma \ref{la:disconnected-color-subgraphs} (using the arguments outlined below the lemma).
   If $((G/c)[\widetilde{d}])[\widetilde{B}]$ is isomorphic to $K_2$ and $|\widehat{A}_i| = 2$ for all $i \in [\ell]$, then $(G[c,d])[B]$ is isomorphic to $C_2^*$.
   In particular, $G[c,d]$ is connected by Lemma \ref{la:c-m-star-connected}.
   
   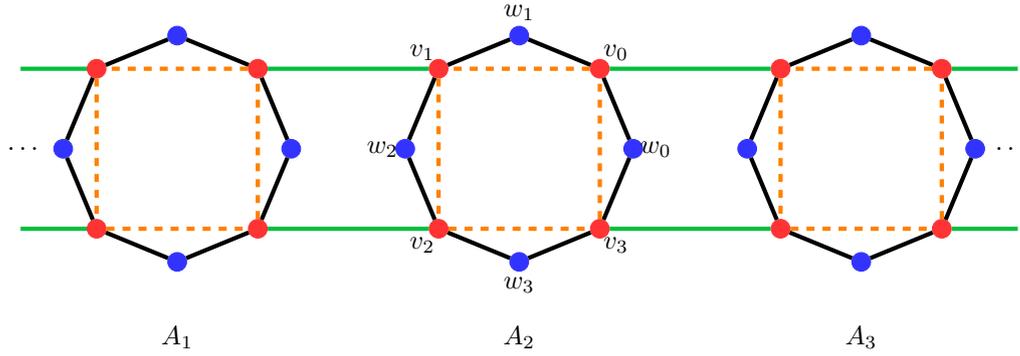
\begin{figure}
    \centering
    \begin{tikzpicture}
     
     \foreach \p/\q in {0/1,1/2,2/3}{
      \foreach \i in {0,1,2,3}{
       \node[vertex,blue!80] (u\i-\p) at ($(4.5*\p,0) + (90*\i:1.5)$) {};
       \node[vertex,red!80] (v\i-\p) at ($(4.5*\p,0) + (45 + 90*\i:1.5)$) {};
      }
      \node at (4.5*\p,-2.5) {$A_{\q}$};
      \foreach \i/\j in {0/0,0/1,1/1,1/2,2/2,2/3,3/3,3/0}{
       \draw[line width=1.6pt] (v\i-\p) edge (u\j-\p);
      }
      \foreach \i/\j in {0/1,1/2,2/3,3/0}{
       \draw[line width=1.6pt, dashed, orange] (v\i-\p) edge (v\j-\p);
      }
     }
     \draw[line width=1.6pt, darkpastelgreen] (v1-0) edge ($(v1-0)+(-1,0)$);
     \draw[line width=1.6pt, darkpastelgreen] (v2-0) edge ($(v2-0)+(-1,0)$);
     \draw[line width=1.6pt, darkpastelgreen] (v0-0) edge (v1-1);
     \draw[line width=1.6pt, darkpastelgreen] (v3-0) edge (v2-1);
     \draw[line width=1.6pt, darkpastelgreen] (v1-2) edge (v0-1);
     \draw[line width=1.6pt, darkpastelgreen] (v2-2) edge (v3-1);
     \draw[line width=1.6pt, darkpastelgreen] (v0-2) edge ($(v0-2)+(1,0)$);
     \draw[line width=1.6pt, darkpastelgreen] (v3-2) edge ($(v3-2)+(1,0)$);
     
     \node at (-2,0) {$\dots$};
     \node at (11,0) {$\dots$};
     
     \foreach \i in {0,1,2,3}{
       \node at ($(4.5,0) + (90*\i:1.8)$) {$w_\i$};
       \node at ($(4.5,0) + (45 + 90*\i:1.8)$) {$v_\i$};
      }
     
    \end{tikzpicture}
    \caption{Visualization for the proof of Lemma \ref{la:type-ii-connected-subgraphs-classification} assuming $(G/c)[\widetilde{d}]$ is unicolored, $G[d]$ is isomorphic to a disjoint union of $K_2$'s, and $((G/c)[\widetilde{d}])[\widetilde{B}]$ is isomorphic to $C_m$ for some $m \geq 3$.
     The vertices from the sets $\widehat{A}_i$ are colored {\sf red} and the vertices from the sets $A_i \setminus \widehat{A}_i$ are colored {\sf blue}.
     The edges of color $c$ are {\sf black} and edges of color $d$ are {\sf green}.
     Moreover, orange dashed edges indicate the cycles defined on the sets $\widehat{A}_i$.
     We have $\widehat{A}_{2,1} = \{v_1,v_2\}$ and $\widehat{A}_{2,3} = \{v_0,v_3\}$.
     There is a walk from $v_1$ to $v_2$ taking first a green edge, then a sequence of black edges, and finally one more green edge.
     Such a walk does not exist from $v_0$ to $v_1$.
     Hence, $\chi(v_0,v_1) \neq \chi(v_1,v_2)$.
     But this contradicts $\chi(v_1,w_1) = \chi(v_2,w_2)$.}
    \label{fig:c-m-star-case}
   \end{figure}
   
   Next, suppose that $((G/c)[\widetilde{d}])[\widetilde{B}]$ is isomorphic to $C_m$ for some $m \geq 3$.
   We first claim that $|\widehat{A}_i| = 2$ for all $i \in [\ell]$.
   Recall that each vertex of $(G/c)[\widetilde{d}]$ is one of the sets $A_i$, $i \in [\ell]$.
   Without loss of generality, suppose that $A_1A_2, A_2A_3$ are edges of $((G/c)[\widetilde{d}])[\widetilde{B}]$.
   Suppose towards a contradiction that $|\widehat{A}_2| \geq 3$ (see also Figure \ref{fig:c-m-star-case}).
   Every vertex $v \in \widehat{A}_2$ has exactly one neighbor $v'$ in the graph $G[d]$.
   Since $A_1$ and $A_3$ are the only neighbors of $A_2$ in the graph $(G/c)[\widetilde{d}]$, it holds that $v' \in \widehat{A}_1 \cup \widehat{A}_3$.
   For $j \in \{1,3\}$, define
   \[\widehat{A}_{2,j} \coloneqq \{v \in \widehat{A}_2 \mid N_{G[d]}(v) \cap \widehat{A}_j \neq \emptyset\}.\]
   Then $\{\widehat{A}_{2,1},\widehat{A}_{2,3}\}$ forms a partition of $\widehat{A}_2$.
   Since $A_j \subseteq \exte_c(A_i)$ for all distinct $i,j \in[\ell]$ and $G$ is planar, each set $\widehat{A}_{2,j}$, $j \in \{1,3\}$ gives a connected subgraph of $(G[c])[A_2]$ (when defining a cycle on $\widehat{A}_2$ as described above).
   Now let $j \in \{1,3\}$ such that $|\widehat{A}_{2,j}| \geq 2$ (recall that $|\widehat{A}_2| \geq 3$).
   Then edges from $G[c]$ located within the subgraph corresponding to $\widehat{A}_{2,j}$ receive different colors by $2$-WL than those edges not being part of either of the two subgraphs.
   But this contradicts the stability of $\chi$.
   Hence, $|\widehat{A}_2| = 2$.
   Since $|\widehat{A}_i| = |\widehat{A}_j|$ for all $i,j \in [\ell]$, it follows that $|\widehat{A}_i| = 2$ for all $i \in [\ell]$.
   So every connected component of $G[c]$ is a bicolored cycle of length $4$.
   Now, it is easy to see that $(G[c,d])[B]$ is isomorphic to $C_m^*$.
   As before, by Lemma \ref{la:c-m-star-connected}, this implies that $G[c,d]$ is connected and hence, $G[c,d]$ is isomorphic to $C_m^*$.
   
   So consider the case that $((G/c)[\widetilde{d}])[\widetilde{B}]$ is isomorphic to one of the graphs from Figure \ref{fig:graph-k4} -- \ref{fig:graph-icosidodecahedron}.
   First of all, by Lemma \ref{la:disconnected-color-subgraphs}, it follows that $(G/c)[\widetilde{d}]$ is connected.
   So $G[c,d]$ is also connected.
   Let $r$ denote the degree of every vertex in $(G/c)[\widetilde{d}])$.
   Using the same arguments as above, we get that $|\widehat{A}_i| = r$ for all $i \in [\ell]$.
   It follows that $G[c,d]$ is obtained from $(G/c)[\widetilde{d}]$ by truncating all vertices (and possibly subdividing edges).
   In other words, $G[c,d]$ is isomorphic to a parallel subdivision of a truncated tetrahedron, a truncated cube, a truncated octahedron, a truncated dodecahedron, a truncated icosahedron, a truncated cuboctahedron, or a truncated icosidodecahedron.
   For the last two options, it can be checked that $G[c,d]$ has fixing number $1$ under $1$-WL.
   Hence, there is some $v \in V(G)$ such that $\Disc_G(v) = V(G)$ by Lemma \ref{la:fixing-cycle-and-extra-vertex}.
   This completes the subcase that $G[d]$ is isomorphic to a disjoint union of $K_2$'s.
   
   \medskip
   
   So assume that $G[d]$ is isomorphic to a disjoint union of $C_m$'s for some $m \geq 3$.
   First observe that $G[d]$ is disconnected since $G$ has Type IIa.
   
   \begin{claim}
    Let $i \in [\ell]$.
    For $v,w \in \widehat{A}_i$ define $v \sim w$ if $v,w$ are contained in the same connected component of $G[d]$.
    Then $\sim$ is an equivalence relation.
    Moreover, $\sim$ is trivial, i.e., there is only one equivalence class or every equivalence class has size $1$. 
   \end{claim}
   
   \begin{claimproof}
    Clearly, $\sim$ is an equivalence relation with the equivalence classes being the intersections of the connected components of $G[d]$ with $\widehat{A}_i$.
    Suppose towards a contraction that $\sim$ is non-trivial, i.e, there are connected components of $B_1,B_2$ of $G[d]$ such that $|B_1 \cap \widehat{A}_i| \geq 2$ and  $|B_2 \cap \widehat{A}_i| \geq 1$.
    First, observe that $|B_1 \cap \widehat{A}_i| = |B_2 \cap \widehat{A}_i|$ since otherwise $2$-WL would distinguish between vertices from $B_1 \cap \widehat{A}_i$ and vertices from $B_2 \cap \widehat{A}_i$.
    So $|B_2 \cap \widehat{A}_i| \geq 2$.
    In particular, $|\widehat{A}_i| \geq 4$.
    Actually, let $B_1,\dots,B_k$ denote those connected components of $G[d]$ that have a non-empty intersection with $\widehat{A}_i$.
    Note that $B_j \setminus \widehat{A}_i \subseteq \exte_c(A_i)$ for all $j \in [k]$.
    By planarity, we conclude that $B_j \cap \widehat{A}_i$ induces a connected set on $\widehat{A}_i$ for each $j \in [k]$ when defining a cycle on $\widehat{A}_i$ in the natural way.
    But now, (virtual) edges of the cycle on $\widehat{A}_i$ within a component $B_j$ receive different $2$-WL-colors than (virtual) edges between distinct sets $B_j$, $j \in [k]$.
    This contradicts the stability of the coloring $\chi$.
   \end{claimproof}
   
   As before, consider a connected component $B$ of the graph $G[c,d]$ and the corresponding component $\widetilde{B}$ in the graph $(G/c)[\widetilde{d}]$.
   Let $A_i \in \widetilde{B}$ and let $\sim$ be the equivalence relation on $\widehat{A}_i$ defined in the last claim.
   First suppose that $\sim$ has only one equivalence class, i.e., there is some connected component $X$ of $G[d]$ such that $\widehat{A}_i \subseteq X$.
   Since $c$ admits short $d$-connections, there is another component $A_{i'}$, $i \neq i' \in [\ell]$, such that $\widehat{A}_{i'} \cap X \neq \emptyset$.
   Since $2$-WL does not distinguish between vertices from $\widehat{A}_i$ and $\widehat{A}_{i'}$, we conclude that $\widehat{A}_{i'} \subseteq X$.
   It follows that $\widetilde{B} = \{A_i,A_{i'}\}$ and $(G[c,d])[B]$ is isomorphic to a parallel subdivision of an antiprism.
   By Lemma \ref{la:disconnected-color-subgraphs}, we conclude that $G[c,d]$ is connected.
   But now, $G[d]$ is connected, which contradicts $G$ having Type IIa.
   Hence, this case can actually not occur.
   
   So assume that every equivalence class of $\sim$ contains only a single element.
   Actually, we may assume the analogous statement holds for all $i \in [\ell]$.
   Let $X$ be a connected component of $G[d]$.
   Then every vertex of $X$ is contained in a different component $A_i$, $i \in [\ell]$, of $G[c]$.
   So the cycle $(G[d])[X]$ gives a cycle of $(G/c)[\widetilde{d}]$, which actually forms a face cycle of $(G/c)[\widetilde{d}]$ (where the embedding is inherited from $G$ in the natural way).
   Recall that $((G/c)[\widetilde{d}])[\widetilde{B}]$ is isomorphic to a cycle $C_m$, $m \geq 3$, or one of the graphs from Figure \ref{fig:graph-k4} -- \ref{fig:graph-icosidodecahedron}.
   We obtain that $(G[c,d])[B]$ is a parallel subdivision of the truncated version of one the graphs from Figure \ref{fig:graph-k4} -- \ref{fig:graph-icosidodecahedron}, a parallel subdivision of an $m$-side prism, or a parallel subdivision of a cuboctahedron (with two edge colors), a rhombicuboctahedron, or a rhombicosidodecahedron. 
   For a parallel subdivision of a truncated cuboctahedron or a truncated icosidodecahedron, we again obtain that $\Disc_G(v) = V(G)$ for some $v \in V(G)$.
   For all other cases, as usual, we apply Lemma \ref{la:disconnected-color-subgraphs} to deduce that $G[c,d]$ is connected.
  \item[{Case $(G/c)[\widetilde{d}]$ is bicolored:}]
   Let $W$ denote the set of vertices of $(G/c)[\widetilde{d}]$ that are not obtained from contracting the sets $A_1,\dots,A_\ell$.
   Hence, $(G/c)[\widetilde{d}]$ has vertex color classes $W$ and $\{A_1,\dots,A_\ell\}$.
   In particular, all vertices $w \in W$ have the same degree $\widetilde{r}$ in the graph $(G/c)[\widetilde{d}]$.
   Observe that $\widetilde{r} \geq 2$, since $c$ admits short $d$-connections.
   Similarly, all vertices $w \in W$ have the same degree $r$ in the graph $G[c,d]$.
   Note that $r \geq \widetilde{r}$.
   
   If $r = 2$, then we can repeat the analysis of the previous case to obtain that $G[c,d]$ is a parallel subdivision of one of the graphs considered in the previous case.
   
   So suppose $r \geq 3$.
   Since $G$ has Type IIa, it follows that $d$ has Type I in $G$.
   We first argue that $\widetilde{d}$ does not have Type I in $G/c$.
   
   \begin{claim}
    \label{claim:contracted-color-type-iii}
    $\widetilde{d}$ has not Type I in the graph $G/c$.
   \end{claim}
   \begin{claimproof}
    Suppose towards a contradiction that $\widetilde{d}$ has Type I in the graph $G/c$.
    Since $\widetilde{r} \geq 2$, we conclude that for every $i \in [\ell]$, there is some $w_i \in V(G)$ such that $\widehat{A}_i \subseteq N_d(w_i)$.
    Also, there are distinct $i,i' \in [\ell]$ such that $w_i = w_{i'}$.
    Let $I \coloneqq \{i'' \mid w_i = w_{i''}\}$.
    Without loss of generality, suppose that $I = \{1,\dots,k\}$ for some $2 \leq k \leq \ell$ and define $w \coloneqq w_1$.
    Observe that $N_d(w) = \bigcup_{i \in [k]} \widehat{A}_i$.
    
    First suppose $G[c,d]$ is disconnected.
    Let $B$ denote the connected component of $G[c,d]$ that contains $w$.
    By Lemma \ref{la:disconnected-color-subgraphs}, there is a $\chi$-invariant set $M \subseteq V(G)$ and a connected component $Z$ of $G - B$ such that $N_G(Z) = M \cap B$.
    Recall that $W$ forms a vertex color class of $(G,\chi)$ and $B \cap W = \{w\}$ (since $\widetilde{d}$ has Type I).
    Since $G$ is $3$-connected, it follows that
    \begin{enumerate}[label=(\roman*)]
     \item\label{item:red-connected} $\bigcup_{i \in [k]} \widehat{A}_i \subseteq N_G(Z)$, or
     \item\label{item:blue-connected} $G[c]$ is bicolored and $\bigcup_{i \in [k]} A_i \setminus\widehat{A}_i \subseteq N_G(Z)$.
    \end{enumerate}
    If $|\widehat{A}_i| \geq 3$, then we obtain a $K_5$.minor (with two vertices $w$ and $Z \cup \bigcup_{i \in \{2,\dots,k\}} A_i$ as well as a triangle obtained from $A_1$), which contradicts by Theorem \ref{thm:wagner} the planarity of $G$.
    So suppose $|\widehat{A}_i| = 2$, i.e., $G[c]$ is bicolored.
    If Option \ref{item:blue-connected} is satisfied, then we obtain a minor $K_{3,3}$ (with the four vertices from $A_1$, $w$, and $Z \cup \bigcup_{i \in \{2,\dots,k\}} A_i$; see Figure \ref{fig:c-d-disconnected-k33}), which again contradicts with Theorem \ref{thm:wagner} the planarity of $G$.
    So assume that Option \ref{item:blue-connected} is not satisfied, which implies that \ref{item:red-connected} is satisfied.
    We consider the set $A_1$.
    Suppose $\widehat{A}_1 = \{v_1,v_2\}$ and $A_1 \setminus \widehat{A}_1 = \{u_1,u_2\}$.
    Since $G$ is $3$-connected, the graph $G - \{v_1,v_2\}$ is connected, and hence there is a path $P_1$ from $u_1$ to $W \cup \bigcup_{i \in \{2,\dots,\ell\}} A_i$.
    We may assume without loss of generality that no internal vertex of $P_1$ is contained in $W \cup \bigcup_{i \in \{2,\dots,\ell\}} A_i$.
    Also, by possibly switching the roles of $u_1$ and $u_2$, we may assume that no internal vertex of $P_1$ is contained in $A_1$.
    Hence, $P_1$ avoids $C_V(G[c,d],\chi)$.
    In particular, $P_1$ is disjoint from $Z$ since all neighbors of $Z$ in $G - \{v_1,v_2\}$ are contained in $W \cup \bigcup_{i \in \{2,\dots,\ell\}} A_i$.
    First suppose that $P_1$ ends in the set $W$.
    This means that $P_1$ ends in $w$ since it is disjoint from $Z$ using Lemma \ref{la:disconnected-color-subgraphs}.
    Using Observation \ref{obs:wl-knows-paths-avoiding-colors}, there is also a path $P_2$ from $u_2$ to $W$ that avoids $C_V(G[c,d],\chi)$.
    By the same arguments, $P_2$ also ends in $w$ and is disjoint from $Z$.
    But this gives once again a $K_{3,3}$-minor (with vertices $u_1,u_2,v_1,v_2,w$, and $Z \cup \bigcup_{i \in \{2,\dots,k\}} A_i$ where we use the paths $P_1$ and $P_2$ to connect $w$ to $u_1$ and $u_2$).
    So $P_1$ ends in the set $\bigcup_{i \in \{2,\dots,\ell\}} A_i$.
    Using Observation \ref{obs:wl-knows-paths-avoiding-colors} again, there is a path $P_2$ from $u_2$ to $\bigcup_{i \in \{2,\dots,\ell\}} A_i$ that avoids $C_V(G[c,d],\chi)$.
    Again, we obtain a minor $K_{3,3}$ (with vertices $u_1,u_2,v_1,v_2,w$, and $Z \cup \bigcup_{i \in \{2,\dots,k\}} A_i$ where we use the paths $P_1$ and $P_2$ to connect $Z \cup \bigcup_{i \in \{2,\dots,k\}} A_i$ to $u_1$ and $u_2$).
    This completes the case that $G[c,d]$ is disconnected.
    
    \begin{figure}
     \centering
     \begin{subfigure}[b]{.37\linewidth}
      \begin{tikzpicture}
       \node[vertex,blue!80] (u1) at (0:0.8) {};
       \node[vertex,red!80] (u2) at (90:0.8) {};
       \node[vertex,blue!80] (u3) at (180:0.8) {};
       \node[vertex,red!80] (u4) at (270:0.8) {};
       
       \node at (0,-1.4) {$A_1$};
       
       \node[vertex,magenta!80] (w) at (1.6,0) {};
       \node at (1.6,-0.3) {$w$};
       
       \node[vertex,blue!80] (v1) at ($(3.2,0)+(0:0.8)$) {};
       \node[vertex,red!80] (v2) at ($(3.2,0)+(90:0.8)$) {};
       \node[vertex,blue!80] (v3) at ($(3.2,0)+(180:0.8)$) {};
       \node[vertex,red!80] (v4) at ($(3.2,0)+(270:0.8)$) {};
       
       \node at (3.2,-1.4) {$A_2$};
       
       \foreach \i/\j in {1/2,2/3,3/4,4/1}{
        \draw[line width=1.6pt] (u\i) edge (u\j);
        \draw[line width=1.6pt] (v\i) edge (v\j);
       }
       \draw[line width=1.6pt, darkpastelgreen] (w) edge (u2);
       \draw[line width=1.6pt, darkpastelgreen] (w) edge (u4);
       \draw[line width=1.6pt, darkpastelgreen] (w) edge (v2);
       \draw[line width=1.6pt, darkpastelgreen] (w) edge (v4);
       
       \draw[line width=1.6pt, mYellow, fill = mYellow!40] (1.6,2) ellipse (1.6cm and 0.6cm);
       \node at (1.6,2) {$Z$};
       
       \draw[line width=1.6pt, gray] (u3) edge (0.2,1.9);
       \draw[line width=1.6pt, gray] (u1) edge (1.3,1.6);
       \draw[line width=1.6pt, gray] (v3) edge (1.9,1.6);
       \draw[line width=1.6pt, gray] (v1) edge (3.0,1.9);
      \end{tikzpicture}
      \caption{Contracting $Z \cup A_2$ to one vertex results in a graph isomorphic to $K_{3,3}$.}
      \label{fig:c-d-disconnected-k33}
     \end{subfigure}
     \hfill
     \begin{subfigure}[b]{.54\linewidth}
      \begin{tikzpicture}
       \draw[line width=1.6pt, mYellow, fill = mYellow!40] (0,0) ellipse (1.2cm and 2.4cm);
       \node at (-1.6,0) {$Z$};
       
       \node[vertex,magenta!80] (w) at (2.4,0) {};
       \node at (2.4,-0.3) {$w$};
       
       \foreach \q/\k in {-1.4/1,0/2,1.4/3}{
        \node[vertex,blue!80] (u-\k-1) at ($(0,\q)+(0:0.4)$) {};
        \node[vertex,red!80] (u-\k-2) at ($(0,\q)+(90:0.4)$) {};
        \node[vertex,blue!80] (u-\k-3) at ($(0,\q)+(180:0.4)$) {};
        \node[vertex,red!80] (u-\k-4) at ($(0,\q)+(270:0.4)$) {};
        
        \foreach \i/\j in {1/2,2/3,3/4,4/1}{
         \draw[line width=1.6pt] (u-\k-\i) edge (u-\k-\j);
        }
       }
       
       \draw[line width=1.6pt, darkpastelgreen] (w) edge (u-1-2);
       \draw[line width=1.6pt, darkpastelgreen, bend left=20] (w) edge (u-1-4);
       \draw[line width=1.6pt, darkpastelgreen] (w) edge (u-2-2);
       \draw[line width=1.6pt, darkpastelgreen] (w) edge (u-2-4);
       \draw[line width=1.6pt, darkpastelgreen, bend right=20] (w) edge (u-3-2);
       \draw[line width=1.6pt, darkpastelgreen] (w) edge (u-3-4);
       
       \node[vertex,blue!80] (v1) at ($(4.8,0)+(0:0.8)$) {};
       \node[vertex,red!80] (v2) at ($(4.8,0)+(90:0.8)$) {};
       \node[vertex,blue!80] (v3) at ($(4.8,0)+(180:0.8)$) {};
       \node[vertex,red!80] (v4) at ($(4.8,0)+(270:0.8)$) {};
       
       \node at (4.8,-1.4) {$A_1$};
       
       \foreach \i/\j in {1/2,2/3,3/4,4/1}{
        \draw[line width=1.6pt] (v\i) edge (v\j);
       }
       \draw[line width=1.6pt, darkpastelgreen] (w) edge (v2);
       \draw[line width=1.6pt, darkpastelgreen] (w) edge (v4);
       
       \draw[line width=1.6pt, gray] (v2) edge (0.4,2.1);
       \draw[line width=1.6pt, gray] (v4) edge (0.4,-2.1);
       
       \node[smallvertex] (p1) at (3.1,0.5) {};
       \node[smallvertex] (p2) at (3.6,0.1) {};
       \node[smallvertex] (p3) at (3.7,0.8) {};
       \node[smallvertex] (p4) at (4.5,1.7) {};
       \node[smallvertex] (p5) at (5.3,1.4) {};
       \node[smallvertex] (p6) at (5.6,0.7) {};
       
       \draw[line width=0.8pt, gray] (w) edge (p1);
       \draw[line width=0.8pt, gray] (p1) edge (p2);
       \draw[line width=0.8pt, gray] (p2) edge (v3);
       \draw[line width=0.8pt, gray] (p1) edge (p3);
       \draw[line width=0.8pt, gray] (p3) edge (p4);
       \draw[line width=0.8pt, gray] (p4) edge (p5);
       \draw[line width=0.8pt, gray] (p5) edge (p6);
       \draw[line width=0.8pt, gray] (p6) edge (v1);
      \end{tikzpicture}
      \caption{Contracting $Z$ to one vertex results in a minor isomorphic to $K_{3,3}$.
       The paths $P_1$ and $P_2$ are given by white vertices and they are contracted to $w$.}
      \label{fig:c-d-connected-k33}
     \end{subfigure}
     \caption{Visualizations for the proof of Claim \ref{claim:contracted-color-type-iii}.}
    \end{figure}
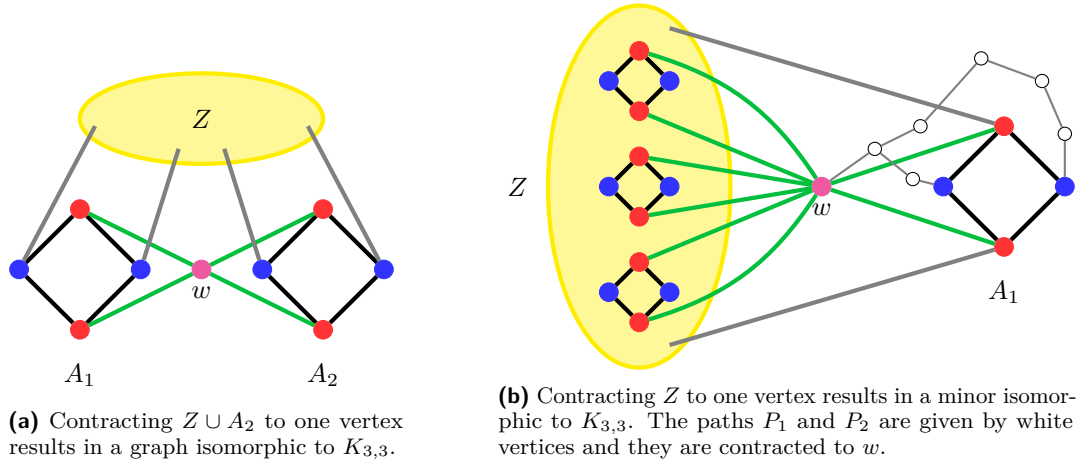
    
    \medskip
    
    Next, suppose that $G[c,d]$ is connected.
    This means $W = \{w\}$.
    Consider the graph $G' \coloneqq G - W$.
    Observe that $G'$ is still connected since $G$ is $3$-connected.
    Note that $\chi|_{(V(G'))^2}$ is a coloring that is $2$-stable with respect to $G'$.
    Consider the set $A_1$.
    By Lemma \ref{la:disconnected-color-subgraphs}, there is a $\chi$-invariant set $M \subseteq V(G')$ and a connected component $Z$ of $G' - A_1$ such that $N_{G'}(Z) = M \cap A_1$.
    Also, $\bigcup_{i \in \{2,\dots,\ell\}}A_i \subseteq Z$.
    This means that $w \in N_G(Z)$ because $w \in N_G(A_i)$ for all $i \in [\ell]$.
    If $|\widehat{A}_i| \geq 3$ for all $i \in [\ell]$, then we obtain a $K_5$-minor (with two vertices $w$ and $Z$ as well as a triangle obtained from $A_1$), which contradicts by Theorem \ref{thm:wagner} the planarity of $G$.
    Otherwise, $|\widehat{A}_i| = 2$ for all $i \in [\ell]$, which means that $G[c]$ is bicolored.
    If $A_1 \setminus \widehat{A}_1 \subseteq N_G(Z)$, then we obtain a $K_{3,3}$-minor (with the four vertices from $A_1$, $w$ and $Z$), which contradicts by Theorem \ref{thm:wagner} the planarity of $G$.
    So assume that $N_G(Z) = \{w\} \cup \widehat{A}_1$.
    Suppose $\widehat{A}_1 = \{v_1,v_2\}$ and $A_1 \setminus \widehat{A}_1 = \{u_1,u_2\}$.
    Since $G$ is $3$-connected, the graph $G - \{v_1,v_2\}$ is connected, and hence there is a path $P_1$ from $u_1$ to $w$.
    Note that $P_1$ does not contain any vertex from $Z$.
    Also, by possibly switching the roles of $u_1$ and $u_2$, we may assume that no internal vertex of $P_1$ is contained in $A_1$.
    Hence, $P_1$ avoids $C_V(G[c,d],\chi)$.
    Using Observation \ref{obs:wl-knows-paths-avoiding-colors}, there is also a path $P_2$ from $u_2$ to $w$ that avoids $C_V(G[c,d],\chi)$.
    Once again, $P_2$ is disjoint from $Z$.
    But this gives a $K_{3,3}$-minor (with vertices $u_1,u_2,v_1,v_2,w$, and $Z$ where we use the paths $P_1$ and $P_2$ to connect $w$ to $u_1$ and $u_2$; see Figure \ref{fig:c-d-connected-k33}), which contradicts the planarity of $G$.
   \end{claimproof}
   
   \begin{claim}
    \label{claim:one-neighbor-in-cycle}
    Let $w \in W$ and $i \in [\ell]$ such that $wA_i \in E((G/c)[\widetilde{d}])$.
    Then $|N_{G[d]}(w) \cap A_i| = 1$.
   \end{claim}
   \begin{claimproof}
    Let $w_1,\dots,w_k$ denote the set of neighbors of $A_i$ in the graph $(G/c)[\widetilde{d}]$.
    Observe that $k \geq 2$ by Claim \ref{claim:contracted-color-type-iii}.
    Since $d$ has Type I, every vertex in $\widehat{A}_i$ is adjacent to exactly one of the vertices $w_1,\dots,w_k$ in the graph $G[d]$.
    For $j \in [k]$, let
    \[\widehat{A}_{i,j} \coloneqq N_{G[d]}(w_j) \cap A_i.\]
    Then $\widehat{A}_{i,1},\dots,\widehat{A}_{i,k}$ forms a partition of $\widehat{A}_i$.

    Using Lemma \ref{la:factor-graph-2-wl}, we get that $|\widehat{A}_{i,j}| = |\widehat{A}_{i,j'}|$ for all $j,j' \in [k]$ (all cardinalities equal to the number of occurrences of $d$ in the multiset $\widetilde{d}$).
    Since $w_1,\dots,w_k \in \exte(A_i)$, it holds that every $\widehat{A}_{i,j}$ gives a connected subgraph of $(G[c])[A_i]$ (when defining a cycle on $\widehat{A}_i$ as described above).
    Now suppose towards a contradiction that $|\widehat{A}_{i,1}| \geq 2$.
    Then there are vertices $v_1,v_2,v_3 \in \widehat{A}_i$ such that $v_1,v_2,v_3$ appear consecutively along the cycle $(G[c])[A_i]$ and $v_1,v_2 \in \widehat{A}_{i,1}$, but $v_3 \notin \widehat{A}_{i,1}$. 
    But then $\chi(v_1,v_2) \neq \chi(v_2,v_3)$, since $\chi$ is $2$-stable.

    But this contradicts the fact that all edges of $G[c]$ receive the same color and $(G[c])[A_i]$ is a cycle.
   \end{claimproof}
   
   In particular, since $d$ has Type I and $r \geq 3$, Claim \ref{claim:one-neighbor-in-cycle} implies that $|\widehat{A}_i| = \deg_{(G/c)[\widetilde{d}]}(A_i)$.
   Also, $r = \widetilde{r}$.
   This means that $\widetilde{d}$ has Type III in the graph $G/c$ using Claim \ref{claim:contracted-color-type-iii}.   
   
   Now, first suppose that $(G/c)[\widetilde{d}]$ has minimum degree $3$.
   Then every connected component of $(G/c)[\widetilde{d}]$ is isomorphic to a bicolored cube (see Figure \ref{fig:graph-bi-cube}), a rhombic triacontahedron (see Figure \ref{fig:graph-rhombic-triacontahedron}), or a rhombic dodecahedron (see Figure \ref{fig:graph-rhombic-dodecahedron}) by Lemma \ref{la:edge-transitive-two-vertex-colors}.
   Using Lemma \ref{la:disconnected-color-subgraphs}, we conclude that $(G/c)[\widetilde{d}]$ is connected.
   It follows that $G[c,d]$ is isomorphic to a parallel subdivision of a chamfered tetrahedron, a chamfered cube, a chamfered octahedron, a chamfered dodecahedron, or a chamfered icosahedron.
   
   So assume that every vertex $A_i$, $i \in [\ell]$, has degree $2$ in the graph $(G/c)[\widetilde{d}]$.
   Then $|\widehat{A}_i| = 2$ for all $i \in [\ell]$, and hence every connected component of $G[c]$ is a bicolored cycle of length~$4$.
   Let $B$ be the vertex set of a connected component of $G[c,d]$ and let $\widetilde{B}$ be the corresponding set in the graph $(G/c)[\widetilde{d}]$.
   By Theorem \ref{thm:classification-edge-transitive} (and the comments below the theorem), $((G/c)[\widetilde{d}])[\widetilde{B}]$ is isomorphic to an $s$-subdivision of $K_2$, a cycle $C_m$ for some $m \geq 3$, or one of the graphs from Figure \ref{fig:graph-k4} -- \ref{fig:graph-icosidodecahedron}.
   
   \begin{claim}
    \label{claim:one-cycle-per-subdivision}
    If $|W \cap B| \geq 3$, then $s = 1$.
   \end{claim}
   \begin{claimproof}
    Suppose towards a contradiction that $s \geq 2$.
    Let $w_1,w_2 \in W \cap B$ be two vertices with distance $2$ in the graph $(G/c)[\widetilde{d}]$, and, without loss of generality, assume that $A_1,\dots,A_s$ are those vertices that are adjacent to both $w_1$ and $w_2$ in $(G/c)[\widetilde{d}]$.
    For $i \in [s]$, suppose that $\widehat{A}_i = \{v_1^i,v_2^i\}$ such that $w_1v_1^i,w_2v_2^i \in E(G[d])$ for all $i \in [s]$.
    Also suppose that $A_i \setminus \widehat{A}_i = \{u_1^i,u_2^i\}$ (recall that $G[c]$ is bicolored).
    
    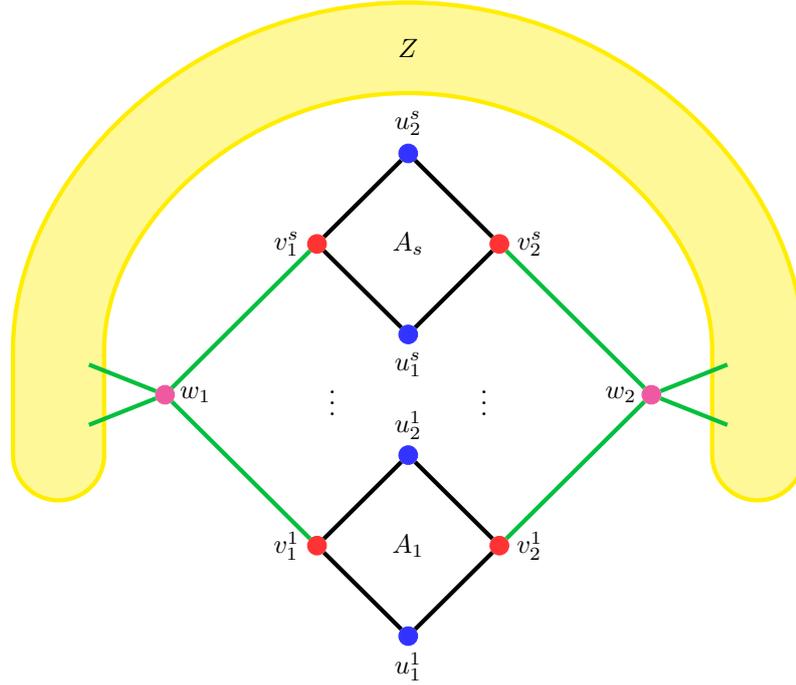
\begin{figure}
     \centering
     \begin{tikzpicture}[rotate=90]
      \draw[line width=1.6pt, rounded corners = 0.6cm, mYellow, fill = mYellow!40] (0.6,4) -- (2,4) arc (90:-90:4) -- (0.6,-4) -- (0.6,-5.2) -- (2,-5.2) arc (-90:90:5.2) -- (0.6,5.2) -- cycle;
      \node at (6.6,0) {$Z$};
      
      \foreach \k/\q in {0/1,1/s}{
       \node[vertex,blue!80] (u-\k-1) at ($(4*\k,0)+(0:1.2)$) {};
       \node[vertex,red!80] (u-\k-2) at ($(4*\k,0)+(90:1.2)$) {};
       \node[vertex,blue!80] (u-\k-3) at ($(4*\k,0)+(180:1.2)$) {};
       \node[vertex,red!80] (u-\k-4) at ($(4*\k,0)+(270:1.2)$) {};
       
       \node at ($(4*\k,0)+(0:1.6)$) {$u_2^{\q}$};
       \node at ($(4*\k,0)+(90:1.6)$) {$v_1^{\q}$};
       \node at ($(4*\k,0)+(180:1.6)$) {$u_1^{\q}$};
       \node at ($(4*\k,0)+(270:1.6)$) {$v_2^{\q}$};
       
       \node at (4*\k,0) {$A_{\q}$};
       
       \foreach \i/\j in {1/2,2/3,3/4,4/1}{
        \draw[line width=1.6pt] (u-\k-\i) edge (u-\k-\j);
       }
      } 
      
      \node at (2,1) {$\vdots$};
      \node at (2,-1) {$\vdots$};
      
      \node[vertex,magenta!80] (w-1) at (2,3.2) {};
      \node[vertex,magenta!80] (w-2) at (2,-3.2) {};
      \node at (2,2.8) {$w_1$};
      \node at (2,-2.8) {$w_2$};
      
      \draw[line width=1.6pt, darkpastelgreen] (w-1) edge (u-0-2);
      \draw[line width=1.6pt, darkpastelgreen] (w-2) edge (u-0-4);
      \draw[line width=1.6pt, darkpastelgreen] (w-1) edge (u-1-2);
      \draw[line width=1.6pt, darkpastelgreen] (w-2) edge (u-1-4);
      
      \draw[line width=1.6pt, darkpastelgreen] (w-1) edge (1.6,4.2);
      \draw[line width=1.6pt, darkpastelgreen] (w-1) edge (2.4,4.2);
      \draw[line width=1.6pt, darkpastelgreen] (w-2) edge (1.6,-4.2);
      \draw[line width=1.6pt, darkpastelgreen] (w-2) edge (2.4,-4.2);
      
     \end{tikzpicture}
     \caption{Visualization for the proof of Claim \ref{claim:one-cycle-per-subdivision}.}
     \label{fig:c-4-subdivision-analysis}
    \end{figure}

    Let $X \coloneqq \{w_1,w_2\} \cup \bigcup_{i \in [s]} A_i$ denote the set of all vertices listed above.
    Note that $X \subseteq B$.
    We first argue that $G - X$ contains a connected component $Z$ such that $V(G[c,d]) \setminus X \subseteq Z$.
    By the classification of the graph $((G/c)[\widetilde{d}])[\widetilde{B}]$ and since $|W \cap B| \geq 3$, we conclude that $((G[c,d])[B]) - X$ is connected (and non-empty).
    Let $Z$ denote the connected component of $G - X$ that contains $B \setminus X$.
    Since $\{\chi(v,v) \mid v \in B\} = \{\chi(v,v) \mid v \in B \setminus X\}$, Lemma~\ref{la:disconnected-color-subgraphs} implies that actually $V(G[c,d]) \setminus X \subseteq Z$.
    
    Since $G$ is $3$-connected, we get that $|N_G(Z)| \geq 3$.
    Let $X' \coloneqq X \setminus \{w_1,w_2\} = \bigcup_{i \in [s]} A_i$.
    So $N_G(Z) \cap X' \neq \emptyset$.
    Let $x \in N_G(Z) \cap X'$ and let $x' \in X'$ such that $\chi(x,x) = \chi(x',x')$.
    We claim that $x' \in N_G(Z)$.
    Let $D \coloneqq C_V(G[c,d],\chi)$ denote the set of vertex colors appearing in $G[c,d]$.
    Since $x \in N_G(Z)$, there is some $y \in Z$ such that $\chi(y,y) \in D$ and there is a path from $x$ to $y$ in $G$ that avoids $D$. 
    Also, because $\chi(x,x) = \chi(x'x')$, there is some $y' \in V(G)$ such that $\chi(x,y) = \chi(x',y')$.
    Then $\chi(y',y') = \chi(y,y) \in D$ and there is a path from $x'$ to $y'$ in $G$ that avoids $D$ by Observation \ref{obs:wl-knows-paths-avoiding-colors}.
    Also, using Lemma \ref{la:factor-graph-2-wl} and the structural classification of $((G/c)[\widetilde{d}])[\widetilde{B}]$, we conclude that $y' \notin X$ (as pairs in the set $X' \times X$ receive different colors than pairs from the set $X' \times (V(G) \setminus X)$).
    So $y' \in Z$.
    But this means that $x' \in N_G(Z)$.
    Overall, this means that
    \begin{enumerate}[label=(\roman*)]
     \item $\{v_1^i,v_2^i \mid i \in [s]\} \subseteq N_G(Z)$, or
     \item $\{u_1^i,u_2^i \mid i \in [s]\} \subseteq N_G(Z)$.
    \end{enumerate}
    In the latter case, we immediately obtain a $K_{3,2s}$-minor with vertices $Z$, $\{w_1,v_1^1,\dots,v_1^s\}$ and $\{w_2,v_2^1,\dots,v_2^s\}$ on the left side and vertices $u_j^i$, $i \in [s]$ and $j \in \{1,2\}$, on the right side.
    By Theorem \ref{thm:wagner}, this contradicts the planarity, since $s \geq 2$ by assumption.
    So we may focus on the first option.
    Actually, we may assume that $u_j^i \notin N_G(Z)$ for all $i \in [s]$ and $j \in \{1,2\}$.
    In this case, we obtain a $K_{3,s}$-minor with vertices $Z$, $w_1$ and $w_2$ on the left side and vertices $A_i$, $i \in [s]$, on the right side.
    By planarity, it follows that $s = 2$.
    
    Now, consider the graph $G' \coloneqq G - \{v_1^1,v_2^1\}$.
    Observe that $G'$ is connected because $G$ is $3$-connected.
    So there is a path $P_1$ from a vertex from $\{u_1^1,u_2^1\}$ to some vertex from $V(G[c,d]) \setminus \{u_1^1,u_2^1,v_1^1,v_2^1\}$ that avoids $D$.
    Without loss of generality, assume that $P_1$ starts in $u_1^1$ and let $u_1'$ denote its end vertex.
    By the comments above, we conclude that $u_1' \notin Z$.
    This means that $u_1' \in X$.
    In other words, $u_1' \in \{w_1,w_2,v_1^2,v_2^2,u_1^2,u_2^2\} = \{w_1,w_2\} \cup A_2$.
    We also conclude that $P_1$ does not visit any vertex from $Z$.
    Using the same arguments as before, there is also a path $P_2$ from $u_2^1$ to some vertex $u_2' \in \{w_1,w_2\} \cup A_2$ that avoids $D$.
    Again, $P_2$ is disjoint from $Z$.
    But now, this gives a $K_{3,3}$-minor with vertices $u_1^1$, $u_2^1$ and $Z$ on the left and $v_1^1$, $v_2^1$ and $\{w_1,w_2\} \cup A_2$ on the right.
    Note that we can contract the paths $P_1$ and $P_2$ onto $\{w_1,w_2\} \cup A_2$ to connect $u_1^1$ and $u_2^1$ to $\{w_1,w_2\} \cup A_2$.
    By Theorem \ref{thm:wagner}, this is a contradiction to the planarity of $G$.
   \end{claimproof}
   
   First suppose that $((G/c)[\widetilde{d}])[\widetilde{B}]$ is isomorphic to an $s$-subdivision of $K_2$.
   Note that $s \geq 3$ in this case because $\widetilde{r} \geq 3$.
   It follows that $(G[c,d])[B]$ is isomorphic to $K_{2,s}^*$.
   Also, $G[c,d]$ is connected using Lemma \ref{la:disconnected-color-subgraphs} (observe that $G$ is $3$-connected and hence, $|N_G(Z)| \geq 3$ where $Z$ is the vertex set of the connected component provided by Lemma \ref{la:disconnected-color-subgraphs}).
   So together $G[c,d]$ is isomorphic to $K_{2,s}^*$.
   
   Otherwise $((G/c)[\widetilde{d}])[\widetilde{B}]$ is isomorphic to an $s$-subdivision of $C_m$ for $m \geq 3$ or one of the graphs from Figure \ref{fig:graph-k4} -- \ref{fig:graph-icosidodecahedron}.
   So $s = 1$ by Claim \ref{claim:one-cycle-per-subdivision}.
   Since $\widetilde{r} \geq 3$, we conclude that $((G/c)[\widetilde{d}])[\widetilde{B}]$ is isomorphic to a $1$-subdivision of one of the graphs from Figure \ref{fig:graph-k4} -- \ref{fig:graph-icosidodecahedron}.
   So $(G/c)[\widetilde{d}]$ is connected by Lemma \ref{la:disconnected-color-subgraphs}, and hence $G[c,d]$ is connected as well.
   To complete the proof, it only remains to show that $G[c,d]$ cannot be isomorphic to a $C_4$-subdivision of a cuboctahedron (Figure \ref{fig:graph-cuboctahedron}) or an icosidodecahedron (Figure \ref{fig:graph-icosidodecahedron}).
   
   We show that $G[c,d]$ cannot be isomorphic to a $C_4$-subdivision of a cuboctahedron, the analysis for the icosidodecahedron is analogous.
   We remark that the arguments are similar to those already used in the proof of Theorem \ref{thm:type-iii-classification}.
   
   Suppose towards a contradiction that $G[c,d]$ is isomorphic to a $C_4$-subdivision of a cuboctahedron.
   Let $H$ denote the corresponding cuboctahedron on vertex set $W$.
   Let $w,w' \in W$ be neighbors in $H$ and without loss of generality, suppose that the $4$-cycle on $A_1$ is adjacent to $w$ and $w'$.
   A visualization is given in Figure \ref{fig:c4-cuboctahedron}.
   Let us fix a planar embedding of $G$ and consider the induced embedding of $G[c,d]$.
   Suppose that $\widehat{A}_1 = \{v,v'\}$ such that $vw,v'w' \in E(G[d])$ and $A_1 \setminus \widehat{A}_1 = \{u_1,u_2\}$.
   Let $F,F_1,F_2$ denote the three faces of $G[c,d]$ so that $(G[c])[A_1]$ bounds $F$, $u_1$ is incident to $F_1$ but not $F_2$, and $u_2$ is incident to $F_2$ but not $F_1$.
   We have that $|F| = 4$ and $\{|F_1|,|F_2|\} = \{12,16\}$.
   Without loss of generality, suppose that $|F_1| = 12$.
   
   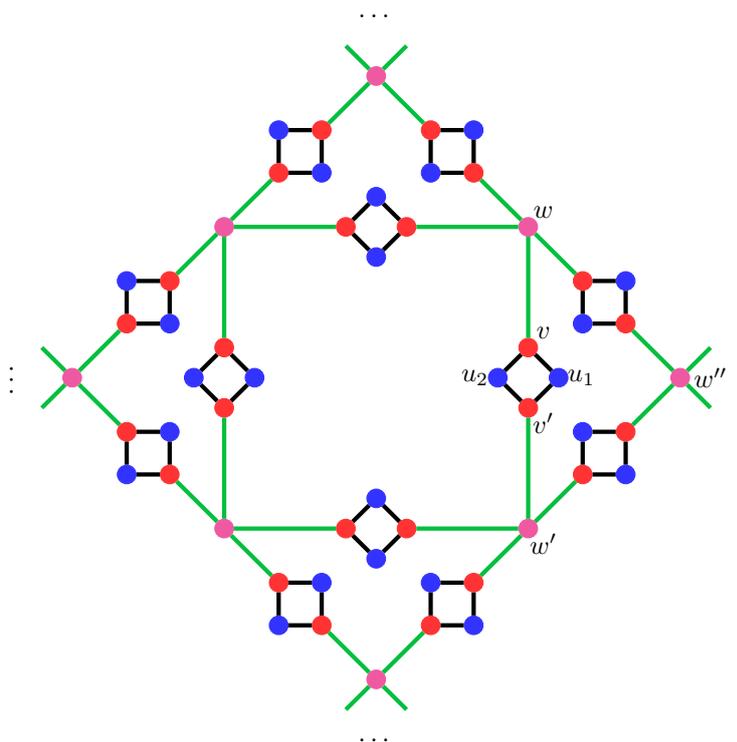
\begin{figure}
    \centering
    \begin{tikzpicture}
     \node[vertex,magenta!80] (1) at (2,2) {};
     \node[vertex,magenta!80] (2) at (-2,2) {};
     \node[vertex,magenta!80] (3) at (-2,-2) {};
     \node[vertex,magenta!80] (4) at (2,-2) {};
   
     \node[vertex,magenta!80] (5) at (4,0) {};
     \node[vertex,magenta!80] (6) at (0,4) {};
     \node[vertex,magenta!80] (7) at (-4,0) {};
     \node[vertex,magenta!80] (8) at (0,-4) {};

     \foreach \x/\y/\r/\v/\w [count = \k] in {2/0/0/1/4, 0/2/90/2/1, -2/0/0/2/3, 0/-2/90/3/4, 3/1/45/1/5, 3/-1/-45/5/4, 1/3/45/6/1, -1/3/-45/6/2, -3/1/-45/2/7, -3/-1/45/7/3, -1/-3/45/3/8, 1/-3/-45/4/8}{
      
      \node[vertex,blue!80] (u-\k-1) at ($(\x,\y)+(0+\r:0.4)$) {};
      \node[vertex,red!80] (u-\k-2) at ($(\x,\y)+(90+\r:0.4)$) {};
      \node[vertex,blue!80] (u-\k-3) at ($(\x,\y)+(180+\r:0.4)$) {};
      \node[vertex,red!80] (u-\k-4) at ($(\x,\y)+(270+\r:0.4)$) {};
      
      \foreach \i/\j in {1/2,2/3,3/4,4/1}{
       \draw[line width=1.6pt] (u-\k-\i) edge (u-\k-\j);
      }
      
      \draw[line width=1.6pt, darkpastelgreen] (u-\k-2) edge (\v);
      \draw[line width=1.6pt, darkpastelgreen] (u-\k-4) edge (\w);
     }
   
     \draw[line width=1.6pt, darkpastelgreen] (5) edge (4.4,0.4);
     \draw[line width=1.6pt, darkpastelgreen] (5) edge (4.4,-0.4);
     \node[rotate=90] at (4.8,0) {$\dots$};
     \draw[line width=1.6pt, darkpastelgreen] (6) edge (0.4,4.4);
     \draw[line width=1.6pt, darkpastelgreen] (6) edge (-0.4,4.4);
     \node at (0,4.8) {$\dots$};
     \draw[line width=1.6pt, darkpastelgreen] (7) edge (-4.4,0.4);
     \draw[line width=1.6pt, darkpastelgreen] (7) edge (-4.4,-0.4);
     \node[rotate=90] at (-4.8,0) {$\dots$};
     \draw[line width=1.6pt, darkpastelgreen] (8) edge (0.4,-4.4);
     \draw[line width=1.6pt, darkpastelgreen] (8) edge (-0.4,-4.4);
     \node at (0,-4.8) {$\dots$};
     
     \node at (2.2,2.2) {$w$};
     \node at (2.2,-2.2) {$w'$};
     \node at (4.4,0) {$w''$};
     \node at (2.2,0.6) {$v$};
     \node at (2.2,-0.6) {$v'$};
     \node at (2.7,0) {$u_1$};
     \node at (1.3,0) {$u_2$};
     
    \end{tikzpicture}
    \caption{Visualization for the proof of Theorem \ref{la:type-ii-connected-subgraphs-classification}. The figure shows a part of a $C_4$-subdivision of a cuboctahedron.}
    \label{fig:c4-cuboctahedron}
   \end{figure}
   
   Consider the connected components of $G - V(G[c,d])$.
   Let $C \coloneqq \{\chi(x,x') \mid xx' \in E(G - V(G[c,d]))\}$.
   Note that the connected components of $G - V(G[c,d])$ are precisely those $G[C]$.

   For the purpose of analysis, we consider the graph $G/C$ obtained from contracting each component of $G - V(G[c,d])$ to one vertex.
   The graph $G/C$ is still $3$-connected, since $G[c,d]$ is $2$-connected.
   Also recall that $\chi/C$ is a $2$-stable coloring of $G/C$ by Lemma \ref{la:factor-graph-2-wl}.
   
   Now, since $G$ is $3$-connected, there is some vertex set $Z_1$ of some connected component of $G - V(G[c,d])$ such that $\{u_1,u_2\} \cap N_G(Z) \neq \emptyset$ and $N_G(Z) \nsubseteq A_1$.
   Because $\chi(u_1,u_1) = \chi(u_2,u_2)$, we may assume without loss of generality that $u_1 \in N_G(Z_1)$ (if $u_2 \in N_G(Z_1)$ then there is another such component $Z_1'$ with $u_1 \in N_G(Z_1)$ as otherwise $2$-WL could distinguish between $u_1$ and $u_2$).
   This means that $Z_1$ is located in the face $F_1$.
   
   Now, first suppose that $N_G(Z_1) \nsubseteq \{w,w',v,v',u_1\}$.
   Using Lemma \ref{la:factor-graph-2-wl}, there is a vertex set of another component $Z_2$ of $G - V(G[c,d])$ such that
   \[(\chi/C)(u_1,Z_1) = (\chi/C)(u_2,Z_2)\]
   In particular, $(\chi/C)(Z_1,Z_1) = (\chi/C)(Z_2,Z_2)$.
   However, it can be checked that this is impossible.
   For example, let $w''$ be chosen as in Figure \ref{fig:c4-cuboctahedron} and suppose $w'' \in N_G(Z_1)$, then there are two walks from $u_1$ to $w''$ of length $6$ in the graph $G[c,d]$.
   By stability of the coloring $\chi$, there also must exist such a vertex $w''' \in N_G(Z_2)$ such that there are two walks from $u_2$ to $w''$ of length $6$ in the graph $G[c,d]$.
   But $Z_2$ has to be located in the face $F_2$ and no vertex $w'''$ incident to $F_2$ satisfies this property.
   
   So $N_G(Z_1) \subseteq \{w,w',v,v',u_1\}$ and $\{w,w'\} \cap N_G(Z_1) \neq \emptyset$ by the assumption above.
   Without loss of generality, suppose that $w \in N_G(Z_1)$.
   Since $\chi(u_1,v) = \chi(u_1,v')$, there is also a vertex set $Z_1'$ of a connected component of $G - V(G[c,d])$ (possibly $Z_1 = Z_1'$) such that $N_G(Z_1') \subseteq \{w,w',v,v',u_1\}$ and $w' \in N_G(Z_1')$.
   Using $\chi(u_1,u_1) = \chi(u_2,u_2)$, the argument can be repeated to obtain components $Z_2,Z_2'$ of $G - V(G[c,d])$ such that $N_G(Z_2) \subseteq \{w,w',v,v',u_2\}$ and $w \in N_G(Z_2)$ as well as $N_G(Z_2') \subseteq \{w,w',v,v',u_2\}$ and $w' \in N_G(Z_2')$.
   But now $\{w,w'\}$ forms a $2$-separator of $G$, since, by planarity, no other component of $G - V(G[c,d])$ can be connected to a vertex from $A_1$ as well as some vertex from $V(G[c,d]) \setminus (A_1 \cup \{w,w'\})$.   
   This contradicts $G$ being $3$-connected.\qedhere
 \end{description}
\end{proof}

Together, Lemma \ref{la:cycle-connections} and Lemma \ref{la:type-ii-connected-subgraphs-classification} prove Theorem \ref{thm:type-ii-connected-subgraphs-classification}.

\medskip

Let us point out that, for Option \ref{item:type-ii-connected-subgraphs-classification-3} from Theorem \ref{thm:type-ii-connected-subgraphs-classification}, using the same arguments as for edge colors of Type III, we obtain that $\Aut(G)$ is isomorphic to a subgroup of $\Aut(G[c,d])$.

\section{Conclusion}

Overall, by combining Lemmas \ref{la:face-cycle-with-star-edges} and \ref{la:directed-cycle} and Theorems \ref{thm:type-iii-classification} and \ref{thm:type-ii-connected-subgraphs-classification}, we obtain that every $3$-connected planar graph $G$ satisfies one of the following options.
\begin{enumerate}[label=(\Alph*)]
 \item\label{item:outcome-1} There is some $v \in V(G)$ such that $\Disc_G(v) = V(G)$, which implies that $2$-WL determines pair orbits of $G$ by Lemma \ref{la:wl-pair-orbits-from-fixing-number},
 \item\label{item:outcome-2} there is an edge color $c \in C_E(G,\WL{2}{G})$ that defines a matching, or
 \item\label{item:outcome-3} there is a set $C \subseteq C_E(G,\WL{2}{G})$ such that $|C| \leq 2$ and $G[C]$ is essentially a Platonic or Archimedean solid, or stems from a small number of infinite families of connected graphs.
\end{enumerate}
Option \ref{item:outcome-3} contains the graphs listed in Theorems \ref{thm:type-iii-classification} and \ref{thm:type-ii-connected-subgraphs-classification}, as well as the class of bipyramids from Lemma \ref{la:directed-cycle} and the class of cycles to cover graphs of Type IIb.

It remains an important open question whether $2$-WL identifies every planar graph.
With the structural insights from this paper, it now suffices to focus on Case \ref{item:outcome-3} and, as explained in the introduction, the classification of the subgraphs $G[C]$ appearing in this case should be a crucial step to determining the WL dimension of planar graphs.

\bibliography{literature}

\begin{thebibliography}{10}

\bibitem{AhmadiKMN13}
Babak Ahmadi, Kristian Kersting, Martin Mladenov, and Sriraam Natarajan.
\newblock Exploiting symmetries for scaling loopy belief propagation and
  relational training.
\newblock {\em Mach. Learn.}, 92(1):91--132, 2013.
\newblock \href {https://doi.org/10.1007/s10994-013-5385-0}
  {\path{doi:10.1007/s10994-013-5385-0}}.

\bibitem{AndersS21}
Markus Anders and Pascal Schweitzer.
\newblock Engineering a fast probabilistic isomorphism test.
\newblock In Martin Farach{-}Colton and Sabine Storandt, editors, {\em
  Proceedings of the Symposium on Algorithm Engineering and Experiments,
  {ALENEX} 2021, Virtual Conference, January 10-11, 2021}, pages 73--84.
  {SIAM}, 2021.
\newblock \href {https://doi.org/10.1137/1.9781611976472.6}
  {\path{doi:10.1137/1.9781611976472.6}}.

\bibitem{ArvindFKV20}
Vikraman Arvind, Frank Fuhlbr{\"{u}}ck, Johannes K{\"{o}}bler, and Oleg
  Verbitsky.
\newblock On {W}eisfeiler--{L}eman invariance: Subgraph counts and related
  graph properties.
\newblock {\em J. Comput. Syst. Sci.}, 113:42--59, 2020.
\newblock \href {https://doi.org/10.1016/j.jcss.2020.04.003}
  {\path{doi:10.1016/j.jcss.2020.04.003}}.

\bibitem{AtseriasM13}
Albert Atserias and Elitza~N. Maneva.
\newblock {S}herali--{A}dams relaxations and indistinguishability in counting
  logics.
\newblock {\em {SIAM} J. Comput.}, 42(1):112--137, 2013.
\newblock \href {https://doi.org/10.1137/120867834}
  {\path{doi:10.1137/120867834}}.

\bibitem{AtseriasO18}
Albert Atserias and Joanna Ochremiak.
\newblock Definable ellipsoid method, sums-of-squares proofs, and the
  isomorphism problem.
\newblock In Anuj Dawar and Erich Gr{\"{a}}del, editors, {\em Proceedings of
  the 33rd Annual {ACM/IEEE} Symposium on Logic in Computer Science, {LICS}
  2018, Oxford, UK, July 09-12, 2018}, pages 66--75. {ACM}, 2018.
\newblock \href {https://doi.org/10.1145/3209108.3209186}
  {\path{doi:10.1145/3209108.3209186}}.

\bibitem{Babai16}
L{\'{a}}szl{\'{o}} Babai.
\newblock Graph isomorphism in quasipolynomial time [extended abstract].
\newblock In Daniel Wichs and Yishay Mansour, editors, {\em Proceedings of the
  48th Annual {ACM} {SIGACT} Symposium on Theory of Computing, {STOC} 2016,
  Cambridge, MA, USA, June 18-21, 2016}, pages 684--697. {ACM}, 2016.
\newblock \href {https://doi.org/10.1145/2897518.2897542}
  {\path{doi:10.1145/2897518.2897542}}.

\bibitem{BambergHL19}
John Bamberg, Akihide Hanaki, and Jesse Lansdown.
\newblock {AssociationSchemes -- AssociationSchemes: A GAP package for working
  with association schemes and homogeneous coherent configurations, Version
  1.0.0}.
\newblock 2019.
\newblock \href {https://doi.org/10.5281/zenodo.2634955}
  {\path{doi:10.5281/zenodo.2634955}}.

\bibitem{Bollobas82}
B{\'{e}}la Bollob{\'{a}}s.
\newblock Distinguishing vertices of random graphs.
\newblock In {\em Graph theory ({C}ambridge, 1981)}, volume~62 of {\em
  North-Holland Math. Stud.}, pages 33--49. North-Holland, Amsterdam-New York,
  1982.
\newblock \href {https://doi.org/10.1016/S0304-0208(08)73545-X}
  {\path{doi:10.1016/S0304-0208(08)73545-X}}.

\bibitem{CaiFI92}
Jin{-}yi Cai, Martin F{\"{u}}rer, and Neil Immerman.
\newblock An optimal lower bound on the number of variables for graph
  identification.
\newblock {\em Comb.}, 12(4):389--410, 1992.
\newblock \href {https://doi.org/10.1007/BF01305232}
  {\path{doi:10.1007/BF01305232}}.

\bibitem{ChenP19}
Gang Chen and Ilia~N. Ponomarenko.
\newblock Lectures on coherent configurations.
\newblock Lecture notes available at
  \url{http://www.pdmi.ras.ru/~inp/ccNOTES.pdf}, 2019.

\bibitem{DargaLSM04}
Paul~T. Darga, Mark~H. Liffiton, Karem~A. Sakallah, and Igor~L. Markov.
\newblock Exploiting structure in symmetry detection for {CNF}.
\newblock In Sharad Malik, Limor Fix, and Andrew~B. Kahng, editors, {\em
  Proceedings of the 41th Design Automation Conference, {DAC} 2004, San Diego,
  CA, USA, June 7-11, 2004}, pages 530--534. {ACM}, 2004.
\newblock \href {https://doi.org/10.1145/996566.996712}
  {\path{doi:10.1145/996566.996712}}.

\bibitem{DellGR18}
Holger Dell, Martin Grohe, and Gaurav Rattan.
\newblock Lov{\'{a}}sz meets {W}eisfeiler and {L}eman.
\newblock In Ioannis Chatzigiannakis, Christos Kaklamanis, D{\'{a}}niel Marx,
  and Donald Sannella, editors, {\em 45th International Colloquium on Automata,
  Languages, and Programming, {ICALP} 2018, July 9-13, 2018, Prague, Czech
  Republic}, volume 107 of {\em LIPIcs}, pages 40:1--40:14. Schloss Dagstuhl -
  Leibniz-Zentrum f{\"{u}}r Informatik, 2018.
\newblock \href {https://doi.org/10.4230/LIPIcs.ICALP.2018.40}
  {\path{doi:10.4230/LIPIcs.ICALP.2018.40}}.

\bibitem{Diestel18}
Reinhard Diestel.
\newblock {\em Graph theory}, volume 173 of {\em Graduate Texts in
  Mathematics}.
\newblock Springer, Berlin, fifth edition, 2018.

\bibitem{Dvorak10}
Zdenek Dvor{\'{a}}k.
\newblock On recognizing graphs by numbers of homomorphisms.
\newblock {\em J. Graph Theory}, 64(4):330--342, 2010.
\newblock \href {https://doi.org/10.1002/jgt.20461}
  {\path{doi:10.1002/jgt.20461}}.

\bibitem{EvdokimovPT00}
Sergei Evdokimov, Ilia~N. Ponomarenko, and Gottfried Tinhofer.
\newblock Forestal algebras and algebraic forests (on a new class of weakly
  compact graphs).
\newblock {\em Discret. Math.}, 225(1-3):149--172, 2000.
\newblock \href {https://doi.org/10.1016/S0012-365X(00)00152-7}
  {\path{doi:10.1016/S0012-365X(00)00152-7}}.

\bibitem{FuhlbruckKV21}
Frank Fuhlbr{\"{u}}ck, Johannes K{\"{o}}bler, and Oleg Verbitsky.
\newblock Identifiability of graphs with small color classes by the
  {Weisfeiler}--{Leman} algorithm.
\newblock {\em {SIAM} J. Discret. Math.}, 35(3):1792--1853, 2021.
\newblock \href {https://doi.org/10.1137/20M1327550}
  {\path{doi:10.1137/20M1327550}}.

\bibitem{FuhlbruckKV21a}
Frank Fuhlbr{\"{u}}ck, Johannes K{\"{o}}bler, and Oleg Verbitsky.
\newblock Local {WL} invariance and hidden shades of regularity.
\newblock {\em Discret. Appl. Math.}, 305:191--198, 2021.
\newblock \href {https://doi.org/10.1016/j.dam.2021.08.037}
  {\path{doi:10.1016/j.dam.2021.08.037}}.

\bibitem{Furer17}
Martin F{\"{u}}rer.
\newblock On the combinatorial power of the {W}eisfeiler--{L}ehman algorithm.
\newblock In Dimitris Fotakis, Aris Pagourtzis, and Vangelis~Th. Paschos,
  editors, {\em Algorithms and Complexity - 10th International Conference,
  {CIAC} 2017, Athens, Greece, May 24-26, 2017, Proceedings}, volume 10236 of
  {\em Lecture Notes in Computer Science}, pages 260--271, 2017.
\newblock \href {https://doi.org/10.1007/978-3-319-57586-5\_22}
  {\path{doi:10.1007/978-3-319-57586-5\_22}}.

\bibitem{GavrilyukNP20}
Alexander~L. Gavrilyuk, Roman Nedela, and Ilia~N. Ponomarenko.
\newblock The {Weisfeiler}--{Leman} dimension of distance-hereditary graphs.
\newblock {\em CoRR}, abs/2005.11766, 2020.
\newblock \href {http://arxiv.org/abs/2005.11766} {\path{arXiv:2005.11766}}.

\bibitem{Grohe98}
Martin Grohe.
\newblock Fixed-point logics on planar graphs.
\newblock In {\em Thirteenth Annual {IEEE} Symposium on Logic in Computer
  Science, Indianapolis, Indiana, USA, June 21-24, 1998}, pages 6--15. {IEEE}
  Computer Society, 1998.
\newblock \href {https://doi.org/10.1109/LICS.1998.705639}
  {\path{doi:10.1109/LICS.1998.705639}}.

\bibitem{Grohe12}
Martin Grohe.
\newblock Fixed-point definability and polynomial time on graphs with excluded
  minors.
\newblock {\em J. {ACM}}, 59(5):27:1--27:64, 2012.
\newblock \href {https://doi.org/10.1145/2371656.2371662}
  {\path{doi:10.1145/2371656.2371662}}.

\bibitem{Grohe17}
Martin Grohe.
\newblock {\em Descriptive Complexity, Canonisation, and Definable Graph
  Structure Theory}, volume~47 of {\em Lecture Notes in Logic}.
\newblock Association for Symbolic Logic, Ithaca, NY; Cambridge University
  Press, Cambridge, 2017.
\newblock \href {https://doi.org/10.1017/9781139028868}
  {\path{doi:10.1017/9781139028868}}.

\bibitem{Grohe21}
Martin Grohe.
\newblock The logic of graph neural networks.
\newblock In {\em 36th Annual {ACM/IEEE} Symposium on Logic in Computer
  Science, {LICS} 2021, Rome, Italy, June 29 - July 2, 2021}, pages 1--17.
  {IEEE}, 2021.
\newblock \href {https://doi.org/10.1109/LICS52264.2021.9470677}
  {\path{doi:10.1109/LICS52264.2021.9470677}}.

\bibitem{GroheK19}
Martin Grohe and Sandra Kiefer.
\newblock A linear upper bound on the {Weisfeiler}--{Leman} dimension of graphs
  of bounded genus.
\newblock In Christel Baier, Ioannis Chatzigiannakis, Paola Flocchini, and
  Stefano Leonardi, editors, {\em 46th International Colloquium on Automata,
  Languages, and Programming, {ICALP} 2019, July 9-12, 2019, Patras, Greece},
  volume 132 of {\em LIPIcs}, pages 117:1--117:15. Schloss Dagstuhl -
  Leibniz-Zentrum f{\"{u}}r Informatik, 2019.
\newblock \href {https://doi.org/10.4230/LIPIcs.ICALP.2019.117}
  {\path{doi:10.4230/LIPIcs.ICALP.2019.117}}.

\bibitem{GroheK21}
Martin Grohe and Sandra Kiefer.
\newblock Logarithmic {W}eisfeiler--{L}eman identifies all planar graphs.
\newblock In Nikhil Bansal, Emanuela Merelli, and James Worrell, editors, {\em
  48th International Colloquium on Automata, Languages, and Programming,
  {ICALP} 2021, July 12-16, 2021, Glasgow, Scotland (Virtual Conference)},
  volume 198 of {\em LIPIcs}, pages 134:1--134:20. Schloss Dagstuhl -
  Leibniz-Zentrum f{\"{u}}r Informatik, 2021.
\newblock \href {https://doi.org/10.4230/LIPIcs.ICALP.2021.134}
  {\path{doi:10.4230/LIPIcs.ICALP.2021.134}}.

\bibitem{GroheN19}
Martin Grohe and Daniel Neuen.
\newblock Canonisation and definability for graphs of bounded rank width.
\newblock In {\em 34th Annual {ACM/IEEE} Symposium on Logic in Computer
  Science, {LICS} 2019, Vancouver, BC, Canada, June 24-27, 2019}, pages 1--13.
  {IEEE}, 2019.
\newblock \href {https://doi.org/10.1109/LICS.2019.8785682}
  {\path{doi:10.1109/LICS.2019.8785682}}.

\bibitem{GroheO15}
Martin Grohe and Martin Otto.
\newblock Pebble games and linear equations.
\newblock {\em J. Symb. Log.}, 80(3):797--844, 2015.
\newblock \href {https://doi.org/10.1017/jsl.2015.28}
  {\path{doi:10.1017/jsl.2015.28}}.

\bibitem{GrunbaumS87}
Branko Gr{\"{u}}nbaum and Geoffrey~C. Shephard.
\newblock Edge-transitive planar graphs.
\newblock {\em J. Graph Theory}, 11(2):141--155, 1987.
\newblock \href {https://doi.org/10.1002/jgt.3190110204}
  {\path{doi:10.1002/jgt.3190110204}}.

\bibitem{Hella96}
Lauri Hella.
\newblock Logical hierarchies in {PTIME}.
\newblock {\em Inf. Comput.}, 129(1):1--19, 1996.
\newblock \href {https://doi.org/10.1006/inco.1996.0070}
  {\path{doi:10.1006/inco.1996.0070}}.

\bibitem{Immerman87}
Neil Immerman.
\newblock Expressibility as a complexity measure: results and directions.
\newblock In {\em Proceedings of the Second Annual Conference on Structure in
  Complexity Theory, Cornell University, Ithaca, New York, USA, June 16-19,
  1987}, pages 194--202. {IEEE} Computer Society, 1987.

\bibitem{ImmermanL90}
Neil Immerman and Eric Lander.
\newblock Describing graphs: A first-order approach to graph canonization.
\newblock In Alan~L. Selman, editor, {\em Complexity Theory Retrospective: In
  Honor of Juris Hartmanis on the Occasion of His Sixtieth Birthday, July 5,
  1988}, pages 59--81. Springer New York, New York, NY, 1990.
\newblock \href {https://doi.org/10.1007/978-1-4612-4478-3_5}
  {\path{doi:10.1007/978-1-4612-4478-3_5}}.

\bibitem{JunttilaK07}
Tommi~A. Junttila and Petteri Kaski.
\newblock Engineering an efficient canonical labeling tool for large and sparse
  graphs.
\newblock In {\em Proceedings of the Nine Workshop on Algorithm Engineering and
  Experiments, {ALENEX} 2007, New Orleans, Louisiana, USA, January 6, 2007}.
  {SIAM}, 2007.
\newblock \href {https://doi.org/10.1137/1.9781611972870.13}
  {\path{doi:10.1137/1.9781611972870.13}}.

\bibitem{Kiefer20}
Sandra Kiefer.
\newblock The {W}eisfeiler--{L}eman algorithm: an exploration of its power.
\newblock {\em {ACM} {SIGLOG} News}, 7(3):5--27, 2020.
\newblock \href {https://doi.org/10.1145/3436980.3436982}
  {\path{doi:10.1145/3436980.3436982}}.

\bibitem{KieferN22}
Sandra Kiefer and Daniel Neuen.
\newblock The power of the {Weisfeiler--Leman} algorithm to decompose graphs.
\newblock {\em {SIAM} J. Discret. Math.}, 36(1):252--298, 2022.
\newblock \href {https://doi.org/10.1137/20M1314987}
  {\path{doi:10.1137/20M1314987}}.

\bibitem{KieferPS19}
Sandra Kiefer, Ilia~N. Ponomarenko, and Pascal Schweitzer.
\newblock The {W}eisfeiler--{L}eman dimension of planar graphs is at most 3.
\newblock {\em J. {ACM}}, 66(6):44:1--44:31, 2019.
\newblock \href {https://doi.org/10.1145/3333003} {\path{doi:10.1145/3333003}}.

\bibitem{Lovasz19}
L{\'{a}}szl{\'{o}} Lov{\'{a}}sz.
\newblock {\em Graphs and geometry}, volume~65 of {\em American Mathematical
  Society Colloquium Publications}.
\newblock American Mathematical Society, Providence, RI, 2019.
\newblock \href {https://doi.org/10.1090/coll/065}
  {\path{doi:10.1090/coll/065}}.

\bibitem{McKay81}
Brendan~D. McKay.
\newblock Practical graph isomorphism.
\newblock {\em Congr. Numer.}, 30:45--87, 1981.

\bibitem{McKayP14}
Brendan~D. McKay and Adolfo Piperno.
\newblock Practical graph isomorphism, {II}.
\newblock {\em J. Symb. Comput.}, 60:94--112, 2014.
\newblock \href {https://doi.org/10.1016/j.jsc.2013.09.003}
  {\path{doi:10.1016/j.jsc.2013.09.003}}.

\bibitem{MorrisRFHLRG19}
Christopher Morris, Martin Ritzert, Matthias Fey, William~L. Hamilton, Jan~Eric
  Lenssen, Gaurav Rattan, and Martin Grohe.
\newblock {W}eisfeiler and {L}eman go neural: Higher-order graph neural
  networks.
\newblock In {\em The Thirty-Third {AAAI} Conference on Artificial
  Intelligence, {AAAI} 2019, Honolulu, Hawaii, USA, January 27 - February 1,
  2019}, pages 4602--4609. {AAAI} Press, 2019.
\newblock \href {https://doi.org/10.1609/aaai.v33i01.33014602}
  {\path{doi:10.1609/aaai.v33i01.33014602}}.

\bibitem{Redies14}
Joachim Redies.
\newblock Defining {PTIME} problems on planar graphs with few variables.
\newblock Master's thesis, RWTH Aachen University, 2014.

\bibitem{ShervashidzeSLMB11}
Nino Shervashidze, Pascal Schweitzer, Erik~Jan van Leeuwen, Kurt Mehlhorn, and
  Karsten~M. Borgwardt.
\newblock {W}eisfeiler--{L}ehman graph kernels.
\newblock {\em J. Mach. Learn. Res.}, 12:2539--2561, 2011.
\newblock URL: \url{http://dl.acm.org/citation.cfm?id=2078187}.

\bibitem{Tutte63}
William~T. Tutte.
\newblock How to draw a graph.
\newblock {\em Proc. London Math. Soc. (3)}, 13(1):743--767, 1963.
\newblock \href {https://doi.org/10.1112/plms/s3-13.1.743}
  {\path{doi:10.1112/plms/s3-13.1.743}}.

\bibitem{Tutte84}
William~T. Tutte.
\newblock {\em Graph theory}, volume~21 of {\em Encyclopedia of Mathematics and
  its Applications}.
\newblock Addison-Wesley Publishing Company, Advanced Book Program, Reading,
  MA, 1984.
\newblock With a foreword by C. St. J. A. Nash-Williams.

\bibitem{Verbitsky07}
Oleg Verbitsky.
\newblock Planar graphs: Logical complexity and parallel isomorphism tests.
\newblock In Wolfgang Thomas and Pascal Weil, editors, {\em {STACS} 2007, 24th
  Annual Symposium on Theoretical Aspects of Computer Science, Aachen, Germany,
  February 22-24, 2007, Proceedings}, volume 4393 of {\em Lecture Notes in
  Computer Science}, pages 682--693. Springer, 2007.
\newblock \href {https://doi.org/10.1007/978-3-540-70918-3\_58}
  {\path{doi:10.1007/978-3-540-70918-3\_58}}.

\bibitem{Wagner37}
Klaus Wagner.
\newblock \"{U}ber eine {E}igenschaft der ebenen {K}omplexe.
\newblock {\em Math. Ann.}, 114(1):570--590, 1937.
\newblock \href {https://doi.org/10.1007/BF01594196}
  {\path{doi:10.1007/BF01594196}}.

\bibitem{WeisfeilerL68}
Boris Weisfeiler and Andrei Leman.
\newblock The reduction of a graph to canonical form and the algebra which
  appears therein.
\newblock {\em NTI, Series 2}, 1968.
\newblock English translation by G.~Ryabov available at
  \url{https://www.iti.zcu.cz/wl2018/pdf/wl_paper_translation.pdf}.

\bibitem{Whitney32}
Hassler Whitney.
\newblock Congruent {G}raphs and the {C}onnectivity of {G}raphs.
\newblock {\em Amer. J. Math.}, 54(1):150--168, 1932.
\newblock \href {https://doi.org/10.2307/2371086} {\path{doi:10.2307/2371086}}.

\bibitem{XuHLJ19}
Keyulu Xu, Weihua Hu, Jure Leskovec, and Stefanie Jegelka.
\newblock How powerful are graph neural networks?
\newblock In {\em 7th International Conference on Learning Representations,
  {ICLR} 2019, New Orleans, LA, USA, May 6-9, 2019}. OpenReview.net, 2019.
\newblock URL: \url{https://openreview.net/forum?id=ryGs6iA5Km}.

\end{thebibliography}

\end{document}